\tikzset{
	>=stealth',
	punkt/.style={
		circle,
		rounded corners,
		draw=black, thick, 
		text width=1.5em,
		minimum height=3em,
		text centered},
	punkts/.style={
		circle,
		rounded corners,
		draw=black, thick, 
		text width=1em,
		minimum height=1em,
		text centered},
	invisible/.style={
		draw=none,
		text width=1.5em,
		minimum height=0em,
		text centered},
	inv/.style={
		draw=none,
		text width=2.5em,
		minimum height=3em,
		text centered},
	pil/.style={
		->,
		thick,
		shorten <=2pt,
		shorten >=2pt,}
}
\newcommand{\yltsn}[1]{\mathscr{#1}}  
\newcommand{\yltsnS}{\yltsn{S}}  
\newcommand{\ylts}[4]{\langle #1, #2, #3, #4\rangle}  
\newcommand{\yltsS}{\ylts{S}{s_0}{L}{T}}  
\newcommand{\yltsNS}{\yltsnS=\yltsS}  
\newcommand{\yltsc}[1]{\yltsn{L}(#1)}  
\newcommand{\yfsan}[1]{\mathscr{#1}}  
\newcommand{\yfsanA}{\yfsan{A}}  
\newcommand{\yfsanAs}[1]{\yfsan{A}_{#1}}  
\newcommand{\yfsa}[5]{\langle #1, #2, #3, #4, #5\rangle}  
\newcommand{\yfsaA}{\yfsa{S}{s_0}{A}{\rho}{F}}  
\newcommand{\yioltsn}[1]{\mathscr{#1}} 
\newcommand{\yI}{\yioltsn{I}}  
\newcommand{\yS}{\yioltsn{S}}  
\newcommand{\yQ}{\yioltsn{Q}}  
\newcommand{\yA}{\yioltsn{A}}  
\newcommand{\yB}{\yioltsn{B}}  
\newcommand{\yT}{\yioltsn{T}}  
\newcommand{\yC}{\yioltsn{C}}  
\newcommand{\yD}{\yioltsn{D}}  
\newcommand{\yU}{\yioltsn{U}}  
\newcommand{\yiolts}[5]{\langle #1, #2, #3, #4,#5\rangle}  
\newcommand{\yioltsI}{\yiolts{S}{s_0}{L_I}{L_U}{T}}  
\newcommand{\yio}[5]{\langle #1, #2, L_{#3}, L_{#4}, #5\rangle}  
\newcommand{\yioS}{\yio{S}{s_0}{I}{U}{T}}  
\newcommand{\yioQ}{\yio{Q}{q_0}{I}{U}{R}}  
\newcommand{\yioNI}{\yltsn{I}=\yioQ}  
\newcommand{\yioini}[2]{\yltsn{#1}_{\!/#2\!}}  
\newcommand{\yioc}[2]{\yltsn{IO}(L_#1,L_#2)}  
\newcommand{\yiocq}[2]{\yltsn{IO}_\yde(L_#1,L_#2)}  
\newcommand{\yioce}[2]{\yltsn{IOE}(L_{#1},L_{#2})}  
\newcommand{\yioipname}{\yltsn{IOIP}}  
\newcommand{\yiocip}[2]{\yioipname(L_#1,L_#2)}  
\newcommand{\yioipmname}{\yltsn{IOMIN}}  
\newcommand{\yiocipm}[2]{\yioipmname(L_#1,L_#2)}  
\newcommand{\yiocipmk}[3]{\yioipmname(L_#1,L_#2,#3)}  
\newcommand{\ytr}[3]{#1\overset{#2}{\rightarrow}#3} 
\newcommand{\ytru}[4]{#1\overset{#2}{\underset{#3}{\rightarrow}}#4} 
\newcommand{\ytrut}[4]{#1\overset{#2}{\underset{#3}{\Rightarrow}}#4} 
\newcommand{\ytrt}[3]{#1\overset{#2}{\Rightarrow}#3} 
\newcommand{\yatrt}[3]{#1\overset{#2}{\mapsto}#3} 
\newcommand{\ypass}{\text{\bfseries\sffamily pass}} 
\newcommand{\yfail}{\text{\bfseries\sffamily fail}} 
\newcommand{\yconf}[2]{\,\text{\bf conf}_{#1,#2}\,\,} 
\newcommand{\yioco}{\text{\,\,\bf ioco}\,\,} 
\newcommand{\yafter}{\text{\,\,\bf after}\,\,} 
\newcommand{\yout}{\text{\,\bf out}} 
\newcommand{\yinp}{\text{\,\bf inp}} 
\newcommand{\yinit}{\text{\,\bf init}} 
\newtheorem{theo}{Theorem}[section] 
\newtheorem{lemm}[theo]{Lemma} 
\newtheorem{coro}[theo]{Corollary} 
\newtheorem{prop}[theo]{Proposition}
\newtheorem{defi}[theo]{Definition} 
\newtheorem{rema}[theo]{Remark}
\newtheorem{hypo}[theo]{Hypothesis}
\theoremstyle{definition}
\newtheorem{exam}[theo]{Example}
\newcommand{\yst}{\,|\,} 
\newcommand{\yoh}[1]{\mathcal{O}(#1)} 
\newcommand{\yomeg}[1]{\Omega(#1)} 
\newcommand{\yeps}{\varepsilon} 
\newcommand{\yfim}{\hfill$\Box$} 
\newcommand{\ysse}{\subseteq}
\newcommand{\yemp}{\emptyset}
\newcommand{\ypow}[1]{\mathcal{P}(#1)}
\newcommand{\ycomp}[1]{\overline{#1}}
\newcommand{\ysi}{\sigma}
\newcommand{\ySi}{\Sigma}
\newcommand{\yal}{\alpha}
\newcommand{\ybe}{\beta}
\newcommand{\yde}{\delta}
\newcommand{\yte}{\theta}
\newcommand{\yltscA}[1]{\mathscr{LTS}_A{(#1)}} 
\newcommand{\yltscT}[1]{\mathscr{LTS}_T{(#1)}} 
\newcommand{\yltscTd}[1]{\mathscr{LTS}_T^\yde{(#1)}} 
\newcommand{\yltscTR}[1]{\mathscr{LTSR}_T{(#1)}} 
\newcommand{\yltscTRd}[1]{\mathscr{LTSR}_T^\yde{(#1)}} 
\newcommand{\ytrT}[3]{#1\overset{#2}{\rightarrow}_T#3} 
\newcommand{\ytrA}[3]{#1\overset{#2}{\rightarrow}_A#3} 
\newcommand{\ytrtT}[3]{#1\overset{#2}{\Rightarrow}_T#3} 
\newcommand{\ytrtA}[3]{#1\overset{#2}{\Rightarrow}_A#3} 
\newcommand{\yiocA}[2]{\yltsn{IO}_A(L_#1,L_#2)}  
\newcommand{\yiocAd}[2]{\yltsn{IO}_A^\yde(L_#1,L_#2)}  
\newcommand{\yiocALUpd}[2]{\yltsn{IO}_A(L_#1,L^{+\yde}_#2)}  
\newcommand{\yiocALUpdLImd}[2]{\yltsn{IO}_A(L^{+\yde}_#1,L^{-\yde}_#2)}  %
\newcommand{\yiocALUmdLIpd}[2]{\yltsn{IO}_A(L^{-\yde}_#1,L^{+\yde}_#2)}  %
\newcommand{\yiocAdLUpd}[2]{\yltsn{IO}^\yde_A(L_#1,L^{+\yde}_#2)}  
\newcommand{\yiocT}[1]{\yltsn{IOTS}_T(#1)}  
\newcommand{\yiocTd}[1]{\mathscr{IOTS}_T^\yde{(#1)}} 
\newcommand{\yiocTR}[1]{\yltsn{IOTSR}_T(#1)}  
\newcommand{\yioceA}[2]{\yltsn{IOE}_A(L_{#1},L_{#2})}  
\newcommand{\yLImd}{L_I^{-\yde}} 
\newcommand{\yLUpd}{L_U^{+\yde}} 
\newcommand{\yLUmd}{L_U^{-\yde}} 
\newcommand{\yiocoA}{\text{\,\,\bf ioco}_A\,\,} 
\newcommand{\yiocoT}{\text{\,\,\bf ioco}_T\,\,} 
\newcommand{\yafterm}[1]{\text{\,\,\bf after${}^{#1}$}\,\,} 
\newcommand{\yLIptmd}{L_I^{+\yte,-\yde}} 
\newcommand{\yLUptmd}{L_U^{+\yte,-\yde}} 
\begin{document}

\title{A conformance relation and complete test suites for I/O  systems\protect\thanks{In collaboration with Computing Institute - University of Campinas.}}


\author{Adilson Luiz Bonifacio\thanks{Computing Department, University of Londrina, Londrina, Brazil.} \and
	Arnaldo Vieira Moura\thanks{Computing Institute, University of Campinas, Campinas, Brazil}}

\date{} 

\maketitle

\begin{abstract}
	Model based testing is a well-established approach to verify implementations modeled by I/O labeled transition systems (IOLTSs). 
	One of the challenges stemming from model based testing is the conformance checking and the generation of test suites, specially when completeness is a required property. 
	In order to check whether an implementation under test is in compliance with its respective specification  one resorts to some form of conformance relation that
	guarantees the expected behavior of the implementations, given the behavior of
	the specification. 
	The {\bf ioco} conformance relation is an example of such a relation, specially suited for  asynchronous models. 
	In this work we study a more general conformance relation,  show how to generate finite and complete test suites, and discuss the complexity of the test generation mechanism under this more general conformance relation.   
	We also show that {\bf ioco} conformance is a special case of this new conformance relation, and we  investigate the complexity of classical {\bf ioco}-complete test suites.
	Further, we relate our contributions to more recent works, accommodating the restrictions of their classes of fault models within our more general approach as special cases, and expose the complexity of generating any complete test suite that must satisfy their restrictions.
\end{abstract}

\section{Introduction}

Software testing has been an important part in system development processes, with the goal of improving the quality of the final products.
The systematic use of  formal methods and techniques has taken prominence  when accuracy and critical guarantees are of paramount importance to the development process, such as when failures can cause severe damages.
Testing approaches based on formal models have the added advantage that test suite generation, with proven completeness guarantees, can be effectively and precisely automated, for certain classes of specification models.

Some formalisms, such as  Finite State Machines (FSMs)~\cite{DBLP:journals/ijfcs/bonifacio2012,cardelloliver00conformance,dorofeeva05,gonenc70,hierons06,Krichen04BBCT}, capture some aspects of  systems behaviors. 
However, in FSM models, input actions from the environment and output actions produced by the implementations under test are strongly related, and must occur synchronously.  
This may limit the designers' ability to model more complex asynchronous behaviors. 

This work studies aspects of  more powerful  formalisms  where the exchange of input and output stimuli can occur asynchronously, namely, the class of  Input/Output Labeled Transition Systems (IOLTSs)~\cite{tret-formal-1992,tre99a}.
In this context, model based testing~\cite{Gargantini05CTES,sidhu89,Tretmans96TGIO} has been widely used as a formal framework to verify whether an implementation  under test (IUT) is in conformance to desired behaviors of a given specification, according to a given fault model  and a given conformance relation~\cite{tret-formal-1992,tre99a,ananbcc-orchestrated-2013,testdiscrete,testcritical}.
In this framework, in particular, the {\bf ioco} conformance has been proposed as a suitable conformance relation for testing IOLTS  models~\cite{tret-testing-2004,tret-model-2008}.
We propose a  more general notion of conformance relation for testing  IOLTSs models.
Under this more general notion of conformance, it is possible to accommodate  much wider classes of IOLTS models, thus removing some of the structural restrictions imposed by other approaches. 
The main contributions of our work can be summarized as follows:
\begin{itemize}
	\item[---] The more general notion of conformance allows for the specification of arbitrary desired  behaviors that an IUT must  comply to, as well as undesired behaviors that must not be produced by the IUT.
	When these specifications are regular languages, we show that there is a regular language that can be used as a complete test suite to detect the presence 
	of all desired behaviors and the absence of any undesired behaviors.
	
	\item[---] In a ``white-box'' testing scenario, when one has access to the internal structure of the implementations, we prove the correctness of a polynomial time algorithm that can be used for checking conformance in this more general setting.
	If the specification model is fixed, then the algorithm runs in linear time in the number of states in the implementation. 
	
	\item[---] In a ``black-box'' testing environment, when the tester does not have access to the internal structure of the implementations, we show that the classical {\bf ioco} conformance relation~\cite{tret-model-2008} is a special case of this new conformance relation. 
	We also show how to generate complete test suites that can be used to verify {\bf ioco}-conformance under the same set of fault models considered by Tretmans~\cite{tret-model-2008}, and for implementations with any number of states, independently of the number of states in the specification. 
	Further, in this setting, we prove that $1.61^m$ is an asymptotic worst case lower bound on the size of any {\bf ioco}-complete test suite, where $m$ is the number of states of the largest implementation to be put under test.
	We also show that our approach attains such a lower bound.
	
	\item[---] In a more recent work, Sim\~ao and Petrenko~\cite{simap-generating-2014} discussed how to generate complete test suites for some classes of restricted IOLTSs.
	We prove that our method can construct a finite and complete test suite for such classes of models, when the same restrictions must be satisfied. 
	Also, in that work the complexity of the generated test suites was not studied.
	Here we prove an asymptotic exponential worst case lower bound for any {\bf ioco}-complete test suite that must satisfy  the same  restrictions as in~\cite{simap-generating-2014} and, further,  we prove that our method attains this lower bound.
\end{itemize}

We briefly comment on works that are more closely related to our study.  
See Section~\ref{related} for a more expanded view.
Tretmans~\cite{tret-model-2008} proposed the \textbf{ioco}-conformance relation for IOLTS models, and developed the foundations of an {\bf ioco}-based testing theory, where IUTs are treated as ``black-boxes''. 
In this testing architecture  the tester is seen as an artificial environment that drives the exchange of input and output symbols with the IUT during test runs. 
Some restrictions must be observed by the specification, implementation and tester models, such as input-completeness and output-determinism.
Sim\~ao and Petrenko~\cite{simap-generating-2014}, in a more recent work, also described an approach to generate finite complete test suites for IOLTSs. 
They, however, also imposed a number of restrictions upon the specification and the implementation models in order to obtain finite complete test suites. 
They assumed test purposes to be single-input and also output-complete. 
Moreover, specifications and implementations must be input-complete, progressive, and initially-connected, so further restricting the class of IOLTS models that can be tested according to their fault model. 
Here, we remove many of such restrictions.

This work is organized as follows.
Section~\ref{sec:models} establishes notations and preliminary results. 
Section~\ref{sec:conformance} defines a new notion of conformance relation.
A method for generating complete test suites for verifying adherence to the new conformance relation, and its complexity, is described in Section~\ref{sec:suites}. 
Section~\ref{sec:tretma-suites} visits the classical {\bf ioco}-conformance relation  and establishes  exponential lower bounds on the size of {\bf ioco}-complete test suites.
Section~\ref{sec:adepetre-suites} looks at another class of IOLTS models and show how to obtain complete test suites for this class, and also discusses performance issues when working with models of this class. 
Section~\ref{related} discusses some related works, and  
Section~\ref{sec:conclusion} offers some concluding remarks.

\section{Preliminaries and Notation}\label{sec:models}

In this section we define  Labeled Transition Systems (LTSs) and Input/Output Labeled Transition Systems (IOLTSs). 
For completeness, we also include standard definitions and properties of regular languages and finite state automata (FSA).
Some preliminary results associating LTSs and  FSA are given. 
We start with the formal models and some notation.

\subsection{Basic Notation}\label{subsec:notation}

Let $X$ and $Y$ be sets. We indicate by $\ypow{X}=\{Z\yst Z\ysse X\}$ the power set of $X$, and $X-Y=\{z\yst \text{$z\in X$ and $z\not\in Y$}\}$ indicates set difference.
We will let $X_Y=X\cup Y$. 
When no confusion can arise, we may write $X_y$ instead of $X_{\{y\}}$.
If $X$ is a finite set, the size of $X$ will be indicated by $|X|$.

An alphabet is any non-empty set of symbols.
Let $A$ be an alphabet.
A word over $A$ is any finite sequence $\ysi=x_1\ldots x_n$ of symbols in $A$, that is, $n\geq 0$ and $x_i\in A$, for  $i=1,2,\ldots, n$.
When $n=0$ we have the empty sequence, also indicated by $\yeps$.
The set of all finite sequences, or words, over $A$ is denoted by $A^{\star}$.
When we write $x_1x_2\ldots x_n\in A^{\star}$, it is implicitly assumed that $n\geq 0$ and that $x_i\in A$,
$1\leq i\leq n$, unless explicitly noted otherwise.
The length of a word $\alpha$ over $A$ is indicated by $|\alpha|$.
Hence, $|\yeps|=0$.
Let $\ysi=\ysi_1\ldots \ysi_n$ and $\rho=\rho_1\ldots \rho_m $ be words over $A$. The concatenation of $\ysi$ and $\rho$, indicated by $\ysi\rho$, is the word $\ysi_1\ldots\ysi_n\rho_1\ldots\rho_m$.
Clearly, $|\ysi\rho|=|\ysi|+|\rho|$.
A language $G$ over $A$ is any set $G\ysse A^\star$.
Let $G_1$, $G_2\ysse A^\star$ be languages over $A$. Their product, indicated by  $G_1G_2$, is the language 
$\{\ysi\rho\yst \ysi\in G_1, \rho\in G_2\}$. 
If $G\ysse A^\star$, then its complement is the language $\ycomp{G}=A^\star-G$.

\begin{defi}\label{def:morph}
	Let $A$, $B$ be alphabets.
	A \emph{homomorphism}, or just a \emph{morphism}, from $A$ to $B$ is any function $h:A\rightarrow B^\star$.
\end{defi}
A morphism $h:A\rightarrow B^\star$ can be extended to a function 
$\widehat{h}:A^\star\rightarrow B^\star$ where 
$\widehat{h}(\yeps) = \yeps$, and
$\widehat{h}(a\ysi) =  h(a)\widehat{h}(\ysi)$ when $a\in A$.
We can further lift $\widehat{h}$ to a function $\widetilde{h}: \ypow{A^\star}\rightarrow \ypow{B^\star}$, by letting $\widetilde{h}(G)=\cup_{\ysi\in G}\widehat{h}(\ysi)$,
for all $G\ysse A^\star$. 
To avoid cluttering the notation, we often write $h$ in place of $\widehat{h}$, or of $\widetilde{h}$, when no confusion can arise.
When $a\in A$ is any symbol, we define the simple morphism $h_a:A\rightarrow (A-\{a\})^\star$ by letting $h_a(a)=\yeps$, and $h_a(x)=x$ when $x\neq a$.  
So,  $h_a(\ysi)$ erases all occurrences of  $a$ in  $\ysi$.

The next simple result will be useful later. 
\begin{lemm}\label{lemm:finite}
	Let $A_1$, $A_2$, $B\ysse A^\star$ be languages over $A$.
	Then, there is a finite language $C\ysse B$ such that  
	$A_2\cap B\ysse A_1$ if and only if $A_2\cap C\ysse A_1$. 
\end{lemm}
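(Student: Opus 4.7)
The statement is really an existence claim about a finite $C \subseteq B$ making the biconditional true, and the proof is a short case analysis rather than a constructive bound on $|C|$ in terms of $A_1, A_2, B$.

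The plan is to first observe that one direction comes for free regardless of the choice of $C$: whenever $C \subseteq B$, we have $A_2 \cap C \subseteq A_2 \cap B$, so $A_2 \cap B \subseteq A_1$ immediately implies $A_2 \cap C \subseteq A_1$. All the work therefore sits in choosing $C$ so that the reverse implication $A_2 \cap C \subseteq A_1 \Rightarrow A_2 \cap B \subseteq A_1$ also holds.

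I would then split into two cases. In the \emph{good} case, $A_2 \cap B \subseteq A_1$ already holds, and we simply take $C = \emptyset$, which is a finite subset of $B$; both sides of the biconditional are then true. In the \emph{bad} case, $A_2 \cap B \not\subseteq A_1$, so we may pick any witness $\ysi \in (A_2 \cap B) - A_1$ and set $C = \{\ysi\}$. This $C$ is finite and contained in $B$; moreover $\ysi \in A_2 \cap C$ but $\ysi \notin A_1$, so $A_2 \cap C \not\subseteq A_1$. Hence both sides of the biconditional are false, and the equivalence again holds vacuously.

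There is essentially no obstacle here: the lemma does not claim that $C$ can be computed effectively or bounded uniformly in $A_1, A_2, B$, only that a suitable finite $C$ exists. The only subtlety to flag is that when we enter the bad case we are using the axiom of choice (in its trivial form) to select a single witness $\ysi$ from the nonempty set $(A_2 \cap B) - A_1$, which is harmless. This lemma will presumably be used later to reduce a containment test against an infinite language $B$ to a containment test against a finite witness set $C$, so the important point for the sequel is the mere existence of such a finite $C$, not its size.
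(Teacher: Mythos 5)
Your proof is correct, and it is in fact a \emph{repaired} version of the argument the paper itself gives. The paper's proof reads: ``If $A_2\cap B=\emptyset$, choose $C=\emptyset$. Else, let $x\in A_2\cap B$ and choose $C=\{x\}$.'' That case split is on the wrong condition: when $A_2\cap B\neq\emptyset$ but $A_2\cap B\not\subseteq A_1$, an \emph{arbitrary} witness $x\in A_2\cap B$ may happen to lie in $A_1$ (e.g.\ $A_2=B=\{a,b\}$, $A_1=\{a\}$, $x=a$), in which case $A_2\cap C\subseteq A_1$ holds while $A_2\cap B\subseteq A_1$ fails, and the claimed equivalence breaks. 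Your version fixes exactly this: you first dispose of the direction that holds for every $C\subseteq B$ by monotonicity, then split on whether $A_2\cap B\subseteq A_1$ already holds, taking $C=\emptyset$ in the good case and $C=\{\sigma\}$ for a genuine counterexample $\sigma\in(A_2\cap B)-A_1$ in the bad case, so that both sides of the biconditional are false together. The two proofs are the same in spirit (a witness set of size at most one suffices, with no uniformity or effectiveness claimed), but yours is the one that actually goes through; the only cosmetic remark is that the appeal to the axiom of choice is unnecessary to mention, since selecting one element from a single nonempty set needs no choice principle.
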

\begin{proof}
	If $A_2\cap B=\yemp$, choose $C=\yemp$.
	Else, let $x\in A_2\cap B$ and choose $C=\{x\}$. 
\end{proof}

\subsection{Labeled Transition Systems}\label{subsec:lts}

A Labeled Transition System (LTS) is a formal model that makes it convenient to express asynchronous exchange of messages between participating entities, in the sense that outputs do not have to occur synchronously with inputs, but are generated as separated events.
It consists of a set of states and a transition relation between states. Each transition is guarded by an action symbol.  
We first give the finite syntactic description of LTSs. A semantic structure that attributes meaning to a syntactic description will be given shortly in the sequel.
\begin{defi}\label{def:lts}
	A \emph{Labeled Transition System}  over $L$ is a tuple $\yltsnS=\yltsS$, where:
	\begin{enumerate}
		\item $S$ is a finite set of \emph{states} or \emph{locations};
		\item $s_0$ is the \emph{initial state}, or \emph{initial location};
		\item $L$ is a finite set of \emph{labels}, or \emph{actions}, and $\tau\notin L$, is the \emph{internal action symbol}; 
		\item $T\ysse S\times L_\tau \times S$ is a set of \emph{transitions}. 
	\end{enumerate}
\end{defi}
The  symbol $\tau$ is used to model any actions that are not exchanged as messages, that is,  actions that cause only an internal change of states in the model.  
The class of all LTSs over an alphabet $L$ will be denoted by $\yltsc{L}$.

\begin{exam}
	Figure~\ref{fig:lts1} represents a LTS $\yltsNS$ where the set of states is $S=\{s_0,s_1,s_2,s_3,s_4\}$ and the set of labels is $\{b,c,t\}$.
\begin{figure}[htb]
	\center
	\begin{tikzpicture}[ font=\sffamily,node distance=1cm,inner sep=0pt,text width=1.0em,auto,scale=0.75,transform shape]
	\node[initial below, initial text={},  punkt] (s0) {$ s_0$};
	\node[punkt,  below right=1.5cm and 1.5cm  of s0] (s1) { $s_1$};
	\node[punkt, below left=1.5cm and 1.5cm of s0] (s2) { $s_2$};
	\path (s0)    edge [pil]   	node[anchor=right,below]{ $b$} (s1);
	\path (s0)    edge [pil]   	node[anchor=north,above]{ $b$} (s2);
	
	\path (s1)    edge [loop right] node   { $b$} (s1);
	\path (s2)    edge [loop left] node   { $b$} (s2);
	
	\node[punkt, right=1.5cm of s0] (tea) { $s_3$};
	\node[punkt,left=1.5cm of s0] (cof) { $s_4$};
	\path (s1)    edge [pil]   	node[anchor=east]{ $t$} (tea);
	\path (s2)    edge [pil]   	node[anchor=north,left]{ $c$} (cof);
	
	\path (cof)    edge [pil]   	node[anchor=south, above]{ $\tau$} (s0);
	\path (tea)    edge [pil]   	node[anchor=south, above]{ $\tau$} (s0);
	\end{tikzpicture}
	\caption{A LTS with $5$ states and $8$ transitions.}\label{fig:lts1}
\end{figure}
	The set of transitions is given by the arrows, so  $(s_0,b,s_1)$ and $(s_3,\tau,s_0)$ are  transitions.
	Then 
	$$T=\{(s_0,b,s_1),(s_0,b,s_2),(s_1,b,s_1),(s_1,t,s_3),$$$$(s_2,b,s_2),(s_2,c,s_4),(s_3,\tau,s_0),(s_4,\tau,s_0)
	\}$$
	is set of all transitions.
	\yfim
\end{exam} 

The semantics of a LTS is given by its traces, or behaviors. 
First, we need the notion of paths in a LTS.
\begin{defi}\label{def:path}
	Let $\yS=\yltsS$ be a LTS and $p,q \in S$. 
	Let $\ysi=\ysi_1,\ldots,\ysi_n\in L_\tau^\star$.
	We say that $\ysi$ is: 
	\begin{enumerate}
		\item a \emph{path} from $p$ to $q$ in $\yltsnS$ if there are states $r_i\in S$, $0\leq i\leq n$ and, additionally, we have $(r_{i-1},\ysi_i,r_i) \in T$, $1\leq i\leq n$, with $r_0=p$ and $r_n=q$; 
		\item an \emph{observable path}  from $p$ to $q$ in $\yltsnS$ if
		there is a path $\mu$ from $p$ to $q$ in $\yltsnS$ such that $\ysi=h_\tau(\mu)$.
	\end{enumerate}
	We say that the paths start at $p$ and end at $q$. 
\end{defi}

So a path $\ysi$ from $p$ to $q$ is just a sequence of symbols that allows one to follow the syntactic description of $\yS$ and move from state $p$ to state $q$.
Note that a path may include internal transitions, that is, transitions over the internal symbol $\tau$.
An observable path arises from any ordinary path from which internal symbols have been erased.   
The idea is that an external observer will not see the internal transitions, as the model moves from state $p$ to state $q$.
It is also clear that an observable path is not necessarily a path. 

\begin{exam}
	Consider the LTS depicted in Figure~\ref{fig:lts1}.
	The following sequences are paths starting at $s_2$: $\yeps$, $b$, $bc\tau$, $c\tau bbbt\tau b$.
	Among others, the following are observable paths starting at $s_2$: $\yeps$, $b$, $bc$, $cbbbtb$. 
	There are observable paths of any length from $s_2$ to $s_1$.
	Note that, starting at $s_2$, the path $c\tau b$ may lead to either $s_1$ or back to $s_2$, that is, a path may  lead to distinct target states.
	\yfim
\end{exam}

If $\ysi$ is a path from $p$ to $q$, this can also be indicated  by writing $\ytr{p}{\ysi}{q}$.
We may also write $\ytr{p}{\ysi}{}$ to indicate that there is some $q\in S$ such that  $\ytr{p}{\ysi}{q}$; likewise, $\ytr{p}{}{q}$ means that there is some $\ysi\in L_\tau^\star$ such that $\ytr{p}{\ysi}{q}$. 
Also $\ytr{p}{}{}$ means $\ytr{p}{\ysi}{q}$ for some $q\in S$ and some $\ysi\in L_\tau^\star$.
When we want to emphasize that the underlying LTS is $\yltsnS$, we write $\ytru{p}{\ysi}{\yltsnS}{q}$.
We will use the symbol  $\ytrut{}{}{}{}$ to denote observable paths and will use the same shorthand notations as those just indicated for the $\ytr{}{}{}$ relation.
Note that $\ytrt{s}{}{p}$ if and only if $\ytr{s}{}{p}$, for all states $s$ and $p$.

Paths starting at a  state $p$ are the traces of $p$.
The semantics of a LTS is related to traces starting at the initial state.
\begin{defi}\label{def:trace}
	Let $\yltsnS=\yltsS$ be a LTS and let $p\in S$. 
	The set of \emph{traces} of $p$ is   $tr(p)=\{\ysi\yst \ytr{p}{\ysi\,\,}{}\}$, and
	the set of \emph{observable traces} of $p$ is $otr(p)= \{\ysi\yst \ytrt{p}{\ysi\,\,}{}\}$.
	The \emph{semantics} of $\yltsnS$ is the set $tr(s_0)$, and the 
	\emph{observable semantics} of $\yltsnS$ is the set $otr(s_0)$.
\end{defi}
We may write $tr(\yS)$ for $tr(s_0)$ and  $otr(\yS)$ for $otr(s_0)$. 

\begin{rema}\label{rema:tau}
	If $\yS$ has no $\tau$-labeled transitions, then $otr(\yS)=tr(\yS)$.
\end{rema}

\begin{exam}
	Consider the LTS depicted in Figure~\ref{fig:lts1}.
	The semantics of $\yltsnS$ includes such sequences as: $\yeps$, $bbc$, $bt\tau bc\tau bbb$, among others.
	The observable semantics of $\yltsnS$ includes: $\yeps$, $bcbt$, $bbbtbcbtb$.
	\yfim
\end{exam}

A LTS $\yS$ where the set of states $S$ is infinite is better  modeled using internal variables that can take values in infinite domains. 
When the set of labels $L$ is infinite, it is better to use parameters that can exchange values from infinite domains with the environment. 
In Definition~\ref{def:lts} we will always assume that $S$ and $L$ are finite sets.
We can also restrict the syntactic  of LTS models somewhat,  without loosing any descriptive capability.
First,  we can assume that there are no $\tau$-moves that do not change states.
Further, we can do away with states that are not reachable from the initial state.

\begin{rema}\label{rema:lte-finite}
	In Definition~\ref{def:lts} we will always assume that $(s,\tau, s)\not\in T$ and that $\ytr{s_0}{}{s}$ holds, for any $s\in S$.
\end{rema}

The intended interpretation for  $\tau$-moves is that the LTS can autonomously move along such transitions, without consuming any observable labels, that is, 
from an  external perspective, a $\tau$ label may induce ``implicit nondeterminism''. 
On the other hand, 
internal actions can facilitate the specification of formal models.
For example, in a real specification saying that ``after delivering money, the ATM returns to the initial state'', if this behavior does not require exchanging messages with a user, then it can be specified by a $\tau$-move back to the initial state. 
Also note that Remark~\ref{rema:lte-finite} does not prevent the occurrence of cyclic subgraphs  whose edges are all labeled by $\tau$, a situation that would correspond to a livelock. 
In some situations, however, we may not want such behaviors, or  we might want that no observable behavior leads to two distinct states.
This motivates a special variant of LTSs.

\begin{defi}\label{def:lts-deterministic}
	We say that a LTS $\yS=\yltsS$ is \emph{deterministic} if $\ytrt{s_0}{\ysi}{s_1}$ and $\ytrt{s_0}{\ysi}{s_2}$ imply
	$s_1=s_2$, for all $s_1$, $s_2\in S$, and all $\ysi\in L^\star$.
\end{defi}

As a consequence, we have the following result.
\begin{prop}\label{prop:lts-deterministic}
	Let $\yS=\yltsS$ be a deterministic LTS.
	Then $\yS$ has no $\tau$-labeled transitions.
\end{prop}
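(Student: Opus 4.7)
The plan is to argue by contradiction: suppose $\yS$ contains some $\tau$-labeled transition $(p,\tau,q)\in T$, and derive a violation of the determinism condition of Definition~\ref{def:lts-deterministic}.

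First I would invoke Remark~\ref{rema:lte-finite}: it guarantees both that $p\neq q$ (since no $\tau$ self-loops are allowed) and that $p$ is reachable from the initial state, i.e., there exists some $\mu\in L_\tau^\star$ with $\ytr{s_0}{\mu}{p}$. Appending the presumed $\tau$-transition yields $\ytr{s_0}{\mu\tau}{q}$. Now apply the morphism $h_\tau$ that erases $\tau$: setting $\ysi=h_\tau(\mu)=h_\tau(\mu\tau)\in L^\star$, Definition~\ref{def:path} gives observable paths $\ytrt{s_0}{\ysi}{p}$ and $\ytrt{s_0}{\ysi}{q}$ simultaneously. By Definition~\ref{def:lts-deterministic} applied to this $\ysi\in L^\star$, we conclude $p=q$, which contradicts $p\neq q$.

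Thus no $\tau$-transition can exist in $\yS$. The argument is short, and the only mild subtlety is ensuring that the string obtained after erasure lies in $L^\star$ (not $L_\tau^\star$), so that the determinism hypothesis is actually applicable; this is immediate because $h_\tau$ maps into $(L_\tau-\{\tau\})^\star=L^\star$. The main ``obstacle,'' if any, is just making explicit the two structural facts being used from Remark~\ref{rema:lte-finite} — absence of $\tau$ self-loops and reachability of every state — since without either of these the contradiction would not go through (a $\tau$ self-loop is trivially compatible with determinism, and an unreachable $\tau$-transition cannot be witnessed from $s_0$).
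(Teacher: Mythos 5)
Your proof is correct and follows essentially the same route as the paper: contradiction via Remark~\ref{rema:lte-finite} (no $\tau$ self-loops plus reachability of $p$), erasing $\tau$ with $h_\tau$ to get two observable paths on the same $\ysi$ ending at $p\neq q$, contradicting Definition~\ref{def:lts-deterministic}. Your explicit note that the erased string lands in $L^\star$ is a small but welcome clarification of a point the paper leaves implicit.
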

\begin{proof}
	For the sake of contradiction, assume that $(p,\tau,q)\in T$ is a transition. 
	So, $\ytr{p}{\tau}{q}$.
	From Remark~\ref{rema:lte-finite} we get $p\neq q$ and $\ysi\in L^\star$ such that $\ytr{s_0}{\ysi}{p}$.
	Hence, $\ytr{s_0}{\ysi}{p}\ytr{}{\tau}{q}$. Using Definition~\ref{def:trace} we get $\ytrt{s_0}{\mu}{p}$ and $\ytrt{s_0}{\mu}{q}$, where $\mu=h_\tau(\ysi)=h_\tau(\ysi\tau)$. 
	Since $p\neq q$, this
	contradicts Definition~\ref{def:lts-deterministic}.
\end{proof}

\subsection{Finite State Automata}\label{subsec:fsa}

Any LTS  induces a finite automaton
with a number of simple properties, and for which there are simple decision algorithms. 
In particular, $\tau$-labeled transitions in the LTS will correspond to nondeterminism induced by $\yeps$-moves in the finite automaton. 
For completeness, we give here the basic definitions.
\begin{defi}\label{def:fsa}
	A \emph{Finite State Automaton} (FSA) over $A$ is a tuple $\yfsanA=\yfsaA$, where:
	\begin{enumerate}
		\item $S$ is a finite set of \emph{states};
		\item $s_0\in S$ is the \emph{initial state};
		\item  $A$ is a finite non-empty \emph{alphabet};
		\item  $\rho\ysse S\times (A\cup\{\yeps\}) \times S$ is the \emph{transition relation}; and
		\item $F\ysse S$ is the set of \emph{final states}. 
	\end{enumerate}
	A transition $(p,\yeps,q)\in \rho$ is called an \emph{$\yeps$-move} of $\yA$.
\end{defi}

The semantics of a FSA is given by the language it accepts.
\begin{defi}\label{def:fsa-move}\label{def:fsalang}
	Let $\yA=\yfsaA$ be a FSA. Inductively, define the relation $\yatrt{}{}{}$ as
	\begin{itemize}
		\item[---] $\yatrt{p}{\yeps}{p}$, and 
		\item[---] $\yatrt{p}{x\mu}{q}$ if $(p,x,r)\in\rho$ and $\yatrt{r}{\mu}{q}$ with $x\in\ySi\cup\{\yeps\}$.
	\end{itemize}
	The \emph{language} accepted by  $\yA$ is the set 
	$L(\yfsanA)=\{ \ysi \yst \yatrt{s_0}{\ysi}{p}\,\,\text{and $p\in F$}\}$.
	A language $G\ysse A^\star$ is \emph{regular} if there is a FSA $\yA$  such that $L(\yA)=G$.
\end{defi}

It is a well established fact that the class of regular languages enjoy the following properties.
\begin{prop}\label{prop:reg-closure}
	Let $h:A\rightarrow B^\star$ be a morphism, and let 
	$G$, $H\ysse A^\star$, and $L\ysse B^\star$ be regular languages.
	Then, the following are also regular languages: 
	$G H$, $G\cap H$,  $G\cup H$, $\ycomp{G}$, and $h(G)\ysse B^\star$.
	Further, given FSAs for the original languages we can  effectively construct a FSA for any of the languages indicated by these operations.
\end{prop}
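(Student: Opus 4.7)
The plan is to exhibit, for each operation, an explicit FSA whose accepted language is the desired one, starting from FSAs $\yfsan{A}_G$ and $\yfsan{A}_H$ for $G$ and $H$, which exist by hypothesis. I will write $S_G$, $s_0^G$, $\rho_G$, $F_G$ for the components of $\yfsan{A}_G$, and analogously for $\yfsan{A}_H$.

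For $G \cup H$, I would introduce a fresh initial state with two $\yeps$-moves, one to $s_0^G$ and one to $s_0^H$, and declare $F_G \cup F_H$ to be the set of final states. For the concatenation $GH$, I would take the disjoint union of $\yfsan{A}_G$ and $\yfsan{A}_H$, add an $\yeps$-move from every state in $F_G$ to $s_0^H$, keep $s_0^G$ as the initial state, and set the new final set to $F_H$. In both cases a straightforward induction on the length of an accepting run, using the inductive definition of acceptance from Definition~\ref{def:fsa-move}, shows that the constructed FSA accepts exactly the desired language.

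For the morphic image $h(G) \ysse B^\star$, I would process $\yfsan{A}_G$ transition by transition: each edge $(p,a,q) \in \rho_G$ is replaced by a chain of transitions over $B$ labeled by the successive symbols of $h(a)$, introducing $|h(a)|-1$ fresh intermediate states when $h(a) \neq \yeps$, and by a single $\yeps$-move otherwise. The resulting FSA over $B$ accepts precisely $h(G)$.

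For intersection and complement, I would first eliminate all $\yeps$-moves in the standard way, by collapsing each state with its $\yeps$-closure to obtain equivalent $\yeps$-free FSAs. Then, for $G \cap H$, I perform the product construction: states $S_G \times S_H$, initial state $(s_0^G, s_0^H)$, transitions $((p,q), a, (p',q'))$ precisely when both machines afford the corresponding $a$-transitions, and final states $F_G \times F_H$. For $\ycomp{G}$, I would determinize the $\yeps$-free FSA via the subset construction, add a sink state to totalize the transition function, and then swap the roles of final and non-final states. The main obstacle is this last case, since the final/non-final flip only yields the correct complement once the machine has been made $\yeps$-free, determinized (incurring a potential exponential blow-up via the subset construction), and completed. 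Each of the preceding steps is a standard textbook construction, so the principal care is in verifying correctness of $\yeps$-elimination and the subset construction with respect to the acceptance relation of Definition~\ref{def:fsa-move}; all other cases reduce to clean inductions on the length of a computation.
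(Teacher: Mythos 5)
Your proposal is correct and follows essentially the same route as the paper, which simply defers to the standard textbook constructions in its cited references (Harrison; Hopcroft--Ullman): $\yeps$-move constructions for union and concatenation, edge-replacement for the morphic image, the product construction for intersection after $\yeps$-elimination, and $\yeps$-elimination plus determinization, completion, and final-state reversal for the complement. You correctly identify the complement as the one case where determinization is genuinely required, so there is nothing to add.
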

\begin{proof}
	See \emph{e.g.}, \cite{hari-introduction-1978,hopcu-introduction-1979}.
\end{proof}

With respect to the operations of union and intersection we can put simple upper bounds in the number of states of the resulting FSA, as can be seen from the constructions in~\cite{hari-introduction-1978,hopcu-introduction-1979}. 
\begin{rema}\label{rem:bounds}
	Let $\yA$ and $\yB$ be  FSAs with $n_\yA$ and $n_\yB$ states, respectively.
	Then we can construct a  FSA $\yC$ with at most $n_\yA n_\yB$ states and such that $L(\yC)=L(\yA)\cap L(\yB)$.
	Likewise, from $\yA$ and $\yB$  we can  also construct  FSA $\yC$ with $n_\yA + n_\yB+1$ states and such that $L(\yC)=L(\yA)\cup L(\yB)$.
\end{rema} 

A deterministic FSA has no $\yeps$-moves and the transition relation reduces to a function. 
Completeness, in this case, means that from any state we can always find a transition on any symbol.
\begin{defi}\label{def:fsa-determinism}
	A FSA $\yA=\yfsaA$ is \emph{deterministic} if $\rho$ is a partial function from $S\times A$ into $S$, and $\yA$ is \emph{complete} if  for all $s\in S$ and all $a\in A$ there is some $p\in S$ such that $(s,a,p)\in\rho$.
\end{defi}

So,  a FSA $\yA$ is deterministic if it has no $\yeps$-moves and $(s,x,p_1)$ and $(s,x,p_2)$ in $\rho$ forces $p_1=p_2$.
Given any FSA $\yA$ we know how to construct an equivalent deterministic FSA $\yB$  using 
the subset construction ~\cite{hopcu-introduction-1979,rabis-finite-1959}.

\begin{prop}\label{prop:no-eps}
	For any FSA $\yA=\yfsaA$ there is a deterministic FSA $\yB$ with $L(\yA)=L(\yB)$.
	Moreover, if $F=S$, then all states in $\yB$ are final states.
	Further, $\yB$ can  be algorithmically constructed from $\yA$. 
\end{prop}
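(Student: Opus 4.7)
The plan is to invoke the standard subset construction, tracking $\yeps$-closures, and then read off the second claim as a small consequence of how reachable subsets behave.

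First I would define, for any $T \ysse S$, the $\yeps$-closure $E(T) = \{q \in S \yst \text{there exists } p \in T \text{ with } \yatrt{p}{\yeps}{q}\}$. From $\yA$ I would build $\yB = \yfsa{S'}{s_0'}{A}{\rho'}{F'}$ as follows: let $s_0' = E(\{s_0\})$, and define $\rho'$ as the partial function with $\rho'(T,a) = E(\{q \yst \text{there exists } p \in T, (p,a,q) \in \rho\})$, keeping only those pairs for which this set is non-empty. The state set $S'$ would consist of exactly those subsets of $S$ that are reachable from $s_0'$ by iterating $\rho'$; since we discard transitions producing the empty set, every element of $S'$ is a non-empty subset of $S$. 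Finally let $F' = \{T \in S' \yst T \cap F \neq \yemp\}$. By construction $\rho'$ is a partial function $S' \times A \to S'$ and there are no $\yeps$-moves, so $\yB$ is deterministic in the sense of Definition~\ref{def:fsa-determinism}.

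Next I would prove $L(\yA) = L(\yB)$ by induction on $|\ysi|$, showing that for every $\ysi \in A^\star$ and every $T \in S'$, $\yatrt{T}{\ysi}{T'}$ in $\yB$ iff $T' = E(\{q \yst \text{there exists } p \in T \text{ with } \yatrt{p}{\ysi}{q} \text{ in } \yA\})$. The base case $\ysi = \yeps$ is immediate from determinism of $\yB$ and the definition of $E$. The inductive step factors $\ysi = a\mu$, applies the definition of $\rho'$ for the first symbol, and uses the inductive hypothesis on $\mu$; the acceptance equivalence then follows because $T' \cap F \neq \yemp$ in $\yB$ exactly when some accepting run of $\yA$ on $\ysi$ exists. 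All of these steps are constructive, so $\yB$ is effectively obtained from $\yA$; this settles the algorithmic claim.

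For the ``moreover'' clause, suppose $F = S$. Because every element of $S'$ is a non-empty subset of $S$ and $F = S$, each $T \in S'$ satisfies $T \cap F = T \neq \yemp$, hence $T \in F'$. Thus $F' = S'$, i.e., every state of $\yB$ is a final state, as required.

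The only subtle point — and the main thing to get right — is the handling of the empty subset in the subset construction. If we allowed $\yemp \in S'$ as a reachable sink, the second claim would fail because $\yemp \cap F = \yemp$. Dropping transitions whose target is the empty set keeps $\yB$ deterministic (but possibly not complete, which is allowed by Definition~\ref{def:fsa-determinism}) and simultaneously ensures that the $F = S$ case yields $F' = S'$.
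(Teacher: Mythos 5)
Your proof is correct and follows essentially the same route as the paper's: the standard subset construction, together with the observation that when $F=S$ every reachable subset meets $F$ and is therefore final. The only differences are that you fold $\yeps$-elimination into the construction via $\yeps$-closures (the paper performs it as a separate first step, citing standard references), and that you make explicit the one point the paper leaves as an ``easy observation'' --- namely that the empty subset must be excluded from the reachable state set, since otherwise $\yemp\cap F=\yemp$ would break the $F=S$ clause; that is a worthwhile clarification rather than a deviation.
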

\begin{proof}
	First, we eliminate all $\yeps$-moves from $\yA$ obtaining an equivalent FSA $\yC$~\cite{hari-introduction-1978,hopcu-introduction-1979}.
	As an easy observation from those constructions, we can see that if $F=S$ then all states in $\yC$ are also final states.
	Next, given $\yC$, use the standard subset construction~\cite{rabis-finite-1959} to obtain a deterministic FSA $\yB$ with no $\yeps$-moves  equivalent to $\yA$.
	Here also, since all states in $\yC$ are final states, the construction forces all states in $\yB$ to be final states.
	Clearly, $L(\yA)=L(\yC)=L(\yB)$ as desired.   
\end{proof}

From a deterministic FSA we can easily get an equivalent complete FSA with at most one extra state.
We just add a transition to that extra state in the new FSA whenever a transition is missing in the original FSA.
\begin{prop}\label{prop:fsa-complete}
	Let $\yA=\yfsa{S_\yA}{s_0}{A}{\rho_\yA}{F}$ be a deterministic FSA. 
	We can effectively construct a complete FSA $\yB=\yfsa{S_\yB}{s_0}{A}{\rho_\yB}{F}$ such that 
	$L(\yB)=L(\yA)$ and with $\yst S_\yB|=|S_\yA|+1$ states and $\vert A\vert(\vert S_\yA\vert+1)$ transitions.
\end{prop}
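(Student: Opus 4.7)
The plan is to carry out the standard sink-state (or trap-state) totalization of a deterministic partial automaton. Concretely, I would let $S_\yB = S_\yA \cup \{s_d\}$, where $s_d$ is a fresh ``dead'' state not in $S_\yA$, and keep the same initial state $s_0$, the same alphabet $A$, and the same set of final states $F \subseteq S_\yA$ (in particular $s_d \notin F$). The transition relation $\rho_\yB$ is defined on each $(s,a) \in S_\yB \times A$ by setting $\rho_\yB(s,a) = \rho_\yA(s,a)$ whenever $s \in S_\yA$ and $\rho_\yA(s,a)$ is defined, $\rho_\yB(s,a) = s_d$ whenever $s \in S_\yA$ and $\rho_\yA(s,a)$ is undefined, and $\rho_\yB(s_d, a) = s_d$ for every $a \in A$. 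Since $\rho_\yA$ was already a partial function, $\rho_\yB$ is a total function from $S_\yB \times A$ to $S_\yB$, so $\yB$ is deterministic and complete by Definition~\ref{def:fsa-determinism}.

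For the language equivalence, I would argue both inclusions together by induction on the length of the input word. Every computation of $\yA$ on an input $\ysi$ is also a computation of $\yB$ on $\ysi$ (the old transitions are preserved), so $L(\yA) \subseteq L(\yB)$. Conversely, any computation of $\yB$ that ever uses one of the new transitions into $s_d$ must remain at $s_d$ forever (since $s_d$ self-loops on every symbol), and $s_d \notin F$; hence an accepting computation of $\yB$ uses only transitions already present in $\rho_\yA$, and therefore ends in a state of $F \subseteq S_\yA$, yielding $L(\yB) \subseteq L(\yA)$.

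Finally, counting: $|S_\yB| = |S_\yA| + 1$ by construction, and because $\rho_\yB$ is a total function from $S_\yB \times A$ to $S_\yB$, each of the $|S_\yA|+1$ states contributes exactly $|A|$ outgoing transitions, giving $|A|(|S_\yA|+1)$ transitions in total. There is no real obstacle here; the only thing to be a little careful about is the uniqueness of transitions in the count, which is guaranteed because $\rho_\yB$ is a function (one transition per state-symbol pair), so no double-counting occurs.
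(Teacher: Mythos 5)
Your construction is exactly the one in the paper: adjoin a fresh non-final sink state, redirect every missing $(s,a)$ pair to it, give it self-loops on all of $A$, and conclude $L(\yB)=L(\yA)$ by induction on word length together with the observation that the sink is absorbing and non-accepting. The counting argument also matches, so the proposal is correct and essentially identical to the paper's proof.
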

\begin{proof}
	Let $S_\yB=S_A\cup \{e\}$, where $e\not\in S_\yA$, and
	$$\rho_\yB\,=\,\rho_\yA\cup\big\{(e,\ell,e)\yst \text{for any } \ell\in A \big\} \cup\big\{(s,\ell,e)\yst \text{if $\ell\in A$  and [  for all  $p\in S_A$ we have $(s,\ell,p)\not\in\, \rho_\yA$} ] \big\}.$$
	It is clear that $\yB$ is a complete FSA.
	A simple induction on $|\ysi|\geq 0$ shows that  $\ysi\in L(\yA)$ if and only if  
	$\ysi\in L(\yB)$, for all $\ysi\in A^\star$.
	Since $\yB$ is complete and has $\vert S_\yA\vert +1$ states, then is must have $\vert A\vert(\vert S_\yA\vert+1)$ transitions.
\end{proof}

We can now convert $\tau$-moves of a LTS into $\yeps$-moves of a FSA, and make any location in the LTS a final state of the  FSA, and vice-versa.
This association will allow us to develop algorithmic constructions involving LTSs by making use of known  efficient algorithmic constructions involving the associated FSAs.
Observe that in the induced FSAs all states are final.    
\begin{defi}\label{def:lts-fsa}
	We have the following two associations:
	\begin{enumerate}
		\item Let $\yltsnS=\yltsS$ be a LTS. 
		The FSA \emph{induced} by $\yltsnS$ is $\yA_\yS=\yfsa{S}{s_0}{L}{\rho}{S}$ where,
		for all $p$, $q\in S$ and all $\ell\in L$:\\
		\hspace*{5ex}
		$(p,\ell,q)\in\rho\,\, \quad \text{if and only if}\quad (p,\ell,q)\in T,$\\
		\hspace*{5ex}
		$(p,\yeps,q)\in\rho\,\,\quad \text{if and only if}\quad (p,\tau,q)\in T.$ 
		\item Let $\yA=\yfsa{S}{s_0}{A}{\rho}{S}$ be a FSA.
		The LTS \emph{induced} by $\yA$ is $\yS_\yA=\ylts{S}{s_0}{A}{T}$ where,
		for all $p$, $q\in S$ and all $a\in A$:\\
		\hspace*{5ex}
		$(p,a,q)\in T\quad \text{if and only if}\quad (p,a,q)\in\rho,$\\
		\hspace*{5ex}
		$(p,\tau,q)\in T\quad \text{if and only if}\quad (p,\yeps,q)\in\rho.$
	\end{enumerate}
\end{defi}

The observable semantics of $\mathcal{S}$ is just the language accepted by $A_\mathcal{S}$.
We note this as the next proposition.

\begin{prop}\label{prop:lts-fsa}
	Let $\yS=\yltsS$ be a LTS  with $\yfsanAs{\yltsnS}$ the FSA induced by $\yltsnS$, and
	let $s$, $q\in S$. Then we have:
	\begin{enumerate}
		\item If $\ysi\in L_\tau^\star$ and $\ytr{s}{\ysi}{q}$ in $\yS$, then  $\yatrt{s}{\mu}{q}$ in $\yA$, where $h_\tau(\ysi)=\mu$.
		\item If $\mu\in L^\star$ and $\yatrt{s}{\mu}{q}$ in $\yA$, then there is some  $\ysi\in L_\tau^\star$ such that $\ytr{s}{\ysi}{q}$ in $\yS$ and $\mu=h_\tau(\ysi)$.
	\end{enumerate}
	In particular, $otr(\yltsnS)=L(\yfsanAs{\yltsnS})$.
	Likewise when $\yA$ is a FSA and $\yS_\yA$ is the LTS induced by $\yA$.
\end{prop}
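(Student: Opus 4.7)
The plan is to prove both items by induction, exploiting the symmetry between $\tau$-moves in $\yS$ and $\yeps$-moves in $\yA_\yS$, and then derive the corollary on observable semantics from item~1 and item~2 by noting that every state in $\yA_\yS$ is final.

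For item~1, I would induct on $|\ysi|$. In the base case $\ysi=\yeps$ we have $s=q$, and $h_\tau(\yeps)=\yeps$, so $\yatrt{s}{\yeps}{s}$ holds by the first clause of Definition~\ref{def:fsa-move}. For the inductive step write $\ysi=x\ysi'$ with $x\in L_\tau$ and let $r$ be such that $\ytr{s}{x}{r}$ and $\ytr{r}{\ysi'}{q}$; by the inductive hypothesis $\yatrt{r}{h_\tau(\ysi')}{q}$. Split on whether $x=\tau$ or $x\in L$. If $x=\tau$ then $(s,\tau,r)\in T$, so by Definition~\ref{def:lts-fsa} we have $(s,\yeps,r)\in\rho$ and prepending this step gives $\yatrt{s}{h_\tau(\ysi')}{q}$, which equals $\yatrt{s}{h_\tau(\ysi)}{q}$ since $h_\tau$ erases the leading $\tau$. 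If $x\in L$ then $(s,x,r)\in\rho$ and $h_\tau(\ysi)=x\,h_\tau(\ysi')$, so again prepending yields the claim.

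For item~2, I would induct on the number of steps used to derive $\yatrt{s}{\mu}{q}$ (the length of the sequence of $(p,x,r)\in\rho$ edges that the recursive definition unfolds). The base case $\mu=\yeps$, $s=q$, takes $\ysi=\yeps$. For the inductive step, $\yatrt{s}{x\mu'}{q}$ comes from $(s,x,r)\in\rho$ with $x\in L\cup\{\yeps\}$ and $\yatrt{r}{\mu'}{q}$; by the inductive hypothesis we obtain $\ysi'\in L_\tau^\star$ with $\ytr{r}{\ysi'}{q}$ and $\mu'=h_\tau(\ysi')$. If $x\in L$, Definition~\ref{def:lts-fsa} gives $(s,x,r)\in T$, and setting $\ysi=x\ysi'$ produces $\ytr{s}{\ysi}{q}$ with $h_\tau(\ysi)=x\mu'$. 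If $x=\yeps$, we get $(s,\tau,r)\in T$, and choosing $\ysi=\tau\ysi'$ produces $\ytr{s}{\ysi}{q}$ with $h_\tau(\ysi)=h_\tau(\ysi')=\mu'=x\mu'$. The only delicate point here is keeping straight that $x=\yeps$ acts as the identity under concatenation while $x=\tau$ is erased only under $h_\tau$; that is the main bookkeeping obstacle, and is handled by explicitly splitting the two cases.

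For the final sentence, specialize $s=s_0$. If $\mu\in otr(\yS)$, then by Definition~\ref{def:trace} there exist $q\in S$ and $\ysi\in L_\tau^\star$ with $\ytr{s_0}{\ysi}{q}$ and $\mu=h_\tau(\ysi)$; item~1 gives $\yatrt{s_0}{\mu}{q}$ in $\yA_\yS$, and since by Definition~\ref{def:lts-fsa} the set of final states of $\yA_\yS$ is $S$, we conclude $\mu\in L(\yA_\yS)$. Conversely, if $\mu\in L(\yA_\yS)$, then $\yatrt{s_0}{\mu}{q}$ for some (automatically final) $q\in S$, and item~2 produces $\ysi\in L_\tau^\star$ with $\ytr{s_0}{\ysi}{q}$ and $\mu=h_\tau(\ysi)$, whence $\mu\in otr(\yS)$. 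The ``likewise'' clause for the FSA-to-LTS direction follows by the same arguments, since Definition~\ref{def:lts-fsa} is manifestly symmetric in the two translations and identifies $\tau$-edges with $\yeps$-edges in both directions.
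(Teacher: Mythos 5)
Your proof is correct and follows essentially the same route as the paper: an induction for each of items~1 and~2, followed by the specialization to $s=s_0$ using the fact that all states of $\yA_\yS$ are final. Your one refinement --- inducting on the number of derivation steps of $\yatrt{s}{\mu}{q}$ rather than on $|\mu|$ --- is actually the more careful choice, since $\yeps$-moves do not decrease $|\mu|$; the paper glosses over this by simply asserting "a similar induction on the length of $\mu$".
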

\begin{proof}
	An easy induction on the length of $\ysi\in L_\tau^\star$  shows that (1) holds, and a similar induction on the length of $\mu$ establishes (2).
	By Definition~\ref{def:trace}, we know that $\mu\in otr(\yS)$ if and only if there is some $\ysi\in tr(\yS)$ such that $\mu=h_\tau(\ysi)$.
	Then, by item (1), we get $\mu\in L(\yA_\yS)$, because all states in $\yA_\yS$ are final states.
	For the converse, if $\mu\in\yA_\yS$ then, by item (2),  we get some $\ysi\in tr(\yS)$ with $h_\tau(\ysi)=\mu$. Definition~\ref{def:trace} then gives $\mu\in otr(\yS)$.
	Hence,  $L(\yA_\yS)= otr(\yS)$.
	
	The case when $\yS_\yA$ is the LTS induced by a FSA $\yA$ follows by a similar reasoning.
\end{proof}

Proposition~\ref{prop:lts-fsa} says that  $otr(\yS)$ is  regular. This implies that  we can construct a FSA $\yA$ with $L(\yA)=otr(\yS)$, and from $\yA$ we can then obtain a deterministic LTS equivalent to $\yS$. 
\begin{prop}\label{prop:deterministic-lts}
	Let $\yS$ be a LTS. 
	Then we can effectively construct a deterministic LTS $\yT$  such that $otr(\yS)=tr(\yT)=otr(\yT)$.
\end{prop}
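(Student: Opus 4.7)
The plan is to chain together the correspondences and constructions already established in the excerpt, moving from $\yS$ through the FSA world and back. The guiding observation is that, by Proposition~\ref{prop:lts-fsa}, the observable semantics of a LTS $\yS$ coincides with the language accepted by its induced FSA $\yfsanAs{\yS}$, and moreover all states of $\yfsanAs{\yS}$ are final. So $otr(\yS)$ is a regular language whose ``every state is final'' structure is preserved by the standard determinization machinery.

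Concretely, I would proceed in four steps. First, form $\yA = \yfsanAs{\yS}$ using Definition~\ref{def:lts-fsa}(1); by Proposition~\ref{prop:lts-fsa} we have $L(\yA) = otr(\yS)$, and by construction every state of $\yA$ is final. Second, apply Proposition~\ref{prop:no-eps} to $\yA$ to obtain a deterministic FSA $\yB$ with $L(\yB) = L(\yA)$; crucially, the ``moreover'' clause of Proposition~\ref{prop:no-eps} guarantees that every state of $\yB$ is also final, because $\yA$ had that property. Third, let $\yT = \yS_\yB$ be the LTS induced by $\yB$ via Definition~\ref{def:lts-fsa}(2); if needed, I would restrict first to the states of $\yB$ reachable from $s_0$ in order to satisfy Remark~\ref{rema:lte-finite}, which does not change $L(\yB)$. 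Fourth, I would verify the three claims about $\yT$.

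For the verification: since $\yB$ is deterministic it has no $\yeps$-moves, hence $\yT$ has no $\tau$-transitions. Then Remark~\ref{rema:tau} gives $tr(\yT) = otr(\yT)$. Applying Proposition~\ref{prop:lts-fsa} in the ``likewise'' direction (to $\yA = \yB$ and $\yS_\yA = \yT$, with all states of $\yB$ being final) yields $otr(\yT) = L(\yB) = L(\yA) = otr(\yS)$. For determinism of $\yT$ in the sense of Definition~\ref{def:lts-deterministic}, since $\yT$ has no $\tau$-moves the relations $\ytr{}{}{}$ and $\ytrt{}{}{}$ coincide on $\yT$; a trace $\ytrt{s_0}{\ysi}{s_i}$ in $\yT$ corresponds directly to $\yatrt{s_0}{\ysi}{s_i}$ in $\yB$, and determinism of $\yB$ (its transition relation is a partial function on $S_\yB \times A$) forces $s_1 = s_2$.

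The only delicate point, and what I would be most careful about, is maintaining the ``all states final'' invariant through the $\yeps$-elimination and subset construction steps, since it is precisely this invariant that lets us read off $otr(\yT)$ from $L(\yB)$ using Proposition~\ref{prop:lts-fsa} in the reverse direction. Fortunately this is packaged into the second sentence of Proposition~\ref{prop:no-eps}, so the argument reduces to a clean composition of prior results with no real calculation.
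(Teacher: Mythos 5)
Your proposal is correct and follows essentially the same route as the paper's own proof: induce the FSA $\yA_\yS$, determinize via Proposition~\ref{prop:no-eps} while preserving the all-states-final invariant, induce the LTS back from the deterministic FSA, and transfer determinism and the trace equalities using Proposition~\ref{prop:lts-fsa} and Remark~\ref{rema:tau}. Your extra care about restricting to reachable states (to respect Remark~\ref{rema:lte-finite}) and your slightly more direct determinism argument are harmless refinements of the paper's argument.
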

\begin{proof}
	Consider the FSA $\yA_\yS$ induced by $\yS$.
	From Proposition~\ref{prop:lts-fsa} we know that $L(\yA_\yS)=otr(\yS)$.
	Use Proposition~\ref{prop:no-eps} to get a deterministic 
	FSA $\yB$ such that $L(\yfsanAs{\yltsnS})=L(\yB)$.
	Hence $otr(\yS)=L(\yB)$.
	From Definition~\ref{def:lts-fsa} we know that all states of $\yA_\yS$ are final states and so, from
	Proposition~\ref{prop:no-eps} it follows that the same is true of $\yB$.
	Thus, from Definition~\ref{def:lts-fsa} again, we can get the LTS $\yS_{\yB}$, induced by $\yB$.
	By  Proposition~\ref{prop:lts-fsa} again we now have $L(\yB) =otr(\yS_{\yB})$, and so $otr(\yS)=otr(\yS_\yB)$.
	It is also clear from Definition~\ref{def:lts-fsa} that $\yS_\yB$ has no $\tau$-labeled transitions because $\yB$ has no $\yeps$-moves.
	So, Remark~\ref{rema:tau} says that $tr(\yS_\yB)=otr(\yS_\yB)$, and 
	we have  $otr(\yS)=tr(\yS_\yB)=otr(\yS_\yB)$.
	We now argue that $\yS_\yB$ is deterministic.
	We may write $\yB=\yfsa{S}{s_0}{A}{\rho}{S}$ and $\yS_\yB=\ylts{S}{s_0}{A}{T}$.
	For the sake of contradiction, assume that $\yS_\yB$ is not deterministic.
	Then, since $tr(\yS_\yB)=otr(\yS_\yB)$, Definition~\ref{def:lts-deterministic} gives some $\ysi\in A^+$, and some $p_1$, $p_2\in S$ such that $\ytr{s_0}{\ysi}{p_1}$ and $\ytr{s_0}{\ysi}{p_1}$, with $p_1\neq p_2$.
	Let $\vert \ysi\vert$ be minimum. Then we have $\ysi=\mu x$ with $x\in A$ and $q\in S$ such that  $\ytr{s_0}{\mu}{q}\ytr{}{x}p_1$ and $\ytr{s_0}{\mu}{q}\ytr{}{x}p_2$.
	Now, Definition~\ref{def:lts-fsa} says that $(q,x,p_1)\in \rho$ and  $(q,x,p_2)\in \rho$ are moves of $\yB$, contradicting Definition~\ref{def:fsa-determinism} and the fact that $\yB$ is deterministic. 
	
	Since the construction at Proposition~\ref{prop:no-eps} is effective, we see that $\yS_\yB$ can also be effectively constructed from $\yS$.
\end{proof} 
Of course, obtaining a deterministic FSA from a nondeterministic one can be exponentially expensive in terms os the number of states.
Proposition~\ref{prop:deterministic-lts} just says how one could go about obtaining deterministic LTSs from other nondeterministic models.

\subsection{Input Output Labeled Transition Systems}\label{subsec:iolts}

In many situations, we wish to treat some action labels as symbols that the LTS ``receives'' from the environment as input symbols, and another complementary set of action labels as symbols that the LTS ``sends back'' to the environment as output symbols.
The LTS variation that differentiates between such input  and output action symbols is called an 
Input/Output  Labeled Transition System (IOLTS).
These situations motivate the next definitions.

\begin{defi}\label{def:iolts}
	An Input/Output Labeled Transition System (IOLTS) is a tuple $\yI=\yioltsI$, where: 
	\begin{enumerate}
		\item $L_I$ is a finite set of \emph{input actions}; 
		\item $L_U$ is a finite set of \emph{output actions}; 
		\item $L_I\cap L_U=\emptyset$, and $L=L_I\cup L_U$ is the set of \emph{actions}; 
		\item $\yltsS$ is the  \emph{underlying LTS} associated to $\yI$.
	\end{enumerate}
\end{defi}
We indicate the underlying LTS associated to  $\yI$ by $\yS_{\yI}$, and
we denote the class of all IOLTSs with input alphabet $L_I$ and output alphabet $L_U$ by $\yioc{I}{U}$. 
Other works, impose additional restrictions to the basic IOLTS model, but we do not need to restrict the models at this point. 
In particular, we study some of the more restricted variations that appear in~\cite{tret-model-2008} in Subsection~\ref{subsec:ioco-test-cases}, and we look at some of the more restricted models  that appear in~\cite{simap-generating-2014} in Section~\ref{sec:adepetre-suites}.  

Several notions involving IOLTSs will be defined by a direct reference to their underlying LTSs.
In particular, the semantics of an IOLTS is the set of observable traces of its underlying LTS.  

\begin{defi}\label{def:iolts-semantics}
	Let $\yI\in\yioc{I}{U}$ be an IOLTS. The \emph{semantics} of $\yI$ is the set $otr(\yI)=otr(\yltsnS_\yI)$.
\end{defi}
When $\yI$ is an IOLTS the notation $\underset{\yI}{\rightarrow}$ and $\underset{\yI}{\Rightarrow}$ are to be understood as $\underset{\yS}{\rightarrow}$ and $\underset{\yS}{\Rightarrow}$, respectively, where $\yS$ is the underlying LTS associated to $\yI$. 
IOLTSs generalize the simpler formalism of Meally machines~\cite{gill62},
where communication is synchronous.
In an IOLTS model outputs can be performed separately from inputs, so that I/O is no longer an atomic action, which facilitates the specification of more complex behaviors such as those in reactive systems.

\begin{exam}
	Figure~\ref{fig:iots} represents an IOLTS, adapted from \cite{kric-model-2007}. 
	The model represents the lightening of a bulb.
	It has the two input actions $s$ and $d$ representing a single or a double click on the lamp switch, respectively.
	It has three output labels, $dim!$, $bri!$, and \emph{off!}, informing the environment that the illumination turned to dim, bright or off, respectively. 
	The initial state is $s_0$.
	Following the rightmost circuit, starting at $s_0$, when  the user --- that is, the environment --- hits a single click on the switch the system moves from state $s_0$ to state $sd$ on the input action $s?$. This is represented by the transition $\ytr{s_0}{s?}{sd}$. Then, following the transition $\ytr{sd}{dim!}{dm}$, the system reaches the state $dm$ and outputs the label $dim!$, informing the user that the illumination is now dimmed. At state $dm$ if the user double clicks at the switch, the system responds with \emph{off} and moves back to state $s_0$. This corresponds to the transitions $\ytr{dm}{d?}{dd}$ and $\ytr{dd}{\text{\emph{off!}}}{s_0}$.
	But, if at state $dm$ the user instead clicks only once at the switch then the transitions $\ytr{dm}{s?}{ds}$ and $\ytr{ds}{bri!}{br}$ are traversed, and the system moves to state $br$ issuing the output $bri!$ to signal that  the lamp is now in the bright mode.\yfim
\end{exam} 
\begin{figure}[htb]
	\centering
	\subfigure[An IOLTS.]{\label{fig:iots}
		\begin{tikzpicture}[font=\sffamily,node distance=1cm, auto,scale=0.75,transform shape]
		\node[ punkt] (q0) {$s_0$};
		\node[punkt, inner sep=3pt,below right=0.5cm and 2.5cm  of q0] (sd) {$sd$};
		\node[punkt, inner sep=3pt,below left=0.5cm and 2.5cm of q0] (db) {$db$};
		\path (q0)    edge [ pil]   node[anchor=north,above]{$s?$} (sd);
		\path (q0)    edge [pil]   	node[anchor=north,above]{$d?$} (db);
		\node[punkt, inner sep=3pt,below =1.5cm of sd] (dm) {$dm$};  
		\path (sd)    edge [ pil]  	node[anchor=north,right]{$dim!$} (dm);
		\node[punkt, inner sep=3pt,below =1.5cm of db] (br) {$br$};
		\path (db)    edge [ pil] 	node[anchor=north,left]{$bri!$} (br);
		\node[punkt, inner sep=3pt,above left =1cm and 1cm of dm] (dd) {$dd$};
		\path (dm)    edge [ pil]     node[anchor=north,right]{$d?$} (dd);
		\path (dd)    edge [ pil]     node[anchor=north,right]{$\text{\emph{off!}}$} (q0);
		\node[punkt, inner sep=3pt,above right =1cm and 1cm of br] (bs) {$bs$};
		\path (br)    edge [ pil]     node[anchor=north,right]{$s?$} (bs);
		\path (bs)    edge [ pil]     node[anchor=north,right]{$\text{\emph{off!}}$} (q0);
		
		\node[punkt, inner sep=3pt,below left =0.3cm and 2.5cm of dm] (ds) {$ds$};
		\path (dm)    edge [ pil]     node[anchor=north,below]{$s?$} (ds);
		\path (ds)    edge [ pil]     node[anchor=north,below]{$bri!$} (br);
		\node[punkt, inner sep=3pt,above left =0.3cm and 2.5cm of dm] (bd) {$bd$};
		\path (br)    edge [ pil]     node[anchor=north,below]{$d?$} (bd);
		\path (bd)    edge [ pil]     node[anchor=north,below]{$dim!$} (dm);
		\end{tikzpicture}
	}
	\subfigure[Another IOLTS.]{ \label{fig:iots2}
		\begin{tikzpicture}[font=\sffamily,node distance=1cm, auto,scale=0.75,transform shape]
		\node[ punkt] (s0) {$s_0$};
		\node[punkt, inner sep=3pt,below right=0.5cm and 2cm  of s0] (s1) {$s_1$};
		\node[punkt, inner sep=3pt,below left=0.5cm and 2cm of s0] (s2) {$s_2$};
		\path (s0)    edge [pil]   	node[anchor=north,above]{$but?$} (s1);
		\path (s0)    edge [pil]   	node[anchor=north,above]{$but?$} (s2);
		
		\path (s1)    edge [loop right] node   {$but?$} (s1);
		\path (s2)    edge [loop left] node   {$but?$} (s2);
		
		\node[punkt, inner sep=3pt,below =0.8cm of s1] (tea) {$tea$};
		\node[punkt, inner sep=3pt,below =0.8cm of s2] (cof) {$cof$};
		\path (s1)    edge [pil]   	node[anchor=north,right]{$tea!$} (tea);
		\path (s2)    edge [pil]   	node[anchor=north,left]{$\text{\emph{coffee!}}$} (cof);
		
		\path (cof)    edge [pil,bend right=30]   	node[anchor=north,right]{$\tau$} (s0);
		\path (tea)    edge [pil,bend left=30]   	node[anchor=north,right]{$\tau$} (s0);
		\end{tikzpicture}
	}
	\caption{IOLTS models.} 
\end{figure}
The next example illustrates internal actions.
\begin{exam}
	Figure~\ref{fig:iots2} represents another IOLTS, adapted from \cite{tret-model-2008}.
	It describes a strange coffee machine.
	When the user hits the start button --- represented by the input label $but?$ --- the machine
	chooses to go either to state $s_1$ or to state $s_2$.
	If it goes to state $s_1$, no matter how many extra times the user hits the button the loop labeled $but?$ at state $s_1$ is traversed, keeping the machine at state $s_1$;  then, asynchronously, the machine dispenses a cup of tea, signaled by the output label $tea!$, reaching state $tea$. Next, the machine performs an internal action, signaled by the internal action symbol $\tau$, and  returns to the start state.
	If it chooses to follow down the left branch from the start state the situation is similar, but now the machine will dispense a cup of coffee,  indicated by the output label \emph{coffee!}, and another internal action moves it back to the start state again.
	\yfim
\end{exam}

\section{Conformance testing}\label{sec:conformance}

This section defines a new and generalized notion of conformance relation, allowing for generic sets of desired and undesired behaviors to be checked for asynchronous IOLTS models. 
We study the relationship of this more general notion of conformance to the classical {\bf ioco}-conformance~\cite{tret-model-2008}.  
In particular, we show that the classical {\bf ioco}-conformance relation is a special case of this more general relation.

\begin{rema}\label{rema:quiescent}
	We deal with the notion of quiescent states in detail in the forthcoming Section~\ref{sec:tretma-suites}.
	For now we just note that the treatment of quiescent states will require the addition of a new symbol $\yde$ to the output alphabet, and will also result in the inclusion of some $\yde$-transition to the set of original transitions. 
	But, since the  results obtained in this section are valid for {\emph any} LTS model,  they will remain valid for those variations that treat quiescent states explicitly. 
\end{rema}

\subsection{The General Conformance Relation}\label{subsec:genconf}

Informally, we consider a language $D$, the set of ``desirable'', or ``allowed'', behaviors, and a language $F$, the set of ``forbidden'', or ``undesirable'', behaviors we want to verify of a system.
If we have a specification LTS $\yS$ and an implementation LTS $\yI$ we want to say that $\yI$ \emph{conforms} to $\yS$ according to $(D,F)$ if no undesired behavior in $F$ that is observable in $\yI$  is specified in $\yS$, and all desired behaviors in $D$ that are observable in $\yI$ are specified in $\yS$.
This leads to the following definition of conformance.
\begin{defi}\label{def:conf}
	Let $D, F\ysse L^\star$, 
	$\yS$ and $\yI$  be LTSs over $L$.
	Then \emph{$\yI$ $(D,F)$-conforms to $\yS$}, written $\yI\yconf{D}{F} \yS$, if and only if
	$\ysi\in otr(\yI)\cap F$ gives $\ysi\not\in otr(\yS)$, and 
	$\ysi\in otr(\yI)\cap D$ gives $\ysi\in otr(\yS)$.
\end{defi}

We note an equivalent way of expressing these conditions that may also be useful.
Write $\ycomp{otr}(\yS)$ for the complement of $otr(\yS)$, that is,  $\ycomp{otr}(\yS)=L^\star-otr(\yS)$.

\begin{prop}\label{prop:equiv-conf}
	Let $\yS$ and $\yI$ be LTSs over $L$, and $D, F\ysse L^\star$.
	Then $\yI\yconf{D}{F} \yS$ if and only if
	$$otr(\yI)\cap \big[(D\cap\ycomp{otr}(\yS))\cup(F\cap otr(\yS))\big]=\yemp.$$
\end{prop}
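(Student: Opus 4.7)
The plan is to unpack Definition~\ref{def:conf} into two implications and then convert each implication into a statement about set emptiness, using the observation that ``$\sigma \in A$ implies $\sigma \in B$'' is equivalent to ``$A \cap \ycomp{B} = \yemp$''. Once both conditions are rewritten this way, I combine them with a single union, which yields exactly the displayed formula.

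In more detail, I would first observe that the first clause of Definition~\ref{def:conf}, namely that $\ysi \in otr(\yI) \cap F$ implies $\ysi \notin otr(\yS)$, is equivalent to saying that no $\ysi$ lies simultaneously in $otr(\yI)$, in $F$, and in $otr(\yS)$; that is, $otr(\yI) \cap F \cap otr(\yS) = \yemp$, which I rewrite as $otr(\yI) \cap (F \cap otr(\yS)) = \yemp$. Similarly, the second clause says that $\ysi \in otr(\yI) \cap D$ implies $\ysi \in otr(\yS)$, which, using $\ycomp{otr}(\yS) = L^\star - otr(\yS)$, is equivalent to $otr(\yI) \cap D \cap \ycomp{otr}(\yS) = \yemp$, i.e.\ $otr(\yI) \cap (D \cap \ycomp{otr}(\yS)) = \yemp$.

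Finally, I use the elementary set identity $X \cap (Y \cup Z) = (X \cap Y) \cup (X \cap Z)$ to note that the conjunction of the two emptiness statements is equivalent to
\[
otr(\yI) \cap \bigl[(D \cap \ycomp{otr}(\yS)) \cup (F \cap otr(\yS))\bigr] = \yemp,
\]
which is the desired equivalent formulation of $\yI \yconf{D}{F} \yS$.

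There is no real obstacle here; the argument is purely propositional and set-theoretic, and each of the two clauses of the conformance definition corresponds cleanly to one of the two terms inside the union. The only care to be taken is to make sure the contrapositive of ``$P \Rightarrow Q$'' is translated correctly into the language of intersections with complements, and that the two resulting emptiness conditions are then merged via the distributive law rather than accidentally intersected.
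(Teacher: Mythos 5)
Your argument is correct and is essentially the same as the paper's own proof: both translate the two clauses of Definition~\ref{def:conf} into the emptiness conditions $otr(\yI)\cap F\cap otr(\yS)=\yemp$ and $otr(\yI)\cap D\cap\ycomp{otr}(\yS)=\yemp$, and then merge them via distributivity of intersection over union. No gaps.
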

\begin{proof}
	From Definition~\ref{def:conf} we readily get $\yI\yconf{D}{F} \yS$ if and only if  $otr(\yI)\cap F\cap otr(\yS)=\yemp$ and 
	$otr(\yI)\cap D\cap \ycomp{otr}(\yS)=\yemp$.
	And the last two statements  hold if and only if 
	$\yemp=otr(\yI)\cap \big[(D\cap\ycomp{otr}(\yS))\cup(F\cap otr(\yS))\big]$,
	as desired.
\end{proof}


\begin{exam}
	\label{exnew}
	Let $\yS$ be the specification in Figure~\ref{fig:newspec} and $\yI$  the implementation in Figure~\ref{fig:newimpl}.
	Take the languages $D=(a+b)^\star ax$ and $F=ba^\star b$. 
	We  want to check if $\yI\yconf{D}{F} \yS$ holds. 
\begin{figure}[htb]
	\centering
	\subfigure[A LTS specification $\yS$.]{\label{fig:newspec}
		\begin{tikzpicture}[font=\sffamily,node distance=1cm, auto,scale=0.75,transform shape]
		\node[ initial by arrow, initial text={}, punkt] (s0) {$s_0$};
		\node[punkt, inner sep=3pt,right=3cm and 3cm  of s0] (s1) {$s_1$};
		\node[punkt, inner sep=3pt,below =2cm and 1.5cm of s1] (s2) {$s_2$};
		\node[punkt, inner sep=3pt,below =2cm and 1.5cm of s0] (s3) {$s_3$};
		
		\path (s0)    edge [pil]   	node[anchor=north,below]{a} (s1);
		\path (s1)    edge [pil]   	node[anchor=north,right]{b,x} (s2);
		\path (s1)    edge [pil]   	node[anchor=north,right]{a} (s3);
		
		\path (s0)    edge [pil]   	node[anchor=north,right]{b} (s3);
		\path (s3)    edge [loop below] node   {a} (s3);    
		\path (s2)    edge [loop below] node   {b} (s1);
		\path (s2)    edge [pil]   	node[anchor=north,below]{x} (s3);
		\path (s3)    edge [pil,bend left=25]   	node[anchor=north,left]{b} (s0);
		
		\end{tikzpicture} 
	}
	\subfigure[A LTS implementation $\yI$.]{ \label{fig:newimpl}
		\begin{tikzpicture}[font=\sffamily,node distance=1cm, auto,scale=0.75,transform shape]
		\node[ initial by arrow, initial text={}, punkt] (s0) {$q_0$};
		\node[punkt, inner sep=3pt,right=3cm and 3cm  of s0] (s1) {$q_1$};
		\node[punkt, inner sep=3pt,below =2cm and 1.5cm of s1] (s2) {$q_2$};
		\node[punkt, inner sep=3pt,below =2cm and 1.5cm of s0] (s3) {$q_3$};
		
		\path (s0)    edge [pil]   	node[anchor=north,below]{a} (s1);
		\path (s1)    edge [pil]   	node[anchor=north,right]{b,x} (s2);
		\path (s1)    edge [pil]   	node[anchor=north,right]{a} (s3);
		
		\path (s0)    edge [pil]   	node[anchor=north,right]{b} (s3);
		\path (s3)    edge [loop below] node   {a} (s3);    
		\path (s2)    edge [loop below] node   {b} (s1);
		\path (s2)    edge [pil]   	node[anchor=north,below]{a} (s3);
		\path (s3)    edge [pil,bend left=25]   	node[anchor=north,left]{b} (s0);
		
		\end{tikzpicture}
	}
	\caption{LTS models.} 
\end{figure}
	We see that $F\cap otr(\yS)\neq \yemp$ holds, for instance  $baab\in F\cap  otr(\yS)$.
	Since $baab\in otr(\yI)$, we conclude that $\yconf{D}{F} \yS$ does not hold.
	We also see that $\ycomp{otr}(\yS)$ is the language accepted by the FSA $\ycomp{\yS}$, in Figure~\ref{fig:newcompspec}.
	A simple inspection shows that $ababax$ is accepted by $\ycomp{\yS}$, and that $ababax\in D$. 
	Hence  $ababax\in D\cap\ycomp{otr}(\yS)$.  
	From Figure~\ref{fig:newimpl}, we get $\ytrt{q_0}{abab}{q_0}\ytrt{}{ax}{q_2}$, and so $ababax$ is an observable behavior of $\yI$. 
	So,  $otr(\yI) \cap D\cap\ycomp{otr}(\yS) \neq \yemp$ which also means that $\yI\yconf{D}{F} \yS$ does not hold.
	We see that, in this case, the condition specified in either of the regular languages $F$ or $D$ is enough to guarantee non-conformance.
	
	For the sake of the completeness, assume the same models $\yS$ and $\yI$, but now take $F=ab^+ x$ and  $D=aa^+b(bb)^\star ax$. 
	By inspection, for any $\ysi\in ab^{+}$ get $\ytrt{q_0}{\ysi}{q_2}$, so that $\ysi x\not\in otr(\yI)$. 
	Thus, $otr(\yI)\cap F\cap otr(\yS)=\yemp$. 
	Moreover, for any $\ysi\in D$ we get $\ytrt{s_0}{\ysi}{s_2}$ so that $D\ysse otr(\yS)$. 
	Hence, $D\cap\ycomp{otr}(\yS)=\yemp$. 
	Therefore, $otr(\yI)\cap \big[(D\cap\ycomp{otr}(\yS))\cup(F\cap otr(\yS))\big]=\yemp$, and we now see that $\yI\yconf{D}{F} \yS$ holds.
\begin{figure}[htb]
	\centering
	\subfigure[A FSA for $\ycomp{otr}(\yS)$ of Figure~\ref{fig:newspec}.]{\label{fig:newcompspec}
		\begin{tikzpicture}[font=\sffamily,node distance=1cm, auto,scale=0.75,transform shape]
		\node[ initial by arrow, initial text={}, punkt] (s0) {$\overline{s}_0$};
		\node[punkt, inner sep=3pt,right=3cm and 3cm  of s0] (s1) {$\overline{s}_1$};
		\node[punkt, inner sep=3pt,below =2cm and 1.5cm of s1] (s2) {$\overline{s}_2$};
		\node[punkt, inner sep=3pt,below =2cm and 1.5cm of s0] (s3) {$\overline{s}_3$};
		\node[punkt, accepting, inner sep=3pt,below left =1.5cm and 1.2cm of s3] (err) {$err$};

		\path (s0)    edge [pil]   	node[anchor=north,below]{a} (s1);
		\path (s1)    edge [pil]   	node[anchor=north,right]{b,x} (s2);
		\path (s1)    edge [pil]   	node[anchor=north,right]{a} (s3);
		
		\path (s0)    edge [pil]   	node[anchor=north,right]{b} (s3);
		\path (s3)    edge [loop below] node   {a} (s3);    
		\path (s2)    edge [loop below] node   {b} (s1);
		\path (s2)    edge [pil]   	node[anchor=north,below]{x} (s3);
		\path (s3)    edge [pil,bend left=25]   	node[anchor=north,left]{b} (s0);
		
		\path (s0)    edge [pil,bend right=25]   	node[anchor=south]{x} (err);
		\path (s3)    edge [pil,]   	node[anchor=south]{x} (err);
		\path (s2)    edge [pil,bend left=25]   	node[anchor=north]{a} (err);
		
		\path (err)    edge [loop below] node   {a,b,x} (err);
		\end{tikzpicture}
	}
	\subfigure[FSA $\yD$ for $D=otr(\yS)\cdot L_U$.]{ \label{fig:newspecD}
		\begin{tikzpicture}[font=\sffamily,node distance=1cm, auto,scale=0.75,transform shape]
		\node[ initial by arrow, initial text={}, punkt] (s0) {$d_0$};
		\node[punkt, inner sep=3pt,right=3cm and 3cm  of s0] (s1) {$d_1$};
		\node[punkt, inner sep=3pt,below =2cm and 1.5cm of s1] (s2) {$d_2$};
		\node[punkt, inner sep=3pt,below =2cm and 1.5cm of s0] (s3) {$d_3$};
		\node[punkt, accepting, inner sep=3pt,below right =2.5cm and 1cm of s3] (err) {$D$};

		\path (s0)    edge [pil]   	node[anchor=north,below]{a} (s1);
		\path (s1)    edge [pil]   	node[anchor=north,left]{b,x} (s2);
		\path (s1)    edge [pil]   	node[anchor=north,right]{a} (s3);
		
		\path (s0)    edge [pil]   	node[anchor=north,right]{b} (s3);
		\path (s3)    edge [loop below] node   {a} (s3);    
		\path (s2)    edge [loop below] node   {b} (s1);
		\path (s2)    edge [pil]   	node[anchor=north,below]{x} (s3);
		\path (s3)    edge [pil,bend left=25]   	node[anchor=north,left]{b} (s0);
		
		\path (s0)    edge [pil,bend right=60]   	node[anchor=south, left]{x} (err);
		\path (s3)    edge [pil,]   	node[anchor=south]{x} (err);
		\path (s2)    edge [pil]   	node[anchor=north, right]{x} (err);
		\path (s1)    edge [pil, bend left=60]   	node[anchor=north,right]{x} (err);
		
		\end{tikzpicture}
	}
	\caption{FSAs for languages  $\ycomp{otr}(\yS)$ and $D$.} 
\end{figure}

	In Subsection~\ref{subsec:ioco}  we will use this same example to compare the new conformance relation to the traditional  {\bf ioco}-conformance relation~\cite{tret-model-2008}.  \yfim
\end{exam}


%

We note that by varying $D$ and $F$ we can accommodate  several different notions of conformance, as illustrated next. 
This attests to the generality of the $\yconf{D}{F}{\!\!}$ relation.
\begin{enumerate}
	\item  All we want to check is that any observable behavior of $\yI$ must rest specified in $\yS$.
	Then, let $D=L^\star$ and $F=\yemp$.
	We get $\yI\yconf{D}{F} \yS$ if and only if $otr(\yI)\ysse otr(\yS)$.
	\item 
	Let $C$ and $E$ be disjoint subsets of locations of $\yI$. 
	Allowed observable behaviors of $\yI$ are all its observable traces that lead to a location in $C$, and no observable behavior of $\yI$ can lead to a location in $E$. 
	Let $H_C$, $H_E\ysse L^\star$ be the sets of observable behaviors of $\yI$ that end in locations in $C$ and in $E$, respectively.  
	Then, $\yI\yconf{H_C}{H_E} \yS$ if and only if any allowed behavior of $\yI$ is also specified in $\yS$ and no forbidden behavior of $\yI$ is  specified in $\yS$.
	\item Let $H\ysse L$.
	Desirable behaviors of $\yI$ are  its observable traces that end in a label in  $H$, and  undesirable behaviors are of no concern.
	Choose $F=\yemp$ and $D=L^\star H$.
\end{enumerate}
Clearly,  $D$ and $F$ are regular languages in all cases listed above.
More generally, if an implementation $\yI$ conforms to a specification $\yS$ according to a pair of  languages $(D,F)$, then we can assume that $D$ and $F$ are, in fact, finite languages.
\begin{coro}\label{coro:finite}
	Let $\yS$ and $\yI$ be  LTSs over $L$,  and
	let $D$ and $F$ be languages over  $L$.
	Then, there are finite languages $D'\ysse D$ and $F'\ysse F$ such that $\yI\yconf{D}{F} \yS$ if and only if $\yI\yconf{D'}{F'} \yS$.
\end{coro}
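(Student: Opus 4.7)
The plan is to reduce the corollary to two applications of Lemma~\ref{lemm:finite}, one for the ``desired'' part and one for the ``forbidden'' part of the conformance condition.

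First I would rewrite the defining conditions of $\yI\yconf{D}{F}\yS$ from Definition~\ref{def:conf} in inclusion form. The requirement ``$\ysi\in otr(\yI)\cap F\Rightarrow \ysi\notin otr(\yS)$'' is equivalent to $otr(\yI)\cap F\ysse \ycomp{otr}(\yS)$, and ``$\ysi\in otr(\yI)\cap D\Rightarrow \ysi\in otr(\yS)$'' is equivalent to $otr(\yI)\cap D\ysse otr(\yS)$. This is exactly the shape $A_2\cap B\ysse A_1$ that Lemma~\ref{lemm:finite} treats.

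Next I would apply Lemma~\ref{lemm:finite} twice. For the forbidden behaviors, take $A_2=otr(\yI)$, $B=F$, $A_1=\ycomp{otr}(\yS)$; the lemma produces a finite $F'\ysse F$ with $otr(\yI)\cap F\ysse \ycomp{otr}(\yS)$ iff $otr(\yI)\cap F'\ysse \ycomp{otr}(\yS)$. For the desired behaviors, take $A_2=otr(\yI)$, $B=D$, $A_1=otr(\yS)$; the lemma produces a finite $D'\ysse D$ with $otr(\yI)\cap D\ysse otr(\yS)$ iff $otr(\yI)\cap D'\ysse otr(\yS)$.

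Finally, I would conjoin the two equivalences: $\yI\yconf{D}{F}\yS$ holds iff both inclusions on $(D,F)$ hold, iff both inclusions on $(D',F')$ hold, iff $\yI\yconf{D'}{F'}\yS$. There is essentially no obstacle here; the only thing to be careful about is that the two applications of Lemma~\ref{lemm:finite} are independent, so the finite witnesses $D'$ and $F'$ can be chosen separately (each at most a singleton, in fact, following the construction in the proof of Lemma~\ref{lemm:finite}). Thus the corollary is immediate.
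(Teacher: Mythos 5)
Your proof is correct and follows essentially the same route as the paper's: rewrite $\yI\yconf{D}{F}\yS$ as the two inclusions $otr(\yI)\cap F\ysse\ycomp{otr}(\yS)$ and $otr(\yI)\cap D\ysse otr(\yS)$ (the paper cites Proposition~\ref{prop:equiv-conf} for this), apply Lemma~\ref{lemm:finite} once to each inclusion to obtain the finite witnesses $F'$ and $D'$, and conjoin the resulting equivalences. No substantive difference.
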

\begin{proof}
	By Proposition~\ref{prop:equiv-conf}, $\yI\yconf{D}{F} \yS$ if and only if  
	$otr(\yI)\cap F\ysse \ycomp{otr}(\yS)$ and $otr(\yI)\cap D\ysse otr(\yS)$.
	Lemma~\ref{lemm:finite} gives a
	finite language $D'\ysse D$  such that $otr(\yI)\cap D'\ysse otr(\yS)$ if and only if $otr(\yI)\cap D\ysse otr(\yS)$.
	Likewise, we can get a finite language $F'\ysse F$  such that $otr(\yI)\cap F'\ysse \ycomp{otr}(\yS)$ if and only if $otr(\yI)\cap F\ysse \ycomp{otr}(\yS)$. 
	By Proposition~\ref{prop:equiv-conf} again, this holds if and only if 
	$\yI\yconf{D'}{F'} \yS$, as desired.
\end{proof}

\begin{exam}
	Let $\yS$ and $\yI$ be a specification and an implementation, again as depicted in Figures~\ref{fig:newspec} and~\ref{fig:newimpl}, respectively. 
	Take the  languages $D=(a+b)^\star ax$ and $F=ab^+ x$. 
	From Example~\ref{exnew} we know that $\yI\yconf{D}{F} \yS$  does not hold, because $ababax\in otr(I)\cap D\cap \ycomp{otr}(S)$.
	Take the finite languages $D'=\{ax,ababax\}$ and $F'=\{abx\}$.
	Clearly, $D'\ysse D$ and $F'\ysse F$.
	It is clear that $ax,ababax\in otr(\yI)$, but $abx \notin otr(\yI)$, and $ax,abx \in otr(\yS)$, but $ababax\not\in otr(\yS)$, so that  $ababax\in otr(\yI)\cap D'\cap \ycomp{otr}(\yS)$ whereas $otr(\yI)\cap F'\cap otr(\yS)=\emptyset$.
	So, by Proposition~\ref{prop:equiv-conf}, $\yI\yconf{D'}{F'} \yS$ does not hold, as desired. \yfim
\end{exam}

\subsection{The ioco Conformance Relation}\label{subsec:ioco} 

Let $\yS$ be a specification IOLTS and  let $\yI$ be an implementation IOLTS.
The {\bf ioco}-conformance relation~\cite{tret-model-2008} essentially
requires that any observable trace $\ysi$ of $\yI$ is also an observable trace of $\yS$ and, further, if $\ysi$ leads $\yI$ to a location from which $\yI$ can emit the output label $\ell$, then $\ysi$  must also lead $\yS$ to a location from which the same label $\ell$ can also be output. 
That is, the implementation $\yI$ cannot emit a symbol $\ell$ that is not an output option already specified by the specification $\yS$, no matter to which locations any observable behavior of $\yS$ lead the two models to.

The preceding discussion motivates the following definitions. 
\begin{defi}[\cite{tret-model-2008}]\label{def:out-after}
	Let $L=L_I\cup L_U$, $\yS=\yioltsI$, and $\yioNI$.
	\begin{enumerate}
		\item Define $\yout\!: \ypow{S}\rightarrow L_I\cup L_U$ by $\yout(V)= \bigcup\limits_{s\in V}\{\ell\in L_U\yst \ytrt{s}{\ell}{}\}$.
		\item Define $\yafter\!\!\!: S\times  L^\star\rightarrow \ypow{S}$ by  $s \yafter \ysi = \{q \yst \ytrt{s}{\ysi}{q}\}$, for all $s\in S$ and all $\ysi\in L^\star$.
		\item Define $\yI \yioco \yS$ if and only if 
		$\yout(q_0 \yafter \ysi)\ysse \yout(s_0 \yafter \ysi)$, for all  $\ysi\in otr(\yS)$.
	\end{enumerate}   
\end{defi}
We may write $\yout(s)$ instead of $\yout(\{s\})$.
\begin{exam}
	Consider that the IOLTS depicted in Figure~\ref{fig:iots2} is an implementation $\yI=\yioltsI$ to be put under  test, where $L_I=\{but?\}$ and $L_U=\{co\!f\!\!f\!e!, tea!\}$.
	We then see that $\yout(\{s_1\})=\{but?,tea!\}$, and $\yout(tea!)=\{but?\}$.
	Also, $s_0\yafter but?tea!=\{tea,s_0\}$ and $cof \yafter but?=\{s_1,s_2\}$.
\end{exam}

Next we show that Definition~\ref{def:conf} subsumes the classical {\bf ioco}-conformance~\cite{tret-model-2008}. 
\begin{lemm}\label{lemm:ioco-reg}
	Let $\yS=\yioS$ be a specification IOLTS and let  $\yI=\yioQ$ be an implementation IOLTS.
	Then  $D=otr(\yS) L_U$ is a regular language over $L_I\cup L_U$, and we have that $\yI \yioco \yS$ if and only if
	$\yI \yconf{D}{\yemp} \yS$.
\end{lemm}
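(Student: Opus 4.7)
The plan is to unpack both sides into statements about traces ending in an output symbol, and observe they are literally the same condition. First, for regularity of $D$: by Proposition~\ref{prop:lts-fsa} the language $otr(\yS)$ coincides with $L(\yfsanAs{\yltsnS_\yS})$, hence is regular. Since $L_U$ is finite, it is a regular language (viewed as a set of length-one words). By Proposition~\ref{prop:reg-closure}, the product $otr(\yS)L_U$ is regular, and one can actually construct a FSA for $D$ from the FSA induced by $\yS$.

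Next, I would unpack $\yI \yconf{D}{\yemp} \yS$ using Definition~\ref{def:conf}. Because $F=\yemp$, the forbidden-behavior clause is vacuously true, so the relation reduces to: for every $\ysi\in otr(\yI)\cap D$, we have $\ysi\in otr(\yS)$. Since $\mu\in D=otr(\yS)L_U$ iff $\mu=\ysi\ell$ with $\ysi\in otr(\yS)$ and $\ell\in L_U$, this becomes
\begin{equation*}
(\star)\quad \text{for all } \ysi\in otr(\yS) \text{ and } \ell\in L_U:\ \ysi\ell\in otr(\yI) \implies \ysi\ell\in otr(\yS).
\end{equation*}

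Then I would unpack $\yI\yioco\yS$ using Definition~\ref{def:out-after}. By definition, $\ell\in\yout(q_0\yafter\ysi)$ iff $\ell\in L_U$ and there exists $q$ with $\ytrt{q_0}{\ysi}{q}\ytrt{}{\ell}{}$, which is equivalent to $\ysi\ell\in otr(\yI)$ together with $\ell\in L_U$. Symmetrically for $\yout(s_0\yafter\ysi)$. Hence the inclusion $\yout(q_0\yafter\ysi)\ysse\yout(s_0\yafter\ysi)$ for every $\ysi\in otr(\yS)$ is literally condition $(\star)$. Comparing, both sides of the biconditional collapse to $(\star)$, so the equivalence is immediate.

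There is no real obstacle here; the whole proof is a matter of carefully applying the two definitions. The one place to be mildly careful is the decomposition step in $otr(\yS)L_U$: one must check that $\ysi\in otr(\yS)$ is forced (not merely $\ysi$ being some prefix of an element of $otr(\yS)$), which is immediate from the definition of language concatenation. Everything else is routine rewriting.
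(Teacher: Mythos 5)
Your proof is correct and follows essentially the same route as the paper's: regularity via Propositions~\ref{prop:lts-fsa} and~\ref{prop:reg-closure}, and the equivalence via the observation that $\ell\in\yout(q_0\yafter\ysi)$ with $\ell\in L_U$ is the same as $\ysi\ell\in otr(\yI)$, together with the decomposition of elements of $otr(\yS)L_U$. The only difference is presentational — you reduce both sides to a single common condition $(\star)$, while the paper argues the two implications separately — and your implicit use of the trace characterization of $\yout$ is at the same level of detail as the paper's.
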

\begin{proof}
	From Definition~\ref{def:iolts-semantics} we know that the semantics of an IOLTS is given by the semantics of its underlying LTS.
	So, for the remainder of this proof, when we write $\yS$ and $\yI$ we will be referring to the underlying LTSs 
	of the given IOLTSs $\yS$ and $\yI$, respectively.
	
	By Proposition~\ref{prop:lts-fsa} 
	we see that $otr(\yS)$ and $L_U$ are regular languages.
	Hence, by Proposition~\ref{prop:reg-closure} we conclude that $D$ is also a regular language.
	
	Now, we show that $\yI\yioco \yS$ if and only if $\yI \yconf{D}{\yemp} \yS$.
	First assume that we have $\yI \yconf{D}{\yemp} \yS$. 
	Because $\yI\cap \yemp\cap \yS=\yemp$, it is clear from Definition~\ref{def:conf} that $\yI \yconf{D}{\yemp} \yS$ is equivalent  
	to $otr(\yI)\cap D\ysse otr(\yS)$.
	In order to prove that $\yI \yioco \yS$, let $\ysi\in otr(\yS)$ and let $\ell\in \yout(q_0 \yafter \ysi)$.
	We must show that $\ell\in \yout(s_0\yafter \ysi)$.
	Because  $\ell\in \yout(q_0\yafter \ysi)$ we get 
	$\ysi, \ysi\ell\in otr(\yI)$.
	Since $\ell\in L_U$, we get $\ysi\ell\in otr(\yS) L_U$ and so $\ysi\ell\in D$.
	We conclude that $\ysi\ell\in otr(\yI)\cap D$.
	Since we already know that $otr(\yI)\cap D\ysse otr(\yS)$, we now have 
	$\ysi\ell \in otr(\yS)$.
	So, $\ell\in \yout(s_0 \yafter \ysi)$, as desired.
	
	Next, assume that $\yI \yioco \yS$ and we want to show that $\yI \yconf{D}{\yemp} \yS$ holds.
	Since $otr(\yI)\cap \yemp\cap otr(\yS)=\yemp$, 
	the first condition of Definition~\ref{def:conf} is immediately verified.
	We now turn to the second condition of Definition~\ref{def:conf}.
	In order to show that $otr(\yI)\cap D\ysse otr(\yS)$,
	let $\ysi\in otr(\yI)\cap D$.
	Then, $\ysi\in D$ and so $\ysi=\yal\ell$ with $\ell\in L_U$ and $\yal\in otr(\yS)$, because $D=otr(\yS) L_U$.
	Also, $\ysi\in otr(\yI)$ gives $\yal\ell\in otr(\yI)$, and so $\yal\in otr(\yI)$.
	Then, because $\ell\in L_U$, we get $\ell\in \yout(q_0 \yafter \yal)$.
	Because we assumed $\yI \yioco \yS$ and we have $\yal\in otr(\yS)$, we also get $\ell\in \yout(s_0 \yafter \yal)$,
	and so $\yal\ell\in otr(\yS)$. Because $\ysi=\yal\ell$, we have $\ysi\in otr(\yS)$.
	We have, thus, showed that $otr(\yI)\cap D\ysse otr(\yS)$, as desired.
\end{proof}

The next example illustrates lemma~\ref{lemm:ioco-reg}.
\begin{exam}
	\label{exD-ioco}
	Let $\yS$ be the specification as depicted in Figure~\ref{fig:newspec}, and let the
	implementation $\yI$ be as depicted in Figure~\ref{fig:newimpl},
	with  the extra transition $\ytr{q_3}{x}{q_0}$.
	Recall that $L_I=\{a,b\}$, $L_U=\{x\}$. 
	
	Figure~\ref{fig:newspecD} shows a FSA $\yD$ such that $L(\yD)=D=otr(\yS) L_U$. 
	From Figure~\ref{fig:newspec} and the new $\yI$ it is apparent that 
	$\ytrt{s_0}{aa}{s_3}$ and also $\ytrt{q_0}{aa}{q_3}$.
	We also see that $x\in \yout(q_3)$, but $x\not\in \yout(s_3)$.
	So, by Definition~\ref{def:out-after}, $\yI\yioco \yS$ does not hold.
	Now take $\ysi=aax$.
	Since $aa\in otr(\yS)$, we get $\ysi\in otr(\yS)L_U=D$.
	Also, $\ysi\in otr(\yI)$ and $\ysi\not\in otr(\yS)$, so that
	$\ysi\in otr(\yI)\cap D\cap \ycomp{otr}(\yS)$.
	Using Proposition~\ref{prop:equiv-conf}, we conclude that $\yI\yconf{D}{\yemp} \yS$ does not hold too, as expected. \yfim
\end{exam} 

In the next example the new notion of conformance,  $\yconf{D}{F}{\!\!}$,  is seen to be able to capture non-conformance situations where the classical conformance relation, $\yioco{\!\!}$, would always yield positive results. 
\begin{exam}
	\label{exD-conf}
	Consider the models $\yS$ and $\yI$, as depicted in Figures~\ref{fig:newspec} and~\ref{fig:newimpl}, respectively, and assume $L_I=\{a,b\}$, $L_U=\{x\}$. 
	
	Figure~\ref{fig:newspecD} depicts a FSA $\yD$ such that $L(\yD)=otr(\yS) L_U$. 
	From Figures~\ref{fig:newspec} and~\ref{fig:newimpl} we can check that 
	there is no $\ysi \in (L_I \cup L_U)^\star$, $s\in S $ and $q\in Q$ such that $\ytrt{s_0}{\ysi}{s}$ and $\ytrt{q_0}{\ysi}{q}$ with $x\in \yout(q)$, but $x\not\in \yout(s)$.
	So, by Definition~\ref{def:out-after}, $\yI\yioco \yS$ holds.
	
	Now let $F=\yemp$, $D=(a+b)^\star ax$, and take $\ysi=ababax \in D$.
	By inspection we see that $\ysi \in \ycomp{otr}(\yS)$ and also that $\ysi\in otr(\yI)$, so that $\ysi\in otr(\yI)\cap D\cap \ycomp{otr}(\yS)$.
	Using Proposition~\ref{prop:equiv-conf}, we conclude that $\yI\yconf{D}{\yemp} \yS$ does not hold, whereas $\yI\yioco \yS$ would always hold. 
	
	We also note that by taking $F\neq \yemp$  gives the test designer even more freedom to check
	whether  some behaviors that occur in the specification are, or are not, also represented in the implementation.
	That is, assuming $D=\yemp$ for the moment, if a verdict of conformance is obtained, there is a guarantee that the behaviors specified in $F$ are not present in the implementation, whereas a verdict of non-conformance would say that some behavior of $F$ is also present in the implementation.  \yfim
\end{exam} 

In fact, we can take $D$ to be a finite language, which is a very useful characteristic in practical applications. 
\begin{coro}\label{coro:ioco-finite}
	Let $\yS=\yioS$ be a specification and $\yI=\yioQ$ be an implementation.
	Then there is a finite language $D\ysse (L_I\cup L_U)^\star$ such that $\yI \yioco \yS$ if and only if
	$\yI \yconf{D}{\yemp} \yS$.
\end{coro}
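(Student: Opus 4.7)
The plan is to chain together two results already established in the excerpt. First, Lemma~\ref{lemm:ioco-reg} has just shown that $\yI\yioco \yS$ holds if and only if $\yI \yconf{D_0}{\yemp} \yS$, where $D_0 = otr(\yS) L_U$. This language $D_0$ is regular (indeed a product of a regular language with a finite alphabet-based factor), but in general infinite, so it does not yet meet the claim. Next, Corollary~\ref{coro:finite} tells us that for any pair of languages we can always replace them by finite subsets without affecting $(D,F)$-conformance. Applying that corollary with the specific choice $D \coloneqq D_0 = otr(\yS) L_U$ and $F \coloneqq \yemp$ yields a finite language $D' \subseteq D_0 \subseteq (L_I \cup L_U)^\star$ and a finite $F' \subseteq \yemp$ (so necessarily $F' = \yemp$) such that $\yI \yconf{D_0}{\yemp} \yS$ if and only if $\yI \yconf{D'}{\yemp} \yS$.

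Combining the two equivalences gives $\yI \yioco \yS$ if and only if $\yI \yconf{D'}{\yemp} \yS$, with $D'$ finite, which is exactly what the corollary asserts. So the proof essentially reduces to one line invoking Lemma~\ref{lemm:ioco-reg} and Corollary~\ref{coro:finite} in sequence.

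There is no real obstacle here: the hard work has already been done in Lemma~\ref{lemm:ioco-reg} (showing that classical \textbf{ioco} sits inside the new framework) and in Corollary~\ref{coro:finite} (showing that finiteness can always be assumed). The only thing worth being a little careful about is to note explicitly that applying Corollary~\ref{coro:finite} with $F = \yemp$ automatically forces the resulting $F'$ to be empty, so the chain of equivalences preserves the shape $\yconf{\cdot}{\yemp}$ required to match the formulation in Lemma~\ref{lemm:ioco-reg}. Once that is observed, writing out the proof is a two-sentence affair.
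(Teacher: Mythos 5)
Your proof is correct and is exactly the paper's argument: the paper's own proof reads ``Follows immediately from Lemma~\ref{lemm:ioco-reg} and Corollary~\ref{coro:finite},'' which is precisely the two-step chaining you describe. Your remark that taking $F=\yemp$ forces $F'=\yemp$ is a sensible (and correct) detail that the paper leaves implicit.
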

\begin{proof}
	Follows immediately from Lemma~\ref{lemm:ioco-reg} and Corollary~\ref{coro:finite}.
\end{proof}

The {\bf ioco} relation can also be characterized as follows.
\begin{coro}\label{coro:ioco-charac}
	Let $\yS$ be a specification and  $\yI$ an implementation.
	Then $\yI \yioco \yS$ if and only if $otr(\yI)\cap T= \yemp$, where $T=\ycomp{otr}(\yS)\cap \big[otr(\yS) L_U\big]$.
\end{coro}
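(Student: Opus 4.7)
The plan is to chain together the two main results from this subsection, namely Lemma~\ref{lemm:ioco-reg} and Proposition~\ref{prop:equiv-conf}, and then perform a trivial set-theoretic simplification. There is no genuine obstacle here; the corollary is essentially a re-packaging of what has already been proved.

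First, I would invoke Lemma~\ref{lemm:ioco-reg} to rewrite $\yI \yioco \yS$ as the equivalent statement $\yI \yconf{D}{\yemp} \yS$, where $D = otr(\yS) L_U$. This is a direct application, since the lemma was stated precisely for this choice of $D$ and with $F = \yemp$.

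Next, I would apply Proposition~\ref{prop:equiv-conf} with this $D$ and with $F = \yemp$. The proposition yields the equivalence
\[
\yI \yconf{D}{\yemp} \yS \iff otr(\yI) \cap \bigl[(D \cap \ycomp{otr}(\yS)) \cup (\yemp \cap otr(\yS))\bigr] = \yemp.
\]
Since $\yemp \cap otr(\yS) = \yemp$, the right-hand side collapses to $otr(\yI) \cap D \cap \ycomp{otr}(\yS) = \yemp$. Substituting $D = otr(\yS) L_U$ and using commutativity of intersection, this is precisely $otr(\yI) \cap T = \yemp$ for $T = \ycomp{otr}(\yS) \cap [otr(\yS) L_U]$, as claimed.

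The only thing to be slightly careful about is that $F = \yemp$ makes the forbidden-behavior contribution vanish, leaving only the desired-behavior condition $otr(\yI) \cap D \ysse otr(\yS)$, which is logically equivalent to the emptiness of $otr(\yI) \cap D \cap \ycomp{otr}(\yS)$. Since this rephrasing was already exploited inside the proof of Lemma~\ref{lemm:ioco-reg}, there is nothing new to verify, and the corollary follows at once.
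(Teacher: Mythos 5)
Your proposal is correct and follows exactly the paper's own argument: apply Lemma~\ref{lemm:ioco-reg} to reduce $\yI \yioco \yS$ to $\yI \yconf{D}{\yemp} \yS$ with $D = otr(\yS)L_U$, then apply Proposition~\ref{prop:equiv-conf} and drop the vanishing $F$-term. Nothing further is needed.
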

\begin{proof}
	From Lemma~\ref{lemm:ioco-reg} we have that $\yI \yioco \yS$ if and only if $\yI \yconf{D}{\yemp} \yS$, where 
	$D=otr(\yS)L_U$.
	From Proposition~\ref{prop:equiv-conf} we know that the latter holds if and only if 
	$otr(\yI)\cap \big[D\cap\ycomp{otr}(\yS)\big]=\yemp$.
\end{proof}

\section{Test Generation for IOLTS models}\label{sec:suites}

In this section, we show how to generate finite and complete test suites that can be used to verify models according to the generalized conformance relation defined in Section~\ref{sec:conformance},
and for arbitrary sets $D$ of desired and arbitrary sets $F$ of undesired behaviors.
Moreover, we want the generated test suites to be sound and exhaustive. 
We also examine the complexity of the algorithm that constructs the test suite, and we show that, for deterministic models, the algorithm is linear on the number of states of the implementation, for a given specification. 
Also recall Remark~\ref{rema:quiescent}.

\subsection{Complete Test Suite Generation}\label{sec:test-suites}

We first define a test suite as an arbitrary language over the set of symbols exchanged by the models.
\begin{defi}\label{def:testsuite}
	A test suite $T$ over an alphabet $L$ is any language over $L$, \emph{i.e.}, $T\ysse L^\star$. 
	Each $\ysi\in T$ is called a \emph{test case}.
\end{defi}
If a test suite $T$ is a regular language geared to detect bad observable behaviors in an IUT, then it could be represented by a FSA $\yA$.
The final states in $\yA$~---~the ``fail'' states~---~could then specify the set of  undesirable behaviors.
We could then say that an implementation $\yI$ satisfies, or adheres, to a test suite $T$ when no observable behavior of $\yI$ is a harmful behavior present in $T$.
\begin{defi}\label{def:adherence}
	A LTS  $\yI$ over $L$ \emph{adheres} to a test suite $T$ if and only if for all $\ysi\in otr(\yI)$ we have $\ysi\not\in T$.
	An IOLTS $\yI=\yioS$ with $L=L_I\cup L_U$ adheres to $T$ if and only if $\yS_\yI$ adheres to $T$.
\end{defi}


Given a pair of languages $(D,F)$ and a specification $\yS$, we want to generate test suites $T$ that are sound, that is, checking that an implementation  $\yI$ adheres to $T$ should always guarantee that $\yI$ also  $(D,F)$-conforms to $\yS$.
Moreover, the converse is also desirable, that is, we want a guarantee that if any implementation $\yI$ indeed $(D,F)$-conforms to the given specification $\yS$ than the adherence test with respect to $T$ will also always be positive.
We also remark that,  since our notion of adherence is geared towards undesirable behaviors of the implementation, we found it more convenient to express the following notions of soundness and exhaustiveness with their meaning reversed when compared to the usual definitions found in the specialized literature~\cite{tret-model-2008,simap-generating-2014}. 
Of course, when requiring test suites to be complete, as asserted in our main results, this reversal is innocuous.
We now make these notions precise.
\begin{defi}\label{def:complete}
	Let $L$ be a set of symbols and let $T$ be a test suite over $L$.
	Let $\yS$ be a LTS over $L$, and let $D, F\ysse L ^\star$ be languages over $L$.
	We say that:
	\begin{enumerate}
		\item  $T$ is \emph{sound} for $\yS$ and $(D,F)$  if 
		$\yI$ adheres to $T$ implies $\yI \yconf{D}{F} \yS$, for all LTS $\yI$ over $L$.
		\item $T$ is \emph{exhaustive} for $\yS$ and $(D,F)$ if $\yI \yconf{D}{F} \yS$ implies that $\yI$ adheres to $T$, for all LTS $\yI$ over $L$.
		\item $T$ is \emph{complete} for $\yS$ and $(D,F)$ if it is both sound and exhaustive 
		for $\yS$ and $(D,F)$.
	\end{enumerate}
\end{defi}

Now we show that the test suite already hinted at Proposition~\ref{prop:equiv-conf} is, in fact, a complete test suite.
Furthermore, in a sense, it is also unique.

\begin{lemm}\label{lemm:always-complete}
	Let $\yS$ be a LTS over $L$, and $D$, $F\ysse L^\star$.
	Then, 
	$T=\big[(D\cap\ycomp{otr}(\yS))\cup(F\cap otr(\yS))\big]$ is the only complete test suite for $\yS$ and $(D,F)$. 
\end{lemm}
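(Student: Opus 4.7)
The plan is to derive both halves of the lemma from Proposition~\ref{prop:equiv-conf}. Unfolding Definition~\ref{def:adherence}, the LTS $\yI$ adheres to $T$ precisely when $otr(\yI)\cap T=\yemp$. Plugging in the particular $T=(D\cap\ycomp{otr}(\yS))\cup(F\cap otr(\yS))$ from the statement, Proposition~\ref{prop:equiv-conf} tells us that $otr(\yI)\cap T=\yemp$ is itself equivalent to $\yI\yconf{D}{F}\yS$. Chaining these two equivalences yields, for every LTS $\yI$ over $L$, that $\yI$ adheres to $T$ if and only if $\yI\yconf{D}{F}\yS$, which simultaneously provides soundness (the forward implication in Definition~\ref{def:complete}) and exhaustiveness (its converse), and hence completeness of~$T$.

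For uniqueness, let $T'$ be any complete test suite for $\yS$ and $(D,F)$. Applying Definition~\ref{def:complete} to $T'$ and combining with the characterization already proved for $T$ yields, for every LTS $\yI$ over $L$, the equivalence $otr(\yI)\cap T=\yemp$ if and only if $otr(\yI)\cap T'=\yemp$. I would then upgrade this to the set equality $T=T'$ by exhibiting, for each candidate $\ysi\in L^\star$, a witness LTS $\yI_\ysi$ consisting of a chain of fresh states $s_0,s_1,\ldots,s_n$ with transitions $(s_{i-1},\ysi_i,s_i)$ and no $\tau$-moves, so that by Remark~\ref{rema:tau} the observable traces of $\yI_\ysi$ are exactly the prefixes of $\ysi$, and then arguing that $\ysi\in T$ if and only if $\ysi\in T'$ from the equivalence applied to $\yI_\ysi$.

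The main obstacle is precisely this uniqueness half. Because $otr(\yI)$ is unavoidably prefix-closed, no witness LTS can separate $\ysi$ from its proper prefixes, so a naive trace-by-trace argument does not close by itself; the plan is therefore to pick $\ysi$ of minimum length in the symmetric difference of $T$ and $T'$, so that all proper prefixes of $\ysi$ are known to agree on $T$ and $T'$, and to extract a contradiction from the equivalence above applied to $\yI_\ysi$. Completeness, by contrast, is essentially a syntactic unfolding of Proposition~\ref{prop:equiv-conf} together with Definition~\ref{def:adherence}, and is the clean half of the proof.
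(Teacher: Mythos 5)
Your first half is exactly the paper's argument: unfolding Definition~\ref{def:adherence}, adherence of $\yI$ to $T$ means $otr(\yI)\cap T=\yemp$, and Proposition~\ref{prop:equiv-conf} identifies this with $\yI\yconf{D}{F}\yS$; chaining the two gives soundness and exhaustiveness simultaneously. No issues there, and this is precisely how the paper proves completeness.

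The uniqueness half is where the genuine gap lies, and you have correctly diagnosed the obstruction --- prefix-closure of $otr(\yI)$ --- which the paper's own proof silently steps over (it picks a single $\ysi\in T$ with $\ysi\not\in Z$, builds $\yI$ with $\ysi\in otr(\yI)$, and declares a contradiction from ``$\ysi\in otr(\yI)\cap T$ and $\ysi\not\in otr(\yI)\cap Z$'', ignoring that $otr(\yI)\cap Z$ may well be nonempty on account of a proper prefix of $\ysi$). Unfortunately your minimality repair does not close the gap. Take $\ysi$ of minimal length in $(T-Z)\cup(Z-T)$, say $\ysi\in T-Z$, and your chain witness $\yI_\ysi$ whose observable traces are exactly the prefixes of $\ysi$. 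From $\ysi\in otr(\yI_\ysi)\cap T$ you conclude $otr(\yI_\ysi)\cap Z\neq\yemp$, hence some prefix $\mu$ of $\ysi$ lies in $Z$; since $\ysi\not\in Z$, $\mu$ is a \emph{proper} prefix, and minimality then gives $\mu\in T$ as well. That outcome is perfectly consistent --- no contradiction is reached, and the symmetric case is no better. Nor can any argument succeed, because the uniqueness claim is false as literally stated: with $otr(\yS)=\{\yeps\}$, $D=\{a\}$ and $F=\yemp$ one gets $T=\{a\}$, yet $Z=\{a,ab\}$ is also complete, since for a prefix-closed set $otr(\yI)$ one has $otr(\yI)\cap\{a\}=\yemp$ if and only if $otr(\yI)\cap\{a,ab\}=\yemp$. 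What your witness construction actually proves is only that any complete $Z$ must be \emph{prefix-equivalent} to $T$ (every word of each suite has a prefix in the other), which is the correct form of uniqueness here. In short: your completeness half is correct and identical to the paper's; your uniqueness half has a real gap, but it is the same gap present in the published proof, and it cannot be repaired because the stated claim is too strong.
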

\begin{proof}
	Let $\yI$ be an arbitrary LTS over $L$.
	From Proposition~\ref{prop:equiv-conf} we get that $\yI\yconf{D}{F} \yS$ if and only if $otr(\yI)\cap T=\yemp$.
	But the latter holds if and only if  $\yI$ adheres to $T$. 
	Hence, $\yI$ adheres to $T$ if and only if  $\yI\yconf{D}{F} \yS$, that is, $T$ is complete.
	
	Now let $Z\ysse L^\star$, with  $Z\neq T$, and assume that $Z$ is complete for $\yS$ and $(D,F)$. 
	Fix any implementation $\yI$.
	Since $T$ is complete we have that $\yI$ adheres to $T$ if and only if  $\yI\yconf{D}{F} \yS$.
	Since $\yI$ adheres to $T$ if and only if $otr(\yI)\cap T=\yemp$, we get $\yI\yconf{D}{F} \yS$ if and only if  $otr(\yI)\cap T=\yemp$.
	Likewise. $\yI\yconf{D}{F} \yS$ if and only if  $otr(\yI)\cap Z=\yemp$.
	Hence,  $otr(\yI)\cap T=\yemp$ if and only if $otr(\yI)\cap Z=\yemp$.
	But $Z\neq T$ gives some $\ysi\in L^\star$ such that $\ysi\in T$ and $\ysi\not\in Z$, 
	the case $\ysi\not\in T$ and $\ysi\in Z$ being analogous. 
	We now have $\ysi\in T\cap \ycomp{Z}$.
	It is simple to construct an IUT $\yI$ with $\ysi\in otr(\yI)$ and thus reach a contradiction,
	because we would have $\ysi\in otr(\yI)\cap T$ and $\ysi\not\in otr(\yI)\cap Z$.
\end{proof}


Lemma~\ref{lemm:always-complete} says that the test suite $T=\big[(D\cap\ycomp{otr}(\yS))\cup(F\cap otr(\yS))\big]$ is complete for a specification $\yS$ and the pair of languages $(D,F)$.
So, given an implementation $\yI$, if one wants to check if it $(D,F)$-conforms to $\yS$ it suffices to check if
$\yI$ adheres to $T$, that is, by Definition~\ref{def:adherence},  it suffices to check that we have $otr(\yI)\cap T=\yemp$.
Using Proposition~\ref{prop:lts-fsa} we see that $otr(\yS)$ is a regular language, and so,  by Proposition~\ref{prop:reg-closure} so is $\ycomp{otr}(\yS)$.
Thus, if $D$ and $F$ are regular languages, then Proposition~\ref{prop:reg-closure} again says that $T$ is also a regular language.
Further, it also says that, given FSAs $\yA_D$ and $\yA_F$ specifying $D$ and $F$, respectively, we can effectively construct a FSA $\yA_T$ whose semantics is the test suite $T$. 
Using Propositions~\ref{prop:lts-fsa} and \ref{prop:reg-closure} again, we know that $otr(\yI)\cap T$ is also a regular language.
Moreover, if we are in a ``white-box'' testing scenario, that is, we have access to the syntactic  description of $\yI$, we can construct the FSA  $\yA_\yI$, associated to the implementation $\yI$.
Then, using Propositions~\ref{prop:lts-fsa} and \ref{prop:reg-closure} once more, we can effectively construct a FSA $\yA$ whose language is just  $otr(\yI)\cap T$.
Then, a simple breadth-first traversal algorithm applied to $\yA$ can check if  $otr(\yI)\cap T=\yemp$, so that we can effectively decide if  $\yI$ $(D,F)$-conforms to $\yS$. 
Proposition~\ref{prop:conf-poli} details the time complexity of such an algorithm. 

A simple strategy in a test run should extract input strings from the FSA $\yA_T$, i.e., strings (test cases) of the language $T$ to be applied to an implementation. 
Hence we would get that the IUT $(D,F)$-conforms to the specification if all strings did not run to completion in IUT. 
Otherwise, the IUT does not $(D,F)$-conform to the specification. 

\begin{exam}\label{example:conf}
	Let the IOLTS $\yS$ of Figure~\ref{fig:spec} be the specification,
	with $L_I=\{a,b\}$, and $L_U=\{x\}$. 
\begin{figure}[htb]
	\centering
	\subfigure[A LTS specification $\yS$.]{\label{fig:spec}
		\begin{tikzpicture}[font=\sffamily,node distance=1cm, auto,scale=0.75,transform shape]
		\node[ initial by arrow, initial text={}, punkt] (s0) {$s_0$};
		\node[punkt, inner sep=3pt,right=3cm and 3cm  of s0] (s1) {$s_1$};
		\node[punkt, inner sep=3pt,below =2cm and 1.5cm of s1] (s2) {$s_2$};
		\node[punkt, inner sep=3pt,below =2cm and 1.5cm of s0] (s3) {$s_3$};
		
		\path (s0)    edge [pil]   	node[anchor=north,above]{a} (s1);
		\path (s1)    edge [loop above] node   {a} (s1);
		\path (s1)    edge [pil]   	node[anchor=north,left]{x} (s2);
		\path (s1)    edge [pil]   	node[anchor=north,right]{b} (s3);
		
		\path (s0)    edge [pil]   	node[anchor=north,left]{b} (s3);
		\path (s3)    edge [loop below] node   {a} (s3);    
		\path (s2)    edge [pil,bend right=25]   	node[anchor=north,right]{a} (s1);
		\path (s2)    edge [pil,bend right=25]   	node[anchor=south]{b} (s3);
		\path (s3)    edge [pil,bend right=25]   	node[anchor=north]{b} (s2);
		
		\end{tikzpicture} 
	}
	\subfigure[A LTS implementation $\yI$.]{ \label{fig:impl}
		\begin{tikzpicture}[font=\sffamily,node distance=1cm, auto,scale=0.75,transform shape]
		\node[ initial by arrow, initial text={}, punkt] (s0) {$q_0$};
		\node[punkt, inner sep=3pt,right=3cm and 3cm  of s0] (s1) {$q_1$};
		\node[punkt, inner sep=3pt,below =2cm and 1.5cm of s1] (s2) {$q_2$};
		\node[punkt, inner sep=3pt,below =2cm and 1.5cm of s0] (s3) {$q_3$};
		
		\path (s0)    edge [pil]   	node[anchor=north,above]{a} (s1);
		\path (s1)    edge [loop above] node   {a} (s1);
		\path (s1)    edge [pil]   	node[anchor=north,left]{x} (s2);
		\path (s1)    edge [pil]   	node[anchor=north,right]{b} (s3);
		
		\path (s0)    edge [pil]   	node[anchor=north,left]{b} (s3);
		\path (s3)    edge [loop below] node   {a} (s3);    
		\path (s2)    edge [pil,bend right=25]   	node[anchor=north,right]{a} (s1);
		\path (s2)    edge [pil,bend right=25]   	node[anchor=south]{b} (s3);
		\path (s3)    edge [pil,bend right=25]   	node[anchor=north]{b,x} (s2);
		
		\end{tikzpicture}
	}
	\caption{LTS models.} 
\end{figure}
	The FSA $\overline{\yS}$, depicted in Figure~\ref{fig:compspec}, is such that $L(\overline{\yS})=\ycomp{otr}(\yS)$.
\begin{figure}[htb]
	\centering
	\subfigure[A FSA for $\ycomp{otr}(\yS)$ of Figure~\ref{fig:spec}.]{\label{fig:compspec}
		\begin{tikzpicture}[font=\sffamily,node distance=1cm, auto,scale=0.75,transform shape]
		\node[ initial by arrow, initial text={}, punkt] (s0) {$\overline{s}_0$};
		\node[punkt, inner sep=3pt,right=3cm and 3cm  of s0] (s1) {$\overline{s}_1$};
		\node[punkt, inner sep=3pt,below =2cm and 1.5cm of s1] (s2) {$\overline{s}_2$};
		\node[punkt, inner sep=3pt,below =2cm and 1.5cm of s0] (s3) {$\overline{s}_3$};
		\node[punkt, accepting, inner sep=3pt,below left =1.5cm and 1.2cm of s3] (err) {$err$};

		\path (s0)    edge [pil]   	node[anchor=north,above]{a} (s1);
		\path (s1)    edge [loop above] node   {a} (s1);
		\path (s1)    edge [pil]   	node[anchor=north,left]{x} (s2);
		\path (s1)    edge [pil]   	node[anchor=north,right]{b} (s3);
		
		\path (s0)    edge [pil]   	node[anchor=north,left]{b} (s3);
		\path (s3)    edge [loop below] node   {a} (s3);    
		\path (s2)    edge [pil,bend right=25]   	node[anchor=north,right]{a} (s1);
		\path (s2)    edge [pil,bend right=25]   	node[anchor=south]{b} (s3);
		\path (s3)    edge [pil,bend right=25]   	node[anchor=north]{b} (s2);
		
		\path (s0)    edge [pil,bend right=25]   	node[anchor=south]{x} (err);
		\path (s3)    edge [pil,]   	node[anchor=south]{x} (err);
		\path (s2)    edge [pil,bend left=25]   	node[anchor=north]{x} (err);
		
		\end{tikzpicture}
	}
	\subfigure[FSA $\yD$ for $D=otr(\yS)\cdot L_U$.]{ \label{fig:specD}
		\begin{tikzpicture}[font=\sffamily,node distance=1cm, auto,scale=0.75,transform shape]
		\node[ initial by arrow, initial text={}, punkt] (s0) {$d_0$};
		\node[punkt, inner sep=3pt,right=3cm and 3cm  of s0] (s1) {$d_1$};
		\node[punkt, inner sep=3pt,below =2cm and 1.5cm of s1] (s2) {$d_2$};
		\node[punkt, inner sep=3pt,below =2cm and 1.5cm of s0] (s3) {$d_3$};
		\node[punkt, accepting, inner sep=3pt,below right =2.5cm and 1cm of s3] (err) {$D$};

		\path (s0)    edge [pil]   	node[anchor=north,above]{a} (s1);
		\path (s1)    edge [loop above] node   {a} (s1);
		\path (s1)    edge [pil]   	node[anchor=north,left]{x} (s2);
		\path (s1)    edge [pil]   	node[anchor=north,right]{b} (s3);
		
		\path (s0)    edge [pil]   	node[anchor=north,left]{b} (s3);
		\path (s3)    edge [loop below] node   {a} (s3);    
		\path (s2)    edge [pil,bend right=25]   	node[anchor=north,right]{a} (s1);
		\path (s2)    edge [pil,bend right=25]   	node[anchor=south]{b} (s3);
		\path (s3)    edge [pil,bend right=25]   	node[anchor=north]{b} (s2);
		
		\path (s0)    edge [pil,bend right=60]   	node[anchor=south, left]{x} (err);
		\path (s3)    edge [pil,]   	node[anchor=south]{x} (err);
		\path (s2)    edge [pil,bend left=25]   	node[anchor=north, right]{x} (err);
		\path (s1)    edge [pil, bend left=60]   	node[anchor=north,right]{x} (err);
		
		\end{tikzpicture}
	}
	\caption{FSAs for languages  $\ycomp{otr}(\yS)$ and $D$.} 
\end{figure}
	The FSA $\yD$, shown in Figure~\ref{fig:specD}, is such that $L(\yD)=D$, where $D=otr(\yS)L_U$.
	The product $\yD \times \ycomp{\yS}$ is illustrated in Figure~\ref{fig:ts-spec}.
	The language accepted by $\yD \times \ycomp{\yS}$ is $TS=D\cap \ycomp{otr}(\yS)$.
	According to the Lemma~\ref{lemm:always-complete} $TS$ is a complete test suite for the specification $\yS$ and the pair of languages $(D,F)$ with $F=\yemp$. 
	Let $\yI$ be the IUT shown in Figure~\ref{fig:impl}, where $bax \in TS$. 
	Since $bax \in L(\yI)$ we conclude that $\yI$ does not $(D,F)$-conform to 
	$\yS$. \yfim
\end{exam}

We remark that, in order to construct $\yI$ with $\ysi\in otr(\yI)$, in Lemma~\ref{lemm:always-complete}, it was crucial that we had no restrictions on the size of $\yI$, because we have no control over the size of the witness  $\ysi$. 
This indicates that the size of the implementations to be put under test will affect the complexity of complete test suites that are generated
to verify $(D,F)$-conformance.
We investigate the complexity  of complete test suites in the next subsection. 

\begin{figure}[htb]
	\centering
	\subfigure[FSA $\yD \times \ycomp{\yS}$ for the language $TS$.]{\label{fig:ts-spec}
		\begin{tikzpicture}[font=\sffamily,node distance=1cm, auto,scale=0.65,transform shape]
		\node[ initial by arrow, initial text={}, punkt] (s0) {$\overline{s}_0d_0$};
		\node[punkt, inner sep=3pt,right=3cm and 3cm  of s0] (s1) {$\overline{s}_1d_1$};
		\node[punkt, inner sep=3pt,below =2cm and 1.5cm of s1] (s2) {$\overline{s}_2d_2$};
		\node[punkt, inner sep=3pt,below =2cm and 1.5cm of s0] (s3) {$\overline{s}_3d_3$};
		\node[punkt, accepting, inner sep=3pt,below right =2.5cm and 1cm of s3] (err) {$F$};
		
		\node[punkt, inner sep=3pt,below right =2cm and 1.2cm of s1] (s2D) {$\overline{s}_2 D$};
		%
		
		\path (s0)    edge [pil]   	node[anchor=north,above]{a} (s1);
		\path (s1)    edge [loop above] node   {a} (s1);
		\path (s1)    edge [pil]   	node[anchor=north,left]{x} (s2);
		\path (s1)    edge [pil]   	node[anchor=north,right]{b} (s3);
		
		\path (s0)    edge [pil]   	node[anchor=north,left]{b} (s3);
		\path (s3)    edge [loop below] node   {a} (s3);    
		\path (s2)    edge [pil,bend right=25]   	node[anchor=north,right]{a} (s1);
		\path (s2)    edge [pil,bend right=25]   	node[anchor=south]{b} (s3);
		\path (s3)    edge [pil,bend right=25]   	node[anchor=north]{b} (s2);
		
		\path (s0)    edge [pil,bend right=60]   	node[anchor=south, left]{x} (err);
		\path (s3)    edge [pil,]   	node[anchor=south]{x} (err);
		\path (s2)    edge [pil,bend left=25]   	node[anchor=north, right]{x} (err);
		
		\path (s1)    edge [pil,bend left=20]   	node[anchor=north,right]{x} (s2D);
		%
		%
		%
		\end{tikzpicture}
	}
	\subfigure[TP $\yT$ {\bf ioco}-complete for $\yS$.
	]{ \label{fig:tp-spec}
		\begin{tikzpicture}[font=\sffamily,node distance=1cm, auto,scale=0.65,transform shape]
		\node[ initial by arrow, initial text={}, punkt] (s0) {$t_0$};
		\node[punkt, inner sep=3pt,right=3cm and 3cm  of s0] (s1) {$t_1$};
		\node[punkt, inner sep=3pt,below =2cm and 1.5cm of s1] (s2) {$t_2$};
		\node[punkt, inner sep=3pt,below =2cm and 1.5cm of s0] (s3) {$t_3$};
		\node[punkt,  inner sep=3pt,below left =1.5cm and .6cm of s3] (err) {$\yfail$};

		\path (s0)    edge [pil]   	node[anchor=north,above]{$a$} (s1);
		\path (s1)    edge [loop above] node   {$a$} (s1);
		\path (s1)    edge [pil]   	node[anchor=north,left]{$x$} (s2);
		\path (s1)    edge [pil]   	node[anchor=north,right]{$b$} (s3);
		
		\path (s0)    edge [pil]   	node[anchor=north,left]{$b$} (s3);
		\path (s3)    edge [loop below] node   {$a$} (s3);    
		\path (s2)    edge [pil,bend right=25]   	node[anchor=north,right]{$a$} (s1);
		\path (s2)    edge [pil,bend right=25]   	node[anchor=south]{$b$} (s3);
		\path (s3)    edge [pil,bend right=25]   	node[anchor=north]{$b$} (s2);
		
		\path (s0)    edge [pil,bend right=25]   	node[anchor=south]{$x$} (err);
		\path (s3)    edge [pil,]   	node[anchor=south]{$x$} (err);
		\path (s2)    edge [pil,bend left=25]   	node[anchor=north]{$x$} (err);
		\path (err)    edge [loop below] node   {$x$} (err);
		\end{tikzpicture}
	}
	\caption{Complete test suites.} 
\end{figure}

\subsection{On the Complexity of Test Suites}\label{sec:suites-complexity}

Another important issue is the size of test suites. If $\yS$ is an IOLTs with $n_\yS$ states and $t_\yS$ transitions, then $n_\yS -1 \leq t_\yS\leq n_\yS^2$, but usually $t_\yS$ is much larger than $n_\yS$.
Hence, we will take the number of transitions as an adequate measure of the size of an IOLTS model. 

Let  $D, F\ysse L^\star$, and $\yS=\yltsS$  deterministic with $n_S$ states. 
Lemma~\ref{lemm:always-complete} says that  $T=\big[(D\cap\ycomp{otr}(\yS))\cup(F\cap otr(\yS))\big]$ is complete for $\yS$ and $(D,F)$.
Assume that $L(\yA_D)=D$ and $L(\yA_F)=F$ where $\yA_D$ and $\yA_F$ are FSA.
By Definition~\ref{def:fsalang} and Propositions~\ref{prop:no-eps} and~\ref{prop:fsa-complete} we can assume that $\yA_D$ and $\yA_F$ are complete FSA with  $n_D$ and $n_F$ states, respectively.
By Propositions~\ref{prop:lts-deterministic} and~\ref{prop:lts-fsa},  the FSA $\yA_1$ induced by  $\yS$ will also be deterministic with $n_S$ states and $L(\yA_1)=otr(\yS)$.
Using Proposition~\ref{prop:fsa-complete} again we can  get a complete FSA $\yA_2$ with $n_S+1$ states, and such that $L(\yA_2)=L(\yA_1)=otr(\yS)$.
Hence, from Propositions~\ref{prop:reg-closure} and~\ref{prop:fsa-complete}, and Remark~\ref{rem:bounds}, we can get a complete FSA $\yA_3$ with at most $(n_S+1) n_F$ states and such that $L(\yA_3)=L(\yA_F)\cap L(\yA_2)=F\cap otr(\yS)$.
Consider the complete FSA $\yB_2$ obtained from $\yA_2$ by reversing its set of final states, so that
$L(\yB_2)=\ycomp{L(\yA_2)}=\ycomp{otr}(\yS)$.
With Proposition~\ref{prop:fsa-complete}  and Remark~\ref{rem:bounds} we  get a complete FSA $\yB_3$ with $(n_S+1) n_D$ states and such that $L(\yB_3)=L(\yA_D)\cap L(\yB_2)=D\cap \ycomp{otr}(\yS)$.
Proposition~\ref{prop:fsa-complete} yields a FSA $\yC$ with $(n_S+1)^2n_Dn_F$ states and such that $L(\yC)=L(\yA_3)\cup L(\yB_3)=T$.
So, if $D$ and $F$ are regular languages 
we can construct a complete FSA $\yT$ with $(n_S+1)^2n_Dn_F$ states and
$L(\yT)=T$ is the complete test suite.   
\begin{prop}\label{prop:conf-poli}
	Let $L$ be an alphabet with $|L|=n_L$.
	Let $\yS$ and $\yI$ be deterministic IOLTSs over $L$ with $n_S$ and $n_I$ states, respectively.
	Let $D$, $F\ysse L^\star$ be regular languages over $L$, and let $\yA_D$ and $\yA_F$ be complete FSA over $L$ with $n_D$ and $n_F$ states, respectively,
	and such that $L(\yA_D)=D$, $L(\yA_F)=F$.
	Then, we can construct a complete FSA $\yT$ with $(n_S+1)^2n_Dn_F$ states, and such that $L(\yT)$ is a complete test suite for $\yS$ and $(D,F)$.
	Moreover, there is an algorithm, with polynomial time complexity $\yoh{n_S^2n_In_Dn_Fn_L}$ that checks if $\yI \yconf{D}{F} \yS$. 
\end{prop}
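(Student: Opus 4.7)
The plan is to follow the construction sketched in the paragraph immediately preceding the statement and then append a standard product-automaton emptiness check to obtain the decision procedure. First, by Lemma~\ref{lemm:always-complete}, the language $T=(D\cap\ycomp{otr}(\yS))\cup(F\cap otr(\yS))$ is a complete test suite for $\yS$ and $(D,F)$, so it suffices to exhibit a FSA of the claimed size that accepts $T$. By Propositions~\ref{prop:lts-deterministic} and~\ref{prop:lts-fsa}, the FSA $\yA_1$ induced by the deterministic $\yS$ has $n_S$ states, no $\yeps$-moves, and satisfies $L(\yA_1)=otr(\yS)$. Applying Proposition~\ref{prop:fsa-complete} once yields a complete deterministic FSA $\yA_2$ with $n_S+1$ states for the same language, and flipping its (trivial) set of final states (which by Proposition~\ref{prop:fsa-complete} simply exchanges accepting and non-accepting states of the completion) gives a complete FSA $\yB_2$ with $n_S+1$ states for $\ycomp{otr}(\yS)$.

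Next, I would combine these with $\yA_D$ and $\yA_F$ via the standard product construction of Remark~\ref{rem:bounds}: intersect $\yA_F$ with $\yA_2$ to obtain a complete FSA $\yA_3$ with at most $(n_S+1)n_F$ states recognizing $F\cap otr(\yS)$, and intersect $\yA_D$ with $\yB_2$ to obtain a complete FSA $\yB_3$ with at most $(n_S+1)n_D$ states recognizing $D\cap\ycomp{otr}(\yS)$. Taking one more product to make both component automata share a common state space (so that the union operation does not destroy completeness), followed by the union construction, produces a complete FSA $\yT$ with at most $(n_S+1)^2 n_D n_F$ states with $L(\yT)=T$, as required.

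For the decision procedure, I would build $\yT$ as above and also construct the FSA $\yA_\yI$ induced by $\yI$, which has $n_I$ states and, being induced by a deterministic IOLTS, no $\yeps$-moves. By Proposition~\ref{prop:equiv-conf}, $\yI\yconf{D}{F}\yS$ holds iff $otr(\yI)\cap T=\yemp$. Using Propositions~\ref{prop:lts-fsa} and~\ref{prop:reg-closure}, this reduces to checking whether the product automaton $\yA_\yI\times\yT$ accepts the empty language, which is decidable by a breadth-first reachability check from the initial state to any final state. The product has at most $n_I\cdot(n_S+1)^2 n_D n_F$ states, each with at most $n_L$ outgoing transitions, so the BFS runs in $\yoh{n_S^2 n_I n_D n_F n_L}$ time; this dominates the cost of constructing $\yT$ as well.

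The main obstacle I anticipate is purely bookkeeping: keeping the state counts tight through the chain of intersection, complementation, and union operations, and ensuring completeness is preserved at each stage so that the bound $(n_S+1)^2 n_D n_F$ is met exactly rather than blown up by repeated applications of Proposition~\ref{prop:fsa-complete}. The fact that the complementation of $otr(\yS)$ can be performed by merely toggling final states of the already-completed $\yA_2$, rather than by a separate completion step, is the key observation that keeps the accounting clean.
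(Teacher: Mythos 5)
Your proposal is correct and follows essentially the same route as the paper: build the complete deterministic FSA for $otr(\yS)$ via Propositions~\ref{prop:lts-fsa} and~\ref{prop:fsa-complete}, complement it by toggling final states, intersect with $\yA_F$ and $\yA_D$ respectively, take a product-based union to obtain $\yT$ with $(n_S+1)^2n_Dn_F$ states, and finish with a breadth-first emptiness check on the product with the automaton induced by $\yI$. The only cosmetic difference is that the paper first completes the implementation's automaton to $n_I+1$ states before taking the product, which does not change the asymptotic bound.
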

\begin{proof}
	The preceding discussion, gives a complete FSA $\yT$ with at most  $(n_S+1)^2(n_Dn_F)$ states and such that $L(\yT)=T=\big[(D\cap\ycomp{otr}(\yS))\cup(F\cap otr(\yS))\big]$.
	By Lemma~\ref{lemm:always-complete}.
	$\yI \yconf{D}{F} \yS$ if and only if $otr(\yI)\cap T=\yemp$.
	Likewise, we can  get a complete FSA $\yA$ with $n_I+1$ states, and  such that $otr(\yI)=L(\yA)$. 
	From Remark~\ref{rem:bounds} we get a FSA $\yB$ with at most $(n_S+1)^2(n_I+1)n_Dn_F$ states and with $L(\yB)=L(\yA)\cap L(\yT)=otr(\yI)\cap T$.
	A simple observation reveals that $\yB$ is a complete FSA with $(n_S+1)^2(n_I+1)n_Dn_Fn_L$ transitions, because both $\yA$ and $\yT$ are complete FSAs. 
	A state $(q,t)$ of $\yB$ is final if and only if $q$ is a final state of $\yA$ and $t$ is a final state of $\yT$. 
	So, using a standard breadth-first algorithm we can traverse  $\yB$ and determine is one of its final states is reached from its initial state, indicating that  
	$\yemp\neq L(\yA)\cap L(\yT)=otr(\yI)\cap T$, that is, if and only if 
	$\yI$ does not $(D,F)$-conform to $\yS$.
	Otherwise, if this condition is not met, we say that 
	$\yI$ does  $(D,F)$-conform to $\yS$.
	The breadth-first algorithm run in time proportional to the number of transitions  in $\yB$. Since $\yB$ is complete, it has at most  $\yoh{n_S^2n_In_Dn_Fn_L}$ transitions.
\end{proof}

Since we already know how to verify if the {\bf ioco}-conformance relation holds, we can state a similar result for checking {\bf ioco}-conformance when a description of the IUT is available.
\begin{theo}\label{prop:ioco-poli}
	Let $\yS$ and $\yI$ be deterministic  IOLTSs over $L$ with $n_S$ and $n_I$ states, respectively.
	Let $L=L_I\cup L_U$, and $|L|=n_L$.
	Then, we can effectively construct an algorithm with time complexity $\yoh{n_Sn_In_L}$ that  checks
	if $\yI \yioco \yS$ holds. 
\end{theo}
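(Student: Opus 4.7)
The plan is to exploit the fact that both models are deterministic and directly build a synchronous product, rather than going through the generic construction of Proposition~\ref{prop:conf-poli} (which would yield a looser bound because of the FSA for $D = otr(\yS)L_U$). By Definition~\ref{def:out-after}, $\yI \yioco \yS$ holds iff for every $\ysi \in otr(\yS)$, every output $\ell \in L_U$ with $\ysi\ell \in otr(\yI)$ also satisfies $\ysi\ell \in otr(\yS)$. Because $\yS$ and $\yI$ are deterministic, the sets $s_0 \yafter \ysi$ and $q_0 \yafter \ysi$ each contain at most one state, so this condition reduces to a local check at pairs of states reached in lockstep.

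Concretely, I would construct (on the fly) a product whose state set is a subset of $S \times Q$, with initial state $(s_0, q_0)$, and with a transition $(s,q) \yltr{}{\ell}{} (s',q')$ precisely when $\yltr{s}{\ell}{} s'$ in $\yS$ and $\yltr{q}{\ell}{} q'$ in $\yI$, for $\ell \in L$. I would then flag a reachable pair $(s,q)$ as a violation when there exists some $\ell \in L_U$ such that $q$ has an outgoing $\ell$-transition in $\yI$ but $s$ has no outgoing $\ell$-transition in $\yS$: by determinism, this is exactly the witness that $\ell \in \yout(q_0 \yafter \ysi)$ yet $\ell \notin \yout(s_0 \yafter \ysi)$ for the $\ysi$ that led the product from $(s_0,q_0)$ to $(s,q)$. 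A breadth-first traversal from $(s_0,q_0)$, checking the local condition at each visited pair, decides {\bf ioco}-conformance: reject iff any violating pair is reached.

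For correctness, I would argue both directions. If some reachable pair $(s,q)$ violates the local condition via $\ell$, then the path leading to $(s,q)$ is an observable trace $\ysi \in otr(\yS)$ with $\ell \in \yout(q_0 \yafter \ysi) \setminus \yout(s_0 \yafter \ysi)$, so $\yI \yioco \yS$ fails. Conversely, if $\yI \yioco \yS$ fails, pick a witness $\ysi \in otr(\yS)$ and $\ell \in L_U$ with $\ell \in \yout(q_0 \yafter \ysi) \setminus \yout(s_0 \yafter \ysi)$; then $\ysi \in otr(\yI)$ as well, so by determinism the product reaches a unique pair $(s,q)$ on $\ysi$, and that pair violates the local condition at $\ell$. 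For the complexity bound, the product has at most $n_S n_I$ reachable states, and from each pair the BFS examines at most $n_L$ candidate transitions, with the local output check at each pair also costing $\yoh{n_L}$; the total work is therefore $\yoh{n_S n_I n_L}$, as claimed.

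The main obstacle, though mild, is being careful about two points: first, that nondeterminism is genuinely absent so that the synchronized product faithfully tracks the unique target pair reached on each observable trace (this is where we use Proposition~\ref{prop:lts-deterministic}, which guarantees the absence of $\tau$-transitions, so that $\underset{}{\Rightarrow}$ coincides with $\underset{}{\rightarrow}$); and second, that the local output check at $(s,q)$ genuinely needs to range only over $L_U$ (not over $\ysi\ell$ with $\ysi$ explicitly enumerated), which is what keeps the cost per pair down to $\yoh{n_L}$ and avoids blowing up the complexity to the bound of Proposition~\ref{prop:conf-poli}.
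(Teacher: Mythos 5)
Your proposal is correct and is essentially the paper's own argument: the paper also reduces the check to a breadth-first traversal of a product of size $\yoh{n_S n_I}$, the only presentational difference being that it packages your local ``output enabled in $\yI$ but not in $\yS$'' test as reachability of a distinguished sink/fail state $f$ added to the specification automaton (so that the resulting FSA accepts exactly $\ycomp{otr}(\yS)\cap(otr(\yS)L_U)$, per Corollary~\ref{coro:ioco-charac}), rather than performing it in-line at each visited pair. Both formulations rely on the same facts you identify --- determinism eliminates $\tau$-moves and makes $s_0\yafter\ysi$ and $q_0\yafter\ysi$ singletons --- and both yield the stated $\yoh{n_S n_I n_L}$ bound.
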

\begin{proof}
	Let $\yA=\yfsa{S_A}{s_0}{L}{\rho_A}{S_A}$ be the deterministic FSA induced by $\yS$ with $n_s$ states.
	Proposition~\ref{prop:lts-fsa} gives $otr(\yS)=L(\yA)$.
	
	Let $e,f\not\in S_A$, with $e\neq f$, and define $S_T=S_A\cup\{e,f\}$. 
	Extend the transition relation $\rho_A$ as follows.
	Start with $\rho_T=\rho_A$.
	Next: 
	(i) For any $\ell\in L_U$ and any $s\in S_A$ such that $(s,\ell,p)$ is not in $\rho_A$ for any $p\in S_A$, add $(s,\ell,f)$ to $\rho_T$;  
	(ii) for any $\ell\in L_I$ and any $s\in S_A$ such that $(s,\ell,p)$ is not in $\rho_A$ for any $p\in S_A$, add $(s,\ell,e)$ to $\rho_T$.
	Let $\yT=\yfsa{S_T}{s_0}{L}{\rho_T}{\{f\}}$.
	Since $\yA$ is deterministic, the construction implies that $\yT$ is deterministic with $n_S+2$ states. 
	Also $f$ is the only final state in $\yT$, and  $f$ and $e$ are sink states in $\yT$.
	
	Let $\ysi\in L^\star$.
	By the construction, we get $\yatrt{s_0}{\ysi}{f}$ in $\yT$ if and only if $\yatrt{s_0}{\yal}{p}\yatrt{}{\ell}{f}$ in $\yT$ for some $p\in S_A$,  where $\ell\in L_U$ and $\ysi=\yal\ell$.
	We claim that, for any $\ell\in L_U$, we have $\yatrt{s_0}{\yal}{p}\yatrt{}{\ell}{f}$ in $\yT$ with $p\in S_A$ if and only if we have $\yal\in L(\yA)$ and $\yal\ell\not\in L(\yA)$.
	Assume that  $\yatrt{s_0}{\yal}{p}\yatrt{}{\ell}{f}$ in $\yT$ with $\ell\in L_U$ and $p\in S_A$.
	Since $p\neq f$ and $p\neq e$, we  get $\yatrt{s_0}{\yal}{p}$ in $\yA$, that is, $\yal\in L(\yA)$.
	Suppose that $\yal\ell\in L(\yA)$. Then, we would get 
	$\yatrt{s_0}{\yal}{q}\yatrt{}{\ell}{r}$ in $\yA$, for some $q$, $r\in S_A$.
	But $\yA$ is deterministic and we already have $\yatrt{s_0}{\yal}{p}$ in $\yA$,
	so that $q=p$.
	Now we have $(p,\ell,f)$ in $\rho_T$ and  $(p,\ell,r)$ in $\rho_A$ with $\ell\in L_U$, a
	contradiction to item (i).
	For the converse, assume that $\yal\in L(\yA)$ and   $\yal\ell\not\in L(\yA)$ with $\ell\in L_U$.
	We  get $\yatrt{s_0}{\yal}{p}$ in $\yA$, which  implies $\yatrt{s_0}{\yal}{p}$ in $\yT$.
	If we had $(p,\ell,r)$ in $\rho_A$ for some $r\in S_A$, then we would get $\yal\ell\in L(\yA)$, which cannot happen.
	Then, item (ii) implies that  $(p,\ell,f)$ is in $\rho_T$, so that 
	$\yatrt{s_0}{\yal}{p}\yatrt{}{\ell}{f}$ in $\yT$.
	Putting it together we conclude that $\yatrt{s_0}{\ysi}{f}$ in $\yT$ if and only if $\ysi=\yal\ell$, $\yal\in L(\yA)$ and $\ysi\not\in L(\yA)$, for some $\ell\in L_U$ and some $\yal\in L^\star$.
	This shows that $L(\yT)=\ycomp{L(\yA)}\cap (L(\yA)L_U)=\ycomp{otr}(\yS)\cap (otr(\yS)L_U)$.
	
	Let $\yB$ be the deterministic FSA induced by $\yI$ with $n_I$ states and $otr(\yI)=L(\yB)$.
	By Proposition~\ref{prop:fsa-complete}, we get a complete FSA $\yC$ with $n_I+1$ states and such that $otr(\yI)=L(\yB)=L(\yC)$, and  a complete FSA $\yU$ with $n_S+3$ states, such that $L(\yU)=L(\yT)$.
	With $\yC$ and $\yU$ at hand, proceed as in the proof of Proposition~\ref{prop:conf-poli}. The desired algorithm will run in asymptotic worst case time complexity $\yoh{n_Sn_I n_L}$. 
\end{proof}

Assume that the alphabet $L_I\cup L_U$ is fixed.
Usually, one has access to the internal structure of the specification, that is, an IOLTS
model $\yS$ is given for the specification. 
Assume $\yS$ has $n$ states.
If we also know the internal structure of the implementation to be put under test, that 
is if we are in a ``white-box'' testing scenario, then we also have an IOLTS model $\yI$
for the implementation. 
In this case,  Theorem~\ref{prop:ioco-poli} says that we can test the {\bf ioco}-conformance relation between $\yS$ and $\yI$ in polynomial $\yoh{nm}$ time, for \emph{any} implementation with $\yoh{m}$ states. 
If the initial specification $\yS$ is also fixed, and we have several implementations to be put under test, then the verifying algorithm from Theorem~\ref{prop:ioco-poli} will run in time proportional to the size of each implementation. 
\begin{coro}\label{coro:lineartime}
	Fix an alphabet $L$ and a deterministic specification $\yS$, and assume that we have access to the syntactic descriptions of IUT models. Then we have an algorithm to verify whether $\yI \yioco \yS$ holds, and that runs in time $\yoh{t}$, where $t$ is the size of the implementation $\yT$, that is, $t$ is the number of transitions in $\yT$.
\end{coro}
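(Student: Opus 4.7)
The plan is to derive this corollary directly from Theorem~\ref{prop:ioco-poli}, which already constructs an algorithm that decides $\yI \yioco \yS$ in asymptotic worst-case time $\yoh{n_S n_I n_L}$ for deterministic IOLTSs $\yS$ and $\yI$ with $n_S$ and $n_I$ states over an alphabet of size $n_L$. The task reduces to rewriting this bound under the assumptions of the corollary.

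First I would note that, since both $L$ and $\yS$ are fixed by hypothesis, the factors $n_L$ and $n_S$ are constants that can be absorbed into the asymptotic notation, so the algorithm of Theorem~\ref{prop:ioco-poli} runs in time $\yoh{n_I}$ on any IUT $\yT$ with $n_I$ states. It remains to relate $n_I$ to the transition count $t$ of $\yT$. Because we are working with deterministic IOLTSs over the fixed alphabet $L$, every state has at most $|L|$ outgoing transitions (one per label, by Definition~\ref{def:lts-deterministic} and Proposition~\ref{prop:lts-fsa}), so $t \leq |L|\,n_I$. Conversely, by the reachability convention of Remark~\ref{rema:lte-finite}, every non-initial state is the target of at least one transition, so $n_I - 1 \leq t$. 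With $|L|$ constant, this yields $t = \Theta(n_I)$, and therefore $\yoh{n_I} = \yoh{t}$, as desired.

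The only subtle point is the tacit determinism of the IUT, which is required to invoke Theorem~\ref{prop:ioco-poli} and also to guarantee the linear relationship $t = \Theta(n_I)$; I would state this assumption explicitly in the proof. No genuine obstacle arises: the corollary is essentially a reading of the bound of Theorem~\ref{prop:ioco-poli} after fixing $\yS$ and $L$, combined with the elementary counting argument above.
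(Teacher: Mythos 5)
Your proposal is correct and follows essentially the same route as the paper, whose proof is simply a reference to the preceding discussion: fix $L$ and $\yS$, read off the $\yoh{n_Sn_In_L}$ bound of Theorem~\ref{prop:ioco-poli} as $\yoh{n_I}$, and identify $n_I$ with the size of the implementation. Your explicit counting argument showing $t=\Theta(n_I)$ for deterministic models over a fixed alphabet, and your remark that determinism of the IUT is tacitly assumed, make the step more careful than the paper's own one-line justification but do not change the approach.
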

\begin{proof}
	From the preceding discussion.
\end{proof}

\section{Testing IOLTS with Test Purposes}\label{sec:tretma-suites}





In Section~\ref{sec:suites} the testing architecture presupposed that one has access to a syntactic description of the IUTs.
In a contrasting setting, where IUTs are ``black-boxes'', we do not have access to their syntactic structure.
In this case, we can imagine a test setting where there is a tester $\yT$, and an implementation being tested, $\yI$, 
which are linked by a ``zero-capacity'' bidirectional and lossless communication channel.
In this setting, at each step either of two movements may occur: (i) the tester, or the ``artificial environment'', $\yT$ issues one of its output action symbols $x$ to $\yI$, and change its state. At once, the implementation $\yI$ accepts $x$  as one of its input symbols and also changes its state; or (ii) the movement is reversed
with the implementation $\yI$ sending one of its output symbols $y$ and changing its state, while $\yT$ 
accepts $y$ at once as one of its input symbols, and also changes its state.
Clearly, a sequence of type (i) moves can occur before a type (ii) move occurs; and vice-versa.
In other words, the net result is that $\yT$ and $\yI$ move in lock step, but with the input and output sets of symbols interchanged in the $\yT$ and $\yI$ models.
We will always refer to a symbol $x$ as an input or an output symbol from the perspective of the implementation $\yI$, unless there is an explicit mention to the contrary. 
Hence, one should write $\yI=\yio{S}{s_0}{I}{U}{T}$ and $\yT=\yio{Q}{q_0}{U}{I}{R}$.  

However, as a result of accepting an input $x$ from the tester $\yT$, the implementation $\yI$ may reach a so called \emph{quiescent state}.
Informally, those are states from which there are no transitions labeled by some output action symbol~\cite{Tretmans96TGIO,tret-model-2008}. 
In a practical scenario, from this point on the implementation could no longer send responses back to the tester, and the latter will have no way of ``knowing'' whether the implementation is rather slow, has timed out, or will not ever respond.
If we want to reason about this situation, within the formalism, it will be necessary to somehow signal the tester that the implementation is in a quiescent state.
A usual mechanism~\cite{tret-model-2008} is to imagine that the implementation has a special output symbol $\yde\in L_U$, and that $\yde$ is then sent back to $\yT$ when $\yI$ reaches state $s$, and no longer responds.
Since $\yI$ is not changing states in this situation, we add the self-loop $\ytr{s}{\yde}{s}$ to the set of transitions of $\yI$ in order to formally describe the situation.
On the tester side, being on a state $q$ and upon receiving a $\yde$ symbol from the implementation
the tester may decide whether receiving such a signal in state $q$ is appropriate or not, depending on the fault model it was designed for. 
If that response from the implementation was an adequate one, the tester may then move to another state $q'$ to continue the test run.
In this case, on the tester side we  add all appropriate $\ytr{q}{\yde}{q'}$ transitions%
\footnote{In~\cite{tret-model-2008}, a different symbol, $\yte$, was used to signal the acceptance of quiescence on the tester side, but for our formalism, that makes little difference, if any.}.

In this section we want apply our results of Section~\ref{sec:suites} to test architecture studied by Tretmans~\cite{tret-model-2008}, where a tester model  drives a black-box IUT model. 
Since in~\cite{tret-model-2008} quiescence is indicated in the models that are put under test, we must also formally prepare our models to deal with quiescent states. This is done in 
Definition~\ref{def:ioltsq}.
Further, must ensure that we are applying the more general results of  Section~\ref{sec:conformance}  to the same class of models that are considered in~\cite{tret-model-2008}.
Moreover, we must also guarantee that, the general notion of {\bf ioco}-conformance defined in Section~\ref{sec:conformance} induces the same {\bf ioco} relation as the notion of {\bf ioco}-conformance studied in~\cite{tret-model-2008}.    
Results described in Appendix~\ref{subsec:equiv-ioco} will be used to settle such matters, as well 
as other similar issues that may arise.   

We proceed as follows:
\begin{enumerate}
	\item We define a variation of IOLTS models, where the special symbol $\yde$ is used to indicate quiescence.
	\item We formalize the notion of an external tester in order to reason precisely about test runs. For that, we define test purposes~\cite{tret-model-2008} in Subsection~\ref{subsec:fault-models}.
	\item In  testing architectures where the internal structure of  IUTs is unknown, it is customary to   impose a series of restrictions over the formal models that describe the specifications, the IUTs and the test purposes~\cite{tret-model-2008}, so that some guarantees about the exchange of messages can be stated.
	Although our methods impose  almost no restrictions on the formal models, except for regularity of the $D$ and $F$ sets, in Subsection~\ref{subsec:ioco-test-cases} we look at the extra model restrictions imposed by Tretmans~\cite{tret-model-2008}.
	\item In Subsection~\ref{sec:tretma-complexity} we investigate the complexity of the test purposes that can be generated under these restrictions, and we establish a \emph{new asymptotic worst case exponential time} lower bound of the size of the test suite, or fault model.
	Other works hinted at possible exponential upper bounds on the size of test suites when requiring such suites to be complete. 
	We are not aware of any precise \emph{lower bounds} on the size of complete test suites, when treating these exact same restrictions as mentioned here. 
\end{enumerate}

We start with the following variation of Definition~\ref{def:iolts} incorporating quiescence in IOLTS models, 
following the preceding informal discussion.
\begin{defi}\label{def:ioltsq}
	A $\yde$-Input/Output Labeled Transition System ($\yde$-IOLTS) is a tuple $\yI=\yioltsI$, where: 
	\begin{enumerate}
		\item $\yioltsI$ is an IOLTS;
		\item $\yde\in L_U$ is a \emph{distinguished symbol} that will be used to indicate quiescence.
		\item For all states $s$ and $p$ in $S$ we have 
		$(s,\yde,p)\in T$ if and only if (a) $s=p$, and (b)  for all $x\in L$, $\ytr{s}{x}{}$ implies $x\in L_I\cup\{\yde\}$.
	\end{enumerate}
	A state $s\in S$ is said to be \emph{quiescent} if $\ytr{s}{\yde}{}$.
\end{defi}
We indicate the class of all $\yde$-IOLTSs with input alphabet $L_I$ and output alphabet $L_U$ by $\yiocq{I}{U}$.
The following example illustrates the situation.
\begin{exam}
	Consider the $\yde$-IOLTS depicted in Figure~\ref{fig:iots2simple}, where we have $L_I=\{a\}$ and $L_U=\{b,\yde\}$. 
	States  $s_1$ and $s_3$ are quiescent states, according to Definition~\ref{def:ioltsq}.
	
	Note that we have $\ytr{s_0}{\tau}{s_1}$, so that $\ytr{s_0}{x}{}$ does not imply $x\in L_I\cup\{\yde\}$.
	Hence, we do not have a self-loop $\yde$ at $s_0$ and so, according to to Definition~\ref{def:ioltsq}, it is not a quiescent state.
	Although $s_0$ is not quiescent, it can not, nevertheless, emit any output symbol back to a tester, that is, if $\ytr{s_0}{y}{s_1}$ then we have $y\not\in L_U$.
	It is only after we make the internal transition $\ytr{s_0}{\tau}{s_1}$ that the model can, then, issue the output symbol $\yde$ signaling quiescence.
\begin{figure}[htb]
	\center
	\begin{tikzpicture}[font=\sffamily,node distance=1cm, auto,scale=0.75,transform shape]
	\node[ initial by arrow, initial text={}, punkt] (s0) {$s_0$};
	\node[punkt, inner sep=3pt,right=2.0cm of s0] (s1) {$s_1$};
	\node[punkt, inner sep=3pt,right=2.0cm of s1] (s2) {$s_2$};
	\node[punkt, inner sep=3pt,right=2.0cm of s2] (s3) {$s_3$};
	
	\path (s0)    edge [pil,bend left=35,line width=0.5pt]   	node[right,above]	{$a$} (s1);
	\path (s0)    edge [pil,bend right=35,line width=0.5pt]  node[right, below] 	{$\tau$} (s1);
	
	\path (s1)    edge [pil,line width=0.5pt]  node[right, above] {$a$} (s2);
	\path (s1)    edge [loop above] node   {$\yde$} (s1);
	
	\path (s2)    edge [pil,bend right=35,line width=0.5pt]   	node[right, below]{$\tau$} (s3);
	\path (s2)    edge [pil,bend left=35,line width=0.5pt]  node[right, above] {$b$} (s3);
	
	\path (s3)    edge [loop above] node   {$a,\yde$} (s3);
	
	\end{tikzpicture}
	\caption{A simple $\delta$-IOLTS.}\label{fig:iots2simple}
\end{figure}
	\yfim
\end{exam}

In the test architecture studied in~\cite{tret-model-2008}, given an IOLTS $\yS=(S,s_0,L_I,L_U,T)\in\yltscT{L_I,L_U}$, a state $s\in S$ is said to be quiescent if, for all $x\in L_U\cup\{\tau\}$ we have $s\not\overset{\!\!\!x}{\rightarrow}$ in $\yS$. 
The fact that $s$ is quiescent is indicated by $\yde(s)$. 
Assuming that $\yde\not\in L_U$, the extended model $\yS_\yde=\yiolts{S}{s_0}{L_I}{L_U\cup \{\yde\}}{T\cup T_\yde}$ is defined, where $T_\yde = \{(s,\yde,s)\yst \yde(s)\}$.
That is, $\yS_\yde$ includes self-loops on the new output symbol $\yde$ at any quiescent state. 
Then, test runs the {\bf ioco}-relation, are all carried out and computed  referring to extended models.
Since in this section we are applying our results of  Sections~\ref{sec:conformance} and~\ref{sec:suites}
to the test architecture described in~\cite{tret-model-2008}, it is important to guarantee that the class of
$\yde$-IOLTS models from Definition~\ref{def:ioltsq} is coextensive with the class of extended models 
in~\cite{tret-model-2008}.
Moreover, we must also show that the {\bf ioco}-relation  used in~\cite{tret-model-2008} coincides with our Definition~\ref{def:out-after}.
Because the details of these considerations are not the focus of this section, we have grouped them in the appendix.
See Proposition~\ref{prop:class-delta}, for the guarantee that both classes of models are the same, and Proposition~\ref{prop:ioco2} that shows that both  {\bf ioco}-relations coincide.

This concludes the first step, of defining a class of models that includes the $\yde$ output symbol to indicate quiescence, as listed at the introduction to this section. 

\subsection{A Class of Fault Models}\label{subsec:fault-models}

One can formalize the external tester environment using the same notion of an  Input/Output Labeled Transition System to drive the test runs by blindly exchanging messages with the implementations in $\yiocq{I}{U}$ that are to be put under test~\cite{tret-model-2008}.
We will refer to these models here as \emph{test purposes}.
A test purpose $\yT$ has special \ypass\ and \yfail\  states.
Behaviors that reach a \yfail\  state are deemed harmful, and  those that reach a 
\ypass\ state are said to be acceptable. 
Once a test purpose has reached a {\bf fail} or a {\bf pass} state, then we have a verdict for the test, so that it is  reasonable to require  that there are no paths from a \ypass\ to a \yfail\ state, or vice-versa.
Note also that, if a test purpose $\yT$ sends symbols to an IUT $\yI=\yio{Q}{q_0}{I}{U}{R}$ and receives symbols from it, then the test purposes'  set of input symbols must be $L_U$  and its set of output symbols must be $L_I$.
A \emph{fault model} will be comprised by a finite set of test purposes, so that several conditions for acceptance and rejection of IUTs can be tested.

Of course, if we are dealing with implementations that treat quiescence, then we have $\yde\in L_U$, and so $\yde$ is a test purpose input symbol in $\yioc{U}{I}$.
The test designer is free to create test purposes whose $\yde$-transitions reflect the designer's intention when receiving $\yde$ symbols indicating quiescence in the implementation.  
But note that, from the perspective of the test designer, a test purpose is just an ordinary IOLTS in 
$\yioc{U}{I}$.

We now formalize these notions.
Since $\yiocq{U}{I}\ysse \yioc{U}{I}$, it is more profitable to express the following notions and results
using the full  $\yioc{U}{I}$ class, and specialize to more restricted classes of IOLTSs only when needed.

\begin{defi}\label{def:test-purpose}
	Let $L_I$ and $L_U$ be sets of input and output symbols, respectively, with $L=L_I\cup L_U$.
	A \emph{test purpose over $L$} is any  IOLTS $\yT\in\yioc{U}{I}$ such that
	for all $\ysi\in L^\star$ we have 
	neither $\ytrt{\yfail}{\ysi}{\ypass}$ nor  $\ytrt{\ypass}{\ysi}{\yfail}$.
	A \emph{fault model} over $L$ is 
	a finite collection of test purposes over $L$.
\end{defi}

Given  a test purpose $\yT$ and an IUT $\yI$, in order to formally describe the exchange of action symbols between $\yT$ and $\yI$, we define their cross-product LTS $\yT\times \yI$.%
\footnote{In~\cite{tret-model-2008} this is know as the \emph{parallel operator} $\vert [\cdot ]\vert$.}
\begin{defi}\label{defi:cross}
	Let $\yT=\yio{Q}{q_0}{U}{I}{R}\in\yioc{U}{I}$ and $\yI=\yioS\in\yioc{I}{U}$.
	Their cross-product is the LTS 
	$\yT\times \yI=\ylts{Q\times S}{(q_0,s_0)}{L}{P}$, where $L=L_I\cup L_U$ and $((q_1,s_1),x,(q_2,s_2))$ is a transition in $P$ if and only if either
	\begin{itemize}
		\item $x=\tau$, $s_1=s_2$ and $(q_1,\tau,q_2)$ is a transition in $R$, or 
		\item $x=\tau$, $q_1=q_2$ and $(s_1,\tau,s_2)$ is a transition in $T$, or
		\item $x\neq\tau$,  $(q_1,x,q_2)\in R$ and  $(s_1,x,s_2)\in T$.\yfim
	\end{itemize}
\end{defi}

We can now show that the collective behavior described in the cross-product always implies the same behavior about the two participating IOLTSs, and conversely.
\begin{prop}\label{prop:cross}
	Let $\yT=\yio{Q}{q_0}{U}{I}{R}\in\yioc{U}{I}$ and $\yI=\yioS\in\yioc{I}{U}$ be IOLTS, and let $\yT\times \yI$ be their cross-product.
	Then, we have
	\begin{enumerate}
		\item $\ytr{(t,q)}{\tau^k}{(p,r)}$ in $\yT\times \yI$, with $k\geq 0$, if and only if there are $n$, $m\geq 0$, with $n+m=k$, and  such that $\ytr{t}{\tau^n}{p}$ in $\yT$  and $\ytr{q}{\tau^m}{r}$ in $\yI$.
		\item $\ytrt{(t,q)}{\ysi}{(p,r)}$ in $\yT\times \yI$ if and only if $\ytrt{t}{\ysi}{p}$ in $\yT$  and $\ytrt{q}{\ysi}{r}$ in $\yI$, for all $\ysi\in L^\star$. 
	\end{enumerate}
\end{prop}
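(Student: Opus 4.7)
The plan is to prove both parts by induction, using the structure of $\tau$-moves in the cross-product, then bootstrapping the observable case from the internal case.

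For part (1), I would proceed by induction on $k \geq 0$. The base case $k=0$ is immediate: $\ytr{(t,q)}{\tau^0}{(p,r)}$ means $(t,q) = (p,r)$, which happens iff $t=p$ and $q=r$, corresponding to $n=m=0$. For the inductive step, suppose the result holds for $k$, and consider a path $\ytr{(t,q)}{\tau^{k+1}}{(p,r)}$. Write it as $\ytr{(t,q)}{\tau^{k}}{(p',r')}\ytr{}{\tau}{(p,r)}$. By the induction hypothesis I get $\ytr{t}{\tau^n}{p'}$ and $\ytr{q}{\tau^m}{r'}$ with $n+m=k$. Now by Definition~\ref{defi:cross}, the final $\tau$-transition $((p',r'),\tau,(p,r))$ must come either from a $\yT$-move (so $r'=r$ and $(p',\tau,p) \in R$) or from a $\yI$-move (so $p'=p$ and $(r',\tau,r)\in T$). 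In either case, we extend one of the two sides by one $\tau$, obtaining the desired $n' + m' = k+1$ decomposition. The converse is symmetric: given $\ytr{t}{\tau^n}{p}$ in $\yT$ and $\ytr{q}{\tau^m}{r}$ in $\yI$, I first perform all $n$ internal $\tau$-moves of $\yT$ while keeping $\yI$'s state fixed, then all $m$ internal $\tau$-moves of $\yI$, which each lift to valid cross-product $\tau$-transitions by the first two clauses of Definition~\ref{defi:cross}.

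For part (2), I would induct on $|\ysi|$, using part (1) to absorb the $\tau$-blocks. If $\ysi = \yeps$, then $\ytrt{(t,q)}{\yeps}{(p,r)}$ means $\ytr{(t,q)}{\tau^k}{(p,r)}$ for some $k \geq 0$, and part (1) yields exactly $\ytrt{t}{\yeps}{p}$ and $\ytrt{q}{\yeps}{r}$. For the inductive step, write $\ysi = \mu x$ with $x \in L$, so $\ytrt{(t,q)}{\mu}{(t',q')}\ytrt{}{x}{(p,r)}$ for some intermediate $(t',q')$. The observable transition on $x$ decomposes as $\ytr{(t',q')}{\tau^j}{(t'',q'')}\ytr{}{x}{(t''',q''')}\ytr{}{\tau^l}{(p,r)}$; the non-$\tau$ step uses the third clause of Definition~\ref{defi:cross}, forcing the $x$-transition to occur jointly, i.e., $(t'',x,t''')\in R$ and $(q'',x,q''')\in T$. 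Applying part (1) to each $\tau$-segment and combining with the induction hypothesis on $\mu$ gives matching observable paths on both sides. The reverse direction proceeds analogously: given matching observable paths on each side, decompose them into $\tau$-segments separated by shared non-$\tau$ labels, and use part (1) plus the joint clause of Definition~\ref{defi:cross} to reassemble a cross-product path.

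The main obstacle I foresee is purely bookkeeping in the converse of (2). Given $\ytrt{t}{\ysi}{p}$ and $\ytrt{q}{\ysi}{r}$, the two sides may intersperse their internal $\tau$-moves around each shared visible symbol in completely different amounts, so I need to be explicit that the cross-product allows me to fire all of $\yT$'s $\tau$-moves and all of $\yI$'s $\tau$-moves independently between two consecutive visible labels before jointly firing the visible label. Part (1) is exactly the lemma that justifies this reshuffling, which is why it is proved first.
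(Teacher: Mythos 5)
Your proposal is correct and follows essentially the same strategy as the paper's proof: part (1) by induction on $k$, and part (2) by peeling off the last visible symbol and using part (1) to absorb the $\tau$-blocks, with the joint clause of the cross-product handling the shared visible label. The only (immaterial) difference is that you induct on the length of the observable trace while the paper inducts on the length of the underlying path(s); your explicit remark about the independent interleaving of $\tau$-moves in the converse of (2) is exactly the point the paper's induction on $|\mu_1|+|\mu_2|$ is designed to handle.
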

\begin{proof}
	The first item follows by an easy induction on $k\geq 0$.
	
	For the second item, first assume that $\ytrt{(t,q)}{\ysi}{(p,r)}$ in $\yT\times \yI$.
	According to Definition~\ref{def:path} we get $\ytr{(t,q)}{\mu}{(p,r)}$ with $h_\tau(\mu)=\ysi$.
	We proceed by induction on $|\mu|\geq 0$.
	When $|\mu|=0$, we get $\ysi=\yeps$, $t=p$, $q=r$ and the result follows easily.
	Next, assume  $\ytr{(t,q)}{\mu}{(t_1,q_1)}\ytr{}{x}{(p,r)}$ with $\vert x\vert=1$ and  $h_\tau(\mu x)=\ysi$.
	The induction hypothesis gives $\ytrt{t}{\rho}{t_1}$ in $\yT$ and $\ytrt{q}{\rho}{q_1}$ in $\yI$ with $h_\tau(\mu )=\rho$.
	If $x=\tau$, Definition~\ref{defi:cross} gives $\ytr{t_1}{\tau}{p}$ and $q_1=r$, or $\ytr{q_1}{\tau}{r}$ and $t_1=p$.
	We assume the first case, the reasoning for the other case being entirely analogous. 
	We now have $\ysi=h_\tau(\mu x)=h_\tau(\mu\tau)=h_\tau(\mu)=\rho$.
	We  have $q_1=r$, and $\ytrt{q}{\rho}{q_1}$ in $\yI$, so that now we can write $\ytrt{q}{\ysi}{r}$ in $\yI$.
	From $\ytrt{t}{\rho}{t_1}$ we get $\ytr{t}{\eta}{t_1}$ with $h_\tau(\eta)=\rho$.
	Hence, $\ytr{t}{\eta}{t_1}\ytr{}{\tau}{p}$, and since $h_\tau(\eta\tau)=h_\tau(\eta)=\rho=\ysi$ we also have $\ytrt{t}{\ysi}{p}$ in $\yS$, as desired.
	Next, we have the case when $x\neq \tau$.
	Then, $\ysi=h_\tau(\mu x)=h_\tau(\mu)h_\tau(x)=\rho x$.
	Now, Definition~\ref{defi:cross} gives $\ytr{t_1}{x}{p}$ and $\ytr{q_1}{x}{r}$.
	From $\ytrt{t}{\rho}{t_1}$ and $\ytrt{q}{\rho}{q_1}$ we obtain $\ytr{t}{\eta_1}{t_1}$ and $\ytr{q}{\eta_2}{q_1}$ with $h_\tau(\eta_1)=\rho=h_\tau(\eta_2)$.
	Hence, $\ytr{t}{\eta_1 x}{p}$ and $\ytr{q}{\eta_2 x}{r}$.
	Since $h_\tau(\eta_1 x)=h_\tau(\eta_1)h_\tau(x)=\rho x=\ysi=h_\tau(\eta_2)h_\tau(x)=h_\tau(\eta_2 x)$, we can write 
	$\ytrt{t}{\ysi}{p}$ in $\yT$  and $\ytrt{q}{\ysi}{r}$ in $\yI$, as desired.
	
	For the converse of item 2, we note that from $\ytrt{t}{\ysi}{p}$ in $\yT$  and $\ytrt{q}{\ysi}{r}$ in $\yI$, we get  $\ytr{t}{\mu_1}{p}$ and $\ytr{q}{\mu_2}{r}$, with $h_\tau(\mu_1)=\ysi=h_\tau(\mu_2)$.
	We induct on $|\mu_1|+|\mu_2|\geq 0$.
	If $\mu_1=\tau^n$, for some $n\geq 0$, we obtain $\ysi=h_\tau(\tau^n)=\yeps=h_\tau(\mu_2)$, and we must have $\mu_2=\tau^m$, for some $m \geq 0$.
	Using item 1, we obtain $\ytr{(t,q)}{\tau^{m+n}}{(p,r)}$, and since $h_\tau(\tau^{m+n})=\yeps=\ysi$ we arrive at $\ytrt{(t,q)}{\ysi}{(p,r)}$, as desired.
	Now, assume $\mu_1=\yal_1 x\ybe_1$ with $x\neq \tau$ and $\ybe_1=\tau^n$, for some $n\geq 0$. 
	Then, $\ysi=h_\tau(\mu_1)=h_\tau(\yal_1)x=h_\tau(\mu_2)$, and we conclude that
	$\mu_2=\yal_2 x\ybe_2$ with $h_\tau(\yal_2)=h_\tau(\yal_1)$ and $\ybe_2=\tau^m$, for some $m\geq 0$.
	So, we now have $\ytr{t}{\yal_1 }{t_1}\ytr{}{x}{t_2}\ytr{}{\ybe_1}{p}$ and $\ytr{q}{\yal_2 }{q_1}\ytr{}{x}{q_2}\ytr{}{\ybe_2}{r}$.
	Let $\rho=h_\tau(\yal_1)=h_\tau(\yal_2)$, so that $\ysi=\rho x$.
	The induction hypothesis gives $\ytrt{(t,q)}{\rho }{(t_1,q_1)}$, which means that
	$\ytr{(t,q)}{\eta}{(t_1,q_1)}$ and $h_\tau(\eta)=\rho$.
	From Definition~\ref{defi:cross} again we get $\ytr{(t_1,q_1)}{x }{(t_2,q_2)}$.
	From the proof of item 1 we get $\ytr{(t_2,q_2)}{\tau^{m+n}}{(p,r)}$.
	Collecting, we have 
	$\ytr{(t,q)}{\eta}{(t_1,q_1)}\ytr{}{x}{(t_2,q_2)}\ytr{}{\tau^{n+m}}{(p,r)}$.
	Since $h_\tau(\eta x\tau^{n+m})=h_\tau(\eta)x=\rho x=\ysi$,
	we can now write $\ytrt{(t,q)}{\ysi}{(p,r)}$, completing the proof.
\end{proof}

Having a tester in the form of a test purpose $\yT$ and an IUT $\yI$, we now need to say  
when a test run is successful with respect to a given specification $\yS$.
Recalling that a test run was formalized by the cross product $\yT\times \yI$, and that $\yT$ signals an unsuccessful run when it reaches a {\bf fail} state, the  following definition formalizes the verdict of a test run~\cite{tret-model-2008}.
Importantly, given a specification $\yS$, when the test run is successful  we need a guarantee that $\yI$ does  {\bf ioco}-conform to  $\yS$ and, conversely, that the test run surely fails when $\yI$ does not {\bf ioco}-conform to $\yS$, for any implementation $\yI$.
That is, we need  properties of soundness and exhaustiveness.
It is also customary to specify that  IUTs of interest are only taken from particular subclasses of models in $\yioc{I}{U}$.
\begin{defi}\label{def:passes}
	Let $\yI=\yio{S_\yI}{q_0}{I}{U}{T_\yI}\in\yioc{I}{U}$ be an IUT and let $\yT=\yio{S_\yT}{t_0}{U}{I}{T_\yT}\in\yioc{U}{I}$ be a test purpose.
	We say that $\yI$ \emph{passes} $\yT$
	if, for any  $\ysi\in (L_I\cup L_U)^\star$ and any state $q\in S_\yI$, we do not have $\ytrt{(t_0,q_0)}{\ysi}{(\yfail,q)}$ in $\yT\times \yI$.
	Let $TP$ be a fault model. We say that $\yI$ \emph{passes} $TP$ if $\yI$ passes all test purposes in $TP$.
	Let $\yS$ be an IOLTS, and let  $\yltsn{IMP}\ysse\yioc{I}{U}$ be a family of IOLTSs.
	We say that 
	$TP$  is \emph{{\bf ioco}-complete} for  $\yS$ relatively to the $\yltsn{IMP}$ class if, for all $\yI\in \yltsn{IMP}$, we have $\yI \yioco \yS$  if and only if $\yI$ passes $TP$.
\end{defi}
When $\yltsn{IMP}=\yioc{I}{U}$, the class of all IOLTSs over $L_I$ and $L_U$, we can also say that $TP$ is  {\bf ioco}-complete for  $\yS$,
instead of $TP$ is  {\bf ioco}-complete for  $\yS$ relatively to the full $\yioc{I}{U}$ class.

The following construction gives us a fault model comprised by a single test purpose, and which is complete for any given specification IOLTS $\yS$.
\begin{lemm}\label{lemm:ioco-complete}
	Let $\yS=\yio{S_\yS}{s_0}{I}{U}{R_\yS}\in\yioc{I}{U}$.
	We can effectively construct a fault model $TP=\{\yT\}$  which is {\bf ioco}-complete for $\yS$.
	Moreover, $\yT$ is deterministic  and  has a single \yfail\ and no \ypass\ states.
\end{lemm}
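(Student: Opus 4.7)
The plan is to recycle the deterministic FSA that is already built inside the proof of Theorem~\ref{prop:ioco-poli}: that construction produces a complete deterministic FSA whose accepted language is exactly the set $T = \ycomp{otr}(\yS) \cap (otr(\yS)L_U)$ appearing in Corollary~\ref{coro:ioco-charac}, and it has a unique final state. I will simply reinterpret this FSA as an IOLTS in $\yioc{U}{I}$, flag its single final state as $\yfail$, declare no $\ypass$ state, and show that passing this lone test purpose is equivalent to $\yioco$-conformance.

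Concretely, I would first apply Proposition~\ref{prop:deterministic-lts} (equivalently, Propositions~\ref{prop:lts-fsa} and~\ref{prop:no-eps}) to $\yS$ to obtain a deterministic FSA $\yA$ over $L=L_I\cup L_U$ with $L(\yA)=otr(\yS)$ and every state final. Next I would add two fresh sink states $e,f$ and perform the augmentation from the proof of Theorem~\ref{prop:ioco-poli}: whenever $\yA$ lacks an $\ell$-transition out of an original state, redirect that missing edge to $f$ if $\ell\in L_U$ and to $e$ if $\ell\in L_I$, and attach self-loops on every symbol of $L$ at both sinks. Taking $f$ as the only final state yields a complete deterministic FSA $\yT^\star$ with no $\yeps$-moves, and the argument already carried out in that proof gives $L(\yT^\star)=T$. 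I would then cast $\yT^\star$ as an IOLTS $\yT\in\yioc{U}{I}$ via Definition~\ref{def:lts-fsa} (with $L_U$ as its input alphabet and $L_I$ as its output alphabet), mark $f$ as $\yfail$, and include no $\ypass$ state; Definition~\ref{def:test-purpose} is then vacuously satisfied. Because $\yT^\star$ has no $\yeps$-moves, $\yT$ has no $\tau$-moves, so $tr(\yT)=otr(\yT)$ and $\yT$ is deterministic in the sense of Definition~\ref{def:lts-deterministic}. Set $TP=\{\yT\}$.

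It remains to show $\yI\yioco\yS$ iff $\yI$ passes $TP$ for every $\yI\in\yioc{I}{U}$. By Definition~\ref{def:passes}, $\yI$ passes $\yT$ iff no $\ysi\in L^\star$ and no state $q$ of $\yI$ satisfy $\ytrt{(t_0,q_0)}{\ysi}{(\yfail,q)}$ in $\yT\times\yI$. Proposition~\ref{prop:cross}(2) decouples this into the conjunction of $\ytrt{t_0}{\ysi}{\yfail}$ in $\yT$ and $\ytrt{q_0}{\ysi}{q}$ in $\yI$. Proposition~\ref{prop:lts-fsa} then identifies the first conjunct with $\ysi\in L(\yT^\star)=T$ (since $f$ is the unique final state of $\yT^\star$) and the second with $\ysi\in otr(\yI)$. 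Hence $\yI$ passes $\yT$ iff $otr(\yI)\cap T=\yemp$, which by Corollary~\ref{coro:ioco-charac} is precisely $\yI\yioco\yS$.

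There is no real obstacle here beyond careful bookkeeping: the substantive work, namely verifying that the augmented FSA accepts exactly $\ycomp{otr}(\yS)\cap(otr(\yS)L_U)$, is already carried out inside the proof of Theorem~\ref{prop:ioco-poli}; the only subtlety is that the cross-product argument must track reaching the specific state $\yfail$ rather than merely any state of $\yT$, which is exactly what Proposition~\ref{prop:cross}(2) supplies, and the translation between FSA and IOLTS is clean because the FSA contains no $\yeps$-moves.
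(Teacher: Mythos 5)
Your construction is essentially the paper's: determinize the specification, add a sink $\yfail$ state reached exactly by the "missing output" transitions, and use Proposition~\ref{prop:cross}(2) to reduce ``passes'' to $otr(\yI)\cap T=\yemp$, which Corollary~\ref{coro:ioco-charac} identifies with $\yioco$-conformance. (The paper works directly on the IOLTS and proves an ad hoc Claim about paths into $\yfail$; you route through the FSA of Theorem~\ref{prop:ioco-poli} and add an extra harmless sink $e$ for missing inputs. That difference is cosmetic.) One step is stated incorrectly, though: once you attach self-loops on \emph{every} symbol of $L$ at $f$, the argument of Theorem~\ref{prop:ioco-poli} no longer yields $L(\yT^\star)=T$ — that argument depends on $f$ having no outgoing transitions, and with the loops the set of words reaching $f$ becomes $T L^\star$. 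Your final equivalence then reads ``$\yI$ passes $\yT$ iff $otr(\yI)\cap T L^\star=\yemp$,'' and you need one more observation to land on $otr(\yI)\cap T=\yemp$: since $otr(\yI)$ is prefix-closed, $otr(\yI)\cap TL^\star\neq\yemp$ forces $otr(\yI)\cap T\neq\yemp$ (this is exactly the step the paper performs when it extracts $\mu\ell$ from $\ysi=\mu\ell\rho$). Either add that sentence or simply leave $f$ and $e$ as true sinks, in which case $L(\yT^\star)=T$ holds verbatim and the rest of your argument goes through unchanged.
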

\begin{proof}
	We will devise a fault model $TP$ that is a singleton, that is, $TP=\{\yT\}$ where $\yT=\yio{S_\yT}{t_0}{U}{I}{R_\yT}$ is some test purpose to be constructed.
	In order for $TP$ to be {\bf ioco}-complete for $\yS$ we need that, for all implementations $\yI$, it holds that $\yI$ passes $\yT$ if and only if $\yI \yioco \yS$.
	From Lemma~\ref{lemm:ioco-reg} we know that $\yI \yioco \yS$ if and only if $\yI \yconf{D}{\yemp} \yS$, with $D=otr(\yS) L_U$.
	Recall that $\ycomp{otr}(\yS)$ indicates the complement of $otr(\yS)$, that is,  $\ycomp{otr}(\yS)=L^\star-otr(\yS)$.
	From  Proposition~\ref{prop:equiv-conf} we get $\yI \yconf{D}{\yemp} \yS$ if and only if  $otr(\yI)\cap T=\yemp$, where $T=\ycomp{otr}(\yS)\cap (otr(\yS) L_U)$.
	Putting it together, we see that we need a test purpose $\yT$ such that, for all implementations $\yI$, we have $\yI$ passes $\yT$ if and only if $otr(\yI)\cap T=\yemp$.
	
	Let $\yI=\yio{S_\yI}{q_0}{I}{U}{R_\yI}\in\yioc{I}{U}$ be an arbitrary implementation. 
	We have that $\yI$ passes $\yT$ if and only if in the cross-product LTS $\yT\times \yI$ we never have $\ytrt{(s_0,q_0)}{\ysi}{(\yfail,q)}$, for any $\ysi\in L^\star$ and any $q\in S_\yI$.
	Since we want $\yI$ passes $\yT$ if and only if $otr(\yI)\cap T=\yemp$,  
	we conclude that we need $otr(\yI)\cap T\neq \yemp$ if and only if  $\ytrt{(t_0,q_0)}{\ysi}{(\yfail,q)}$ in $\yT\times \yI$ for some $\ysi\in L^\star$ and some $q\in S_\yI$.
	
	We start by using Proposition~\ref{prop:deterministic-lts} and the underlying LTS of $\yS$ to effectively construct a deterministic LTS $\yB'$ such that $otr(\yS)=tr(\yB')=otr(\yB')$.   
	Let $\yB$ be the  IOLTS whose underlying LTS is $\yB'$.
	Clearly, $\yB$ can be effectively constructed, and it is deterministic with $otr(\yS)=tr(\yB)=otr(\yB)$.
	Let $\yB=\yio{S_\yB}{b_0}{I}{U}{R_\yB}$.
	We use $\yB$ to construct the desired test purpose $\yT=\yio{S_\yT}{t_0}{U}{I}{R_\yT}$ by extending the state set $S_\yB$ and the transition set $R_\yB$ as follows:
	we  add a $\yfail$ state to $S_\yB$ and add a transition $(s,\ell,\yfail)$ to $R_\yT$ whenever state 
	$s$ has no outgoing transition labeled  $\ell$ in $R_\yB$, where $\ell\in L_U $ is an output symbol.
	More precisely,  we construct $\yT$ by defining $t_0=b_0$, $S_\yT=S_\yB\cup \{\yfail\}$ where $\yfail\not\in S_\yB$, and letting 
	\begin{equation}\label{eq:testpurpose}
		\begin{split}
			R_\yT &=  R_\yB \cup \big\{(s,\ell,\yfail)\yst \ell\in L_U \text{ and } (s,\ell,p)\not\in R_\yB \\
			&  \text{ for any } p\in S_\yB\big\}\cup  \big\{(\yfail,\ell,\yfail)\yst \ell\in L_U\big\}.
		\end{split}
	\end{equation}
	Clearly, $\yT$ is a deterministic IOLTS with a single \yfail\ state and no \ypass\ states.
	In order to complete the proof we have to establish that $otr(\yI)\cap T\neq \yemp$ if and only if  $\ytrt{(t_0,q_0)}{\ysi}{(\yfail,q)}$ in $\yT\times \yI$ for some $\ysi\in L^\star$ and some $q\in S_\yI$.
	We start with the following claim.
	\begin{description}
		\item[\hspace*{0.3ex}]{\sc Claim:} 
		
		(i) If $t\neq \yfail$, then $\ytrt{t_0}{\ysi}{t}$ in $\yT$ if and only if 
		$\ytrt{b_0}{\ysi}{t}$ in $\yB$, for all $\ysi\in L^\star$.
		
		(ii) Let $p\neq \yfail$ with $p\in S_\yB$, and let $\ysi\in L^\star$, $\ell\in L$.
		Then, $\ytrt{t_0}{\ysi}{p}\ytrt{}{\ell}{\yfail}$ in $\yT$ if and only if $\ell \in L_U$, $\ytrt{b_0}{\ysi}{p}$ in $\yB$ and $(p,\ell,q)\not\in R_\yB$ for any $q\in S_\yB$.
	\end{description}
	\begin{description}
		\item[\hspace*{0.3ex}]{\sc Proof of the  Claim:} 
		Since $\yT$ is deterministic, it has no $\tau$-labeled transitions. Hence,  $\ytrt{t_0}{\ysi}{t}$ if and only if $\ytr{t_0}{\ysi}{t}$. 
		An easy induction on $|\ysi|\geq 0$, with $t\neq \yfail$, gives that $\ytr{t_0}{\ysi}{t}$ in $\yT$ if and only if $\ytr{b_0}{\ysi}{t}$ in $\yB$.
		Since $\yB$ is also deterministic, we immediately get $\ytr{b_0}{\ysi}{t}$ if and only if $\ytrt{b_0}{\ysi}{t}$ in $\yB$, thus establishing item (i).
		
		Using item (i) we get $\ytrt{t_0}{\ysi}{p}$ in $\yT$ if and only if $\ytrt{b_0}{\ysi}{p}$ in $\yB$.
		Since $\yT$ has no $\tau$-moves, we have $\ytrt{p}{\ell}{\yfail}$ in $\yT$ if and only if $\ytr{p}{\ell}{\yfail}$ in $\yT$.
		Because $p\neq \yfail$, the construction of $\yT$ we have $(p,\ell,\yfail)$ in $R_\yT$ if and only if 
		$\ell\in L_U$ and $(p,\ell,q)\not\in R_\yB$ for any $q\in S_\yB$.
		This verifies item (ii), and completes the proof of the claim.  
	\end{description} 
	
	Assume that $\ytrt{(t_0,q_0)}{\ysi}{(\yfail,q)}$ in $\yT\times \yI$ for some $\ysi\in L^\star$ and some $q\in S_\yI$.
	Using Proposition~\ref{prop:cross} we obtain $\ytrt{t_0}{\ysi}{\yfail}$ in $\yT$  and $\ytrt{q_0}{\ysi}{q}$ in $\yI$, so that $\ysi\in otr(\yI)$.
	Because $t_0\neq \yfail$ we get some $p\neq \yfail$ and some $\mu\in L^\star$, $\ell\in L$ such that $\ytrt{t_0}{\mu}{p}\ytrt{}{\ell}{\yfail}$ in $\yT$, and $\ysi=\mu\ell\rho$ for some $\rho\in L^\star$.
	The Claim gives $\ell\in L_U$, $\ytrt{b_0}{\mu}{p}$ in $\yB$, and $(p,\ell,q)\not\in R_\yB$, for all $q\in S_\yB$.
	Hence, $\mu\in otr(\yB)$ and, since we already have $otr(\yS)=otr(\yB)$, we get $\mu\in otr(\yS)$.
	Because $\ell\in L_U$ we conclude that $\mu\ell\in otr(\yS) L_U$.
	Because $\ysi\in otr(\yI)$ and $\ysi=\mu\ell\rho$, we also get $\mu\ell\in otr(\yI)$.
	This gives $\mu\ell\in otr(\yI) \cap (otr(\yS) L_U)$.
	Thus we must have  $\mu\ell\not\in otr(\yS)$, otherwise we would also have $\mu\ell\in otr(\yB)$, since $otr(\yS)=otr(\yB)$. 
	This would give $\ytrt{b_0}{\mu}{r}\ytrt{}{\ell}{q}$ in $\yB$, for some $r$, $q\in S_\yB$.
	Now, since $\yB$ is deterministic and we already have $\ytrt{b_0}{\mu}{p}$ in $\yB$, we conclude that $r=p$, so that $\ytrt{p}{\ell}{q}$ in $\yB$.
	Since $\yB$ is deterministic, $\ytrt{p}{\ell}{q}$ implies $\ytr{p}{\ell}{q}$ in $\yB$, which is a contradiction.
	So, $\mu\ell\not\in otr(\yS)$, that is, $\mu\ell\in \ycomp{otr}(\yS)$.
	We can now write $\mu\ell\in otr(\yI)\cap \big[\,\ycomp{otr}(\yS)\cap (otr(\yS) L_U)\big]$, that is,  $\ysi\in otr(\yI)\cap T$, showing that $otr(\yI)\cap T\neq \yemp$, as desired.
	
	For the other direction, assume that $otr(\yI)\cap T\neq \yemp$.
	We then get some $\ysi\in L^\star$ such that $\ysi\in otr(\yI)$, $\ysi\not\in otr(\yS)$ and $\ysi\in otr(\yS) L_U$.
	Then $\ysi\in otr(\yI)$ gives $\ytrt{q_0}{\ysi}{q}$ in $\yI$, for some $q\in S_\yI$.
	Also $\ysi\in otr(\yS) L_U$ implies that $\ysi=\yal\ell$ for some $\ell\in L_U$ and $\yal\in otr(\yS)$.
	Then, $\yal\in otr(\yB)$ because $otr(\yS)=otr(\yB)$, and then $\ytrt{b_0}{\yal}{b}$ in $\yB$ for some $b\neq \yfail$. 
	If $(b,\ell,q)\in R_\yB$ for some $q\in S_\yB$, then $\ytrt{b_0}{\yal\ell}{q}$ in $\yB$ because we already have $\ytrt{b_0}{\yal}{b}$ in $\yB$.
	This would give $\yal\ell\in otr(\yB)$, and so $\ysi\in otr(\yS)$ because $\ysi=\yal\ell$.
	But this is a contradiction since we already have $\ysi\not\in otr(\yS)$.
	Hence, $(b,\ell,q)\not\in R_\yB$ for all $q\in S_\yB$.
	The Claim, item (ii), then gives $\ytrt{t_0}{\ysi}{\yfail}$ in $\yT$.
	Using Proposition~\ref{prop:cross} we obtain $\ytrt{(t_0,q_0)}{\ysi}{(\yfail,q)}$ with $\ysi\in L^\star$, establishing the reverse direction.
	
	We have reached the desired conclusion, namely, $otr(\yI)\cap T\neq \yemp$ if and only if  $\ytrt{(t_0,q_0)}{\ysi}{(\yfail,q)}$ in $\yT\times \yI$ for some $\ysi\in L^\star$ and some $q\in S_\yI$.
	The proof is, thus, complete.
\end{proof}

We illustrate the construction of $\yT$ in Lemma~\ref{lemm:ioco-complete} using as specification the IOLTS in Figure~\ref{fig:spec}. 
\begin{exam}
	Let the IOLTS $\yS$ of Figure~\ref{fig:spec} be the specification. 
	Recall that $L_I=\{a,b\}$ and $L_U=\{x\}$.
	Since $\yS$ is already deterministic we have $\yS=\yB$ at Lemma~\ref{lemm:ioco-complete}, and so we start the construction of  $\yT$ at Eq. (\ref{eq:testpurpose}). 
	We add transitions $(t_0,x,\yfail)$, $(t_2,x,\yfail)$, $(t_3,x,\yfail)$, and $(\yfail,x,\yfail)$ to $R_\yB$ in order to get $R_\yT$, as shown in Figure~\ref{fig:tp-spec}.
	Notice that, among others, the observable behavior $\yal=axbx$ leads to the \yfail\ state in $\yT$, that is $\ytrt{t_0}{\yal}{\yfail}$ in $\yT$.
	Now let the IOLTS $\yI$ of Figure~\ref{fig:impl} be the IUT.
	We now have $\ytrt{q_0}{\yal}{q_2}$ in $\yI$.
	Hence, we get $\ytrt{(t_0,q_0)}{\yal}{(\yfail,q_2)}$ in $\yT\times \yI$ and so
	$\yI$ does not pass $\{\yT\}$.
	According to Lemma~\ref{lemm:ioco-complete}, $\{\yT\}$ is a complete fault model for the specification $\yS$.
	Therefore,  $\yI \yioco \yS$ should not hold too as we shall check.
	Take the  behavior $\ybe=ba$.
	We have $\ybe\in otr(\yS)$ because $\ytrt{s_0}{\ybe}{s_3}$ in $\yS$, but $x\not\in out(s_0 \yafter \ybe)$ because $x\not\in \yout(s_3)$ and $\yS$ is deterministic.
	We also have $\ytrt{q_0}{\ybe}{q_3}$ in $\yI$, and there is a transition $(q_3,x,q_2)$ in $\yI$.
	Then $x\in \yout(q_0 \yafter \ybe)$,  and so
	$out(q_0 \yafter \ybe)\not\ysse out(s_0 \yafter \ybe)$. 
	According to Definition~\ref{def:out-after},  $\yI \yioco \yS$ does not hold, as  expected. \yfim
\end{exam}

\subsection{A Specific Family of  Formal Models}\label{subsec:ioco-test-cases}

A number of restrictions are imposed by Tretmans~\cite{tret-model-2008} on the structure of the formal models in his approach, so that test runs can be adjusted to more practical situations.
First, from the perspective of the testers one would like test purposes to be acyclic, so that according to Definition~\ref{prop:cross} we have a guarantee that any test run is a finite process that runs to completion.
Secondly, since the tester cannot predict in advance which symbols a black-box IUT will be sending back in an exchange, it is convenient for the tester, at any state,  to be able to respond to any symbol that the IUT can possibly exchange with it. 
In this case, we say that the tester is input-enabled.  
Further, since a tester drives the IUT, in order to guarantee that  test runs continue to completion, at any state the tester must be able to drive the run by sending at least one symbol to the IUT.
Also, in order to avoid arbitrary choices and non-determinism, the tester is required to be  output-deterministic, that is, at any state it can emit only one of its output symbols back to the IUT. 
Moreover,  because {\bf fail} and {\bf pass} states already hold a verdict, in order to avoid moving out of these states, we it is required that a tester can only have self-loops at these special states.
The next definition precisely frames the notion of input-enabledness and output-determinism.
Recall Definition~\ref{def:out-after}.
\begin{defi}\label{def:in-compl-out-determ}\label{def:inic-tret}
	Let $\yS\in\yioc{I}{U}$.
	We say that $\yS$ is \emph{output-deterministic} if $|\yout(s)|=1$, for all $s\in S$, and that $\yS$ is \emph{input-enabled} if $\yinp(s)=L_I$, for all $s\in S$, where 
	the function $\yinp$ is defined from $\ypow{S}$ into $L_I$ as $ \yinp(V)= \bigcup\limits_{s\in V}\{\ell\in L_I\yst \ytrt{s}{\ell}{}\}$. 
	The class of all input-enabled IOLTSs over the alphabets $L_I$ and $L_U$ will be designated by $\yioce{I}{U}$.  
\end{defi}

We can also get {\bf ioco}-complete fault models whose test purposes are  input-enabled.
\begin{coro}\label{coro:input-enabled}
	For any specification $\yS$ we can effectively construct an {\bf ioco}-complete fault model $\{\yT\}$ where $\yT$ is deterministic, input-enabled  and  has a single \yfail\ and no \ypass\ states.
\end{coro}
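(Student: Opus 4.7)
The plan is to observe that no fresh construction is needed: the test purpose $\yT$ already produced in the proof of Lemma~\ref{lemm:ioco-complete} is itself input-enabled, so the corollary follows by simply re-examining that construction and checking the input-enabledness condition against Definition~\ref{def:in-compl-out-determ}.

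Concretely, I would take $\yT = \yio{S_\yT}{t_0}{U}{I}{R_\yT}$ as built in Lemma~\ref{lemm:ioco-complete}, which is already deterministic, has a unique \yfail\ state, no \ypass\ state, and such that $\{\yT\}$ is {\bf ioco}-complete for $\yS$. Since $\yT \in \yioc{U}{I}$, its input alphabet is $L_U$, and input-enabledness requires that every state of $\yT$ has an outgoing transition on every $\ell \in L_U$. I would then verify this directly from Equation~(\ref{eq:testpurpose}) by splitting into two cases: (i) the \yfail\ state is handled by the third subset of the union, which contributes $(\yfail, \ell, \yfail)$ for every $\ell \in L_U$; (ii) for any state $s \in S_\yB$ and any $\ell \in L_U$, either some transition $(s, \ell, p)$ is already in $R_\yB$ and hence in $R_\yT$, or no such $p$ exists and the second subset of the union contributes $(s, \ell, \yfail)$ to $R_\yT$. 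Either way, $s$ fires on $\ell$.

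Thus $\yinp(s) \supseteq L_U$ for every $s \in S_\yT$, which gives input-enabledness. All the remaining properties (determinism, single \yfail, no \ypass, and {\bf ioco}-completeness of $\{\yT\}$) are inherited verbatim from Lemma~\ref{lemm:ioco-complete}, so nothing else needs to be reproved. There is no real obstacle here; the main point is merely to notice that the completeness construction already added exactly the $L_U$-transitions required for input-enabledness of the tester.
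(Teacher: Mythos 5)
Your proposal is correct and takes exactly the same route as the paper: the paper's proof is a one-line observation that the test purpose from Lemma~\ref{lemm:ioco-complete} is already input-enabled, and your case analysis of Equation~(\ref{eq:testpurpose}) simply spells out why that observation holds. Nothing is missing.
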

\begin{proof}
	A simple observation reveals that the test purpose $\yT$ constructed in the proof of Lemma~\ref{lemm:ioco-complete} is already input-enabled.
\end{proof}

When we have testers that are input-enabled, output-deterministic and acyclic except for self-loops at {\bf fail} and {\bf pass} states, we have encountered all the restrictions imposed by Tretmans~\cite{tret-model-2008} to the testing architecture. 
In the remainder of this section we examine some consequences of these choices.
As a first consequence, it is no surprise that by requiring testers to be acyclic one has to impose some control on the size of the implementations.
\begin{prop}\label{prop:always-cycles}
	Consider the simple deterministic specification IOLTS $\yS=(\{s_0\},s_0,\{a\},\{x\},\{(s_o,a,s_0)\})$.
	Then there is no fault model $TP$, comprised only of acyclic test purposes, and which is {\bf ioco}-complete for $\yS$  even with respect to the subclass of all deterministic IOLTSs.
\end{prop}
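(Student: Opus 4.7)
The plan is a proof by contradiction combined with a bounded-depth argument. Suppose $TP=\{\yT_1,\ldots,\yT_k\}$ is a finite fault model consisting only of acyclic test purposes that is {\bf ioco}-complete for $\yS$ even relatively to the class of deterministic IOLTSs. Because each $\yT_i$ is finite and acyclic, there is a uniform bound $N$ on the length of any observable path in any $\yT_i$. The overall idea is to build a deterministic IOLTS $\yI$ whose only bad behaviour (with respect to {\bf ioco}) occurs strictly beyond length $N$, so that no $\yT_i$ can witness it without also rejecting $\yS$ itself.

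Explicitly, I will take $\yI$ to be the linear chain with states $q_0,q_1,\ldots,q_{N+2}$, transitions $(q_i,a,q_{i+1})$ for $0\le i\le N$, and a single output transition $(q_{N+1},x,q_{N+2})$. This is trivially deterministic and has no $\tau$-moves. Because $otr(\yS)=a^\star$ and $\yS$ has no output transitions, $\yout(s_0\yafter a^{N+1})=\yemp$; yet $q_0\yafter a^{N+1}=\{q_{N+1}\}$ and $x\in\yout(q_{N+1})$, so $\yI\not\yioco\yS$ by Definition~\ref{def:out-after}. Completeness of $TP$ then forces some $\yT_i$ to reject $\yI$: there exist $\ysi\in L^\star$ and a state $q$ of $\yI$ such that $\ytrt{(t_0,q_0)}{\ysi}{(\yfail,q)}$ in $\yT_i\times\yI$. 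Proposition~\ref{prop:cross} splits this into $\ytrt{t_0}{\ysi}{\yfail}$ in $\yT_i$ and $\ytrt{q_0}{\ysi}{q}$ in $\yI$, and acyclicity of $\yT_i$ forces $|\ysi|\le N$.

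The crux is the observation that every observable trace of $\yI$ of length at most $N$ must be of the form $a^k$ for some $0\le k\le N$, since the only transition carrying the output $x$ sits at position $N+1$ along the chain. Hence $\ysi=a^k$ for some such $k$, and trivially $\ytrt{s_0}{a^k}{s_0}$ in $\yS$. Applying Proposition~\ref{prop:cross} in the reverse direction yields $\ytrt{(t_0,s_0)}{a^k}{(\yfail,s_0)}$ in $\yT_i\times\yS$, so $\yS$ itself fails $TP$. But $\yS\yioco\yS$ holds trivially, so by completeness of $TP$ the specification $\yS$ must pass $TP$, a contradiction.

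The only mildly delicate point I foresee is the handling of the length bound in the presence of internal $\tau$-transitions inside the test purposes: one must argue that acyclicity of $\yT_i$ bounds not merely the observable length but the total length of any underlying path, so the observable length is still at most $N$. This is routine because any path in a finite acyclic graph is simple, hence of length bounded by the number of states; everything else in the argument is direct bookkeeping using Proposition~\ref{prop:cross}, Definition~\ref{def:passes} and Definition~\ref{def:out-after}.
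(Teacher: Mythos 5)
Your proof is correct and follows essentially the same route as the paper's: both bound the maximal trace length $N$ (resp.\ $k$) over the finite set of acyclic test purposes, build a deterministic chain implementation whose only {\bf ioco}-violating trace $a^{N+1}x$ exceeds that bound, observe that any rejecting trace must therefore be of the form $a^j$, and derive the contradiction that $\yS$ itself would fail $TP$ although $\yS\yioco\yS$. The only cosmetic difference is that the paper adds trailing self-loops on $a$ to make the implementation input-enabled, which is not needed for the statement as phrased.
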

\begin{proof}
	Clearly, $\yS$ is deterministic.
	Let $\yal=a^n$, with $n\geq 0$, so that
	$\yal x\in\ycomp{otr}(\yS)\cap(otr(\yS)\ L_U)$.
	Let  $\yI_n$ be described by the transitions $(q_{i-1},a,q_i)$ ($1\leq i\leq n)$,  $(q_n,x,q)$, and the self-loops $(q_n,a,q_n)$ and $(q,a,q)$, with $q_0$ the initial state.
	Clearly, $\yI_n$ is deterministic.
	Incidentally, since $\yinp(s_0)=\{a\}$ Definition~\ref{def:inic-tret}  says that $\yS$ is also input-enabled.
	We have $\yal x\in otr(\yI_n)$, so that
	$otr(\yI_n)\cap \big[\ycomp{otr}(\yS)\cap otr(\yS)\ L_U\big]\neq\yemp.$
	If $TP$ is {\bf ioco}-complete with all its test purposes acyclic, define
	$k=\max_{\yT\in TP}\big\{|\ysi|\yst \ysi\in otr(\yT)\big\}$.
	Since  $otr(\yI_k)\cap \big[\ycomp{otr}(\yS)\cap otr(\yS)\ L_U\big]\neq\yemp$,
	Proposition~\ref{prop:equiv-conf} and Lemma~\ref{lemm:ioco-reg} say that $\yI_k\yioco \yS$ does not hold.
	Hence, $\yI_k$ does not pass $TP$ and  there is $\yT\in TP$ with $\ytrt{(t_0,q_0)}{\ysi}{(\yfail,q)}$ in $\yT\times \yI_k$, for some $\ysi\in\{a,x\}^\star$.
	By Proposition~\ref{prop:cross}, $\ytrt{t_0}{\ysi}{\yfail}$ in $\yT$ and $\ytrt{q_0}{\ysi}{q}$ in $\yI_k$.
	By the maximality of $k$, $\vert\ysi\vert=m$ with $m\leq k$.
	Because $\ytrt{q_0}{\ysi}{q}$ in $\yI_k$, we get $\ysi=a^m$.
	Then, $\ytrt{t_0}{a^m}{\yfail}$ in $\yT$.
	Since, clearly, $\ytrt{s_0}{a^m}{s_0}$ in $\yS$, Proposition~\ref{prop:cross} gives
	$\ytrt{(t_0,s_0)}{a^m}{(\yfail,s_0)}$ in $\yT\times \yS$.
	Hence, $\yS$ does not pass $TP$.
	Taking $\yS$ as an IUT itself, and given that $TP$ was assumed an {\bf ioco}-complete fault model for $\yS$, we have that $\yS\yioco \yS$ does not hold, according to Definition~\ref{def:passes}.
	But this is a clear contradiction according to Definition~\ref{def:out-after}.
\end{proof}

Clearly, in the construction at the proof of Proposition~\ref{prop:always-cycles} we could have taken any IOLTS  $\yS$ with a state $s$ that is reachable from its initial state, and with a self-loop on any input symbol at $s$.
This  makes Proposition~\ref{prop:always-cycles} much more widely applicable.

Next, we investigate the situation where one does have an upper bound on the number of states in the IUTs, and show that the situation is more amenable in these cases.
\begin{defi}\label{def:m-complete}
	Let $\yltsn{IMP}\ysse \yioc{I}{U}$ be any class of IOLTSs, and let $m\geq 1$.
	We denote by $\yltsn{IMP}[m]$ the subfamily of $\yltsn{IMP}$ comprised by all models with at most $m$ states.
	Let $\yS\in \yioc{I}{U}$.
	We say that a fault model $TP$ over $L_I\cup L_U$ is $m$-{\bf ioco}-complete for  $\yS$
	relatively to $\yltsn{IMP}$ if and only if it is {\bf ioco}-complete for $\yS$ relatively to the class $\yltsn{IMP}[m]$.
\end{defi} 

Since it is not possible to construct fault models comprised only of  acyclic test purposes, and that are {\bf ioco}-complete in general, we turn to the problem of obtaining such fault models that are $m$-{\bf ioco}-complete, for a given $m$.
The construction is discussed  in Proposition~\ref{prop:m-complete}, and is illustrated in Example~\ref{exam:acyclic-TP}.
\begin{prop}\label{prop:m-complete}
	Let $\yS=\yioS\in\yioc{I}{U}$ be deterministic, and let $m\geq 1$.
	Then, there is a fault model $TP$ which is $m$-{\bf ioco}-complete for $\yS$, and such that all test purposes in $TP$ are deterministic and acyclic.
\end{prop}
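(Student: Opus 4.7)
The plan is to derive the desired fault model from Lemma~\ref{lemm:ioco-complete} and cap the resulting test purpose to a bounded depth large enough to catch every non-conformance witnessed by an implementation with at most $m$ states. Since Proposition~\ref{prop:always-cycles} rules out acyclic completeness in general, the core ingredient is a pigeonhole argument on the cross-product $\yT\times\yI$ that caps the length of any shortest witness path by $n_\yT\cdot m$, where $n_\yT$ is the number of states of the \textbf{ioco}-complete test purpose $\yT$ given by Lemma~\ref{lemm:ioco-complete}.

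Concretely, the construction is as follows. First, apply Lemma~\ref{lemm:ioco-complete} to $\yS$ to obtain a deterministic test purpose $\yT=\yio{S_\yT}{t_0}{U}{I}{R_\yT}$ with $n_\yT=|S_\yT|$ states, no \ypass\ states, and which is \textbf{ioco}-complete for $\yS$. Set $N=n_\yT\cdot m$, and define $\yT'=\yio{S_{\yT'}}{(t_0,0)}{U}{I}{R_{\yT'}}$ as the synchronous product of $\yT$ with a bounded counter: its states are pairs $(t,k)$ with $t\in S_\yT$ and $0\leq k\leq N$, and $((t,k),\ell,(t',k+1))\in R_{\yT'}$ whenever $k<N$ and $(t,\ell,t')\in R_\yT$, with $(\yfail,k)$ marked as \yfail. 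Since $\yT$ is deterministic and has no \ypass\ states, so is $\yT'$; and since every transition strictly increases the counter coordinate, $\yT'$ is acyclic. Moreover $\yT'\in\yioc{U}{I}$, so $\yT'$ is a bona fide deterministic, acyclic test purpose.

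The main step is to verify that, for every $\yI\in\yioc{I}{U}$ with $n_\yI\leq m$ states, $\yI$ passes $\{\yT'\}$ if and only if $\yI$ passes $\{\yT\}$; combining this with Lemma~\ref{lemm:ioco-complete} will imply that $TP=\{\yT'\}$ is $m$-\textbf{ioco}-complete for $\yS$. The ``easy'' direction is that any witness path in $\yT'\times\yI$ ending at $((\yfail,k),q)$ projects, via $(t,k)\mapsto t$, to a witness path in $\yT\times\yI$ ending at $(\yfail,q)$, so passing $\yT$ entails passing $\yT'$. For the converse, suppose $\yI$ does not pass $\yT$; then Proposition~\ref{prop:cross} yields some $\ysi$ and some $q$ with $\ytrt{(t_0,q_0)}{\ysi}{(\yfail,q)}$ in $\yT\times\yI$, witnessed by a concrete cross-product path. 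Choose this path to be shortest: since it is then a simple path in a graph with at most $n_\yT\cdot n_\yI\leq n_\yT\cdot m=N$ vertices, it has at most $N-1$ transitions. Now $\yT$ is deterministic and hence, by Proposition~\ref{prop:lts-deterministic}, has no $\tau$-labeled transitions, so every $\tau$-move along the cross-product path comes from $\yI$ and leaves the $\yT$-component untouched. Tracking the $\yT$-component therefore yields an observable path of length exactly $|\ysi|\leq N-1$ from $t_0$ to $\yfail$ in $\yT$, and lifting each non-$\tau$ step to the counter coordinate produces $\ytrt{((t_0,0),q_0)}{\ysi}{((\yfail,|\ysi|),q)}$ in $\yT'\times\yI$ (the counter never exceeds $N$). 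Hence $\yI$ does not pass $\yT'$, completing the equivalence.

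The most delicate point is the pigeonhole step: one must separate the number of transitions of the cross-product path from the length of the observable trace $\ysi$, and it is precisely the $\tau$-freeness of the deterministic $\yT$ (from Proposition~\ref{prop:lts-deterministic}) that guarantees $|\ysi|$ does not exceed the path length and therefore fits within the unfolding depth $N$. Once that is in place, the rest is bookkeeping on $\yT'$ and a straightforward appeal to Lemma~\ref{lemm:ioco-complete} to transfer the conclusion from ``passes $\{\yT\}$'' to ``$\yI\yioco\yS$'' for implementations in $\yioc{I}{U}[m]$.
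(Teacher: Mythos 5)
Your proof is correct, but it takes a genuinely different route from the paper's. The paper builds a layered acyclic multi-graph $D$ directly from the specification: $mn+1$ levels each listing all $n$ states of $\yS$, with arcs going rightward within a level or down to the next level, plus arcs to a $\yfail$ node for missing output transitions; the pigeonhole argument is run on $\yS\times\yI$ to show that a \emph{minimal} violating trace has length below $mn$ and hence is realized as a root-to-$\yfail$ path in $D$; the fault model then consists of one \emph{linear, single-path} test purpose per such path. You instead reuse Lemma~\ref{lemm:ioco-complete} as a black box and unroll its single (cyclic) complete test purpose $\yT$ to depth $N=n_\yT m$ via a counter product, running the pigeonhole on $\yT\times\yI$; the shortest-witness-is-simple observation, together with the $\tau$-freeness of the deterministic $\yT$ (so that the observable length cannot exceed the edge count), is exactly the right way to make the truncation depth suffice, and the two projection/lifting directions are sound. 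What your approach buys is brevity and modularity: a \emph{singleton} fault model and a clean reduction to the already-proved Lemma~\ref{lemm:ioco-complete}. What it gives up is the path-shaped structure of the paper's test purposes — the fact that each has at most one outgoing transition per state is what Proposition~\ref{prop:in-out} and the scheme constructions of Section~\ref{sec:adepetre-suites} later exploit to obtain single-input, output-deterministic test purposes, so those downstream results would need reworking on top of your DAG-shaped $\yT'$. One small bookkeeping point: you create a family of states $(\yfail,k)$ rather than a single distinguished $\yfail$ state as Definition~\ref{def:test-purpose} presumes; since none of them has outgoing transitions you need (and since the counter is determined by the path length), they can be merged into one sink $\yfail$ without breaking determinism or acyclicity, and you should say so.
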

\begin{proof}
	Let $s_0,s_1,\ldots,s_{n-1}$ be an enumeration of  $S$. 
	We construct a direct acyclic multi-graph $D$.
	Let $D$ have $mn+1$ levels, where at each level $i$, $0\leq i\leq mn$, we list all nodes in $S$, in the given order, from left to right, labeling them $s_{0,i}$, $s_{1,i}$,\ldots, $s_{n-1,i}$.
	Consider a node $s_{j,k}$, $0\leq j\leq n$, at level $k$, $0\leq k<mn$.
	For each transition $(s_j,\ell,s_i)$ of $\yS$:
	(i) if $i>j$, add a horizontal left to right arc to $D$ from node $s_{j,k}$ to node $s_{i,k}$;
	and (ii) if $i\leq j$ add a vertical arc to $D$ from node $s_{j,k}$ to node $s_{i,k+1}$. 
	Label the new arc with the symbol $\ell$.
	Complete the construction  by adding  an extra node labeled $\yfail$ to $D$.
	For any node $s_{i,k}$ in $D$, $s_{i,k}\neq \yfail$, and any $\ell\in L_U$, if $(s_i,\ell,p)\not\in T$, for every $p\in S$, add an arc from node $s_{i,k}$ to node $\yfail$ in $D$, 
	and label it $\ell$.
	Let $s_{0,0}$ be the root node.
	Finally, discard any node of $D$ not reachable from the root.
	Since all arcs in $D$ are directed top-down  or from left to right, it is clear that $D$ is acyclic.
	Hence, there is a finite number of distinct maximal paths starting at the root node of $D$.
	Also, since  at any node $s_{i,k}$ we add arcs to $D$ corresponding to all transitions in $\yS$ from node $s_i$, it is clear that, for all $\ysi\in L^\star$, with $|\ysi|\leq mn$, we have that $\ysi\in otr(\yS)$ if and only if $\ysi$ is a path from the root of $D$. 
	
	For each path $\ytr{s_{i_0}}{x_1}{s_{i_1}}\ytr{}{x_2}{}\cdots \ytr{}{x_{i_r}}{s_{i_r}}$  ($r\geq 1$ and $0\leq i \leq n-1$) in $D$ where $s_{i_0}=s_{0,0}$ is the root node and $s_{i_r}=\yfail$, we add the acyclic test purpose
	$\yT=(\{s_{i_0},\ldots, s_{i_r}\},s_{i_0},L_U,L_I,\{(s_{i_{j-1}},x_j,s_{i_j}\yst 1\leq j\leq r\})$ to $TP$.
	Clearly, $\yT$ is deterministic.
	Now we argue that $TP$ is $m$-{\bf ioco}-complete.
	Assume that $\yI\yioco \yS$ does not hold, where
	$\yI=(Q,q_0,L_I,L_U,R)$ is an IUT with $|Q|=h\leq m$ states.
	By Proposition~\ref{prop:equiv-conf} and Lemma~\ref{lemm:ioco-reg} we get $\ysi\in L^\star$ such that 
	$\ysi\in otr(\yI)$, $\ysi\not\in otr(\yS)$ and $\ysi\in otr(\yS) L_U$.
	Let $|\ysi|$ be minimum.
	Clearly, $\ysi=\yal \ell$, with $\ell\in L_U$ and $\yal\in otr(\yS)$ so that $\ytrt{s_0}{\yal}{s}$.
	From $\yal\ell\in otr(\yI)$ we get $\ytrt{q_0}{\yal}{q}\ytrt{}{\ell}{q'}$.
	By Proposition~\ref{prop:cross}, we have $\ytrt{(s_0,q_0)}{\yal}{(s,q)}$ in $\yS\times \yI$.
	With no loss of generality, let $\yal=x_1\ldots x_r$ ($r\geq 0$).
	Then, $\ytrt{(s_0,q_0)}{x_1}{(s_1,q_1)}\ytrt{}{x_2}{(s_2,q_2)}\ytrt{}{x_3}{\ldots}\ytrt{}{x_r}{(s_r,q_r)},$ with $s=s_r$ and $q=q_r$.
	If $r\geq mn\geq hn$ we get $(s_i,q_i)=(s_j,q_j)$ for some $0\leq i<j\leq r$.
	Then, $\ytrt{(s_0,q_0)}{\mu}{(s_r,q_r)}$  with $\mu=x_1\ldots x_ix_{j+1}\ldots x_r$.
	Again, $\ytrt{s_0}{\mu}{s_r}$ and $\ytrt{q_0}{\mu}{q_r}$, which  gives $\mu\in otr(\yS)$ and $\mu\in otr(\yI)$.
	Moreover, $(s_r,\ell,p)\not\in T$ for all $p\in S$, otherwise we get $\ysi=\yal \ell\in otr(\yS)$, a contradiction.
	Hence, $\mu\ell\not\in otr(\yS)$.
	Also, since $\mu\in otr(\yS)$, we get $\mu\ell\in otr(\yS)L_U$.
	Further, we also have $\ytrt{q_0}{\mu}{q_r}$ in $\yI$, $\ytrt{q_0}{\yal}{q}\ytrt{}{\ell}{q'}$, and $q=q_r$.
	Hence, $\ytrt{q_0}{\mu}{q}\ytrt{}{\ell}{q'}$, and so $\mu\ell\in otr(\yI)$.
	Thus, $\mu\ell\in otr(\yI)\cap \big[\ycomp{otr}(\yS)\cap otr(\yS) L_U\big]$.
	But $|\mu|<|\yal|$ and so $|\mu\ell|<|\yal\ell|=|\ysi|$, violating the minimality of $|\ysi|$.
	Hence, $r < mn$.
	Thus, from $\ytrt{s_0}{\yal}{s_r}$  we get a trace $\yal$ from the root $s_{0,0}$ to a node  $s_{i,k}$ in $D$.
	Because $(s_r,\ell,p)$ is not in $T$ for any $\ell\in L_U$ and any $p\in S$,  the construction gives an arc $\ell$ from $s_{i,k}$ to $\yfail$  in $D$.
	Hence, there is a trace $\yal\ell$ from $s_{0,0}$  to $\yfail$ in $D$.
	According to the construction, we get $\yT=(S_\yT,t_0,L_U,L_I,T_\yT)$ in $TP$ with $\ytrt{t_0}{\yal\ell}{\yfail}$ in $\yT$, and so $\ytrt{(t_0,q_0)}{\ysi}{(\yfail,q')}$ in $\yT\times \yI$, showing that $\yI$ does not pass $TP$.
	We have shown that if any IUT $\yI$ with at most $m$ states passes $TP$ then $\yI \yioco \yS$ holds.
	
	If an IUT $\yI=(Q,q_0,L_I,L_U,R)$ does not pass $TP$, we have $\ytrt{(t_0,q_0)}{\ysi}{(\yfail,q)}$ in $\yT\times \yI$, for some  $\yT=(S_\yT,s_{0,0},L_U,L_I,T_\yT)$ in $TP$.
	So, $\ytrt{s_{0,0}}{\ysi}{\yfail}$ in $\yT$ and $\ytrt{q_0}{\ysi}{q}$ in $\yI$.
	Thus, $\ysi\in otr(\yI)$.
	By the construction,  $\ytrt{s_{0,0}}{\yal}{s_{i,k}}\ytrt{}{\ell}{\yfail}$ in $\yT$, and $(s_{i},\ell,p)$ is not a transition in $\yS$, for all $p\in S$ and all $\ell\in L_U$.
	From $\ytrt{s_{0,0}}{\yal}{s_{i,k}}$ in $\yT$, we have a path $\yal$ from the root  to node $s_{i,k}$ in $D$.
	The construction of $D$ then gives, $\ytrt{s_0}{\yal}{s_i}$ in $\yS$, that is,
	we have  $\ysi=\yal\ell\in otr(\yS) L_U$.
	If $\yal\ell\in otr(\yS)$ we would have $\ytrt{s_0}{\yal}{s'}\ytrt{}{\ell}{s''}$ in $\yS$.
	Since $\yS$ is deterministic, we would get $s_i=s'$, and so $(s_i,\ell,s'')$ is a transition in $\yS$, a contradiction.
	Thus, $\ysi=\yal\ell\in \ycomp{otr}(\yS)$.
	Whence, $\ysi\in otr(\yI)\cap \ycomp{otr}(\yS)\cap (otr(\yS) L_U),$
	and, by Lemma~\ref{lemm:ioco-reg} and Proposition~\ref{prop:equiv-conf}, $\yI\yioco \yS$ does not hold.
	We have shown that if $\yI \yioco \yS$ holds then any IUT passes $TP$, and so does any IUT with at most $m$ states.
\end{proof}

Example~\ref{exam:acyclic-TP} illustrates the construction.
\begin{exam}\label{exam:acyclic-TP}
	Let the IOLTS $\yS$ of Figure~\ref{fig:spec} be the specification again. 
	Here we take $L_I=\{a,b\}$ and $L_U=\{x\}$.
	Following the construction exhibited at the Proof of Proposition~\ref{prop:m-complete},
	we construct the direct acyclic multi-graph $D$ 
	partially depicted in Figure~\ref{fig:tp-spec-nocycle}.
\begin{figure}[tb]
	\center
	\hspace*{-3ex}
	\begin{tikzpicture}[font=\sffamily,node distance=1cm, auto,
	scale=0.7,transform shape] 
	
	\node[ initial by arrow, initial text={}, punkt] (s00) {$s_{0,0}$};
	\node[punkt, inner sep=3pt,right=2cm  of s00] (s10) {$s_{1,0}$};
	\node[punkt, inner sep=3pt,right=2cm  of s10] (s20) {$s_{2,0}$};
	\node[punkt, inner sep=3pt,right=2cm  of s20] (s30) {$s_{3,0}$};

	\node[punkt, inner sep=3pt,below=1.5cm of s00] (s01) {$s_{0,1}$};
	\node[punkt, inner sep=3pt,right=2cm  of s01] (s11) {$s_{1,1}$};
	\node[punkt, inner sep=3pt,right=2cm  of s11] (s21) {$s_{2,1}$};
	\node[punkt, inner sep=3pt,right=2cm  of s21] (s31) {$s_{3,1}$};

	\node[punkt,inv,inner sep=3pt,below=1.5cm of s01] (s02) {};
	\node[punkt, inv,inner sep=3pt,right=2cm  of s02] (s12) {};
	\node[punkt, inv,inner sep=3pt,right=2cm  of s12] (s22) {};
	\node[punkt, inv,inner sep=3pt,right=2cm  of s22] (s32) {};
	
	\node[punkt,inner sep=3pt,below=1cm of s02] (s015) {$s_{0,15}$};
	\node[punkt, inner sep=3pt,right=2cm  of s015] (s115) {$s_{1,15}$};
	\node[punkt, inner sep=3pt,right=2cm  of s115] (s215) {$s_{2,15}$};
	\node[punkt, inner sep=3pt,right=2cm  of s215] (s315) {$s_{3,15}$};

	\node[punkt,inner sep=3pt,below=1.5cm of s015] (s016) {$s_{0,16}$};
	\node[punkt, inner sep=3pt,right=2cm  of s016] (s116) {$s_{1,16}$};
	\node[punkt, inner sep=3pt,right=2cm  of s116] (s216) {$s_{2,16}$};
	\node[punkt, inner sep=3pt,right=2cm  of s216] (s316) {$s_{3,16}$};
	
	
	\node[punkts, invisible,inner sep=0pt,below right=0.7cm and .7cm of s00] (fail0) {$\yfail$};
	\node[punkts, invisible,inner sep=0pt,below left=0.7cm and 0.1 of s20] (fail2) {$\yfail$};
	\node[punkts, invisible,inner sep=0pt,below right=0.5cm and 1cm of s30] (fail3) {$\yfail$};
	
	\node[punkts, invisible,inner sep=0pt,below right=0.7cm and .7cm of s015] (fail150) {$\yfail$};
	\node[punkts, invisible,inner sep=0pt,below left=0.7cm and 0.1 of s215] (fail152) {$\yfail$};
	\node[punkts, invisible,inner sep=0pt,below right=0.5cm and 1cm of s315] (fail153) {$\yfail$};
	
	
	\path (s00)    edge [pil]   	node[anchor=north,above]{a} (s10);
	\path (s00)    edge [pil,bend left=35]   	node[anchor=south]{b} (s30);
	\path (s10)    edge [pil]   	node[anchor=north, above]{x} (s20);
	\path (s10)    edge [pil]   	node[anchor=west]{a} (s11);
	\path (s10)    edge [pil,bend left=25]   	node[anchor=south]{b} (s30);
	\path (s20)    edge [pil]   	node[anchor=north, above]{b} (s30);
	\path (s20)    edge [pil]   	node[anchor=north, above]{a} (s11);
	\path (s30)    edge [pil]   	node[anchor=west]{a} (s31);
	\path (s30)    edge [pil]   	node[anchor=north, above]{b} (s21);
	
	
	\path (s01)    edge [pil]   	node[anchor=north,above]{a} (s11);
	\path (s01)    edge [pil,bend right=40]   	node[anchor=north]{b} (s31);
	\path (s11)    edge [pil]   	node[anchor=north, above]{x} (s21);
	\path (s11)    edge [pil,dashed]   	node[anchor=west]{a} (s12);
	\path (s11)    edge [pil,bend right=25]   	node[anchor=north]{b} (s31);
	\path (s21)    edge [pil]   	node[anchor=north, above]{b} (s31);
	\path (s21)    edge [pil,dashed]   	node[anchor=south]{a} (s12);
	\path (s31)    edge [pil,dashed]   	node[anchor=west]{a} (s32);
	\path (s31)    edge [pil,dashed]   	node[anchor=west,below]{b} (s22);

	
	\path (s015)    edge [pil]   	node[anchor=north,above]{a} (s115);
	\path (s015)    edge [pil,bend left=35]   	node[anchor=north]{b} (s315);
	\path (s115)    edge [pil]   	node[anchor=north, above]{x} (s215);
	\path (s115)    edge [pil]   	node[anchor=west]{a} (s116);
	\path (s115)    edge [pil,bend left=25]   	node[anchor=south]{b} (s315);
	\path (s215)    edge [pil]   	node[anchor=north, above]{b} (s315);
	\path (s215)    edge [pil]   	node[anchor=south]{a} (s116);
	\path (s315)    edge [pil]   	node[anchor=west]{a} (s316);
	\path (s315)    edge [pil]   	node[anchor=west,below]{b} (s216);

	
	\path (s016)    edge [pil]   	node[anchor=north,above]{a} (s116);
	\path (s016)    edge [pil,bend right=35]   	node[anchor=north]{b} (s316);
	\path (s116)    edge [pil]   	node[anchor=north, above]{x} (s216);
	\path (s116)    edge [pil,bend right=25]   	node[anchor=north]{b} (s316);
	\path (s216)    edge [pil]   	node[anchor=north, above]{b} (s316);
	
	
	\path (s00)    edge [pil,dashed]   	node[anchor=north,above]{x} (fail0);
	\path (s01)    edge [pil,dashed]   	node[anchor=west,below]{x} (fail0);
	\path (s20)    edge [pil,dashed]   	node[anchor=west]{x} (fail2);
	\path (s21)    edge [pil,dashed]   	node[anchor=east]{x} (fail2);
	\path (s30)    edge [pil,dashed]   	node[anchor=west]{x} (fail3);
	\path (s31)    edge [pil,dashed]   	node[anchor=west]{x} (fail3);
	
	\path (s015)    edge [pil,dashed]   	node[anchor=north,above]{x} (fail150);
	\path (s016)    edge [pil,dashed]   	node[anchor=west,below]{x} (fail150);
	\path (s215)    edge [pil,dashed]   	node[anchor=west]{x} (fail152);
	\path (s216)    edge [pil,dashed]   	node[anchor=east]{x} (fail152);
	\path (s315)    edge [pil,dashed]   	node[anchor=west]{x} (fail153);
	\path (s316)    edge [pil,dashed]   	node[anchor=west]{x} (fail153);

	\path (s01)    edge [pil,invisible]   	node[]{\Large\bf $\vdots$} (s015);
	\path (s31)    edge [pil,invisible]   	node[]{\Large\bf $\vdots$} (s315);
	\end{tikzpicture}
	\caption{A direct acyclic multi-graph D for Figure~\ref{fig:spec}.
	}\label{fig:tp-spec-nocycle}
\end{figure}
	As $\yS$ has $n=4$ states we  have four states at each level,  and all transitions connect nodes at the same level from left to right, or  must go to next level. 
	In this example we have considered IUTs with at most $m=n=4$ states, so that $mn+1=17$ levels are present in the multi-graph $D$. 
	The top at  Figure~\ref{fig:tp-spec-nocycle} shows the first two layers, and the bottom at that figure depicts the last two layers of $D$.
	To complete the construction we also add a \textbf{fail} state but, 
	in order to keep the figure uncluttered, we replicated the $\yfail$ label.
	
	A simple algorithm can extract test purposes $\yT$ by traversing the graph from the root $s_{0,0}$ to a $\yfail$ state. 
	For example, take the path $\yal=aaxabbbax$ with $\vert \yal\vert\leq 16$. 
	We can easily check that $\yal$ leads  $s_{0,0}$ to the \yfail\ state, that is $\ytrt{s_{00}}{\yal}{\yfail}$, by the sequence of states $s_{0,0}, s_{1,0}, s_{1,1}, s_{2,1}, s_{1,2}, s_{3,2}, s_{2,3}, s_{3,3}, s_{3,4}, \yfail$.
	Assume the IOLTS $\yI$ of Figure~\ref{fig:impl} as an IUT.
	By inspection,  $\yal$ leads  $q_0$ to $q_2$. 
	So $\yI$ does not pass the test purpose induced by $\alpha$  and so $\yI \yioco \yS$ does not hold, as expected again. \yfim
\end{exam}

A careful observation of the construction exhibited at the proof of Proposition~\ref{prop:m-complete}, with minor adjustments, also reveals that one can get $m$-{\bf ioco}-complete fault models whose test purposes are deterministic, output-deterministic, input-enabled and acyclic, except for self-loops at $\ypass$ and $\yfail$ states.
\begin{prop}\label{prop:in-out}
	Let $\yS\in\yioc{I}{U}$ be deterministic, and let $m\geq 1$.
	Then, there is a fault model $TP$ which is {\bf ioco}-complete for $\yS$ relatively to $\yioc{I}{U}[m]$, and such that all test purposes in $TP$ are deterministic, input-enabled, output-deterministic, and acyclic except for self-loops at
	$\yfail$ and $\ypass$ states.  
\end{prop}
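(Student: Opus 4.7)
The plan is to reuse the fault model $TP$ built in Proposition~\ref{prop:m-complete} and refine each of its test purposes so as to meet the extra structural requirements without altering which cross-product behaviors reach $\yfail$. Each $\yT\in TP$ is a single acyclic linear branch arising from a root-to-$\yfail$ path $\pi$ of the multi-graph $D$, visiting states $s^{(0)}=s_{0,0}, s^{(1)}, \ldots, s^{(r)}=\yfail$ along labels $x_1,\ldots,x_r$. By the construction of $D$ in Proposition~\ref{prop:m-complete}, only $L_U$-labeled edges reach $\yfail$, so $x_r\in L_U$. Fix once and for all a symbol $a\in L_I$; if $L_I=\yemp$ then output-determinism is vacuously unattainable and there is nothing to prove.

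For each such $\yT$ I would build $\yT'$ over input alphabet $L_U$ and output alphabet $L_I$ by adjoining a fresh $\ypass$ trap (absent from $\yT$) and extending the transition set as follows. At each intermediate state $s^{(j-1)}$, $1\leq j\leq r$: if $x_j\in L_I$, retain $(s^{(j-1)},x_j,s^{(j)})$ as the unique $L_I$-output at $s^{(j-1)}$ and add $(s^{(j-1)},y,\ypass)$ for every $y\in L_U$; if $x_j\in L_U$, retain $(s^{(j-1)},x_j,s^{(j)})$ (noting $s^{(j)}=\yfail$ when $j=r$), add $(s^{(j-1)},y,\ypass)$ for every $y\in L_U\setminus\{x_j\}$, and add the single auxiliary output transition $(s^{(j-1)},a,\ypass)$ to supply the missing $L_I$-action. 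At both $\yfail$ and $\ypass$, add self-loops $(s,y,s)$ for every $y\in L_U$ and the lone self-loop $(s,a,s)$. Let $TP'=\{\yT':\yT\in TP\}$, which is finite since $TP$ is. A direct inspection then shows that each $\yT'$ is deterministic (at most one outgoing transition per symbol at each state), input-enabled (every $y\in L_U$ is offered at every state), output-deterministic ($\yout(s)=\{x_j\}$ when $x_j\in L_I$, and $\yout(s)=\{a\}$ otherwise, including at $\yfail$ and $\ypass$), and acyclic except for the self-loops at $\yfail$ and $\ypass$; moreover no path connects $\yfail$ to $\ypass$ or vice versa, so Definition~\ref{def:test-purpose} is met.

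The crux, which is the main obstacle, is preserving $m$-\textbf{ioco}-completeness under this refinement, and the construction is engineered so that the $\yfail$-reaching behaviors of $\yT'\times\yI$ coincide with those of $\yT\times\yI$. Indeed, every transition of $\yT'$ that is not already a transition of $\yT$ either lands in $\ypass$ or self-loops at $\yfail$ or $\ypass$, and none of these can lead into $\yfail$ from outside; hence, by Proposition~\ref{prop:cross}, $\ytrt{(t_0,q_0)}{\ysi}{(\yfail,q)}$ holds in $\yT'\times\yI$ if and only if it holds in $\yT\times\yI$. Consequently, $\yI$ passes $\yT'$ iff $\yI$ passes $\yT$, whence $\yI$ passes $TP'$ iff $\yI$ passes $TP$, and Proposition~\ref{prop:m-complete} delivers $\yI\yioco\yS$ iff $\yI$ passes $TP'$ for every $\yI\in\yioc{I}{U}[m]$. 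The delicate case requiring explicit care is the pre-$\yfail$ state ($j=r$): the branch transition $(s^{(r-1)},x_r,\yfail)$ must be kept intact while every other $y\in L_U\setminus\{x_r\}$ is diverted to $\ypass$ and only the auxiliary output $(s^{(r-1)},a,\ypass)$ is added, so that the unique $\yfail$-reaching trace through $s^{(r-1)}$ in $\yT'$ is exactly that of $\yT$ and the completeness transfer indeed goes through.
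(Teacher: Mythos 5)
Your proposal is correct and follows essentially the same route as the paper's proof: start from the fault model of Proposition~\ref{prop:m-complete}, adjoin a $\ypass$ sink, redirect the missing $L_U$-inputs there, add a single $L_I$-output where one is lacking, put self-loops at $\ypass$ and $\yfail$, and observe that none of the added transitions creates a new route into $\yfail$, so passing is preserved. Your version merely merges the paper's two refinement stages into one pass and is, if anything, slightly more careful in realizing output-determinism at $\yfail$ and $\ypass$ via self-loops rather than transitions that could connect $\yfail$ to $\ypass$.
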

\begin{proof}
	From Proposition~\ref{prop:m-complete} we get a fault model $TP$ that is $m$-{\bf ioco}-complete for $\yS$, and such that all test purposes in $TP$ are acyclic and deterministic.
	Consider any test purpose $\yT=\yio{S_\yT}{t_0}{U}{I}{T_\yT}$ in $TP$, and take any state $t$ of $\yT$.
	
	In order to secure input-enabledness, we add a new $\ypass$ state to $S_\yT$.
	For all $\ell\in L_U$ we proceed as follows.
	If we do not have a transition $(t,\ell,t')$ in $T_\yT$ for any $t'\in S_\yT$ and with $t\neq \yfail$, we add the 
	transition $(t,\ell,\ypass)$ to $T_\yT$.
	Also,  add transitions $(\ypass,\ell,\ypass)$ and $(\yfail,\ell,\yfail)$ to $T_\yT$, for all $\ell\in L_U$.
	This will transform $\yT$ into an input-enabled acyclic IOLTS $\yT'$,
	except for self-loops at $\ypass$ and $\yfail$ states.
	Moreover, an easy induction on $\vert\ysi\vert\geq 0$ can establish that we have $\ytrt{t_0}{\ysi}{\yfail}$ in $\yT$ if and only if we also have $\ytrt{t_0}{\ysi}{\yfail}$ in $\yT'$.
	We conclude that $\yI$ passes $\yT$ if and only if $\yI$ passes $\yT'$.
	Hence, after adjusting all test purposes in $TP$ in this manner we get a fault model $TP'$ that is $m$-{\bf ioco}-complete for $\yS$, and such that all test purposes in $TP$ are input-enabled and acyclic, except for the self-loops at states $\ypass$ and $\yfail$.
	
	We now argue for output-determinism.
	A simple re-examination of the proof of Proposition~\ref{prop:m-complete} reveals that we have at most one outgoing transition $(t,\ell,t')$ in $\yT$, for any $\ell\in L_I\cup L_U$.
	Since in the previous step we added no transitions on a symbol from $L_I$, if we have $\ell\in L_I$ then state $t$ is already output-deterministic.
	If $\ell\not\in L_I$ we choose any symbol $\ell'\in L_I$ and add a transition $(t,\ell',\ypass)$ to $\yT'$.
	This makes state $t$ output-deterministic in $\yT'$.
	After we apply  this transformation to all states in $\yT'$ we get a new test purpose $\yT''$ which is output-deterministic.
	Again, it is easy to see that for any implementation $\yI\in\yioc{I}{U}$, we have that $\yI$ passes $\yT'$ if and only if $\yI$ passes $\yT''$.
	We, thus, reach the conclusion that we can always get a fault model $TP''$ which is $m$-{\bf ioco}-complete for $\yS$, and such that all test purposes in $TP''$ are input-enabled, output-deterministic, and acyclic, except for the self-loops at the $\ypass$ and $\yfail$ states.
	
\end{proof}

In Example~\ref{exam:TP-det-inp-out-acy} we illustrate some test purposes that are extracted from the multi-graph $D$, following Propositions~\ref{prop:m-complete} and~\ref{prop:in-out}.
\begin{exam}\label{exam:TP-det-inp-out-acy}
	Example~\ref{exam:acyclic-TP} shows (part of) the direct acyclic multi-graph $D$ for the IOLTS $\yS$ of Figure~\ref{fig:spec}. 
	Using Propositions~\ref{prop:m-complete} and~\ref{prop:in-out} we can extract deterministic, acyclic, input-enabled and output-deterministic test purposes from $D$.
	First, take the  path $\ysi_1=axbaabbx$. 
	In Figure~\ref{fig:tp-det-inp-out-acy-1}, Proposition~\ref{prop:m-complete} gives the outer  path from $s_{0,0}$ to $\yfail$, giving rise to a deterministic and acyclic test purpose.
	Then, in order to secure input-enabledness of the test purpose we followed the construction at the first part in Proposition~\ref{prop:in-out} we added  a new $\ypass$ state.
	Proceeding, , for all states $s$ in the outer path from $s_{0,0}$ to $s_{3,3}$ we added new transitions $(s,\ell,\ypass)$ for all symbols $\ell\in L_U$ which were not yet in the set $\yout(s)$.
	This part of the construction was completed by adding transitions $(\ypass,\ell,\ypass)$ and $(\yfail,\ell,\yfail)$  for all $\ell\in L_U$, as shown in Figure~\ref{fig:tp-det-inp-out-acy-1}.
	Next, we guaranteed  output-determinism by following the second part of the construction in Proposition~\ref{prop:in-out}.
	We then added transitions $(s,\ell,\ypass)$ from all states $s$ in the outer path from state $s_{0,0}$ to state $s_{3,3}$  and for which we still had $\yinp(s)\cap L_I=\yemp$. 
	By a simple inspection one can check that test purpose depicted in Figure~\ref{fig:tp-det-inp-out-acy-1}  is, in fact,  deterministic, input-enabled, output-deterministic and  acyclic, except by the self-loops at $\yfail$ and $\ypass$ states. 
\begin{figure}[tb]
	\center
	\begin{tikzpicture}[font=\sffamily,node distance=1cm, auto,
	scale=0.6,transform shape] 
	
	\node[ initial by arrow, initial text={}, punkt] (s00) {$s_{0,0}$};
	\node[punkt, inner sep=3pt,right=2cm  of s00] (s10) {$s_{1,0}$};
	\node[punkt, inner sep=3pt,right=2cm  of s10] (s20) {$s_{2,0}$};
	\node[punkt, inner sep=3pt,right=2cm  of s20] (s30) {$s_{3,0}$};
	\node[punkt, inner sep=3pt,below=4cm  of s30] (s31) {$s_{3,1}$};
	\node[punkt, inner sep=3pt,left=2cm  of s31] (s32) {$s_{3,2}$};
	\node[punkt, inner sep=3pt,left=2cm  of s32] (s23) {$s_{2,3}$};
	\node[punkt, inner sep=3pt,left=2cm  of s23] (s33) {$s_{3,3}$};
	
	
	
	\node[punkt, inner sep=3pt,above=1.5cm  of s33] (fail3) {$\yfail$};
	
	\node[punkt, inner sep=3pt,right=3.5cm  of fail3] (pass) {$\ypass$};
	
	
	\path (s00)    edge [pil]   	node[anchor=north,above]{a} (s10);
	\path (s10)    edge [pil]   	node[anchor=north, above]{x} (s20);
	\path (s20)    edge [pil]   	node[anchor=north, above]{b} (s30);
	\path (s30)    edge [pil]   	node[anchor=west]{a} (s31);
	\path (s31)    edge [pil]   	node[anchor=north, above]{a} (s32);
	\path (s32)    edge [pil]   	node[anchor=north, above]{b} (s23);
	\path (s23)    edge [pil]   	node[anchor=north, above]{b} (s33);
	
	\path (s00)    edge [pil]   	node[anchor=south]{x} (pass);
	\path (s10)    edge [pil]   	node[anchor=east]{a} (pass);
	\path (s20)    edge [pil]   	node[anchor=west]{x} (pass);
	\path (s30)    edge [pil]   	node[anchor=west]{x} (pass);
	\path (s31)    edge [pil]   	node[anchor=west]{x} (pass);
	\path (s32)    edge [pil]   	node[anchor=west]{x} (pass);
	\path (s23)    edge [pil]   	node[anchor=west]{x} (pass);
	\path (s33)    edge [pil]   	node[anchor=west]{a} (pass);
	
	
	\path (s33)    edge [pil]   	node[anchor=west]{x} (fail3);

	\path (pass) 
	edge [loop right] node [] {$x$} () ;
	
	\path (fail3) 
	edge [loop left] node [] {$x$} () ;
	\end{tikzpicture}
	\caption{A test purpose extracted from Figure~\ref{fig:tp-spec-nocycle}.
	}\label{fig:tp-det-inp-out-acy-1}
\end{figure}
\begin{figure}[tb]
	\center
	\hspace*{-2ex}
	\begin{tikzpicture}[font=\sffamily,node distance=1cm, auto,
	scale=0.6,transform shape] 
	
	\node[ initial by arrow, initial text={}, punkt] (s00) {$s_{0,0}$};
	\node[punkt, inner sep=3pt,right=2cm  of s00] (s10) {$s_{1,0}$};
	\node[punkt, inner sep=3pt,right=2cm  of s10] (s11) {$s_{1,1}$};
	\node[punkt, inner sep=3pt,right=2cm  of s11] (s31) {$s_{3,1}$};
	\node[punkt, inner sep=3pt,right=2cm  of s31] (s22) {$s_{2,2}$};
	\node[punkt, inner sep=3pt,below=4cm  of s22] (s13) {$s_{1,3}$};
	\node[punkt, inner sep=3pt,left=2cm  of s13] (s23) {$s_{2,3}$};
	\node[punkt, inner sep=3pt,left=2cm  of s23] (s33) {$s_{3,3}$};
	\node[punkt, inner sep=3pt,left=2cm  of s33] (s34) {$s_{3,4}$};
	\node[punkt, inner sep=3pt,left=2cm  of s34] (s25) {$s_{2,5}$};

	
	
	\node[punkt, inner sep=3pt,above=1.5cm  of s25] (fail3) {$\yfail$};
	
	\node[punkt, inner sep=3pt,right=5cm  of fail3] (pass) {$\ypass$};
	
	
	\path (s00)    edge [pil]   	node[anchor=south]{a} (s10);
	\path (s10)    edge [pil]   	node[anchor=south]{a} (s11);
	\path (s11)    edge [pil]   	node[anchor=south]{b} (s31);
	\path (s31)    edge [pil]   	node[anchor=south]{b} (s22);
	\path (s22)    edge [pil]   	node[anchor=west]{a} (s13);
	\path (s13)    edge [pil]   	node[anchor=north, above]{x} (s23);
	\path (s23)    edge [pil]   	node[anchor=north, above]{b} (s33);
	\path (s33)    edge [pil]   	node[anchor=north, above]{a} (s34);
	\path (s34)    edge [pil]   	node[anchor=north, above]{b} (s25);
	
	\path (s00)    edge [pil]   	node[anchor=north]{x} (pass);
	\path (s10)    edge [pil]   	node[anchor=east]{x} (pass);
	\path (s11)    edge [pil]   	node[anchor=west]{x} (pass);
	\path (s31)    edge [pil]   	node[anchor=north]{x} (pass);
	\path (s22)    edge [pil]   	node[anchor=north]{x} (pass);
	\path (s13)    edge [pil]   	node[anchor=south]{a} (pass);
	\path (s23)    edge [pil]   	node[anchor=west]{x} (pass);
	\path (s33)    edge [pil]   	node[anchor=west]{x} (pass);
	\path (s34)    edge [pil]   	node[anchor=west]{x} (pass);
	\path (s25)    edge [pil]   	node[anchor=south]{a} (pass);
	
	
	\path (s25)    edge [pil]   	node[anchor=west]{x} (fail3);

	\path (pass) 
	edge [loop right] node [] {$x$} () ;
	
	\path (fail3) 
	edge [loop left] node [] {$x$} () ;
	\end{tikzpicture}
	\caption{A test purpose extracted from Figure~\ref{fig:tp-spec-nocycle}.
	}\label{fig:tp-det-inp-out-acy-2}
\end{figure}
	
	As another illustration, consider the path $\ysi_2=aabbaxbabx$.
	Using the same steps, it results in the test purpose shown at Figure~\ref{fig:tp-det-inp-out-acy-2}. 
	
	Now consider the IOLTS $\yI$ shown at Figure~\ref{fig:impl} as an implementation to be tested.
	First, take the test purpose $\yT_2$, as shown in Figure~\ref{fig:tp-det-inp-out-acy-2}, that was constructed as indicated above.
	We can easily check that $\ysi_2 \notin otr(\yI)$, so now we do not have $\ytrt{(s_{0,0},q_0)}{\ysi_2}{(\yfail,q_2)}$ in $\yT_2\times \yI$.
	So, at this point, running a test experiment using $\yT_2$ and the IUT $\yI$ results in an inconclusive verdict.
	Next, take the same IUT $\yI$ and now take test purpose $\yT_1$, as shown in Figure~\ref{fig:tp-det-inp-out-acy-1}.
	By inspection,  $\ysi_1$ leads  from $q_0$ to $q_2$ in $\yI$, so that we have $\ytrt{(s_{0,0},q_0)}{\ysi_1}{(\yfail,q_2)}$ in $\yT_1\times \yI$. 
	This indicates that $\yI$ does not pass $\yT_1$,  and so now we may conclude that $\yI \yioco \yS$, in fact, does not hold.
	\yfim
\end{exam}

Putting these partial results together we reach the following conclusion, showing that we can use our methods to effectively construct test purposes that satisfy all requirements listed by Tretmans~\cite{tret-model-2008} for a black-box testing  architecture.
\begin{theo}\label{theo:all-reqs}
	Let $\yS\in\yioc{I}{U}$ be a specification, and let $m\geq 1$.
	Then we can effectively construct a finite fault model $TP$ which is $m$-{\bf ioco}-complete for $\yS$, and such that all test purposes in $TP$ are deterministic, input-enabled, output-deterministic, and acyclic except for self-loops at
	special $\yfail$ and $\ypass$ states.
\end{theo}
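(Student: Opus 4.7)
The plan is to reduce the general case to Proposition~\ref{prop:in-out}, which requires the specification to be deterministic. The key observation is that $m$-{\bf ioco}-completeness depends on $\yS$ only through its observable trace set $otr(\yS)$, so we can freely replace $\yS$ by a deterministic specification with the same observable semantics.

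First, I would invoke Proposition~\ref{prop:deterministic-lts} on the underlying LTS of $\yS$ to effectively construct a deterministic LTS $\yB'$ with $otr(\yB')=tr(\yB')=otr(\yS)$. Lifting $\yB'$ to an IOLTS $\yS'\in\yioc{I}{U}$ with the same partition $L=L_I\cup L_U$ yields a deterministic specification $\yS'$ with $otr(\yS')=otr(\yS)$.

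Second, I would apply Proposition~\ref{prop:in-out} to $\yS'$ and $m$ to obtain a finite fault model $TP$ which is $m$-{\bf ioco}-complete for $\yS'$ and in which every test purpose is deterministic, input-enabled, output-deterministic, and acyclic except for self-loops at $\yfail$ and $\ypass$ states. All the structural requirements on the test purposes are therefore already met; only the completeness claim needs to be transferred from $\yS'$ back to $\yS$.

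Finally, I would argue that $TP$ is also $m$-{\bf ioco}-complete for $\yS$. By Corollary~\ref{coro:ioco-charac}, for any implementation $\yI\in\yioc{I}{U}$ we have $\yI\yioco \yS$ iff $otr(\yI)\cap T=\yemp$ where $T=\ycomp{otr}(\yS)\cap (otr(\yS)L_U)$; since $otr(\yS)=otr(\yS')$, this set $T$ is literally the same as the one determined by $\yS'$, so $\yI\yioco \yS$ iff $\yI\yioco \yS'$. Combined with $m$-{\bf ioco}-completeness of $TP$ for $\yS'$, for every $\yI\in\yioc{I}{U}[m]$ we get $\yI$ passes $TP$ iff $\yI\yioco \yS'$ iff $\yI\yioco \yS$. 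The main (minor) obstacle is simply making explicit that $\yioco$ is insensitive to which specification representative with a given observable semantics is used; all the heavy lifting was already done in Propositions~\ref{prop:deterministic-lts} and~\ref{prop:in-out}, and in Corollary~\ref{coro:ioco-charac}.
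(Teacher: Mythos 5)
Your proof is correct and takes essentially the same route as the paper's: determinize the specification via Proposition~\ref{prop:deterministic-lts} and then apply Proposition~\ref{prop:in-out}. The paper's own proof leaves the final transfer step implicit, whereas you justify explicitly via Corollary~\ref{coro:ioco-charac} that $\yioco$ depends on the specification only through $otr(\yS)$, so the fault model built for the deterministic equivalent is also $m$-{\bf ioco}-complete for the original $\yS$; this is a welcome addition rather than a deviation.
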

\begin{proof}
	First, if $\yS$ is not already deterministic, use Proposition~\ref{prop:deterministic-lts}
	to transform $\yS$ into an equivalent deterministic IOLTS.
	Then use Proposition~\ref{prop:in-out} to get the desired fault model.
\end{proof}
It is not hard to see that all our models constructed according to Theorem~\ref{theo:all-reqs} satisfy all restrictions that must be obeyed by test cases as described in~\cite{tret-model-2008}, Definition~10.
A detailed, step by step argument can be seen in the Appendix, specially Proposition~\ref{prop:tps-are-tcs}.

Moreover, if one has a different set of characteristics, stemming  from another kind of testing architecture, and with somewhat different requirements to be satisfied by test purposes as compared to those proposed by Tretmans~\cite{tret-model-2008}, one might try to proceed as discussed here, and transform each basic test purpose so as to make it adhere to that specific set of new requirements.

\subsection{On the Complexity of Test Purposes}\label{sec:tretma-complexity}

We now look at the complexity of the specific family of test purposes constructed in Subsections~\ref{subsec:fault-models} and~\ref{subsec:ioco-test-cases}.
Let $\yS=\yioS$ be a deterministic specification IOLTS, with $|S|=n$. 
First, we return to the construction of the acyclic multi-graph $D$, described in the proof of  Proposition~\ref{prop:m-complete}, and that was used to obtain acyclic test purposes that are $m$-{\bf ioco}-complete for $\yS$.
Since $\yS$ is deterministic, it is clear that $D$ is also deterministic and acyclic.
Moreover, since $D$ has $nm+1$ levels with at most $n$ nodes per level, we conclude that $D$ has at most $n^2m+n$ nodes.

Although the number of nodes and levels in $D$ are polynomial on $n$ and $m$, the number of traces in $D$ might be super-polynomial on $n$ and $m$, in general.
Thus, given that we extract the test purposes from the traces in $D$, the fault model that is so generated, and which is complete for $\yS$, might also be of super-polynomial size on $n$ and $m$.
We argue now that, in general, this situation is unavoidable, even when we restrict specifications and implementations to  the smaller class of input-enabled IOLTS models. 
\begin{theo}\label{theo:ioco-specific-exp}
	Let $m\geq 3$, $L_I=\{0,1\}$, and $L_U=\{x\}$.
	Consider the deterministic, input-enabled specification $\yS=\yioS$ depicted in Figure~\ref{sec3:spec-complexity}, and let $TP$ be a fault model which is $m$-{\bf ioco}-complete for $\yS$, relatively to the class of all deterministic and input-enabled implementations.
\begin{figure}
	\begin{center}
		\begin{tikzpicture}[->,>=stealth',shorten >=2pt,auto,node distance=2.5cm,
		semithick,scale=0.75,transform shape] 
		\node[state,initial by arrow, initial text={}] (s0) {$s_0$};
		\node[state] (s1) [right of=s0] {$s_1$};
		\node[state] (s2) [right of=s1] {$s_2$};
		\path (s0) edge [] node {$1$} (s1)
		edge [loop above] node [] {$0$} () 
		(s1) edge [] node {$x$} (s2)
		edge [bend left] node [] {$1$} (s0)
		edge [loop above] node [] {$0$} ()
		(s2) edge [loop above] node [] {$0,1$} ()  
		;
		\end{tikzpicture}
		\caption{A specification IOLTS.}\label{sec3:spec-complexity}
	\end{center}
\end{figure}
	If all test purposes in $TP$ are deterministic and output-deterministic, then $TP$ must be comprised of at least $\yomeg{\Phi^m}$ distinct test purposes, where $\Phi=(1+\sqrt{5})/2$.
\end{theo}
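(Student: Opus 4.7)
I would prove the bound by displaying a family of $2^{m-3}$ non-conforming implementations $\{\yI_w\}$, one per binary word $w\in\{0,1\}^{m-2}$ with an even number of $1$'s, and show that no single deterministic output-deterministic test purpose can catch two of them. For such a $w=w_1\cdots w_{m-2}$, let $\yI_w\in\yioc{I}{U}$ have states $\{q_0,\ldots,q_{m-2},q_*\}$ ($m$ in total) and transitions $(q_{i-1},w_i,q_i)$, $(q_{i-1},1-w_i,q_*)$ for $1\le i\le m-2$; $(q_{m-2},x,q_*)$ and $(q_{m-2},a,q_*)$ for every $a\in L_I$; plus the self-loops $(q_*,a,q_*)$ for $a\in L_I$. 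Then $\yI_w$ is deterministic and input-enabled, and by inspection $otr(\yI_w)=\{0,1\}^*\cup wx\{0,1\}^*$. Since $w$ has an even number of $1$'s, $w\in otr(\yS)$ but $wx\notin otr(\yS)$, so $wx\in otr(\yI_w)\cap\ycomp{otr}(\yS)\cap otr(\yS)L_U$. By Corollary~\ref{coro:ioco-charac}, $\yI_w\yioco\yS$ fails; hence by $m$-{\bf ioco}-completeness of $TP$ there is some $\yT_w\in TP$ catching $\yI_w$.

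Next I would pin down what a catching trace looks like. By Proposition~\ref{prop:cross}, there is $\ysi\in otr(\yI_w)$ with $\ytrt{t_0}{\ysi}{\yfail}$ in $\yT_w$. The subcase $\ysi\in\{0,1\}^*\subseteq otr(\yS)$ is ruled out: Proposition~\ref{prop:cross} applied to $\yT_w\times\yS$ (legitimate since $\yS$ belongs to the tested class, has $3\le m$ states, and satisfies $\yS\yioco\yS$) would show that $\yS$ fails $\yT_w$, contradicting the completeness of $TP$. Consequently $\ysi=wxv$ for some $v\in\{0,1\}^*$, and since $\yT_w$ is deterministic (hence $\tau$-free by Proposition~\ref{prop:lts-deterministic}), reading $w$ from the initial state of $\yT_w$ reaches a uniquely determined state.

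The main step is then to show $w\mapsto\yT_w$ is injective. Suppose $\yT_w=\yT_{w'}=\yT$ for two distinct $w,w'\in\{0,1\}^{m-2}$ in our family, and let $\ell<m-2$ be their first disagreement. By the preceding paragraph, both catching traces begin with the common prefix of length $\ell$ and therefore, by determinism of $\yT$, reach the \emph{same} state $t_\ell\in\yT$; thereafter the $w$-trace continues with $w_{\ell+1}\in L_I$ and the $w'$-trace with $w'_{\ell+1}\in L_I$, two distinct symbols. Hence $\yT$ would have two distinct outgoing $L_I$-transitions from $t_\ell$, contradicting $|\yout(t_\ell)|\le 1$. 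Therefore the map $w\mapsto\yT_w$ is injective, giving $|TP|\ge 2^{m-3}$; since $2>\Phi$, this is $\yomeg{\Phi^m}$. The main subtlety is the possibility that the catching trace strictly extends $wx$; this is neutralized by the fact that determinism fixes the state of $\yT$ reached by reading $w$ before whatever happens after the final $x$, making the output-determinism clash at $t_\ell$ unavoidable.
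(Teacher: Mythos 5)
Your proof is correct and follows essentially the same strategy as the paper's: a family of spine-shaped non-conforming implementations indexed by binary words whose $1$-parity forces $wx\notin otr(\yS)$, the $\yS\yioco\yS$ argument to rule out fail-traces lying inside $otr(\yS)$, and the determinism/output-determinism clash at the first point of disagreement to make $w\mapsto\yT_w$ injective. The only real difference is your index set --- all even-parity words of length $m-2$ rather than the paper's $(0+11)^\star$ --- which gives the slightly stronger count $2^{m-3}=\yomeg{2^m}$ in place of the paper's Fibonacci count; both of course suffice for the claimed $\yomeg{\Phi^m}$.
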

\begin{proof}
	Clearly, $\yS$ is  deterministic and input-enabled.
	Let $R=(0+11)^\star$.
	In order to ease the notation, we define  $\overline{\ysi}=1$ when $\ysi=0$, and let  $\overline{\ysi}= 0$ when $\ysi=1$.
	Let $\yal=y_1\ldots y_r\in R$, with $1\leq r\leq m-3$.
	It is clear that $\yal\in otr(\yS)$, $\yal x\in otr(\yS)L_U$, and $\yal x\not\in otr(\yS)$, and so $\yal x\in \ycomp{otr(\yS)}$.
	
	Now let $\yI_\yal=\yio{S_\yI}{q_o}{I}{U}{T_\yI}$ be an IUT given by the transitions $(q_{i-1},y_i,q_i)$ for $1\leq i\leq r$, and $(q_r,x,q_{r+1})$.
	To guarantee that $\yI_\yal$ is input-enabled,
	add a new  $\ypass$ state, and the transitions $(q_{i-1},\overline{y_i},\ypass)$ ($1\leq i\leq r$) to $\yI_\yal$, as well as the transitions  
	$(q_{r},\ysi,\ypass)$ and the self-loops $(q_{r+1},\ysi,q_{r+1})$, $(\ypass,\ysi,\ypass)$, for $\ysi\in\{0,1\}$. See Figure~\ref{sec3:spec-complexity-impl}.
	Clearly, $\yI_\yal$ has $r+3\leq m$ states, is deterministic and input-enabled.
	
	Assume that $TP$ is a fault model which is $m$-{\bf ioco}-complete for $\yS$, relatively to the class of all deterministic and input-enabled implementations.
	Since $\yal x\in otr(\yI)$, we get $otr(\yI_\yal)\cap \big[\ycomp{otr}(\yS)\cap (otr(\yS) L_U)\big]\neq \yemp$, so that from Lemma~\ref{lemm:ioco-reg} and Proposition~\ref{prop:equiv-conf} we have that $\yI_\yal \yioco \yS$ does not hold.
	So, we get  a test purpose $\yT_\yal=\yio{S_\yal}{t_0}{U}{I}{T_\yal}$ in $TP$ such that $\yI_\yal$ does not pass $\yT_\yal$.
	Thus, there is $\ysi$ such that $\ytrt{(t_0,q_0)}{\ysi}{(\yfail,q)}$ in $\yT_\yal \times\yI_\yal$, for some $q\in S_\yI$.
	We then get  $\ytrt{t_0}{\ysi}{\yfail}$ in $\yT_\yal$ and $\ytrt{q_0}{\ysi}{q}$ in $\yI_\yal$.
	We claim that $\ysi\not\in\{0,1\}^\star$. 
	To see this assume to the contrary that $\ysi\in\{0,1\}^\star$.
	Then $\ytrt{s_0}{\ysi}{s}$, so that we get $\ytrt{(t_0,s_0)}{\ysi}{(\yfail,s)}$ in $\yT_\yal \times\yS$. 
	This says  $\yS$ does not pass $TP$, that is, $\yS \yioco \yS$ does not hold, a contradiction.
	Thus, since $\ysi\in otr(\yI_\yal)$, we must have $\ysi=\yal x\yal'$ with $x\in L_I$ and 
	$\yal'\in\{0,1\}^\star$.
	
	Next, we look at other test purposes in $TP$.
	Let $\ybe\in R$, $|\ybe|=|\yal|$ and $\ybe\neq \yal$.
	By a similar reasoning, we get another test purpose $\yT_\ybe=\yio{S_\ybe}{t'_0}{U}{I}{T_\ybe}$ in $TP$, and 
	with $\ytrt{t'_0}{\ybe x\ybe'}{\yfail}$ in $\yT_\ybe$, for some $\ybe'\in\{0,1\}^\star$.
	We now claim that $\yT_\yal\neq \yT_\ybe$.
	Again, for the sake of contradiction, let $\yT_\yal=\yT_\ybe$.
	We now have $\ytrt{t_0}{\yal x\yal'}{\yfail}$ and $\ytrt{t_0}{\ybe x\ybe'}{\yfail}$ in $\yT_\yal$.
	Since prefixes of $\yal$ and of $\ybe$ are in $otr(\yS)\cap \{0,1\}^\star$, we can not reach $\yfail$ in $\yT_\yal$ with such prefixes, otherwise we would again reach the contradiction to the effect that $\yS \yioco \yS$ does not hold.
	We, therefore, must have
	$\ytrt{t_0}{\mu}{t_1}\ytrt{}{x_1}{t_2}\ytrt{}{\yal_1 x\yal'}{\yfail}$ and 
	$\ytrt{t_0}{\mu}{t'_1}\ytrt{}{x_2}{t'_2}\ytrt{}{\ybe_1 x\ybe'}{\yfail}$ in $\yT_\yal$, with $\yal=\mu x_1\yal_1$ and 
	$\ybe=\mu x_2\ybe_1$.
	Since $\yT_\yal$ is deterministic, we have $t_1=t'_1$.
	This gives  
	$(t_1,x_1,t_2)$ and $(t_1,x_2,t'_2)$ as transitions in $\yT_\yal$, with $x_1\neq x_2$ in $\{0,1\}$.
	But this contradicts $\yT_\yal$ being output-deterministic.
	We  conclude that $\yT_\yal\neq \yT_\ybe$ when $\yal\neq \ybe$ and $|\yal|=|\ybe|$, with $\yal$, $\ybe\in R$.
	
	Finally, we put a simple lower bound on the number of  words of length $m$ in $R$.
	Since the symbol $1$ occurs only in blocks of two in a word in $R$, there are $\binom{m-i}{i}$ distinct words of length $m$ with $i$ such blocks in $R$.
	So, we have 
	$F_m=\sum_{i=0}^{\lfloor m/2\rfloor} \binom{m-i}{i}$ words of length $m$ in $R$,
	where $F_m$ is the $m$th Fibonacci number, that is  
	$F_m=\frac{1}{\sqrt{5}}\left(\Phi^{m}+\frac{1}{\Phi^{m}}\right)\geq \frac{\Phi^{m}}{\sqrt{5}},$
	where $\Phi=(1+\sqrt{5})/2$.
	We then conclude that we must have 
	at least $\Phi^{m}/\sqrt{5}$ test cases in $TP$.
\begin{figure}
	\begin{center}
		\begin{tikzpicture}[->,>=stealth',shorten >=1pt,auto,node distance=2.0cm,
		semithick,scale=0.7,transform shape] 
		\node[state,initial by arrow, initial text={}] (q0) {$q_0$};
		\node[state] (q1) [right of=q0] {$q_1$};
		\node[state] (q2) [right of=q1] {$q_2$};
		\node[state] (qrm1) [right of=q2] {$q_{r-1}$};
		\node[state] (qr) [right of=qrm1] {$q_{r}$};
		\node[state] (qrp1) [right of=qr] {$q_{r+1}$};  
		\node[state] (pass) [below of=q2] {$\ypass$};  
		\path (q0) edge [] node {$y_1$} (q1)
		edge [] node [below] {$\overline{y_1}$} (pass)
		(q1) edge [] node {$y_2$} (q2)
		edge [] node [] {$\overline{y_2}$} (pass)          
		(qrm1) edge [] node {$y_r$} (qr)
		edge [] node [right] {$\overline{y_r}$} (pass)      
		(qr) edge [] node [above] {$x$} (qrp1)
		(qr) edge [] node [below] {$0,1$} (pass)
		(qrp1) edge [loop above] node [above] {$0,1$} ()
		(pass) edge [loop right] node [] {$0,1$} ()               
		;
		\draw[dashed]  (4.5,0)--(5.5,0);  
		\draw[dashed]  (4,-.5)--(4,-1.5);            
		\end{tikzpicture}
		\caption{Implementation $\yI_\yal$.}\label{sec3:spec-complexity-impl}
	\end{center}
\end{figure}

	For a later reference, we note that both transitions from $q_r$ to ${\ypass}$, as well as both self-loops at state $q_{r+1}$, were never necessary in the proof. Their only function here is to make states $q_r$ and $q_{r+1}$ also input-enabled.
\end{proof}

It is  clear that Theorem~\ref{theo:ioco-specific-exp} applies to any specification $\yS$ in which Figure~\ref{sec3:spec-complexity}
occurs, with $s_0$ being reachable from its initial state.

\section{Another Restricted Class of IOLTS Models}\label{sec:adepetre-suites}

As another illustration of the applicability of our approach, in this section we want to apply our proposal to a subclass of IOLTS models that were studied more recently~\cite{simap-generating-2014}.
In that work, Sim\~ao and Petrenko considered a more contrived subclass of IOLTS models, and showed  that it is possible to generate {\bf ioco}-complete test suites for specifications in that subclass. 
But in that work, they did not consider the size complexity of the test suites that were generated following their approach.
In this section we apply our method to that same subclass of IOLTS models and show how to construct {\bf ioco}-complete test suites for such models in a more unified and direct way. 
We also study  the size complexity of the test suites that are generated using our approach.
It comes as no surprise that the test suites that are generated using the approach described in this section will have a size that might be exponentially related to the size of the IOLTS models involved in the testing process, and when considering worst case scenarios.
Interestingly, however, as one of the main results of this section, we also establish a precise \emph{exponential lower bound} on the worst case asymptotic size of \emph{any test suite} that is required to be complete for the class of IOLTS models treated here. 

Since there are several restrictions that IOLTS models must satisfy in Sim\~ao and Petrenko's approach~\cite{simap-generating-2014}, we introduce them in stages, as needed.
Motivation for considering these restrictions can be found in their work~\cite{simap-generating-2014}. 
Recall Definitions~\ref{def:out-after} and~\ref{def:in-compl-out-determ}.
First we use the $\yinp$ and $\yout$ functions that collect inbound and outbound transitions, respectively, in order to characterize the notions of  input-complete and output-complete states, among others.
Also we need the notion of $\yinit$, where  $\yinit(V)=\yinp(V)\cup\yout(V)$, for all $V\ysse S$.
Now, the notion of an IOLTS model being progressive is formalized.
\begin{defi}[\cite{simap-generating-2014}]\label{def:inp-sink}
	Let $\yS=\yioltsI$ be an IOLTS, with $L=L_I\cup L_U$, and let $s\in S$.
	We say that $s$ is: (i) a \emph{sink state}  if $\yinit(\{s\})=\yemp$; (ii) a \emph{single-input} state when $\vert \yinp(\{s\})\vert\leq 1$; (iii) a \emph{input-state} when $\yinp(\{s\})\neq \yemp$; (iv) an \emph{input-complete} state if $\yinp(\{s\})=L_I$ or $\yinp(\{s\})=\yemp$; and (v) we say that $s$ is an
	\emph{output-complete} state when $\yout(\{s\})=L_U$ or $\yout(\{s\})=\yemp$.
	We say that the IOLTS $\yS$ is single-input, input-complete, or output-complete if all states in $S$ are, respectively,  single-input, input-complete, or output-complete.
	We also say that $\yS$ is \emph{initially-connected} if every state in $\yS$ is reachable from the initial state, and 
	we say that $\yS$ is \emph{progressive} if it has no sink state and for any cycle
	$\ytr{q_0}{x_1}{q_1}\ytr{}{x_2}{q_2}\ytr{}{x_3}{}\cdots\ytr{}{x_k}{q_k}$, with $q_0=q_k$ and $q_i\in S$ ($0\leq i\leq k$) we have $x_j\in L_I$ for at least one transition $\ytr{q_{j-1}}{x_j}{q_j}$,  for some $1\leq j\leq k$.
	Let $\yiocip{I}{U}\ysse\yioc{I}{U}$ denote the class of all IOLTSs which are deterministic, input-complete, progressive and initially-connected.
\end{defi}
Earlier, in Definition~\ref{def:inic-tret}, a state $s$ was said to be input-enabled when $\yinp(s)=L_I$.
Note the slight difference with the notion of $s$ being input-complete just given.

In Definition~\ref{def:testsuite}, the terms test case and test suite were already reserved to refer to words and languages, respectively, over $L_I\cup L_U$.
In order to avoid confusion about the use of these terms in this section, here we will employ the terms \emph{schemes} and \emph{scheme suites}, respectively,  when referring to the notions of  test cases and test suites as in Definition~3 of Sim\~ao and Petrenko~\cite{simap-generating-2014}.
Note that, contrary to the notion of a test purpose in Definition~\ref{def:test-purpose}, test schemes in Sim\~ao and Petrenko's approach~\cite{simap-generating-2014} do not have their sets of input and output symbols reversed with respect to the sets of input and output symbols of specifications and implementations. 
The next definition reflects this idea.
\begin{defi}\label{testcase}
	Let $L_I$ and $L_U$ be sets of symbols with $L_I\cap L_U=\yemp$ and $L=L_I\cup L_U$.
	A \emph{scheme over $L$} is an acyclic single-input and output-complete IOLTS $\yT\in\yioc{I}{U}$ which has a single sink state, designated $\yfail$. 
	A \emph{scheme suite} $SS$ over $L$ is  a finite set of schemes over $L$.
\end{defi}

Proceeding, recall Definition~\ref{defi:cross}, of the cross-product operator $\yS\times\yI$ for synchronous execution of two IOLTSs $\yS$ and $\yI$.
Sim\~ao and Petrenko~\cite{simap-generating-2014} denote the exactly same operator by $\yS\cap \yI$, with the proviso that, in that work, internal $\tau$-moves were not considered, by definition.
In this section we will continue to use the cross-product to denote synchronous execution. 
This being noted, we remark now that our Definition~\ref{def:passes}, for when an implementation IOLTS $\yI$ passes a test scheme $\yT$ and passes a scheme suite  $SS$, exactly matches Definition~4 of Sim\~ao and Petrenko and, as a consequence, we also have the very same notion of {\bf ioco}-completeness, as stated in our Definition~\ref{def:passes} and their Definition~4. 

In what follows we want to show that our approach can also be used to construct {\bf ioco}-complete scheme suites, but with the advantage that we do not need to further constrain specification and implementations models.  
\begin{theo}\label{theo:ade-petre}
	Let $\yS=\yio{S_\yS}{s_0}{I}{U}{R_\yS}\in\yioc{I}{U}$ a specification over $L=L_I\cup L_U$, and let $m\geq 1$.
	Then we can effectively construct a finite scheme suite $SS$ over $L$ which is $m$-{\bf ioco}-complete for $\yS$. 
\end{theo}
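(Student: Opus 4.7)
The plan is to recycle the combinatorial engine of Section~\ref{sec:tretma-suites}, while adapting its output to the much more rigid shape of Definition~\ref{testcase}: a scheme lives in $\yioc{I}{U}$ (same orientation as $\yS$, not reversed as a test purpose), is strictly acyclic, single-input, output-complete, and has exactly one sink, labelled $\yfail$. First, by Proposition~\ref{prop:deterministic-lts}, I may assume $\yS$ is deterministic without loss of generality; set $n=|S_\yS|$. Then I would rebuild the acyclic multi-graph $D$ from the proof of Proposition~\ref{prop:m-complete}, on $mn+1$ levels, whose root-to-$\yfail$ paths enumerate precisely the minimal ``failure traces'' $\yal\ell$ with $\yal\in otr(\yS)$, $\ell\in L_U$, $\yal\ell\notin otr(\yS)$ and $|\yal|<mn$. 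The minimality argument used there still guarantees that every $m$-state implementation that violates $\yioco$-conformance exhibits such a witness.

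For each path $\yal\ell$ in $D$, with $\yal=x_1\cdots x_{r-1}$, I would assemble a scheme $\yT_{\yal,\ell}$ whose spine is the chain $\ytr{v_0}{x_1}{v_1}\ytr{}{x_2}{\cdots}\ytr{}{x_{r-1}}{v_{r-1}}\ytr{}{\ell}{\yfail}$. At a spine state $v_i$ whose next spine label $x_{i+1}$ lies in $L_I$, I keep only that one outgoing edge; then $\yout(\{v_i\})=\yemp$ satisfies output-completeness, and $|\yinp(\{v_i\})|=1$ satisfies single-input. At a spine state $v_i$ whose next spine label lies in $L_U$ (including the last step, on $\ell$), output-completeness forces an edge on every $\ell'\in L_U$: those $\ell'$ for which $x_1\cdots x_i\ell'\notin otr(\yS)$ are routed directly to $\yfail$, while those for which $x_1\cdots x_i\ell'\in otr(\yS)$ must be diverted into an auxiliary ``safe'' sub-DAG that is itself acyclic, single-input, output-complete, and shares the global sink $\yfail$.

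My plan for the safe sub-DAGs is to build them recursively in the same style as the main construction: a safe state corresponding to a valid prefix $\mu\in otr(\yS)$ again fans out on all of $L_U$ into $\yfail$ whenever $\mu\ell''\notin otr(\yS)$, and into a deeper safe state whenever $\mu\ell''\in otr(\yS)$. The determinism of $\yS$ together with the depth bound of $mn$ (after which the pigeonhole argument of Proposition~\ref{prop:m-complete} takes over again) keeps the recursion finite. Soundness will then follow because, by construction, every path to $\yfail$ across every scheme has the form ``prefix in $otr(\yS)$ followed by an output that leaves $otr(\yS)$'', i.e.\ it lies in the set $T=\ycomp{otr}(\yS)\cap\bigl(otr(\yS)\,L_U\bigr)$ of Corollary~\ref{coro:ioco-charac}; if $\yI\yioco\yS$, then Lemma~\ref{lemm:ioco-reg} and Proposition~\ref{prop:equiv-conf} give $otr(\yI)\cap T=\yemp$, so Proposition~\ref{prop:cross} forbids any cross-product $\yT_{\yal,\ell}\times\yI$ from reaching a $(\yfail,q)$ state. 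Exhaustiveness is a verbatim replay of the proof of Proposition~\ref{prop:m-complete}: any minimal non-conformance witness $\yal\ell$ has $|\yal|<mn$, matches the spine of some $\yT_{\yal,\ell}\in SS$, and via Proposition~\ref{prop:cross} drives $\yT_{\yal,\ell}\times\yI$ to $(\yfail,q)$ for some $q\in S_\yI$.

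The hard part, as I expect it, will be the safe sub-DAG construction itself: the ``single sink $\yfail$'' clause of Definition~\ref{testcase} rules out the obvious idea of just terminating safe branches at a neutral leaf, and the joint single-input/output-complete constraint severely restricts the allowed local shapes. I anticipate this is exactly what forces the exponential size blow-up foreshadowed in the introduction to this section, and that, barring a more clever global trick, one has to settle for letting the safe sub-DAGs mirror the deterministic continuations of $\yS$ down to depth $mn$, at which point either all remaining continuations have been funnelled to $\yfail$ through an output leaving $otr(\yS)$, or the remaining branches can be closed off using the same pigeonhole cut-off employed in Proposition~\ref{prop:m-complete}.
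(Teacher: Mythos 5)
Your starting point (determinize $\yS$ via Proposition~\ref{prop:deterministic-lts}, reuse the multi-graph $D$ of Proposition~\ref{prop:m-complete}, take one scheme per root-to-$\yfail$ path, and observe that spine states whose outgoing label is in $L_I$ are already single-input and output-complete) is exactly the paper's. The divergence, and the gap, is in how you make the $L_U$-labelled spine states output-complete. You reject ``terminating safe branches at a neutral leaf'' on account of the single-sink clause of Definition~\ref{testcase}, and instead propose recursive ``safe sub-DAGs'' that mirror the continuations of $otr(\yS)$ down to depth $mn$. But you never solve the problem this creates at the bottom of the recursion, and under your strict reading of the definition it cannot be solved: take a terminal safe state $v$ reached by a prefix $\mu\in otr(\yS)$ at the depth cut-off, in a specification where every $\mu\ell''$ with $\ell''\in L_U$ is again in $otr(\yS)$ (e.g.\ a state of $\yS$ carrying output self-loops). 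Output-completeness forces $\yout(\{v\})=L_U$ or $\yout(\{v\})=\yemp$. The first alternative requires sending some continuation that stays inside $otr(\yS)$ either to $\yfail$ (which destroys soundness: taking $\yI=\yS$ as its own implementation, Proposition~\ref{prop:cross} then drives the cross-product to a $(\yfail,q)$ state although $\yS\yioco\yS$ holds), or to a fresh sink (which you forbid), or onward (contradicting the cut-off and acyclicity). The second alternative makes $v$ a second sink. The pigeonhole argument of Proposition~\ref{prop:m-complete} bounds the length of a minimal witness; it says nothing about how to close off a branch of a scheme, so invoking it here does not help.

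The paper resolves this with precisely the move you rejected: it adds one fresh $\ypass$ state per scheme and, for each spine state $s$ with $\yout(\{s\})=\{\ell\}$ and $\vert L_U\vert>1$, adds the transitions $(s,x,\ypass)$ for every $x\in L_U$ with $x\neq\ell$, and nothing else. Since $\ypass$ is a sink, $\yfail$ is reachable from $(t_0,q_0)$ in the cross-product with the padded scheme exactly when it is with the unpadded one, so $m$-{\bf ioco}-completeness transfers verbatim from Proposition~\ref{prop:m-complete}: no safe sub-DAGs, no recursion, and no blow-up beyond what $D$ already forces. The price is that the resulting scheme has two sinks, $\yfail$ and $\ypass$; the paper implicitly reads the ``single sink designated $\yfail$'' clause as constraining only the verdict sink. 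You were more literal about Definition~\ref{testcase} than the paper is, but the literal reading makes the theorem unreachable by your route, so as written your proof is not complete.
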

\begin{proof}
	We start with the Proposition~\ref{prop:deterministic-lts} to transform $\yS$ into an equivalent deterministic IOLTS, if $\yS$ is not already deterministic. 
	Using Proposition~\ref{prop:m-complete}, we get a scheme suite $SS$ which is {\bf ioco}-complete for $\yS$ relatively to $\yioc{I}{U}[m]$, and such that all schemes in $SS$ are deterministic and acyclic IOLTSs.
	From the proof of Proposition~\ref{prop:m-complete}, it is clear that all schemes in $SS$ have a single {\bf fail} state, which is also a sink state.
	
	Let $\yT=\yio{S_\yT}{t_0}{I}{U}{R_\yT}$ in $SS$ be any scheme constructed as in the proof of Proposition~\ref{prop:m-complete},  and let $s\in S_\yT$ be any state of $\yT$. 
	From that proof, we know that there is at most one transition $(s,\ell,p)$ in $R_\yT$, for any $\ell\in L_U\cup L_I$ and any $p\in S_\yT$.
	There are two cases for $\ell\in L_U\cup L_I$. 
	If $\ell\in L_I$, from Definition~\ref{def:inp-sink}, we immediately get that 
	$\vert \yinp(s)\vert\leq 1$ and $\yout(s)=\yemp$, that is, $s$ is single-input and output-complete.
	Now, assume $\ell\in L_U$ is an output symbol of $\yT$.
	Then, $s$ is already single-input.
	If $\vert L_U\vert=1$, then $s$ is already an output-complete state.
	Else, in order to turn $s$ into an output-complete, transform scheme $\yT$ to a scheme $\yT'=\yio{S'_\yT}{t_0}{I}{U}{R'_\yT}$ by adding a new $\ypass$ state to $S_\yT$ and, for any other $x\in L_U$ with $x\neq \ell$, add a new transition $(s,x,\ypass)$ to $R_\yT$.
	It is clear that $s$ is  now output-complete.
	
	Since $\ypass$ is a sink state in $\yT'$, for any implementation $\yI=\yio{S_\yI}{q_0}{I}{U}{R_\yI}\in\yioc{I}{U}$, we have that $\ytrt{(t_0,q_0)}{\star}{(\yfail,q)}$ in $\yT$ if and only if $\ytrt{(t_0,q_0)}{\star}{(\yfail,q)}$ in $\yT'$, for any $q\in S_\yI$.
	Therefore, we also get that $\yI$ passes $\yT$ if and only if $\yI$ passes $\yT'$.
	Let $SS'$ be the scheme suite obtained from $SS$ with the transformation just discussed,
	now applied to each scheme in $SS$.
	Since $SS$ is $m$-{\bf ioco}-complete for $\yS$ then it follows that $SS'$ is also $m$-{\bf ioco}-complete for $\yS$.
	
	Since $SS'$ satisfies all the requirements in Definition~\ref{testcase}, the proof is complete. 
\end{proof}
\begin{figure}[tb]
	\begin{center}
		\begin{tikzpicture}[->,>=stealth',shorten >=1pt,auto,node distance=2.0cm,
		semithick,scale=0.6,transform shape]
		\node[state,initial by arrow, initial text={}] (s0) {$s_0$};
		\node[state] (s1) [right of=s0] {$s_1$};
		\node[state] (s2) [right of=s1] {$s_2$};
		\node[state] (s3) [right of=s2] {$s_3$};
		\node[state] (skm1) at (9.5,0) {$s_{k-1}$};
		\node[state] (sk) [right of=skm1] {$s_{k}$};
		\node[state,dashed] (s0x) [below of=sk] {$s_0$};
		\path (s0) edge [bend right=-35] node {$1$} (s1)
		edge [loop above] node [] {$0$} () 
		edge  [bend right=45] node [below] {$a$} (s2)
		(s1) edge [] node {$x$} (s2)
		edge [bend left] node [anchor=south] {$1$} (s0)
		edge [loop above] node [] {$0$} ()
		edge [bend right=40] node [below] {$a$} (s3)
		(s2) edge  [] node [] {$0,1$} (s3)
		(skm1) edge  [] node [] {$0,1$} (sk)
		(sk) edge  [] node [] {$x$} (s0x);
		\draw[dashed,->] (6.5,0)--(7.5,0); \node at (6.8,0.3) {$0,1$};
		\draw[dashed,->] (8.0,0)--(9,0); \node at (8.3,0.3) {$0,1$};
		\draw[dashed] (-1,-1.5)--(2.8,-1.5)--(2.8,2.0)--(-1,2.0)--(-1,-1.5);
		\node at (1,2.3) {$B_1$};
		\draw[dashed] (3.2,-1.3)--(12.5,-1.3)--(12.5,1)--(3.2,1)--(3.2,-1.3);
		\node at (8.5,1.3) {$B_2$};
		\end{tikzpicture}
		\caption{Specification $\yS'$, modifying $\yS$ of Figure~\ref{sec3:spec-complexity}.}\label{sec3:spec-complexity-extd}
	\end{center}
\end{figure}
A similar result follows if we restrict all specifications and implementations to be members of the more restricted subclass $\yiocip{I}{U}$ of IOLTS models.
\begin{coro}\label{coro:ade-petre-input-states}
	Let $\yS\in\yiocip{I}{U}$ be a specification, with $L=L_I\cup L_U$, and let $m\geq 1$.
	Then we can effectively construct a finite scheme suite $SS$ over $L$ which is $m$-{\bf ioco}-complete for $\yS$ relatively to the sub-class of $\yiocip{I}{U}$ and in which every implementation has no more input-states then $\yS$.   
\end{coro}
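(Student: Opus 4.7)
The plan is to derive Corollary~\ref{coro:ade-petre-input-states} as a direct specialization of Theorem~\ref{theo:ade-petre}. First I would observe that, by Definition~\ref{def:inp-sink}, the class $\yiocip{I}{U}$ is a subclass of $\yioc{I}{U}$, and imposing the extra constraint that each implementation has at most as many input-states as $\yS$ only further shrinks the family of IUTs under consideration. Combining this with the bound on the total number of states implicit in Definition~\ref{def:m-complete}, the sub-class $\mathcal{C}$ described in the statement is contained in $\yioc{I}{U}[m]$.

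Next, I would apply Theorem~\ref{theo:ade-petre} to the specification $\yS$ with the same parameter $m$ in order to effectively produce a finite scheme suite $SS$ over $L$ which is $m$-\textbf{ioco}-complete for $\yS$ relatively to $\yioc{I}{U}[m]$. Unfolding Definition~\ref{def:m-complete} together with Definition~\ref{def:passes}, this yields that, for every $\yI \in \yioc{I}{U}[m]$, we have $\yI \yioco \yS$ if and only if $\yI$ passes $SS$. Since $\mathcal{C} \subseteq \yioc{I}{U}[m]$, the same biconditional holds for every $\yI \in \mathcal{C}$, which is exactly the statement that $SS$ is $m$-\textbf{ioco}-complete for $\yS$ relatively to $\mathcal{C}$. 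The effectiveness and finiteness of $SS$ are inherited directly from Theorem~\ref{theo:ade-petre}.

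There is essentially no technical obstacle here, since completeness of a scheme suite is preserved under restriction of the IUT class: a verdict that correctly characterises \textbf{ioco}-conformance for every implementation in a larger family must, a fortiori, do so for every implementation in any sub-family. The only mildly delicate point I would be careful to articulate is that passing from $\yioc{I}{U}[m]$ to $\mathcal{C}$ is purely a set-theoretic restriction of the universally quantified IUT, so no new conformance obligation is introduced and neither the soundness nor the exhaustiveness direction of the biconditional can be invalidated by the extra structural requirements (determinism, input-completeness, progressiveness, initial-connectedness and the input-state bound).
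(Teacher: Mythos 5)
Your proposal is correct and matches the paper's own argument: the paper's proof of Corollary~\ref{coro:ade-petre-input-states} is exactly the observation that $\yiocip{I}{U}\ysse\yioc{I}{U}$ (so the restricted class sits inside $\yioc{I}{U}[m]$) followed by an application of Theorem~\ref{theo:ade-petre}. Your additional remark that completeness of a scheme suite is preserved under restriction of the universally quantified IUT class is precisely the implicit step the paper leaves to the reader.
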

\begin{proof}
	Since $\yiocip{I}{U}\ysse \yioc{I}{U}$, this follows immediately from Theorem~\ref{theo:ade-petre}.
\end{proof}

Sim\~ao and Petrenko~\cite{simap-generating-2014} consider specifications and implementations that are further restricted to be input-state-minimal, in the sense that any two distinct input-states are always distinguishable.
In their Corollary 1, two states $r$ and $p$ are said to be distinguishable when there are no sink state in the cross-product $\yioini{S}{r}\times\yioini{S}{p}\,$, where $\yioini{S}{p}$ stands for the same model as $\yS$, but now with $p$ being the initial state.
We formalize these notions next.
\begin{defi}[\cite{simap-generating-2014}]\label{def:distingui}
	Let $\yS=\yio{S_\yS}{s_0}{I}{U}{R_\yS}$ and $\yQ=\yio{S_\yQ}{q_0}{I}{U}{R_\yQ}$ be two IOLTS models, and let $s\in S_\yS$ and $q\in S_\yQ$.
	We say that $s$ and $q$ are \emph{distinguishable} if there is a sink state in the cross-product $\yioini{S}{s}\times\yioini{Q}{q}\,$.
	Otherwise, we say that $r$ and $s$ are \emph{compatible}.
	We say that an IOLTS $\yS$ is \emph{input-state-minimal} if any two distinct input-states $r$, $s\in S_\yS$ are distinguishable. 
	We denote by $\yiocipm{I}{U}\ysse\yiocip{I}{U}$ the subclass of all models  in
	$\yiocip{I}{U}$ which are also input-state-minimal.  
\end{defi}  
Recall Definition~\ref{def:inp-sink}.
Implementations are yet further constrained by Sim\~ao and Petrenko~\cite{simap-generating-2014} to have at most as many input-states as the specification model.
Let $k\geq 1$, and let $\yltsn{IMP}(L_I,L_U)\ysse\yioc{I}{U}$ be any family of IOLTS models.
We denote by $\yltsn{IMP}(L_I,L_U,k)$ the subclass of $\yltsn{IMP}(L_I,L_U)$ comprised by all models with at most $k$ input-states.
The main result of Sim\~ao and Petrenko~\cite{simap-generating-2014} is their Theorem 1, which shows that for any specification $\yS$ in the class $\yiocipm{I}{U}$ it is possible to construct scheme suites that are {\bf ioco}-complete for implementation models in the class $\yiocipmk{I}{U}{k}$, where $k$ is the number of input-states in $\yS$. 
This result also follows easily from Theorem~\ref{theo:ade-petre}.
\begin{coro}\label{coro:ade-petre-minimal}
	Let $m\geq 1$, and let $\yS\in\yiocipm{I}{U}$ be a specification with $k\geq 0$ input-states.
	Then we can effectively construct a finite scheme suite $SS$ over $L_I\cup L_U$ which is $m$-{\bf ioco}-complete for $\yS$ relatively to the sub-class of $\yiocipmk{I}{U}{k}$. 
\end{coro}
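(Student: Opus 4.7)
The plan is to derive this as an immediate consequence of Theorem~\ref{theo:ade-petre}, appealing to the monotonicity of completeness under passage to a smaller class of implementations. The key observation is that restricting the universe of implementations can only make both soundness and exhaustiveness easier to achieve, since each condition is a universal statement quantified over implementations in the class.

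First, I would note that $\yiocipm{I}{U}\ysse \yiocip{I}{U}\ysse \yioc{I}{U}$ by Definition~\ref{def:inp-sink} and Definition~\ref{def:distingui}, so the given specification $\yS$ is, in particular, a member of $\yioc{I}{U}$, and Theorem~\ref{theo:ade-petre} applies to $\yS$ with the parameter $m$. Thus we can effectively construct a finite scheme suite $SS$ over $L=L_I\cup L_U$ which is $m$-{\bf ioco}-complete for $\yS$; by Definition~\ref{def:m-complete}, this means $SS$ is {\bf ioco}-complete for $\yS$ relatively to $\yioc{I}{U}[m]$.

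Next I would invoke the monotonicity argument. Since $\yiocipmk{I}{U}{k}[m]\ysse \yioc{I}{U}[m]$ (every model in the former class is in particular an IOLTS with at most $m$ states), any $\yI\in \yiocipmk{I}{U}{k}[m]$ is already covered by the completeness guarantee for $SS$ over $\yioc{I}{U}[m]$. Concretely, for every such $\yI$, Definition~\ref{def:passes} gives $\yI \yioco \yS$ if and only if $\yI$ passes $SS$, which is exactly what {\bf ioco}-completeness relative to $\yiocipmk{I}{U}{k}$ asserts, again by Definition~\ref{def:m-complete}.

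There is no real obstacle in this proof: no new construction is required, only the observation that the guarantees already established for the broader class $\yioc{I}{U}[m]$ specialize to the narrower class $\yiocipmk{I}{U}{k}[m]$. One might worry whether the additional structural constraints (progressivity, input-completeness, initially-connectedness, input-state-minimality, and the bound $k$ on input-states) in the target implementation class could somehow invalidate the scheme suite, but the point is that completeness is preserved precisely because these constraints only \emph{remove} implementations from the universe of discourse, never add any, so $SS$ remains both sound and exhaustive on the smaller class.
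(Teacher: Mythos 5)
Your proposal is correct and matches the paper's own proof, which likewise notes the inclusions $\yiocipm{I}{U}\ysse \yioc{I}{U}$ and $\yiocipmk{I}{U}{k}[m]\ysse \yioc{I}{U}[m]$ and then applies Theorem~\ref{theo:ade-petre}. Your additional remarks on why completeness is monotone under shrinking the implementation class are a correct elaboration of the same argument.
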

\begin{proof}
	Note that $\yiocipm{I}{U}\ysse \yioc{I}{U}$ and  $\yiocipmk{I}{U}{k}{[m]}\ysse \yioc{I}{U}[m]$. 
	Then the result follows applying Theorem~\ref{theo:ade-petre}.
\end{proof}

The complexity of the generated test suites were not analysed by Sim\~ao and Petrenko~\cite{simap-generating-2014}.
However, as we argued in Subsection~\ref{sec:tretma-complexity} and in Theorem~\ref{theo:ioco-specific-exp}, we cannot, in general, avoid scheme suites to asymptotically grow very large, even when specifications are confined to the subclass $\yiocipm{I}{U}$, and implementations are restricted to the subclass $\yiocipmk{I}{U}{k}$, where $k$ is the number of input-states in $\yS$.
The next result establishes a worst case exponential asymptotic lower bound on the size of the test schemes that can be generated using their Theorem 1~\cite{simap-generating-2014} or, equivalently, using our Corollary~\ref{coro:ade-petre-minimal}.
\begin{theo}\label{theo:ade-exp}
	Let $k\geq m\geq 3$, and let $L_I=\{0,1\}$ and $L_U=\{a,x\}$.
	There is a specification $\yS\in\yiocipm{I}{U}$ with $k$ input-states, and for which any scheme suite $SS$ that is  $m$-{\bf ioco}-complete for $\yS$, relatively to the class $\yiocipmk{I}{U}{k}$,
	must be of size  $\yomeg{\Phi^m}$, where $\Phi=(1+\sqrt{5})/2\approx 1.61803$.
\end{theo}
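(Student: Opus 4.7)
The plan is to adapt the Fibonacci-style argument of Theorem~\ref{theo:ioco-specific-exp} to the restricted class $\yiocipm{I}{U}$, using as witness the specification $\yS'$ depicted in Figure~\ref{sec3:spec-complexity-extd}: block $B_1$ reproduces the three-state parity gadget of Figure~\ref{sec3:spec-complexity}, while block $B_2$ attaches a chain $s_2\to s_3\to\cdots\to s_k$ with every consecutive step enabled by both inputs $0$ and $1$, with $s_k$ emitting $x$ back to $s_0$, and with extra $a$-outputs from $s_0$ and $s_1$ feeding the chain. I would first verify that $\yS'\in\yiocipm{I}{U}$ with exactly $k$ input-states $s_0,\ldots,s_{k-1}$: determinism, initial-connectedness, input-completeness on $\{0,1\}$, and progressiveness follow by inspection; input-state-minimality is obtained from the chain itself, since for any distinct input-states $s_i,s_j$ driving the cross-product starting at $(s_i,s_j)$ with enough $0$'s lands the higher-indexed coordinate at $s_k$ while the other still lies in $\{s_0,\ldots,s_{k-1}\}$, and since $s_k$ enables only $x$ whereas no other state in $\{s_0,\ldots,s_{k-1}\}$ enables $x$, the resulting configuration is a sink.

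Next, for each word $\yal=y_1\cdots y_r\in R:=(0+11)^\star$ with $1\leq r\leq m-3$, I would construct an implementation $\yI_\yal\in\yiocipmk{I}{U}{k}$ that deviates from $\yS'$ exactly at the trace $\yal x$. Following the template of Figure~\ref{sec3:spec-complexity-impl}, lay down a linear input-chain $q_0\overset{y_1}{\to}q_1\overset{y_2}{\to}\cdots\overset{y_r}{\to}q_r$ together with a spurious transition $q_r\overset{x}{\to}q_0$, and then restore input-completeness, progressiveness and input-state-minimality by redirecting each off-path input $\overline{y_i}$ at $q_{i-1}$ into a specific state of the $B_2$-chain of $\yS'$, arranged so that every input-state of $\yI_\yal$ reaches the unique $x$-emitting state at a distinctive depth. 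Since $r\leq m-3\leq k-3$ and $k\geq m$, the resulting $\yI_\yal$ uses at most $k$ input-states and thus lies in $\yiocipmk{I}{U}{k}$. Every strict prefix of $\yal$ still belongs to $otr(\yS')$ by construction, so $\yI_\yal$ mimics $\yS'$ on all traces shorter than $\yal x$, and only the $x$ emitted from $q_r$ is spurious; hence $\yI_\yal\yioco\yS'$ fails, witnessed by $\yal x\in otr(\yI_\yal)\cap\ycomp{otr}(\yS')\cap(otr(\yS') L_U)$.

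Assuming $SS$ is $m$-{\bf ioco}-complete for $\yS'$, each $\yI_\yal$ must fail some scheme $\yT_\yal\in SS$, so there is $\ysi_\yal=\yal x\rho\in otr(\yI_\yal)$ with $\ytrt{t_0}{\ysi_\yal}{\yfail}$ in $\yT_\yal$ (no prefix of $\yal$ can reach $\yfail$, else $\yS'\yioco\yS'$ itself would fail the suite). To show injectivity of $\yal\mapsto\yT_\yal$ on $R\cap\{0,1\}^{m-3}$, suppose $\yT_\yal=\yT_\ybe$ with $\yal\neq\ybe$ of equal length, let $j$ be the first position where the words differ, and let $\mu$ be their common input-prefix. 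Assuming schemes are deterministic (a standing convention under the single-input restriction, or obtainable after an unfolding that does not decrease $|SS|$), $\mu$ drives the scheme from $t_0$ to a unique state $t^\star$, from which both $y_j^\yal\in L_I$ and $\overline{y_j^\yal}\in L_I$ would have to be enabled to carry the two runs onwards to $\yfail$; this contradicts $t^\star$ being single-input. Hence the $\yT_\yal$ are pairwise distinct, and since the number of words of length $m-3$ in $R$ is the Fibonacci number $F_{m-3}\geq\Phi^{m-3}/\sqrt{5}$, we conclude $|SS|\geq F_{m-3}=\yomeg{\Phi^m}$. The chief obstacle will be the fine-grained construction of $\yI_\yal$ so that it simultaneously satisfies all five restrictions of $\yiocipm{I}{U}$ while staying within the $k$-input-state budget---especially input-state-minimality, which requires routing the off-path transitions into the $B_2$-chain at distinct depths so that distinct input-states of $\yI_\yal$ are separated by their unique distance to the $x$-emitting state.
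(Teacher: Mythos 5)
Your skeleton coincides with the paper's: same specification $\yS'$ of Figure~\ref{sec3:spec-complexity-extd}, same verification that it lies in $\yiocipm{I}{U}$ with $k$ input-states, same family $R=(0+11)^\star$, the same injectivity argument in which single-input-ness replaces output-determinism at the first position where $\yal$ and $\ybe$ diverge, and the same Fibonacci count. (Your explicit flagging of the determinism assumption on schemes is fine; the paper relies on it just as implicitly.) The gap is exactly where you place the ``chief obstacle'': the construction of $\yI_\yal$. Your proposal to redirect each off-path input $\overline{y_i}$ into a copy of the $B_2$-chain fails on two concrete counts. First, the state budget: $m$-{\bf ioco}-completeness only binds the suite on implementations with at most $m$ \emph{total} states (Definition~\ref{def:m-complete}), a constraint you never address; embedding the chain adds $\Theta(k)$ states with $k\geq m$, so $\yI_\yal$ would have more than $m$ states and the completeness of $SS$ would no longer force any scheme to fail it. The chain states are moreover input-states, so the $k$ input-state bound is also at risk. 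Second, spurious outputs: in $\yS'$ the off-path inputs from $s_0,s_1$ stay inside $\{s_0,s_1\}$, so an implementation that instead routes them into an $x$-emitting chain does \emph{not} ``mimic $\yS'$ on all traces shorter than $\yal x$'' as you claim; it can produce additional traces in $\ycomp{otr}(\yS')\cap otr(\yS')L_U$. Then the failing trace of $\yT_\yal$ need not have the form $\yal x\rho$, and the first-divergence argument collapses, since $\yI_\yal$ and $\yI_\ybe$ could both be rejected by a single scheme along a shared off-path trace.

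The paper's resolution is much lighter and you should adopt it: send every off-path input to one fresh output-free state $\ypass$ carrying self-loops on $0$ and $1$, keep the single output transition $q_r\overset{x}{\to}q_{r+1}$, and add nothing else. This gives $r+3\leq m$ states and at most $r+1\leq k$ input-states, the unique output transition pins the failing trace to $\yal x$, and input-state-minimality comes for free: driving any pair $(q_i,q_j)$ or $(q_j,\ypass)$ by $y_{j+1}\cdots y_r$ lands one coordinate on $q_r$, whose only enabled action is the output $x$, while the other coordinate enables only inputs, so the resulting state of the cross-product is a sink in the sense of Definition~\ref{def:distingui}. No routing ``at distinctive depths'' into an auxiliary chain is needed.
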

\begin{proof}
	We want to use an argument almost exactly as that used in the proof of Theorem~\ref{theo:ioco-specific-exp}, with a few adjustments to be considered later on.
	
	First, note that the specification $\yS$, used in the proof of Theorem~\ref{theo:ioco-specific-exp}, and depicted in Figure~\ref{sec3:spec-complexity}, is deterministic, input-complete, progressive and initially-connected, that is, $\yS\in\yiocip{I}{U}$.
	Also, the implementation $\yI_\yal$, constructed in that proof and illustrated in Figure~\ref{sec3:spec-complexity-impl}, is also deterministic and in the class $\yiocip{I}{U}$.
	Recall that we write $\overline{y}=0$ when $y=1$ and $\overline{y}=1$ when $y=0$.
	The  argument, then, proceeds just as in the proof of Theorem~\ref{theo:ioco-specific-exp}, and we postulate the existence of a scheme $\yT_\yal=\yio{S_\yal}{t_0}{I}{U}{T_\yal}$ in $SS$, and such that $\yI_\yal$ does not pass $\yT_\yal$.
	The proof continues by imitating the argument in the proof of Theorem~\ref{theo:ioco-specific-exp}. 
	In the present case, we will have 
	$(t_1,x_1,t_2)$ and $(t_1,x_2,t'_2)$ 
	in the scheme $\yT_\yal$, with $x_1\neq x_2$, and $x_1, x_2\in\{0,1\}=L_I$.
	Observe that, now, the input alphabet for $\yT_\yal$ is $L_I=\{0,1\}$.
	Since, according to Definition~\ref{testcase}, any scheme must be single-input, we need $x_1=x_2$, and so we reach a contradiction again, as in the proof of Theorem~\ref{theo:ioco-specific-exp}.
	As, before, this will force $\yT_\yal=\yT_\ybe$ when $\yal, \ybe\in R$, and $\yal\neq \ybe$, with $\vert \yal \vert=\vert \ybe \vert=r\leq m-3$.
	From this point on, the argument follows the one in the proof of  Theorem~\ref{theo:ioco-specific-exp}, establishing that $SS$ must be of size $\yomeg{\Phi^m}$.
	
	The proof would be complete if we had $\yS\in \yiocipm{I}{U}$ and $\yI_\yal\in\yiocipmk{I}{U}{k}[m]$, where $k$ is the number of input-states in $\yS$.
	We will now extend Figures~\ref{sec3:spec-complexity} and~\ref{sec3:spec-complexity-impl} in such a way that these conditions are met, while preserving the validity of the previous argument.
	First note that $\yS$ has 3 input-states, whereas the implementations $\yI_\yal\in\yiocipmk{I}{U}{k}$ has $r+3\leq (m-3)+3=m\leq k$ input-states.
	We then extend the specification in Figure~\ref{sec3:spec-complexity} as shown in Figure~\ref{sec3:spec-complexity-extd}, with states $s_3, \ldots, s_k$.
	States $s_0$ is repeated to avoid the clutter.
	Note that the important transitions on $0$ and $1$ out of states $s_0$ and $s_1$, as well as the transition on $x$ out of  state $s_1$ were not touched, so that the argument above
	is still valid when we consider this new specification.
	Call this new specification $\yS'$.
	States $s_i$, $0\leq i\leq k-1$, are the input-states, so $\yS'$ has $k$ input-states.
	It is also easy to check that $\yS'$ is deterministic, input-complete and initially-connected.
	Moreover, $\yS'$ is also progressive, since any cycle in $\yS'$ must go through a transition on an input.
	In order to assert that $\yS'$ is in the class $\yiocipm{I}{U}$, we need to verify that any two input-states in Figure~\ref{sec3:spec-complexity-extd} are distinguishable.
	As indicated in Figure~\ref{sec3:spec-complexity-extd}  states can be partitioned  into the two blocks $B_1$ and $B_2$.
	Consider two distinct states $s_i$, $s_j\in B_2$ with $2\leq i<j\leq k$, and let $w=0^{k-j}$.
	We see that $\ytrt{(s_i,s_j)}{w}{(s_{\ell},s_k)}$ where $\ell=k-(j-i)$, so that $2\leq i\leq\ell\leq k-1$.
	Since $\yinit(s_\ell)\cap \yinit(s_k)=\yemp$ and we conclude that any two distinct states in $B_2$ are distinguishable.
	Then, since $\ytrt{(s_0,s_1)}{a}{(s_{2},s_{3})}$, it follows that $s_0$ and $s_1$ are also distinguishable.
	We now argue that $s_0$ and $s_1$ are distinguishable from any state $s_i\in B_2$, $2\leq i\leq k$. 
	Let $w=0^{k-i}$, so that we get  $\ytrt{(s_0,s_i)}{w}{(s_{0},s_{k})}$.
	Since we already know that $s_0$ is distinguishable from $s_k$, we conclude that $s_0$ is distinguishable from any state in $B_2$.
	Likewise, with $w=0^{k-i}x$ we see that  $\ytrt{(s_1,s_i)}{w}{(s_{2},s_{k})}$, so that $s_1$ is also distinguishable from any state in $B_2$.
	Hence, any pair of states in $B_1\times B_2$ are distinguishable, and we can now 
	state that any two distinct states in $B_1\cup B_2$ are distinguishable, 
	that is, $\yS'$ is input-state-minimal.
	We conclude that $\yS'\in\yiocipm{I}{U}$ with $k$ input-states, as desired.
	
\begin{figure}
	\begin{center}
		\begin{tikzpicture}[->,>=stealth',shorten >=1pt,auto,node distance=2.0cm,
		semithick,scale=0.7,transform shape] 
		\node[state,initial by arrow, initial text={}] (q0) {$q_0$};
		\node[state] (q1) [right of=q0] {$q_1$};
		\node[state] (q2) [right of=q1] {$q_2$};
		\node[state] (qrm1) [right of=q2] {$q_{r-1}$};
		\node[state] (qr) [right of=qrm1] {$q_{r}$};
		\node[state] (qrp1) [right of=qr] {$q_{r+1}$};  
		\node[state] (pass) [below of=q2] {$\ypass$};  
		\path (q0) edge [] node {$y_1$} (q1)
		edge [] node [below] {$\overline{y_1}$} (pass)
		(q1) edge [] node {$y_2$} (q2)
		edge [] node [] {$\overline{y_2}$} (pass)          
		(qrm1) edge [] node {$y_r$} (qr)
		edge [] node [right] {$\overline{y_r}$} (pass)      
		(qr) edge [] node [above] {$x$} (qrp1)
		(pass) edge [loop right] node [] {$0,1$} ()               
		;
		\draw[dashed]  (4.5,0)--(5.5,0);  
		\draw[dashed]  (4,-.5)--(4,-1.5);            
		\end{tikzpicture}
		\caption{A modified implementation $\yI_\yal$.}\label{sec3:spec-complexity-impl-modif}
	\end{center}
\end{figure}
	We now turn to the implementation.
	In the proof of Theorem~\ref{theo:ioco-specific-exp}, we noted that  both transitions from state $q_r$ to state $\ypass$, together with the self-loops at state $q_{r+1}$ could have been removed, with no prejudice to the argument given therein.
	Thus, we are now looking at Figure~\ref{sec3:spec-complexity-impl-modif},  which here we also designate by $\yI_\yal$.
	Also, from the proof of Theorem~\ref{theo:ioco-specific-exp} we recall that $1\leq r\leq m-3$.
	By inspection, we see that $\yI_\yal$ is deterministic, input-complete, progressive, initially-connected, and has $r+3\leq (m-3)+3=m\leq k$ states.\
	Moreover, all states except for states $q_r$ and $q_{r+1}$, are input-states, so that $\yI$ has at most $k$ input-states, as we need.
	To complete the proof, we show that every pair of distinct input-states of $\yI_\yal$ are distinguishable.
	In order to show that the $\ypass$ state is  distinguishable from any other state in $\yI_\yal$, 
	fix some $q_j$, $0\leq j\leq r-1$, and define  $w=y_{j+1}\cdots y_r$.
	Clearly, $\ytrt{q_j}{w}{q_{r}}$ and, since $\ytrt{\ypass}{w}{\ypass}$, we get 
	$\ytrt{(q_j,\ypass)}{w}{(q_r,\ypass)}$.
	Since $\yinit(q_r)\cap \yinit(\ypass)=\yemp $ we conclude that $\ypass$ is distinguishable from any state $q_j$, $0\leq j\leq r-1$.
	Lastly, take  a state $q_i$ distinct from $q_j$, that is let $0\leq i<j\leq r-1$.
	Now we get $\ytrt{q_i}{w}{q_{\ell}}$ where $\ell=r-j+i=r-(j-i)\leq r-1$.
	Hence, $\ytrt{(q_i,q_j)}{w}{(q_\ell,q_r)}$ and, because $\ell\leq r-1$,  we see that $\yinit(q_\ell)\cap \yinit(q_r)=\yemp $, thus proving that $q_i$ and $q_j$ are also distinguishable.
	Putting it together, we conclude that any pair of distinct input-states of $\yI_\yal$ are distinguishable, that is, $\yI_\yal$ is also input-state-minimal.
\end{proof}

Theorem~\ref{theo:ade-exp} clearly also applies to any specification $\yS$ in which the model depicted in Figure~\ref{sec3:spec-complexity-extd} appears as a 
sub-model with state $s_0$ being reachable from the initial state of $\yS$. 
This is in contrast to Theorem~\ref{prop:ioco-poli} which says that, given a specification $\yS$ over an alphabet $L$, there is an algorithm of asymptotic time
complexity $\yoh{k m }$ for checking $m$-{\bf ioco}-completeness, where $k=n_\yS n_L$, $n_L=\vert L\vert$ and $n_\yS$ is the number of states in $\yS$.

We also remark that in Theorem~1 of Sim\~ao and Petrenko~\cite{simap-generating-2014},  implementations are further restricted to be ``input-eager'', although they do not precisely define this notion in that text. 
On the other hand, in none of their proofs is the input-eager hypothesis explicitly used, leading us to infer that constraining implementations
to also be input-eager is a practical consideration to render the testing process more controllable, from the point of view of a tester that is conducting the testing process.
Thus, given that the input-eager condition is not strictly necessary to establish their 
Theorem~1, we conclude that Theorem~\ref{theo:ade-exp}
expresses a valid worst case exponential asymptotic lower bound on the size of the test suites claimed by Theorem~1 in~\cite{simap-generating-2014}.

\section{Related Works}\label{related}

IOLTS models are largely used to describe the syntax and the semantics of  systems where input and output actions can occur asynchronously, thus capturing a wide class of systems and communication protocols. 
Several works have studied different aspects of (complete) test suite generation for families of IOLTS models, under various conformance relations.
We comment below on some works that are more closely related to our study.

de Vries and Tretmans~\cite{vriet-towards-2001} presented an {\bf ioco}-based testing theory to obtain $e$-complete test suites. 
This variant of test suite completeness  is based on specific test purposes, that share particular properties related to certain testing goals. 
In that case, they consider only the observable behaviors according to the observation objective when testing black-box implementations. 
It turns out that such specific test purposes, and their combinations, somewhat limit the fault coverage spectrum,  \emph{e.g.}, producing inconclusive verdicts. 
Large, or even infinite, test suites can be produced by their test generation method. 
Therefore,  test selection criteria need to  be put in place to avoid this problem, at least when applied in practical situations. 
On the other hand, our approach allows for a wider class of IOLTS models, 
and a low degree polynomial time algorithm was devised for efficiently testing {\bf ioco}-conformance in practical applications. 

Petrenko et al.~\cite{petryh-testing-2003} studied IOLTS-testing strategies considering implementations that cannot block inputs, and also testers that can never prevent an implementation from producing outputs. 
This scenario calls for input and output communication buffers that can be used for the exchange of symbols between the tester and the implementations.
But this effectively leads to another class of testing strategies, where arbitrarily large buffer memories (queues) are allowed.

Tretmans~\cite{tret-model-2008} surveyed the classic \textbf{ioco}-conformance relation for IOLTS models. 
He also developed the foundations of an {\bf ioco}-based testing theory for IOLTS models~\cite{tret-test-1996}, where implementations under test are treated as ``black-boxes'', a testing architecture where the tester, having no access to the internal structure of IUTs, is seen as an artificial environment that drives the  exchange of input and output symbols with the IUT. 
In this case, some restrictions must be observed by the specification, the implementation and the tester models, such as input-completeness and output-determinism.
The algorithms developed therein, however, may in general lead to infinite test suites, making it more difficult to devise solution for practical applications. 
In our work we described a method that, considering the exact same restrictions to the IOLTS models, does in fact generate finite sets of test purposes that can be used in practical situations.
In rare situations, the algorithm may lead to exponential sized testers. 
On the other hand, if the same restrictions are to be obeyed by the specification, the implementation and the tester IOLTS models, we established an exponential worst case asymptotic lower bound on the size of the testers.    
This shows that, if those restrictions are in order,  generating exponential sized testers is, in general, unavoidable, being rather an intrinsic characteristic of the problem of requiring {\bf ioco}-completeness. 

Sim\~ao and Petrenko~\cite{simap-generating-2014}, in a more recent work, also described an approach to generate finite {\bf ioco}-complete test suites for a class of IOLTS models. 
They, however, also imposed a number of restrictions on the specification and the implementation models in order to obtain {\bf ioco}-complete finite test suites. 
They assumed that the test purposes to be single-input and also output-complete. 
Moreover, specifications and implementations must be input-complete, progressive, and initially-connected, so further restricting the class of IOLTS models that can be tested according to their fault model. 
They also did not study the complexity of their method for generating {\bf ioco}-complete test suites under those restrictions the models must obey. 
In contrast, we applied our approach to a testing architecture that satisfies the same restrictions, and showed how to generate {\bf ioco}-complete test suites in a  more straightforward manner.
Further, we examined the complexity of the problem of generating {\bf ioco}-complete test suites under the
same restrictions on the IOLTS models, and  established an exponential   worst case asymptotic lower bound on the size of any {\bf ioco}-complete test suite that can be generated in this situation.

Noroozi et al.~\cite{noromw-complexity-2014} presented a polynomial time reduction from a variation of the SAT problem to the problem of checking \textbf{ioco}-completeness, thus establishing that, under very general assumptions about the IOLTS models~---~including non-determinism,~---~that checking {\bf ioco}-completeness is a PSPACE-complete problem.
In a more restricted scenario, treating only deterministic and input-enabled IOLTS models, they proposed a polynomial time algorithm, based on a simulation-like preorder relation. 
This is the same complexity bound that our method attains but, in contrast, our approach treats a  wider class of conformance relations not being restricted to {\bf ioco}-conformance only.
In another work, Noroozi et al.~\cite{norokmw-synchrony-2015} also  studied the problem of synchronous and asynchronous conformance testing, when allowing communication channels as auxiliary memories. 
They treated a more restricted class of IOLTS, the so-called Internal Choice IOLTSs, where quiescent states must be also input-enabled. 
The notion of  {\bf ioco}-conformance as well as the notion of traces of the models in the testing environment are also more restricted. 
In a white-box testing strategy, where the structure of IUTs were always accessible, algorithms to generate test cases are shown to be sound and exhaustive for testing completeness. 
However, in a setting where IUTs are black-boxes these algorithms are not applicable, thus limiting their practical use. 


In a recent work, Roehm et al.~\cite{roehowa-reachset-2016} introduced a variation of conformance testing, related  to safety properties. 
Despite being a weaker relation than  trace-inclusion conformance, it allows for  tunning a trade-off between accuracy and computational complexity, when checking conformance of hybrid systems. 
Instead of verifying the whole system, their approach searches for counter-examples. 
They also proposed a test selection algorithm that uses a coverage measure to reduce the number of test cases for conformance testing. 
However,  since the models are hybrid,  the continuous flow of time forces a discretization of the models in order to reduce the test generation problem to one that can be applied to discrete models. 
This, in turn, imposes a trade-off between accuracy and computational load, which must be tuned by  appropriate choices related to some over-approximations.

Other works have considered \textbf{ioco}-based testing for compositional systems, where components of a more complex system can be formally tested using composition operators to capture the resulting behavior of multiple components. 
Benes et al.~\cite{benedhk-complete-2015} have proposed merge and quotient operators in order to check consistency of more complex parts of systems under test, in an attempt to
reduce  the effort of model-based testing of more complex systems whose structures can be describe compositionally. 
Following a similar line, Daca et al.~\cite{dacahkn-compositional-2014} proposed compositional operators, friendly composition and hiding, applied to an \textbf{ioco}-testing theory in order  to minimize the integration of testing efforts. 
The result of the friendly composition is an overall specification that integrates the component specifications while pruning away any inputs that lead to incompatible interactions between the components. 
The friendly hiding operation can prune inputs that lead to states which are ambiguous with respect to underspecified parts of the system. 
In a similar vein, Frantzen and Tretmans~\cite{frant-model-2007}  presented a model-based testing method for components of a system where a complete behavior model is not available, and  used a parallel operator to obtain the resulting behavior when integrating different components. 
They proposed a specific conformance relation for the components and devised an algorithm that constructs complete test suites.

\section{Conclusions}\label{sec:conclusion}

Conformance between specification  and implementation IOLTS  models often need to be checked, in order to establish a mathematical guarantee of correctness of the implementations. 
The {\bf ioco} framework has been the conformance relation of choice for verifying IOLTS  models in several testing architectures. 

In this work we addressed the problem of conformance testing and test case generation  for asynchronous systems that can be described using IOLTS models as the base formalism.
A new notion of conformance relation was studied, one that is more general and encompasses the classic {\bf ioco}-conformance. 
This notion  opened the possibility for a much wider class of conformance relations, all uniformly treated under the same formalism. 
In particular, it allows for properties or fault models to be specified by formal languages, \emph{e.g.}, regular languages.
As a further advantage, very few restrictions over specification and implementation IOLTS models must be satisfied when generating finite and complete test suites under any notion of conformance that fits within the more general setting studied herein.
We also proved correct  a polynomial time algorithm to test general conformance in a ``white-box'' architecture.
Once a specification is fixed, our algorithm runs in linear time on the size of the implementations.

Equipped with the new notion of conformance relation, we specialized the test generation process in order to cover other special cases of conformance relations, such as the classical {\bf ioco}-conformance relation. 
In addition, complexity issues related to complete test suite generation for verifying {\bf ioco}-conformance in settings where the IOLTS models were under several specific restrictions were also discussed.
For some sets of such restrictions on the models, we showed that the state explosion problem cannot be avoided, in general, forcing {\bf ioco}-complete test suites to grow exponentially with the size of the implementation models. 
Further, in these cases, we proved correct general algorithms with time complexities that attained such lower bounds, while still generating complete test suites.
This indicates that other families of specialized IOLTS modes could be considered by our approach, leading to  similar results.

Other areas that might be inspired by these ideas are symbolic test case generation, where data variables and parameters are also present~\cite{rusubj-approach-2000,gastlrt-symbolic-2006}, as well as conformance relations and generation methods for models that can capture real-time~\cite{kric-model-2007,briob-test-2005}.

\begin{appendices}
\section{The equivalence of the ioco relations}\label{subsec:equiv-ioco}

In this appendix we discuss the relationship between the {\bf ioco} relation  described in this work and a classical {\bf ioco} relation used in the literature~\cite{tret-model-2008}. 
We want to establish that both of these variations describe the same {\bf ioco} relation. 
Since the precise definitions of several notions in this work and in the original proposal~\cite{tret-model-2008} differ slightly, we need to proceed step by step, comparing the same notions as defined in both texts.
The differences are most marked when the notion of quiescence is treated in both texts, and so, special care must be taken when comparing notions related to quiescence.

From now on, if $X$ denotes any object defined both in~\cite{tret-model-2008} and in this work, we let $X_T$ 
be the variation of $X$ as defined in~\cite{tret-model-2008}, and we use $X_A$ for the same object $X$ as defined in this work.
For instance, $\ytrT{}{}{}$ is the trace relation in LTSs as defined in~\cite{tret-model-2008}, and $\ytrA{}{}{}$ is the trace relation as defined in this work. 
In many cases they will be exactly the same, but in some other cases there might be a slight variation between the two relations.

\subsection*{The general model}

We first note that an LTS model as defined in~\cite{tret-model-2008} is denoted by $\ylts{S}{L}{T}{s_0}$, whereas an LTS model is here denoted as $\ylts{S}{s_0}{L}{T}$.
Further, in~\cite{tret-model-2008} LTS models can be infinite objects, whereas we deal only with finite models.
We will from now on restrict ourselves to finite models only.
\begin{hypo}\label{hypo:fin}
	We assume that all LTS models are finite. 
\end{hypo}

The notions of a path, $\ytr{}{}{}$, and of an observable path, $\ytrt{}{}{}$, appear as Definitions 3 and 4 in~\cite{tret-model-2008}. 
See also Definition~\ref{def:trace} here.
The following is immediate.
\begin{prop}\label{prop:derivas}
	$\ytrT{}{}{}\,=\,\ytrA{}{}{}$ and also $\ytrtT{}{}{}\,=\,\ytrtA{}{}{}$.
\end{prop}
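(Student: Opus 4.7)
The plan is to carry out a direct definition unfolding. Tretmans' $\ytrT{}{}{}$ is standardly defined by the two inductive rules $\ytrT{p}{\yeps}{p}$ for every state $p$, and $\ytrT{p}{x\ysi}{q}$ iff there exists some $r$ with $(p,x,r)\in T$ and $\ytrT{r}{\ysi}{q}$ (where $x\in L_\tau$). Our Definition~\ref{def:path}(1), by contrast, is stated ``in one shot'' via the existence of a chain of intermediate states $r_0,\ldots,r_n$ with $r_0=p$, $r_n=q$ and $(r_{i-1},\ysi_i,r_i)\in T$. I would give a routine induction on $|\ysi|\geq 0$ showing the equivalence: for the base case $\ysi=\yeps$, both relations reduce to $p=q$ (choosing $r_0=p=q$ in our formulation); for the inductive step, $\ysi=x\ybe$, writing out one definition directly yields the other, with the intermediate state $r$ of Tretmans' rule playing the role of $r_1$ in the chain.

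Having established $\ytrT{}{}{}\,=\,\ytrA{}{}{}$, the observable-path equality is immediate. Both Tretmans' Definition~4 and our Definition~\ref{def:path}(2) say that $\ytrtT{p}{\ysi}{q}$ (respectively, $\ytrtA{p}{\ysi}{q}$) holds iff there is some $\mu$ with $\ytrT{p}{\mu}{q}$ (respectively, $\ytrA{p}{\mu}{q}$) and $h_\tau(\mu)=\ysi$; since the underlying path relations coincide, so do the observable ones.

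The only real obstacle is bookkeeping rather than mathematics, namely reconciling cosmetic presentational differences between the two sources: in particular, the fact that in our formulation $\tau$-transitions and visible-label transitions are packaged uniformly into $T\ysse S\times L_\tau\times S$, whereas Tretmans sometimes treats them through separate rules. Once Hypothesis~\ref{hypo:fin} restricts us to finite models (so that the two works are talking about the same class of objects) this reconciliation is straightforward, and no further work is needed.
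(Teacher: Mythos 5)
Your proposal is correct and matches the paper's approach: the paper's proof is literally ``Follows from the definitions,'' and your induction on $|\ysi|$ simply makes explicit the routine definition-unfolding that the paper leaves implicit. No issues.
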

\begin{proof} Follows from the definitions. \end{proof}
We now write $\ytr{}{}{}$ for both  $\ytrT{}{}{}$ and $\ytrA{}{}{}$.
Likewise, we write $\ytrt{}{}{}$ for both $\ytrtT{}{}{}$ and $\ytrtA{}{}{}$.

The function \yafter appears in Definition~5(3) in~\cite{tret-model-2008}. 
See our Definition~\ref{def:out-after}(2). 
\begin{prop}\label{prop:after}
	Let $\yS=\yltsS$ be an LTS. 
	For all $p\in S$, $\ysi\in L_\tau^\star$ we have
	$$p \yafter\!\!_T\,\ysi=\{q\yst \ytrt{p}{\ysi}{q}\}=p \yafter\!\!_A\,\ysi.$$
\end{prop}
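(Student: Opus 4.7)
The plan is to unfold both definitions side by side and appeal to Proposition~\ref{prop:derivas}, which has already identified the underlying observable-trace relations.

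First I would observe that the rightmost equality is essentially a restatement of Definition~\ref{def:out-after}(2) in the present paper. That definition gives $s\yafter\!\!_A\,\ysi=\{q\yst \ytrt{s}{\ysi}{q}\}$ for $s\in S$ and $\ysi\in L^\star$; since an observable trace over $L_\tau^\star$ is, by Definition~\ref{def:path}(2), determined by its $h_\tau$-image in $L^\star$, extending to $\ysi\in L_\tau^\star$ is just a matter of noting that $\ytrt{p}{\ysi}{q}$ coincides with $\ytrt{p}{h_\tau(\ysi)}{q}$. So this direction reduces to a direct unfolding.

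Next, for the leftmost equality I would cite Tretmans' Definition~5(3), which reads $p\yafter\!\!_T\,\ysi=\{q\yst \ytrtT{p}{\ysi}{q}\}$. Applying Proposition~\ref{prop:derivas}, which asserts $\ytrtT{}{}{}\,=\,\ytrtA{}{}{}$, the defining set on the Tretmans side equals $\{q\yst \ytrt{p}{\ysi}{q}\}$, matching the middle expression.

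Combining both steps yields the desired chain of equalities. The only mild obstacle is the cosmetic difference that our Definition~\ref{def:out-after}(2) is phrased over $\ysi\in L^\star$ while the statement quantifies over $\ysi\in L_\tau^\star$; this is handled in one line by invoking the $h_\tau$-equivalence noted above, so no real technical difficulty arises. The whole argument is therefore essentially a bookkeeping step that justifies using a single unified notation $p\yafter\ysi$ throughout the rest of the appendix.
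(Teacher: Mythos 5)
Your proposal is correct and follows essentially the same route as the paper, which simply declares the result immediate from Proposition~\ref{prop:derivas} after unfolding both definitions of $\yafter$. Your extra remark reconciling the quantification over $L_\tau^\star$ versus $L^\star$ via $h_\tau$ is a reasonable bit of bookkeeping that the paper leaves implicit.
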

\begin{proof} Immediate from Proposition~\ref{prop:derivas}. \end{proof}
From now on we may write $\yafter$ for both $\yafter\!\!_T$ and $\yafter\!\!_A$.

In order to leave no room for confusion we let  $\yltscA{L}$ be the class of all LTS models over the alphabet $L$ as defined here.
We designate by $\yltscT{L}$ the class of all LTS models according to Definition~5(12)~\cite{tret-model-2008}. 
As a refinement, the class of all models $\yS=\yltsS$ in $\yltscT{L}$ where all states are reachable from the initial state, that is, $\ytr{s_0}{}{s}$ for all $s\in S$, will be designated as $\yltscTR{L}$.

Each model in $\yltscT{L}$ is assumed to be  image finite and strongly converging (Definition~5(12)~\cite{tret-model-2008}), where an LTS $\yS=\yltsS$ is said to be 
\begin{enumerate}
	\item \textbf{image finite}  when $p\yafter \ysi$ is finite, for all $p\in S$ and all $\ysi\in L^\star_\tau$ (see its Definition 5(10)). 
	\item \textbf{strongly converging}  if there is no state  that can perform an infinite sequence of internal transitions (see its Definition 5(11)).
\end{enumerate}

We readily have the following result.
\begin{prop}\label{prop:class-lts}
	Assume hypothesis~\ref{hypo:fin}. Then
	\begin{enumerate}
		\item  $\yltscA{L}\subsetneq  \yltscT{L}$ (properly contained)
		\item  $\yltscTR{L}= \yltscA{L}$ 
	\end{enumerate}
\end{prop}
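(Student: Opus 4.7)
My plan is to verify the two parts of the proposition separately, leveraging Hypothesis~\ref{hypo:fin} throughout, and carefully comparing the defining conditions of the three classes.

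For part~(1), I would first establish the inclusion $\yltscA{L}\subseteq \yltscT{L}$. Let $\yS=\yltsS\in\yltscA{L}$. By Definition~\ref{def:lts} and Hypothesis~\ref{hypo:fin}, $S$ is finite, so image finiteness is immediate: for any $p\in S$ and any $\ysi\in L_\tau^\star$, the set $p\yafter\ysi\subseteq S$ is finite. For strong convergence, Remark~\ref{rema:lte-finite} already forbids $(s,\tau,s)$-self-loops; combined with finiteness of $S$ this suffices to rule out the infinite sequences of internal transitions that strong convergence prohibits — this step is the main technical point, and I expect it to rely on the convention that $\yltscA{L}$-models have no $\tau$-cycles at all, or on replacing a given model by an equivalent $\tau$-free one via Proposition~\ref{prop:deterministic-lts}. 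Thus $\yS\in\yltscT{L}$. Strictness of the inclusion is witnessed, for instance, by any model in $\yltscT{L}$ that is not initially-connected (e.g., a finite LTS with an unreachable state), or, dropping Hypothesis~\ref{hypo:fin} for a moment, by any image-finite strongly converging $\yltscT{L}$-model whose state set is infinite; such models obviously cannot sit in $\yltscA{L}$.

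For part~(2), the forward inclusion $\yltscA{L}\subseteq\yltscTR{L}$ follows directly from part~(1), since the second clause of Remark~\ref{rema:lte-finite} guarantees that every state of a model in $\yltscA{L}$ is reachable from its initial state. For the reverse inclusion, take $\yS=\yltsS\in\yltscTR{L}$. Hypothesis~\ref{hypo:fin} gives that $S$ is finite, and the definition of $\yltscTR{L}$ provides reachability of every state from $s_0$. The only remaining point is to eliminate any $(s,\tau,s)$-self-loops; these are semantically inert, since removing such a transition does not affect $\ytr{p}{\ysi}{q}$ for any $p,q,\ysi$ (the only move $(s,\tau,s)$ licenses is $s\mapsto s$, which is also realised by the empty path). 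After this cosmetic cleanup, $\yS$ satisfies all the conditions of Definition~\ref{def:lts} together with Remark~\ref{rema:lte-finite}, and so $\yS\in\yltscA{L}$.

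The step I expect to be most delicate is the strong-convergence verification in part~(1). The paragraph following Remark~\ref{rema:lte-finite} explicitly allows cyclic $\tau$-subgraphs through distinct states (``a situation that would correspond to a livelock''), and on the face of it such a cycle produces an infinite $\tau$-run, violating strong convergence. The cleanest fix is to invoke Proposition~\ref{prop:deterministic-lts} to replace $\yS$ by an observationally-equivalent deterministic LTS that has no $\tau$-transitions at all, which is automatically strongly converging; the remaining conditions for membership in $\yltscT{L}$ are then trivial.
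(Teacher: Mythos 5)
Your part (1) inclusion and your witness for strictness (a finite model of $\yltscT{L}$ with a state unreachable from the initial state) are fine --- the paper's own proof never exhibits a witness at all --- but both of your proposed ``fixes'' for the $\tau$-transition issues replace a model by an observationally equivalent one, and that move is inadmissible here: the proposition asserts literal set containment and set equality of classes of models, so showing that some equivalent $\tau$-free model lies in $\yltscT{L}$ says nothing about whether the original $\yS$ does. Invoking Proposition~\ref{prop:deterministic-lts} in part (1), or performing a ``cosmetic cleanup'' in part (2), therefore cannot close the argument. The paper's proof of (1) instead argues directly on $\yS$: under Hypothesis~\ref{hypo:fin} finiteness of $S$ gives image finiteness, and Remark~\ref{rema:lte-finite} (no $(s,\tau,s)$ transitions) is taken to yield strong convergence. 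You are right to flag that last inference as the delicate point --- a $\tau$-cycle through two distinct states produces an infinite internal run while containing no self-loop, and the paper explicitly allows such cycles in the discussion following Remark~\ref{rema:lte-finite} --- but your repair does not address it; one must either read the conventions on $\yltscA{L}$ as excluding all $\tau$-cycles or accept the paper's inference as stated.

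The clearer gap is in the reverse inclusion of part (2). You treat possible $(s,\tau,s)$ self-loops in a model of $\yltscTR{L}$ as something to be removed, when in fact they cannot occur: every model in $\yltscTR{L}$ is by definition strongly converging, and a self-loop $(s,\tau,s)$ immediately yields an infinite sequence of internal transitions, contradicting strong convergence. This observation --- strong convergence forces the absence of $\tau$-self-loops, so $\yS$ itself already satisfies Remark~\ref{rema:lte-finite} once reachability is read off from the definition of $\yltscTR{L}$ --- is exactly the step the paper's proof uses and the one your write-up is missing; with it, no modification of the model is needed and the equality $\yltscTR{L}=\yltscA{L}$ follows from part (1).
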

\begin{proof} 
	Let $\yS=\yltsS$ be an LTS in $\yltscA{L}$.
	Under Hypothesis~\ref{hypo:fin}, $\yS$ is image finite. 
	According to Remark~\ref{rema:lte-finite}, there is no transition $(s,\tau,s)$ in $T$, and so $\yS$ is also strongly converging.
	This proves (1).
	
	For (2), let $\yS=\yltsS$ be an LTS in $\yltscTR{L}$. 
	Since $\yS$ is strongly converging, there can be no transition $(s,\tau,s)$ in $T$.
	From the definition of the class $\yltscTR{L}$ we know that for all $s\in S$ we must have $\ytr{s_0}{}{s}$.
	We conclude that Remark~\ref{rema:lte-finite} is satisfied and so $\yS\in \yltscA{L}$.
	Hence, $\yltscTR{L}\ysse \yltscA{L}$.
	Using item (1) we conclude the proof. 
\end{proof}

\subsection*{Models with inputs and outputs} 

Let $L=L_I \cup L_U$ with $L_I\cap L_U=\yemp$ be alphabets.
Then,  Definition~\ref{def:iolts} says that $(S,s_0,L_I,L_U,T)$ is an IOLTS with input alphabet $L_I$
and output alphabet $L_U$ when $(S,s_0,L,T)\in \yltscA{L}$, that is, an IOLTS is a LTS where the alphabet has been partitioned into disjoint sets of input and output action symbols.
In this appendix we will designate the class of IOLTSs over $L_I$ and $L_U$ by $\yiocA{I}{U}$.
Likewise, in~\cite{tret-model-2008}, Definition 6 says that a labeled transition system with inputs and outputs is a system  $(S,L_I,L_U,T,s_0)$, where $(S,L,T,s_0)\in \yltscT{L}$, and we will here denote the class of all such models by $\yltscT{L_I,L_U}$.
\begin{prop}\label{prop:class-iolts}
	Again, assume hypothesis~\ref{hypo:fin}. Then
	\begin{enumerate}
		\item  $\yiocA{I}{U}\subsetneq \yltscT{L_I,L_U}$  (properly contained)
		\item  $\yltscTR{L_I,L_U}= \yiocA{I}{U}$ 
	\end{enumerate}
\end{prop}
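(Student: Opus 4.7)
The plan is to reduce the claim directly to Proposition~\ref{prop:class-lts}, since an IOLTS is defined in Definition~\ref{def:iolts} as an LTS whose alphabet happens to be partitioned into disjoint input and output sets. Thus the structural properties that matter for membership in the various classes (finiteness, image finiteness, strong convergence, reachability from the initial state, absence of $\tau$-self-loops) are entirely properties of the underlying LTS, and so the partition of $L$ into $L_I$ and $L_U$ plays no role in the argument.

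For item (1), I would take any $\yI\in\yiocA{I}{U}$, with underlying LTS $\ylts{S}{s_0}{L}{T}$ where $L=L_I\cup L_U$. Hypothesis~\ref{hypo:fin} gives image finiteness, and Remark~\ref{rema:lte-finite} rules out any transition $(s,\tau,s)$, so the underlying LTS is strongly converging; hence $\yI\in\yltscT{L_I,L_U}$. For proper containment, I would exhibit a simple witness in $\yltscT{L_I,L_U}\setminus\yiocA{I}{U}$: for instance, an IOLTS containing a state that is not reachable from $s_0$, which is forbidden by Remark~\ref{rema:lte-finite} but perfectly admissible in $\yltscT{L_I,L_U}$.

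For item (2), one direction follows from item (1) together with the observation that any $\yI\in\yiocA{I}{U}$ satisfies $\ytr{s_0}{}{s}$ for all states $s$, hence already belongs to $\yltscTR{L_I,L_U}$. For the converse, I would copy the argument used for part (2) of Proposition~\ref{prop:class-lts}: if $\yI\in\yltscTR{L_I,L_U}$, then strong convergence excludes transitions $(s,\tau,s)$, and the defining property of $\yltscTR{}$ ensures every state is reachable from the initial one, so Remark~\ref{rema:lte-finite} is satisfied and $\yI\in\yiocA{I}{U}$.

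There is no real obstacle here; the proof is essentially an unwrapping of definitions followed by an appeal to Proposition~\ref{prop:class-lts}. The only mild subtlety is being explicit that partitioning $L$ into $L_I$ and $L_U$ does not alter any of the finiteness/convergence/reachability conditions, so the two classes collapse onto the LTS-level classes in the obvious way.
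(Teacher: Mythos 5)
Your proof is correct and follows essentially the same route as the paper, whose entire argument for this proposition is ``Immediately from the definitions and from Proposition~\ref{prop:class-lts}.'' You simply unwrap that reduction explicitly (and add a concrete witness for the proper containment, which the paper leaves implicit), so nothing further is needed.
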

\begin{proof} 
	Immediately from the definitions and from the Proposition~\ref{prop:class-lts}. 
\end{proof}

\subsection*{Quiescent states} 

First we introduce some notation. 
Let $A$ be any alphabet, and define $A^{+\yde}=A\cup\{\yde\}$
and $A^{-\yde}=A-\{\yde\}$.

Let $\yS=(S,s_0,L_I,L_U,T)\in\yltscT{L_I,L_U}$ be an IOLTS. 
In~\cite{tret-model-2008}, Definition~8.1 says that a state $s\in S$ is quiescent (in $\yS$), denoted  $\yde_T^\yS(s)$, if for all $x\in L_U\cup\{\tau\}$ we have $s\not\overset{\!\!\!x}{\rightarrow}$ in $\yS$.
When the model is clear from the context we may write only $\yde_T$ instead of $\yde_T^\yS$. 
Further, given an IOLTS $\yS=(S,s_0,L_I,\yLUmd,T)\in\yltscT{L_I,\yLUmd}$, Definition~9~\cite{tret-model-2008} creates a new model $\yS^\yde=\yiolts{S}{s_0}{L_I}{\yLUpd}{T\cup T_\yde}$, where $T_\yde = \{(s,\yde,s)\yst \yde^\yS_T(s)\}$.
We designate this new class of IOLTSs by $\yltscTd{L_I,\yLUmd}=\{\yS^\yde\yst \yS\in\yltscT{L_I,\yLUmd}\}$
to stress that these models were constructed from IOLTSs whose output alphabet did not contain $\yde$
(so $\yLUmd$ in the notation), but the output alphabet of the extended model always contains $\yde$ (so  $\yltscTd{}$ in the notation.)
If the original model was from the class $\yltscTR{L_I,\yLUmd}$, meaning that all states are reachable from the initial state, then the new class of models will be designated by  $\yltscTRd{L_I,\yLUmd}$. 

Next proposition states a simple result. 
\begin{prop}\label{prop:class-qTt}
	$\yltscTd{L_I,\yLUmd}\subsetneq \yltscT{L_I,\yLUpd}$ and  $\yltscTRd{L_I,\yLUmd}\subsetneq \yltscTR{L_I,\yLUpd}$ (properly contained).
\end{prop}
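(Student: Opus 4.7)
The plan is to handle the two claimed proper containments in parallel, since the second one will follow from the first by essentially the same witness. For the inclusion direction I would take any $\yS^\yde \in \yltscTd{L_I,\yLUmd}$, where $\yS=(S,s_0,L_I,\yLUmd,T)\in\yltscT{L_I,\yLUmd}$, and verify that $\yS^\yde=(S,s_0,L_I,\yLUpd,T\cup T_\yde)$ satisfies the defining conditions of $\yltscT{L_I,\yLUpd}$. Image finiteness is immediate from Hypothesis~\ref{hypo:fin}. Strong convergence is inherited from $\yS$ because the added set $T_\yde$ consists solely of $\yde$-labeled self-loops and contains no $\tau$-transitions, so any infinite sequence of $\tau$-moves in $\yS^\yde$ would already be an infinite sequence of $\tau$-moves in $\yS$. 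The disjointness $L_I\cap \yLUpd=\yemp$ carries over because $\yde\not\in L_I$ (as $\yde\not\in L_U$ in the original model and $L_I\cap L_U=\yemp$). If $\yS$ was initially-connected, then so is $\yS^\yde$ since the state sets coincide and we have only added transitions, which settles the analogous inclusion $\yltscTRd{L_I,\yLUmd}\ysse \yltscTR{L_I,\yLUpd}$.

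For proper containment I would exhibit a single concrete counterexample that works for both claims. Consider the one-state model $\yP=(\{p_0\},p_0,L_I,\yLUpd,\yemp)$ with no transitions at all. Clearly $\yP\in\yltscTR{L_I,\yLUpd}\ysse\yltscT{L_I,\yLUpd}$: it is image finite, strongly converging, and $p_0$ is trivially reachable from itself. Suppose, for contradiction, that $\yP=\yS^\yde$ for some $\yS=(S',s'_0,L_I,\yLUmd,T')\in\yltscT{L_I,\yLUmd}$. Then $S'=\{p_0\}$, $s'_0=p_0$, and $T'\cup T_\yde=\yemp$, so in particular $T'=\yemp$. But then $p_0\not\overset{x}{\rightarrow}$ in $\yS$ for every $x\in\yLUmd\cup\{\tau\}$, i.e., $\yde_T^\yS(p_0)$ holds, and the construction of $\yS^\yde$ would force $(p_0,\yde,p_0)\in T_\yde$, contradicting the emptiness of the transition set of $\yP$. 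Hence $\yP\not\in\yltscTd{L_I,\yLUmd}$, and since $\yP$ is also initially-connected, $\yP\not\in\yltscTRd{L_I,\yLUmd}$ either.

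The one subtlety worth double-checking is that the notion of ``quiescent in $\yS$'' used by the $\yS^\yde$ construction refers to $\yS$'s own output alphabet $\yLUmd$ rather than to $\yLUpd$; my counterexample relies precisely on this, namely that any state that was silent with respect to $\yLUmd\cup\{\tau\}$ in the pre-$\yde$ model must acquire a $\yde$-self-loop in the $\yde$-extended model. I do not expect any serious obstacle here: the argument is essentially a one-line observation that in every model of the form $\yS^\yde$ each quiescent-in-$\yS$ state carries a $\yde$-self-loop, whereas arbitrary members of $\yltscT{L_I,\yLUpd}$ are under no such obligation.
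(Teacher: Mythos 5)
Your proof is correct and follows the only natural route here: the paper itself dismisses this proposition with ``immediate from the definitions,'' and your argument simply makes explicit what that entails — the inclusion by checking image finiteness, strong convergence, alphabet disjointness and (for the reachable subclass) initial-connectedness of $\yS^\yde$, and properness via the one-state, transition-free witness in $\yltscTR{L_I,\yLUpd}$, which cannot be of the form $\yS^\yde$ because the quiescent initial state would be forced to carry a $\yde$-self-loop. The only cosmetic remark is that $\yP\notin\yltscTRd{L_I,\yLUmd}$ follows from $\yP\notin\yltscTd{L_I,\yLUmd}$ together with $\yltscTRd{L_I,\yLUmd}\ysse\yltscTd{L_I,\yLUmd}$, rather than from initial-connectedness of $\yP$ (which is what puts $\yP$ into $\yltscTR{L_I,\yLUpd}$).
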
 
\begin{proof}
	Immediate from the definitions.  
\end{proof}

Now we turn to Definition~\ref{def:ioltsq} where quiescence in $\yde$-IOLTS models is introduced in this work. 
To emphasize that $\yde$ is always a symbol in the output alphabet of an $\yde$-IOLTS, in this appendix we designate this class of models  by $\yiocAdLUpd{I}{U}$ (in the main text it was designated simply by $\yiocq{I}{U}$).
Let $\yS=(S,s_0,L_I,\yLUpd,T)\in\yiocAdLUpd{I}{U} $ be a $\yde$-IOLTS.
If $q\in S$ is quiescent according to Definition~\ref{def:ioltsq}, we write  $\yde^\yS_A(q)$, and may omit the index ${\,}^\yS$ when no confusion can arise.

Proposition~\ref{prop:class-delta} is important because it gives the exact relationship between the class of all $\yde$-IOLTS models and the class of all models after they are extended in order to include quiescence, as defined in~\cite{tret-model-2008}.
\begin{prop}\label{prop:class-delta}
	Assume hypothesis~\ref{hypo:fin}, then we have that 
	\begin{enumerate}
		\item  $\yiocAdLUpd{I}{U}\subsetneq \yltscTd{L_I,\yLUmd}$  (properly contained)
		\item  $\yltscTRd{L_I,\yLUmd}= \yiocAdLUpd{I}{U}$ 
	\end{enumerate}
\end{prop}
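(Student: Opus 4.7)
Both parts will hinge on the natural one-to-one correspondence between a $\yde$-IOLTS $\yI$ and the (non-$\yde$) IOLTS obtained by deleting its $\yde$-self-loops. I will first prove item~(1) by exhibiting, for each $\yI\in\yiocAdLUpd{I}{U}$, a witness $\yS\in\yltscT{L_I,\yLUmd}$ such that $\yS^{\yde}=\yI$; item~(2) will then drop out of the same construction together with Proposition~\ref{prop:class-lts}(2), using the fact that erasing self-loops does not change which states are reachable from $s_0$.

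For item~(1), given $\yI=\yiolts{S}{s_0}{L_I}{\yLUpd}{T}$ in $\yiocAdLUpd{I}{U}$, Definition~\ref{def:ioltsq}(3) already forces every $\yde$-transition of $\yI$ to be a self-loop, so I can set $T_\yde=\{(s,\yde,s)\in T\}$ and define $\yS=\yiolts{S}{s_0}{L_I}{\yLUmd}{T-T_\yde}$. The underlying LTS of $\yS$ is still finite by Hypothesis~\ref{hypo:fin}, still free of $\tau$-self-loops, and still has every state reachable from $s_0$ (the $\yde$-self-loops we removed never contributed to any reachability path), so by Proposition~\ref{prop:class-lts}(1) we have $\yS\in\yltscT{L_I,\yLUmd}$. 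The crux of the argument is then to check that $\yS^{\yde}=\yI$, i.e.\ that the states carrying a $\yde$-self-loop in $\yI$ are exactly the quiescent states of $\yS$ in the sense of~\cite{tret-model-2008}. Reading Definition~\ref{def:ioltsq}(3)(b) as in the example following it~---~quantifying over every $x$ such that $\ytr{s}{x}{}$ in $\yI$ (including $x=\tau$) and testing $x\in L_I\cup\{\yde\}$~---~the condition $(s,\yde,s)\in T$ becomes equivalent to ``$s$ has no outgoing $\yLUmd\cup\{\tau\}$-transition in $T$'', which, since $\yde$-self-loops are labelled neither by $\yLUmd$ nor by $\tau$, is the same as ``$s$ is quiescent in $\yS$''. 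Proper containment will then be witnessed by any LTS in $\yltscT{L_I,\yLUmd}$ with a state unreachable from $s_0$: such models exist in $\yltscT{L_I,\yLUmd}$ but are ruled out of $\yiocAdLUpd{I}{U}$ by Remark~\ref{rema:lte-finite}.

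For item~(2), the inclusion $\yiocAdLUpd{I}{U}\ysse\yltscTRd{L_I,\yLUmd}$ uses the very same $\yS$, which in fact lies in $\yltscTR{L_I,\yLUmd}$ because the reachability of every state from $s_0$ is preserved by the construction. For the reverse inclusion, Proposition~\ref{prop:class-lts}(2) places the underlying LTS of $\yS$ (whenever $\yS\in\yltscTR{L_I,\yLUmd}$) in $\yltscA{L_I\cup\yLUmd}$; adding the $\yde$-self-loops $T_\yde$ to form $\yS^{\yde}$ creates no $\tau$-self-loop and does not disturb reachability, so the underlying LTS of $\yS^{\yde}$ is in $\yltscA{L_I\cup\yLUpd}$, making $\yS^{\yde}$ an IOLTS in our sense. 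The $\yde$-IOLTS condition~(3) is then exactly the equivalence ``$\yde$-self-loop at $s$'' $\Leftrightarrow$ ``$s$ quiescent in $\yS$'' used in the opposite direction.

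The main obstacle will be pinning down the quantifier reading of Definition~\ref{def:ioltsq}(3)(b): taken literally (``for all $x\in L$''), it would ignore $\tau$ and produce a notion strictly weaker than the quiescence of~\cite{tret-model-2008}, so the example of Figure~\ref{fig:iots2simple} would not satisfy the definition. I will proceed with the intended reading (quantification over all $x$ with $\ytr{s}{x}{}$, allowing $x=\tau$), which is the one the paper itself employs; once this is fixed, the remainder is pure bookkeeping between $T$, $T-T_\yde$ and the operator $(\cdot)^{\yde}$.
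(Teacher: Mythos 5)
Your proposal is correct and follows essentially the same route as the paper's proof: strip the $\yde$-self-loops to obtain a model in $\yltscT{L_I,\yLUmd}$, verify that applying $(\cdot)^{\yde}$ recovers the original by matching ``$\yde$-self-loop at $s$'' with Tretmans-quiescence of $s$ (which, as you note, requires reading Definition~\ref{def:ioltsq}(3)(b) as quantifying over $\tau$ as well, exactly as the paper itself does in its Eq.~(\ref{eq:delta3})), and then obtain item~(2) from item~(1) plus preservation of reachability. Your explicit witness for the properness of the containment in item~(1) (a model with an unreachable state) is a small addition the paper's proof leaves implicit.
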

\begin{proof} 
	To prove item (1), assume $\yS\in \yiocAdLUpd{I}{U}$.
	From the definition of the class $ \yiocAdLUpd{I}{U}$ we obtain $\yS=\yiolts{S}{s_0}{L_I}{\yLUpd}{T}$, where
	\begin{align}
		&\yS\in\yiocALUpd{I}{U}\label{eq:delta1}; \text{and} \\
		&(s,\yde,q)\in T\quad\text{if and only if (a) $s=q$; (b) when $\ytr{s}{x}$ with  $x\in L_I\cup \yLUpd\cup\{\tau\}$, then 
			$x\in L_I\cup\{\yde\}$}\label{eq:delta3}. 
	\end{align}
	From the definition of the class $\yltscTd{L_I,\yLUmd}$,  we need to show that $\yS=\yT^\yde$ for some $\yT\in \yltscT{L_I,\yLUmd}$. 
	Take $\yT=\yiolts{S}{s_0}{L_I}{\yLUmd}{R}$, where 
	\begin{align}
		&R=T-\{(s,\yde,q)\yst s,q\in S\}.\label{eq:delta4}
	\end{align}
	Since there are no $\yde$-transitions in $R$, the output alphabet of $\yT$ can be taken as $\yLUmd$.
	Since $\yS\in  \yiocAdLUpd{I}{U}$, we get $\yT\in \yioltsn{IO}_A(L_I,\yLUmd)$.
	From Proposition~\ref{prop:class-iolts} we have $\yT\in \yltscT{L_I,\yLUmd}$.
	It remains to show that $\yS=\yT^\yde$.
	
	From the definitions we have $\yT^\yde=\yiolts{S}{s_0}{L_I}{\yLUpd}{R\cup R_\yde}$, where $R_\yde=\{(s,\yde,s)\yst\yde^\yT_T(s)\}$. 
	In order to complete the proof of item (1), we need show that $R\cup R_\yde=T$.
	
	Let $(s,x,q)\in R$.
	From the Eq. (\ref{eq:delta4}) we have $(s,x,q)\in T-\{(s,\yde,q)\yst s,q\in S\}$, and then  $(s,x,q)\in T$.
	
	Let  $(s,x,q)\in R_\yde$.  
	From definition of the class $\yltscTd{L_I,\yLUmd}$ we have that $s=q$ and $x=\yde$. 
	So, $\yde^\yT_T(s)$ in $\yT$.
	Now assume $s\in S$ and $\ytr{s}{y}{}$ with $y\in L_I\cup \yLUpd\cup\{\tau\}$. 
	Since we have $\yde^\yT_T(s)$ in $\yT$, we get 
	$y\not \in \yLUmd\cup\{\tau\}$. 
	Hence $y\in L_I\cup \{\yde\}$.
	From Eq. (\ref{eq:delta3}), we obtain $(s,x,q)=(s,\yde,s)\in T$ and then we conclude that  $R\cup R_\yde\ysse T$. 
	
	In order to prove $T \ysse R\cup R_\yde$, take $(s,x,q)\in T$ with  $x\neq \yde$.
	From Eq. (\ref{eq:delta4}) we get $(s,x,q)\in R$, and then $(s,x,q)\in R\cup R_\yde$.
	Now assume that $(s,\yde,q)\in T$.
	From Eq. (\ref{eq:delta3}) we have $s=q$ and the following condition is satisfied in $s$: if $\ytr{s}{x}{}$ with $x\in L_I\cup \yLUpd\cup\{\tau\}$ then $x\in L_I\cup\{\yde\}$. 
	Next, we show that  $\yde^\yT_T(s)$ is in $\yT$, which gives $(s,\yde,s)\in R_\yde$, and then $(s,\delta,q)=(s,\yde,s)\in R\cup R_\yde$. 
	This will imply $T\ysse R\cup R_\yde$, so that $T=R\cup R_\yde$, completing the proof.
	
	So, assume that we do not have $\yde^\yT_T(s)$ in $\yT$.
	From the definition we would have $\ytr{s}{x}{}$ with $x\in \yLUmd\cup\{\tau\}$.
	Therefore, $x\in L_I\cup \yLUpd\cup\{\tau\}$.
	From  Eq. (\ref{eq:delta3}) we have $x\in L_I\cup\{\yde\}$, which  contradicts with $x\in \yLUmd\cup\{\tau\}$.
	Hence, $\yde^\yT_T(s)$ in $\yT$, and the proof is complete. 
	
	Now we turn to  item (2). 
	Let $\yS\in \yiocAdLUpd{I}{U}$. 
	From item (1) we get $\yS\in \yltscTd{L_I,\yLUmd}$.
	Since every state of $\yS$ is reachable from its initial state, it follows that $\yS\in\yltscTRd{L_I,\yLUmd}$, and we can conclude that $\yiocAd{I}{U}\ysse \yltscTRd{L_I,\yLUmd}$. 
	
	Now we prove that $ \yltscTRd{L_I,\yLUmd}\ysse\yiocAdLUpd{I}{U}$.
	Let $\yS\in\yltscTRd{L_I,\yLUmd}$ be an IOLTS. 
	We want to show that $\yS\in\yiocAdLUpd{I}{U}$.
	From the definition of $\yltscTRd{L_I,\yLUmd}$ we know that $\yS=\yT^\yde$ for some $\yT=\yiolts{S}{s_0}{L_I}{\yLUmd}{T}\in\yltscTR{L_I,\yLUmd}$.
	From the definition we also know that $\yT^\yde=\yiolts{S}{s_0}{L_I}{(\yLUmd)^{+\yde}}{T\cup T_\yde}$, where 
	$T_\yde=\{(s,\yde,s)\in T\yst \yde^\yT_T(s)\}$.
	We want to show that $\yS=\yT^\yde\in\yiocAdLUpd{I}{U}$. 
	From the definition of the class $\yiocAdLUpd{I}{U}$, this is equivalent to show that 
	\begin{align}
		&\yT^\yde\in\yiocALUpd{I}{U}\label{eq:delta5}; \text{and}\\
		&(s,\yde,q)\in T\cup T_\yde\quad\text{if and only if  (a) $s=q$; (b) if $\ytr{s}{x}$ with   $x\in L_I\cup \yLUpd\cup\{\tau\}$ then $x\in L_I\cup\{\yde\}$}\label{eq:delta6}. 
	\end{align}
	
	Since $\yT=\yiolts{S}{s_0}{L_I}{\yLUmd}{T}$, and  $(\yLUmd)^{+\yde}=\yLUpd$, we have that $\yT^\yde\in \yltsn{LTSR}_T(L_I,\yLUpd)$.
	From Proposition~\ref{prop:class-iolts}(2) we obtain $\yT^\yde\in\yiocALUpd{I}{U}$ and the Eq. (\ref{eq:delta5}) holds. 
	
	It remains to prove Eq.~(\ref{eq:delta6}).
	Let $(s,\yde,q)\in T\cup T_\yde$.
	Since $\yT=\yiolts{S}{s_0}{L_I}{\yLUmd}{T}$ and $\yde\not\in \yLUmd$, it follows that $(s,\yde,q)\not\in T$ and so $(s,\yde,q)\in T_\yde$.
	We then have $s=q$ and $\yde^\yT_T(s)$ in $\yT$. 
	Therefore, Eq. (\ref{eq:delta6}a) holds. 
	Now we assume $\ytr{s}{x}$ with $x\in L_I\cup \yLUpd\cup\{\tau\}$. 
	Since  $(s,\yde,s)\in T_\yde$ and $\yde^\yT_T(s)$ in $\yT=\yiolts{S}{s_0}{L_I}{\yLUmd}{T}$, we should have $x\not\in \yLUmd\cup \{\tau\}$.
	Then, $x\in L_I\cup\{\yde\}$ and Eq. (\ref{eq:delta6}b) also holds. 
\end{proof}

Now Proposition~\ref{prop:delta} states quiescence and relate them between the approaches. 
\begin{prop}\label{prop:delta}
	Let $\yS=\yiolts{S}{s_0}{L_I}{\yLUpd}{T}\in\yiocAdLUpd{I}{U}$.
	We then have that $\yS=\yT^\yde$ where $\yT=\yiolts{S}{s_0}{L_I}{\yLUmd}{R\cup R_\yde}\in\yltscTR{L_I,\yLUmd}$.
	Moreover, for all $s\in S$ we get $\yde^\yS_A(s)$ in $\yS$ if, and only if, $\yde^\yT_T(s)$ in $\yT$.
\end{prop}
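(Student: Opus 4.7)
\medskip

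\noindent\textbf{Proof plan.} The first assertion is essentially a restatement of Proposition~\ref{prop:class-delta}(2). The plan is to invoke that result on $\yS$: since $\yS\in\yiocAdLUpd{I}{U}=\yltscTRd{L_I,\yLUmd}$, there is some $\yT\in\yltscTR{L_I,\yLUmd}$ with $\yS=\yT^\yde$. Better yet, the proof of Proposition~\ref{prop:class-delta}(1) gave the explicit witness $\yT=\yiolts{S}{s_0}{L_I}{\yLUmd}{R}$ with $R=T-\{(s,\yde,q)\yst s,q\in S\}$, and since all states of $\yS$ are reachable from $s_0$ using transitions in $T$, and the only transitions in $T\setminus R$ are $\yde$-self-loops (which never help to reach new states), the same reachability property holds in $\yT$. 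Thus $\yT\in\yltscTR{L_I,\yLUmd}$ as required. (The expression ``$R\cup R_\yde$'' in the statement is then just $T$ itself, since $R_\yde$ recaptures precisely the $\yde$-self-loops that were peeled off.)

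For the second assertion, I would simply unfold the two notions of quiescence and check the biconditional. By Definition~\ref{def:ioltsq}(3), $\yde^\yS_A(s)$ is equivalent to having $(s,\yde,s)\in T$, which in turn is equivalent to: for every $x$ with $\ytr{s}{x}{}$ in $\yS$, we have $x\in L_I\cup\{\yde\}$, i.e., no outgoing transition from $s$ in $\yS$ carries a label in $\yLUmd\cup\{\tau\}$. On the other side, by the definition of $\yde^\yT_T$ (from~\cite{tret-model-2008}, recalled just before Proposition~\ref{prop:class-delta}), $\yde^\yT_T(s)$ says that no outgoing transition from $s$ in $\yT$ carries a label in $\yLUmd\cup\{\tau\}$.

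The key observation tying the two together is that $R$ and $T$ differ only by $\yde$-self-loops, and $\yde\notin\yLUmd\cup\{\tau\}$. Hence the set of outgoing transitions from $s$ carrying a label in $\yLUmd\cup\{\tau\}$ is exactly the same in $\yS$ and in $\yT$. I would finish by writing out both implications: if $\yde^\yS_A(s)$, then every $\yS$-outgoing transition from $s$ sits in $L_I\cup\{\yde\}$, so in particular $\yT$ has no outgoing transition from $s$ in $\yLUmd\cup\{\tau\}$, giving $\yde^\yT_T(s)$; conversely, if $\yde^\yT_T(s)$, then $\yS$ has no non-$\yde$ outgoing transition from $s$ in $\yLUmd\cup\{\tau\}$ either, so all outgoing transitions from $s$ in $\yS$ already lie in $L_I\cup\{\yde\}$, and Definition~\ref{def:ioltsq}(3) then forces $(s,\yde,s)\in T$, i.e., $\yde^\yS_A(s)$.

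The only subtle point, and the one that deserves a sentence of care rather than real difficulty, is the handling of $\tau$-transitions: Definition~\ref{def:ioltsq}(3) quantifies over labels $x$ such that $\ytr{s}{x}{}$, which (as Remark~\ref{rema:lte-finite} and the example following Definition~\ref{def:ioltsq} make clear) includes $\tau$. Once this is acknowledged, the biconditional reduces to the purely set-theoretic fact above, and no further work is needed.
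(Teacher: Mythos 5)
Your proposal is correct and follows essentially the same route as the paper: the first assertion is obtained from Proposition~\ref{prop:class-delta}(2) (which identifies $\yiocAdLUpd{I}{U}$ with $\yltscTRd{L_I,\yLUmd}$ and supplies $\yT$ with $T=R\cup R_\yde$), and the equivalence of the two quiescence predicates is then a matter of unfolding both definitions and observing that $T$ and $R$ differ only by $\yde$-self-loops, whose labels lie outside $\yLUmd\cup\{\tau\}$. The paper phrases one direction as a contradiction argument while you reduce both directions to a single set-theoretic observation, but this is only a cosmetic difference; your explicit remark about $\tau$-transitions being covered by Definition~\ref{def:ioltsq}(3) is a correct and welcome clarification.
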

\begin{proof}
	From Proposition~\ref{prop:class-delta}(2) we have  $\yiocAdLUpd{I}{U}=\yltscTRd{L_I,\yLUmd}$.
	The definition of the class $\yltscTd{L_I,\yLUmd}$ gives $\yS=\yT^\yde$, where  $\yT=\yiolts{S}{s_0}{L_I}{\yLUmd}{R}\in\yltscTR{L_I,\yLUmd}$, with $T=R\cup R_\yde$ and  $R_\yde=\{(p,\yde,p)\yst \text{$\yde^\yT_T(p)$ in $\yT$} \}$.
	
	Assume that $\yde^\yS_A(s)$ holds in $\yS$ and $\yde^\yT_T(s)$ does not hold in $\yT$.
	Definition of $\yde^\yT_T(\cdot)$ gives that $\ytr{s}{x}{}$ in $\yT$ with $x\in\yLUmd\cup\{\tau\}$.
	So, $(s,x,p)\in R$ for some $p\in S$, and then $(s,x,p)\in T$.
	Therefore, $\ytr{s}{x}{}$ in $\yS$.
	Since we have $\yde^\yS_A(s)$ in $\yS$, the definition of $\yde^\yS_A(\cdot)$ together with $\ytr{s}{x}{}$ in $\yS$ gives  that $x\in L_I\cup\{\yde\}$, contradicting $x\in\yLUmd\cup\{\tau\}$.
	
	On the other direction, we assume $\yde^\yT_T(s)$ em $\yT$.
	From the definition of $R_\yde$ we have $(s,\yde,s)\in R_\yde$.
	Hence $(s,\yde,s)\in T$, i.e., $\ytr{s}{\yde}{}$ in $\yS$.
	From the definition of $\yde^\yS_A(\cdot)$ we also have $\yde^\yS_A(s)$, concluding the proof. 
\end{proof}

\subsection*{The $\yout$ relation} 

In~\cite{tret-model-2008},  Definition~11  introduces the $\yout$ relation, here denoted $\yout_T$, thus:
Let $\yS=\yiolts{S}{s_0}{L_I}{L_U}{T}\in\yltscT{L_I,L_U}$.
Then, for all $s\in S$ and all $Q\subseteq S$, 
$$\yout_T(s)=\{x\in L_U |  \ytr{s}{x}{} \} \cup \{\yde | \yde^\yS_T(s) \}\quad\text{and}\quad \yout_T(Q)= \bigcup \{out_T(s) | s\in Q\}.$$

In this work, we define the same relation, denoted $\yout_A$, as follows:
Let $\yS=\yiolts{S}{s_0}{L_I}{L_U}{T}\in \yiocA{I}{U}$. 
Then, for all $Q\ysse S$
$$\yout_A(Q)= \bigcup\limits_{s\in Q}\{x\in L_U\yst \ytrt{s}{x}{}\},$$
See Definition~\ref{def:out-after}. 

Next proposition shows that both definitions of $\yout$ coincide. 
\begin{prop}\label{prop:out1}
	Let  $\yS=(S,s_0,L_I,\yLUpd,T)\in \yiocAdLUpd{I}{U}$ be an IOLTS. 
	Then, we have that $\yout_A(s \yafter\!\!_A \,\ysi) = \yout_T(s \yafter\!\!_T \,\ysi)$ for all $s \in S$ and all $\ysi\in L_\tau^\star$.
\end{prop}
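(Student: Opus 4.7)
Plan. The statement reduces, via earlier results, to a computation about a single set $Q \subseteq S$. First, by Proposition~\ref{prop:after} the two after-sets coincide as sets: $s\yafter_A\ysi \;=\; \{q : \ytrt{s}{\ysi}{q}\}\;=\;s\yafter_T\ysi$; call this common set $Q$, computed inside $\yS$. The goal is now to verify $\yout_A(Q) = \yout_T(Q)$ where both $\yout$ functions are applied to $\yS$ seen as a member of $\yltscT{L_I,\yLUpd}$ (by Proposition~\ref{prop:class-delta}(1)).

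Second, I would establish an auxiliary $\tau$-closure property: if $q \in Q$ and $(q,\tau,q') \in T$, then $q' \in Q$. This is immediate, since $q \in Q$ yields a concrete path $\ytr{s}{\mu}{q}$ with $h_\tau(\mu)=\ysi$, and extending by the new $\tau$-move gives $\ytr{s}{\mu\tau}{q'}$ with $h_\tau(\mu\tau)=h_\tau(\mu)=\ysi$, whence $q' \in Q$.

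Third, I would unfold the two definitions of $\yout$. Here is the main bookkeeping point: in $\yS$ the predicate $\yde^\yS_T$ is vacuously false at every state, because a quiescent state (in the $\yA$-sense of Definition~\ref{def:ioltsq}) carries the self-loop $\ytr{s}{\yde}{s}$ with $\yde \in \yLUpd$, and a non-quiescent state can move on some symbol in $\yLUmd\cup\{\tau\}$. Thus the ``$\{\yde : \yde^\yS_T(q)\}$'' clause adds nothing, and $\yout_T(q) = \{x \in \yLUpd : \ytr{q}{x}{} \text{ in } \yS\}$, which is built from single transitions, while $\yout_A(q) = \{x \in \yLUpd : \ytrt{q}{x}{} \text{ in } \yS\}$ is built from observable transitions.

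Finally, the two sets agree on the $\tau$-closed $Q$. For the inclusion $\yout_T(Q)\subseteq \yout_A(Q)$, note that any single transition $\ytr{q}{x}{q''}$ with $x \in \yLUpd$ (hence $x \ne \tau$) is already an observable transition. For the reverse inclusion, if $x \in \yout_A(q)$ for some $q \in Q$, a witness path $\ytr{q}{\mu}{q''}$ with $h_\tau(\mu)=x$ factors as $q \ytr{}{\tau^n}{} q' \ytr{}{x}{} q''' \ytr{}{\tau^m}{} q''$; the $\tau^n$ prefix keeps us inside $Q$ by $\tau$-closure, giving $q' \in Q$ and the direct move $\ytr{q'}{x}{}$, so $x \in \yout_T(q') \subseteq \yout_T(Q)$. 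This yields the equality and completes the proof. The only delicate point is the bookkeeping step that $\yde^\yS_T$ is vacuously false, which hinges on the fact (Proposition~\ref{prop:delta}) that the $\yde$-self-loops attached by $\yT \mapsto \yT^\yde$ precisely mark the quiescent states; everything else is routine manipulation of $\ytr{}{}{}$ and $\ytrt{}{}{}$.
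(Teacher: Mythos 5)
Your proof is correct, and it follows the same overall skeleton as the paper's: identify the common after-set $Q$ via Proposition~\ref{prop:after}, observe that $Q$ is closed under appending $\tau$-moves (the paper does this inline when it composes $\ytr{s}{\rho}{q}$ with $\ytr{q}{\tau^k}{p}$ to conclude $p\in Q$), and then reconcile the single-step $\yout_T$ with the observable-step $\yout_A$ by factoring a witness path as $\tau^n x\tau^m$. Where you diverge is in the treatment of the quiescence clause $\{\yde\yst\yde^\yS_T(\cdot)\}$. The paper keeps an explicit case split on $x=\yde$ versus $x\in L_U^{-\yde}$ and routes the $\yde$ case through Proposition~\ref{prop:delta}, i.e.\ it evaluates the quiescence predicate in the pre-extension model $\yT$ with $\yS=\yT^\yde$, where that clause is genuinely active. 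You instead read $\yde^\yS_T$ as computed in $\yS$ itself with the full output alphabet $L_U^{+\yde}$, under which it is vacuously false (quiescent-in-the-$A$-sense states can fire their $\yde$ self-loop, and non-quiescent ones, by Definition~\ref{def:ioltsq}(3), must have a move on $L_U^{-\yde}\cup\{\tau\}$), so $\yout_T$ collapses to its first clause over $L_U^{+\yde}$ and the two cases merge into one uniform argument. The two readings produce the same set $\yout_T(Q)$ precisely because the $\yde$ self-loops mark exactly the Tretmans-quiescent states (which is the content of Proposition~\ref{prop:delta}), so your simplification is legitimate; it buys a shorter, case-free argument, at the cost of leaning on that interpretive identification, which is worth stating explicitly as you do in your closing sentence.
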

\begin{proof} 
	From Proposition~\ref{prop:derivas} we have $\ytrtT{}{}{}\,\,=\,\,\ytrtA{}{}{}$, and so the indexes may be omitted. 
	Likewise, we can write $\ytr{}{}{}$ instead of $\ytrT{}{}{}$ or $\ytrA{}{}{}$. 
	Also from Proposition~\ref{prop:after} we have $\yafter\!\!_A=\yafter\!\!_T$ and we write $\yafter$\!\!.
	
	Let $Q= s\yafter \ysi$.
	We want to prove that $\yout_A(Q)=\yout_T(Q)$.
	
	Let $x \in \yout_A(Q)$, $x\in \yLUpd$. 
	Since  $\yS\in \yiocAdLUpd{I}{U}$ it is clear that $\yS\in \yiocALUpd{I}{U}$.
	Hence, from the definition of $\yout_A$ we obtain some $q\in Q$ such that $\ytrt{q}{x}{}$. 
	Thus, from Definition~\ref{def:path} we must have $\mu=\tau^k x$ for some $k\geq 0$, and  
	$\ytr{q}{\mu}{}$. 
	Then, we have some $p\in S$ such that $\ytr{q}{\tau^k}{p}\ytr{}{x}$.
	Since $Q=s\yafter \ysi$ and $q\in Q$ we have $\ytrt{s}{\ysi}{q}$.
	From Definition~\ref{def:path} we have some $\rho\in L_\tau^\star$ such that $\ytr{s}{\rho}{q}$ and $h_\tau(\rho)=\ysi$.
	Composing, we get $\ytr{s}{\rho\tau^k}{p}$.
	Since $h_\tau(\rho\tau^k)=h_\tau(\rho)=\ysi$ we obtain $\ytrt{s}{\ysi}{p}$, so that $p\in s \yafter \ysi$, that is, $p\in Q$.
	From Proposition~\ref{prop:class-delta}(1) we have $ \yiocAdLUpd{I}{U}\subsetneq \yltscTd{L_I,\yLUmd}$. 
	Hence, from Proposition~\ref{prop:class-qTt} we get $\yS\in \yltscT{L_I,\yLUpd}$. 
	If $x\in L_U$, since  we already have $p\in Q$ and $\ytr{p}{x}{}$, the definition of $\yout_T$ gives $x\in \yout_T(Q)$.
	If $x=\yde$ then we must have a transition $(p,\yde,t)\in T$ for some $t\in S$.
	Then, following Definition~\ref{def:ioltsq}, we know that $p$ is quiescent, that is, we have $\yde_A(p)$. 
	Using Proposition~\ref{prop:delta} we get $\yde_T(q)$ and the 
	definition of $\yout_T$ gives again $x\in \yout_T(Q)$.
	We conclude that $\yout_A(Q)\ysse \yout_T(Q)$.
	
	Now assume that $x\in\yout_T(Q)$.
	We must have some $q\in Q$ such that: 	
	(i) $\yde_T(q)$ and $x=\yde$, or 
	(ii) $\ytr{q}{x}{}$ and $x\in L_U$. 
	First assume (i).
	Using Proposition~\ref{prop:delta} we get $\yde_A(q)$ and 
	Definition~\ref{def:ioltsq} says that we have a transition $(q,\yde,q)\in T$. 
	Hence, $\ytr{q}{\yde}{}$ and  Definition~\ref{def:path} gives  $\ytrt{q}{\yde}{}$.
	Since $q\in Q$, the definition of $\yout_A$ gives $\yde\in \yout_A(Q)$.
	Now, assume (ii). 
	From Definition~\ref{def:path} we infer  $\ytrt{q}{x}{}$.
	Again, $q\in Q$, and the definition of $\yout_A$ gives $x\in \yout_A(Q)$.
	We conclude that $\yout_T(Q)\ysse\yout_A(Q)$.
\end{proof}

\subsection*{Input-enabledness property} 

In~\cite{tret-model-2008}, Definition 7 says that the class of all input-output transition systems with inputs in $L_I$ and outputs in $L_U$ is a restricted subclass of $\yltscT{L_I\cup L_U}$. More specifically, $(S,L_I,L_U,T,s_0)\in \yltscT{L_I\cup L_U}$ is an input-output transition system if any reachable state $s$ is input-enabled, that is, there is a transition out of $s$ for all input symbols.
More formally, in Definition 5(6) we have   $der_T(p)=\{q\yst \ytrt{p}{}{q}\}$. 
Then  Definition 7  decrees that
$(S,L_I,L_U,T,s_0)\in \yltscT{L_I\cup L_U}$ is an input-output transition system when for all $p\in der_T(s_0)$ we have $\ytrt{p}{a}{}$ for all $a\in L_I$.
The class of all input-output systems is here denoted by $\yiocT{L_I,L_U}$.
The subclass of all input-output systems $\yS$ where all states are reachable from the initial states, that is when $\yS$ is in the subclass  $\yltscTR{L_I,L_U}$, will be designated by $\yiocTR{L_I,L_U}$.

In our work, Definition~\ref{def:inic-tret} says that an IOLTS $(S,s_0,L_I,L_U,T)\in\yiocA{I}{U}$ is input-enabled when $\yinp(s)=L_I$ for all $s\in S$.
In the same definition we find that $\yinp(s)=\{\ell\in L_I\yst \ytrt{s}{\ell}\}$. 
In this appendix we designate by $\yioceA{I}{U}$ the class of all input-enabled IOLTS models over input alphabet $L_I$ and output alphabet $L_U$. 
\begin{prop}\label{prop:class-iolts2}
	Under  Hypothesis~\ref{hypo:fin} we have 
	\begin{enumerate}
		\item  $\yioceA{I}{U}\subsetneq\yiocT{L_I,L_U}$ (properly contained)
		\item $\yiocTR{L_I,L_U}= \yioceA{I}{U}$ 
	\end{enumerate}
\end{prop}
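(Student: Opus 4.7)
The plan is to reduce both items to the analogous results already established for plain IOLTS models (Proposition~\ref{prop:class-iolts}), using input-enabledness as an additional layer on top. The arguments should mirror the proofs of Propositions~\ref{prop:class-lts} and \ref{prop:class-iolts}.

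For item (1), the key observation is that every $\yS\in\yioceA{I}{U}$ is, by definition, an IOLTS in $\yiocA{I}{U}$, so Proposition~\ref{prop:class-iolts}(1) immediately places $\yS$ in $\yltscT{L_I,L_U}$. It then remains to check the input-enabledness condition from Tretmans' Definition~7, namely that for every $p\in der_T(s_0)$ and every $a\in L_I$ we have $\ytrt{p}{a}{}$. But this is weaker than the condition in Definition~\ref{def:inic-tret}, which requires $\yinp(p)=L_I$ for \emph{all} $p\in S$. Combining this with Proposition~\ref{prop:derivas} (to identify the two versions of $\ytrt{}{}{}$) gives $\yS\in\yiocT{L_I,L_U}$. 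Proper containment will follow exactly as in Proposition~\ref{prop:class-iolts}(1): one can exhibit a model in $\yiocT{L_I,L_U}$ that contains unreachable states (which Definition~\ref{def:inic-tret} forbids via Remark~\ref{rema:lte-finite}), or a model with an unreachable non-input-enabled state, so that $\yS\not\in\yioceA{I}{U}$.

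For item (2), the forward inclusion $\yiocTR{L_I,L_U}\ysse\yioceA{I}{U}$ uses that membership in $\yltscTR{L_I,L_U}$ forces every state to be reachable from $s_0$; combined with the Tretmans-style input-enabledness requirement restricted to reachable states, this yields that \emph{every} state is input-enabled. Proposition~\ref{prop:class-iolts}(2) then gives $\yS\in\yiocA{I}{U}$, and the input-enabledness of all states places $\yS$ in $\yioceA{I}{U}$. The reverse inclusion is easier: if $\yS\in\yioceA{I}{U}$, then Proposition~\ref{prop:class-iolts}(2) gives $\yS\in\yltscTR{L_I,L_U}$, and since every state is input-enabled, in particular every reachable state is, so $\yS\in\yiocTR{L_I,L_U}$.

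I do not expect any real obstacle here: the whole proof is routine bookkeeping translating between the two equivalent formulations. The only delicate point is making sure to invoke Proposition~\ref{prop:derivas} (and implicitly Hypothesis~\ref{hypo:fin}) so that $\ytrtT{}{}{}$ and $\ytrtA{}{}{}$ agree, and to be careful that the Tretmans side quantifies over $der_T(s_0)$ whereas Definition~\ref{def:inic-tret} quantifies over the full state set $S$—this gap is exactly what makes the first containment proper but collapses once we restrict to $\yltscTR{L_I,L_U}$, yielding the equality in item (2).
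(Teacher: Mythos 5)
Your proposal is correct and follows essentially the same route as the paper: the paper's proof likewise notes that $\yinp(p)=L_I$ is equivalent to $\ytrt{p}{a}{}$ for all $a\in L_I$ and then reduces both items to Proposition~\ref{prop:class-iolts}. You simply spell out the reachable-versus-all-states quantification gap in more detail, which is a faithful elaboration of the paper's one-line argument.
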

\begin{proof}  
	First  note  $\yinp(p)=L_I$ if and only if $\ytrt{p}{a}{}$ for all $a\in L_I$.
	With this observation, we prove item (1) using Proposition~\ref{prop:class-iolts}(1), and prove item (2), using Proposition~\ref{prop:class-iolts}(2). 
\end{proof}


Now we are in position to compare the definitions of the {\bf ioco} relation.
In~\cite{tret-model-2008}, the definition of the {\bf ioco} relation depends on the notion  of $Straces$, as stated in its Definition~9.
It proceeds as follows:
Let $\yS=(S,s_0,L_I,L_U,T)\in\yltscT{L_I,L_U}$  and let  $\yS^\yde=\yT\in\yltscTd{L_I,\yLUpd}$
be its extension to include quiescence.
Definition~9 says that, for all $s\in S$ and all $\ysi\in(L_I \cup L_U\cup{\{\yde\}})^\star$, we have  $\ysi\in Straces(s)$  if $\ytrut{s}{\ysi}{\yT}{}$.
So, strings in $Straces(\yS)$ are just the observable traces of $\yT$, i.e., $Straces(\yS)=otr(\yT)$, the observable traces of $\yT$.
With the notion of $Straces$, Definition~12~in~\cite{tret-model-2008} specifies the $\yiocoT$ relation thus:
Let $\yI=(Q,q_0,L_I,L_U,R)\in \yiocT{L_I,L_U}\ysse\yltscT{L_I,L_U}$ and let   $\yQ=\yI^\yde\in\yiocTd{L_I,\yLUpd}$ be its extension.
Also let $\yS=(S,s_0,L_I,L_U,T)\in \yltscT{L_I,L_U}$ with $\yT=\yS^\yde\in \yltscTd{L_I,\yLUpd}$ being its extension. 
Then, $\yI \yiocoT \yS$ if, and only if,  for all $\ysi\in Straces(\yS)$, we have
$$\yout_T^\yQ(q_0 \yafterm{\yQ} \ysi)\ysse \yout_T^\yT(s_0 \yafterm{\yT} \ysi).$$ 
Here $\yout_T^\yQ$ indicates that the set $\yout_T$ is being obtained in $\yS$ and, 
similarly,  $\yafterm{\yQ}$  indicates that the set $\yafter$ is being calculated in the $\yS$, according to previous definitions.

Since $Straces(s_0)=otr(\yT)$, we rewrite the definition of $\yiocoT$ as follows. 
Assume $\yI=(Q,q_0,L_I,L_U,R)\in \yiocT{L_I,L_U}\ysse\yltscT{L_I,L_U}$ and  $\yS=(S,s_0,L_I,L_U,T)\in\yltscT{L_I,L_U}$.
We say that $\yI \yiocoT \yS$ if, and only if,  for all $\ysi\in otr(\yS^\yde)$, we have
$$\yout_T(q_0 \yafter \ysi)\ysse \yout_T(s_0 \yafter \ysi),$$ 
where $\yout_T$ and $\yafter$ are obtained over the corresponding extended models $\yI^\yde$ and $\yS^\yde$.

In this work,  when $\yS=(S,s_0,L_I,L_U,T)$ and  $\yI=(Q,q_0,L_I,L_U,R)$ are in $\yiocA{I}{U}$, we say that $\yI  \yiocoA \yS$ if, and only if, for all $\ysi\in otr(\yS)$, we have
$$\yout_A(q_0 \yafter \ysi)\ysse \yout_A(s_0 \yafter \ysi).$$ 
See Definition~\ref{def:out-after}.

Next proposition establishes the relationship between relations $\text{\bf ioco}_T$ and $\text{\bf ioco}_A$. 
\begin{prop}\label{prop:ioco2}
	Let $\yI\in\yiocTR{L_I,\yLUmd}$ and let $\yS\in  \yltscTR{L_I,\yLUmd}$, with  $\yQ=\yI^\yde$ and  $\yT=\yS^\yde$ being their corresponding extended models.
	Then,  $\yI \yiocoT \yS$ if and only if $\yQ \yiocoA \yT$.
\end{prop}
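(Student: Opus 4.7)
The plan is to reduce this biconditional to a direct application of Proposition~\ref{prop:out1}, once I have verified that the extended models $\yQ$ and $\yT$ live in the class $\yiocAdLUpd{I}{U}$ where both definitions of $\yout$ agree, and once I have identified the set of strings over which the two universal quantifications range.

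First I would note that by definition $Straces(\yS) = otr(\yS^\yde) = otr(\yT)$, so the relation $\yI \yiocoT \yS$ quantifies $\ysi$ over exactly the same set as $\yQ \yiocoA \yT$ does in the statement of $\yiocoA$. Next I would show that $\yT, \yQ \in \yiocAdLUpd{I}{U}$. Since $\yS \in \yltscTR{L_I,\yLUmd}$, the definition of the extended class gives $\yT = \yS^\yde \in \yltscTRd{L_I,\yLUmd}$, and Proposition~\ref{prop:class-delta}(2) yields $\yltscTRd{L_I,\yLUmd} = \yiocAdLUpd{I}{U}$, so $\yT \in \yiocAdLUpd{I}{U}$. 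For $\yQ$, Proposition~\ref{prop:class-iolts2}(2) gives $\yiocTR{L_I,\yLUmd} = \yioceA{I}{U} \subseteq \yiocA{I}{U}$, so $\yI \in \yltscTR{L_I,\yLUmd}$ and the same argument yields $\yQ = \yI^\yde \in \yiocAdLUpd{I}{U}$.

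With class membership in hand, Proposition~\ref{prop:out1} applied to $\yT$ and to $\yQ$ gives, for every $\ysi \in otr(\yT)$,
\[
\yout_A(s_0 \yafter\!\!_A\ysi) = \yout_T(s_0 \yafter\!\!_T\ysi)\quad\text{in }\yT,\qquad \yout_A(q_0 \yafter\!\!_A\ysi) = \yout_T(q_0 \yafter\!\!_T\ysi)\quad\text{in }\yQ.
\]
By Propositions~\ref{prop:derivas} and~\ref{prop:after} the two $\yafter$-operators coincide, so the inclusion $\yout_T(q_0 \yafter \ysi) \subseteq \yout_T(s_0 \yafter \ysi)$ is equivalent to $\yout_A(q_0 \yafter \ysi) \subseteq \yout_A(s_0 \yafter \ysi)$. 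Since this equivalence holds uniformly in $\ysi$ and the range of quantification is the common set $otr(\yS^\yde) = otr(\yT)$, the biconditional $\yI \yiocoT \yS \iff \yQ \yiocoA \yT$ follows directly.

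The only step requiring genuine care is the class membership of $\yQ$ and $\yT$, and in particular making sure the chain $\yiocTR{L_I,\yLUmd} \subseteq \yltscTR{L_I,\yLUmd}$ combined with Proposition~\ref{prop:class-delta}(2) really does land the extended models inside $\yiocAdLUpd{I}{U}$; everything else is a direct substitution using Proposition~\ref{prop:out1} and the identity $Straces(\yS) = otr(\yS^\yde)$. No new lemma is needed.
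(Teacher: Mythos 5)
Your proposal is correct and follows essentially the same route as the paper: establish that $\yT$ and $\yQ$ lie in $\yiocAdLUpd{I}{U}$ via Proposition~\ref{prop:class-delta}(2), then apply Proposition~\ref{prop:out1} to identify $\yout_A$ with $\yout_T$ on the (common, by Propositions~\ref{prop:derivas} and~\ref{prop:after}) after-sets over the common quantification range $Straces(\yS)=otr(\yT)$. The only cosmetic difference is that you phrase the conclusion as a direct pointwise equivalence, whereas the paper argues each direction by contradiction with a witness $\ysi$ and $x$; the underlying ingredients are identical.
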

\begin{proof}
	Let $\yS=(S,s_0,L_I,\yLUmd,T)$ and  $\yI=(Q,q_0,L_I,\yLUmd,R)$.
	From the definitions we get immediately  $\yT=(S,s_0,L_I,\yLUpd,T\cup T_\yde)\in\yltscTRd{L_I,\yLUmd}$ and $\yQ=(Q,q_0,L_I,\yLUpd,R\cup R_\yde)\in\yltscTRd{L_I,\yLUmd}$, where $T_\yde=\{(s,\yde,s)|\yde^\yS_T(s)\}$ and  $R_\yde=\{(s,\yde,s)|\yde^\yI_T(s)\}$. 
	
	We first show that $\yI \yiocoT \yS$ implies that $\yQ \yiocoA \yT$.
	From Proposition~\ref{prop:class-delta}(2) we have $\yT,\yQ\in\yiocAdLUpd{I}{U}$, and so the 
	relation $\text{\bf ioco}_A$ is also defined for $\yQ$ and $\yT$.
	Assume that we have $\yI \yiocoT \yS$, but $\yQ \yiocoA \yT$ does not hold.
	From the definition we have some $\ysi\in otr(\yT)$ and some $x\in \yLUpd$ such that 
	$$x\in\yout_A^\yQ(q_0 \yafterm{\yQ} \ysi)\quad\text{and}\quad x\not\in\yout_A^\yT(s_0 \yafterm{\yT} \ysi).$$
	Since $\yQ\in \yiocAdLUpd{I}{U}$, using Proposition~\ref{prop:out1} we have 
	$\yout_A^\yQ(q_0 \yafterm{\yQ} \ysi)=\yout_T^\yQ(q_0 \yafterm{\yQ} \ysi)$ and then 
	$x\in \yout_T^\yQ(q_0 \yafterm{\yQ} \ysi)$.
	Since $\yI \yiocoT \yS$ holds we should have $x\in \yout_T^\yT(s_0 \yafterm{\yT} \ysi)$.
	Likewise, $\yT\in \yiocAdLUpd{I}{U}$ and Proposition~\ref{prop:out1} now gives $\yout_A^\yT(q_0 \yafterm{\yT} \ysi)=\yout_T^\yT(q_0 \yafterm{\yT} \ysi)$.
	Hence, $x\in\yout_A^\yT(s_0 \yafterm{\yT} \ysi)$ and we have reached a contradiction. 
	Therefore, if $\yI \yiocoT \yS$ then $\yQ \yiocoA \yT$.
	
	On the other direction, we need to show that if $\yQ \yiocoA \yT$ then $\yI \yiocoT \yS$.
	Now, assume that we have $\yQ \yiocoA \yT$, but $\yI \yiocoT \yS$ does not hold.
	The reasoning is entirely analogous and again we would reach a contradiction. 
\end{proof}

\subsection*{Determinism} 

We also show that the notion of determinism defined in~\cite{tret-model-2008} coincides with our definition of determinism.  
In \cite{tret-model-2008}, Definition 5(9) characterizes determinism as follows:  $\yS=(S,L,T,s_0)\in\yltscT{L}$ is deterministic if, for all $\ysi \in L^\star$, $s_0 \yafter \ysi$ has at most one element.
We indicate this by writing $determ_T(\yS)$.

In our work, Definition~\ref{def:lts-deterministic} says that $\yS=(S,s_0,L,T)\in\yltscA{L}$ is deterministic if $\ytrt{s_0}{\ysi}{s_1}$ and $\ytrt{s_0}{\ysi}{s_2}$ imply $s_1=s_2$, for all $s_1$, $s_2\in S$ and all $\ysi\in L^\star$.
If that is the case, we write $determ_A(\yS)$.

The next proposition shows that these notions coincide for every element in  $\yltscA{L}$.
\begin{prop}\label{prop:det}
	Let $\yS=(S,s_0,L,T)\in\yltscA{L}$.
	Then $determ_T(\yS)$ if and only if $determ_A(\yS)$. 
\end{prop}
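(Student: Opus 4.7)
The plan is to invoke Proposition~\ref{prop:after} to rewrite both conditions as statements about the same set of reachable states, and then observe that they are logically identical. No new combinatorial work is needed; this is essentially an unfolding of definitions.

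First I would recall that, by Proposition~\ref{prop:after}, for every $p\in S$ and every $\ysi\in L^\star$ we have
\[
p\yafter \ysi = \{q\in S \yst \ytrt{p}{\ysi}{q}\},
\]
and that this set is the same whether one uses the $T$-convention or the $A$-convention for $\yafter$ and for $\ytrt{}{}{}$ (the latter via Proposition~\ref{prop:derivas}). In particular, $s_0\yafter\ysi = \{q\in S\yst \ytrt{s_0}{\ysi}{q}\}$ for all $\ysi\in L^\star$.

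Next, I would unfold $determ_T(\yS)$ using this identity. Saying that $s_0\yafter\ysi$ has at most one element for every $\ysi\in L^\star$ is the same as saying that whenever $s_1,s_2\in s_0\yafter\ysi$, one has $s_1=s_2$, which by the identity above is exactly: for all $\ysi\in L^\star$ and all $s_1,s_2\in S$, if $\ytrt{s_0}{\ysi}{s_1}$ and $\ytrt{s_0}{\ysi}{s_2}$, then $s_1=s_2$. This is precisely $determ_A(\yS)$ from Definition~\ref{def:lts-deterministic}, so both implications hold simultaneously.

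There is no real obstacle here; the only thing to be careful about is that Tretmans' $\yafter$ quantifies over $\ysi\in L^\star$ (observable sequences), not over $L_\tau^\star$, so that it matches our Definition~\ref{def:lts-deterministic}, and that Proposition~\ref{prop:after} is indeed formulated over $L^\star$ as well. Once that is noted, the equivalence $determ_T(\yS)\Leftrightarrow determ_A(\yS)$ is a one-line logical rephrasing.
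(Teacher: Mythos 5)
Your proof is correct and follows essentially the same route as the paper's: both arguments unfold the ``at most one element'' condition on $s_0\yafter\ysi$ and reduce the equivalence to the coincidence of the observable-trace relations (you go via Proposition~\ref{prop:after}, the paper cites Proposition~\ref{prop:derivas} directly, but the former is an immediate consequence of the latter). The only cosmetic slip is that Proposition~\ref{prop:after} is stated over $\ysi\in L_\tau^\star$ rather than $L^\star$, but since $L^\star\ysse L_\tau^\star$ this changes nothing.
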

\begin{proof}
	Note that from Proposition~\ref{prop:class-iolts}(1) we also have  $\yS\in\yltscT{L}$.
	
	Assume that $determ_T(\yS)$ holds.
	We have that $s_0 \yafter\!\!_T\, \ysi$ has at most one element if and only if $\ytrtT{s_0}{\ysi}{s_1}$ and $\ytrtT{s_0}{\ysi}{s_2}$ imply $s_1=s_2$, for all $s_1$, $s_2\in S$, and all $\ysi\in L^\star$.
	From Proposition~\ref{prop:derivas} we get that $\ytrtA{s_0}{\ysi}{s_1}$ and $\ytrtA{s_0}{\ysi}{s_2}$ imply $s_1=s_2$, so that $determ_A(\yS)$ also holds.
	
	For the other direction just reverse the argument.
\end{proof}

\subsection*{Test cases and test purposes} 

When defining the class of all test cases with inputs $L_I$ and outputs $L_U$, \cite{tret-model-2008} starts, at Definition 10, with a model $\yT=(S,s_0,\yLUmd,\yLIptmd,T)$  in the class $\yiocT{\yLUmd,\yLIptmd}$, that is, $\yT$ is already input-enabled with respect to $\yLUmd$.
Further restrictions apply, namely:
\begin{enumerate}
	\item $\yT$ is  finite state and deterministic as defined in~\cite{tret-model-2008}. We recall that according to Definition~\ref{def:lts} all our models are finite state.
	\item $S$ contains two distinct states, $\ypass$ and $\yfail$, and $\yout(\ypass)=\yout(\yfail)=\yLUptmd$.
	\item $\yT$ has no cycles, except at states $\ypass$ and $\yfail$, that is $\ytrt{s}{\ysi}{s}$ implies $s=\ypass$ or $s=\yfail$ for any $\ysi\neq\yeps$ and $\ysi\in (\yLUmd\cup\yLIptmd)^\star$. 
	\item For all state $s\in S$, we must have $\yinit_T(s)=\{x\}\cup \yLUmd$, for some $x\in \yLIptmd$.
\end{enumerate}
For any model $\yS=(S,s_0,L_I,L_U,T)$, Definition 5 in~\cite{tret-model-2008} says that for all $s\in S$
we have $\yinit_T(s)=\{x\in L_I\cup L_U\cup\{\tau\}\yst \ytr{s}{x}{}\}$.
Hence, for any model $\yT=(S,s_0,\yLUmd,\yLIptmd,T)$ condition 4 reduces to saying that  for all $s\in S$
we must have
$$\{x\}\cup \yLUmd=\{y\in  \yLIptmd\cup \yLUmd\cup\{\tau\}\yst \ytr{s}{y}{}\}\quad\text{for some $x\in \yLIptmd$}.$$
We conclude that $\ytrt{s}{x}{}$ for all $x\in\yLUmd$ and all $s\in S$.
Thus, when condition 4 holds we know that $\yT$ is already input-enabled.

In~\cite{tret-model-2008}, Definition 16, the $\yte$ symbol in test cases synchronizes with the $\yde$ symbol that signals quiescence in IUTs.  
Here we have used the same $\yde$ symbol in test purposes to synchronize with the $\yde$ symbol that 
flags quiescence in IUTs, as made explicit in Definition~\ref{def:ioltsq}.
So, specifications and IUTs are models from $\yiocALUmdLIpd{I}{U}$.
When constructing test purposes from given specifications, the input and output alphabets are interchanged.
Accordingly, a test purpose over the input alphabet $L_I$ and output alphabet $L_U$ is defined as a model 
$\yT=\yiolts{S}{s_0}{\yLUpd}{\yLImd}{T}$  in $\yiocALUpdLImd{U}{I}$.

In our Definition~\ref{def:out-after}, given any model  $\yS=\yiolts{S}{s_0}{L_I}{L_U}{T}\in \yiocA{I}{U}$
we say that $\yout_A(s)= \{x\in L_U\yst \ytrt{s}{x}{}\}$ for all $s\in S$, and in Definition~\ref{def:in-compl-out-determ} we say that $\yS$ is output-deterministic when $\vert \yout_A(s)\vert=1$ for all $s\in S$.
Also such a model $\yS$ is  input-enabled when $\yinp_A(s)=L_I$.
Hence, a model $\yT=\yiolts{S}{s_0}{\yLUpd}{\yLImd}{T}$ is output-deterministic when $\vert \{x\in \yLImd\yst \ytrt{s}{x}{}\}\vert=1$ and it is input-enabled when $\yinp_A(s)=\yLUpd$, for all $s\in S$.
Moreover, since in our models we substitute $\yde$ for $\yte$, conditions (2), (3) and (4) should read as follows: 
\begin{enumerate}
	\item[($2'$)] $S$ contains two distinct states, $\ypass$ and $\yfail$, and $\yout(\ypass)=\yout(\yfail)=\yLUpd$.
	\item[($3'$)] $\yT$ has no cycles, except at states $\ypass$ and $\yfail$, that is $\ytrt{s}{\ysi}{s}$ implies $s=\ypass$ or $s=\yfail$ for any $\ysi\neq\yeps$ and $\ysi\in (\yLUpd\cup\yLImd)^\star$. 
	\item[($4'$)]  For all state $s\in S$, we must have $\yinit_T(s)=\{x\}\cup \yLUpd$, for some $x\in \yLImd$.
\end{enumerate}
Equivalently, condition ($4'$) can be written as
$$\{x\}\cup \yLUpd=\{y\in  \yLImd\cup \yLUpd\cup\{\tau\}\yst \ytr{s}{y}{}\}\quad\text{for some $x\in \yLImd$}.$$
The next result says that certain models $\yT\in\yiocALUpdLImd{U}{I}$ satisfy conditions (1) and (4) above.
\begin{prop}\label{prop:tp-tc}
	Let $\yT=\yiolts{S}{s_0}{\yLUpd}{\yLImd}{T}\in \yiocALUpdLImd{U}{I}$ be deterministic, input-enabled and output-deterministic.
	Then $\yT$  satisfies conditions (1) and ($4'$)  given above. 
\end{prop}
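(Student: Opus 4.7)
The plan is to verify conditions (1) and ($4'$) directly from the three stated hypotheses, leveraging the equivalences already established between our notation and Tretmans' notation.

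For condition (1), finiteness is immediate from Hypothesis~\ref{hypo:fin}, which assumes that all LTS models (and thus all IOLTSs) we consider are finite. The second half of (1), namely Tretmans-style determinism, follows at once from Proposition~\ref{prop:det}, which shows that $determ_A(\yT)$ and $determ_T(\yT)$ are equivalent when $\yT\in\yltscA{L}$. Since we assumed $\yT$ is deterministic in our sense, it is deterministic in Tretmans' sense.

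For condition ($4'$), I would fix an arbitrary state $s\in S$ and determine $\yinit_T(s) = \{y\in \yLImd\cup \yLUpd\cup\{\tau\}\yst \ytr{s}{y}{}\}$ component by component. First, Proposition~\ref{prop:lts-deterministic} applied to our deterministic $\yT$ yields that $\yT$ has no $\tau$-labeled transitions, so $\tau\notin \yinit_T(s)$. Second, the input alphabet of $\yT$ is $\yLUpd$, and input-enabledness (Definition~\ref{def:in-compl-out-determ}) gives $\yinp_A(s)=\yLUpd$, i.e., $\ytrt{s}{y}{}$ for every $y\in \yLUpd$; since $\yT$ has no $\tau$-transitions this yields $\ytr{s}{y}{}$, so $\yLUpd\ysse \yinit_T(s)$. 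Third, output-determinism gives $|\yout_A(s)|=1$, and since the output alphabet of $\yT$ is $\yLImd$ this means that there is exactly one $x\in \yLImd$ with $\ytrt{s}{x}{}$; again, the absence of $\tau$-transitions makes this equivalent to $\ytr{s}{x}{}$, and no other symbol of $\yLImd$ appears in $\yinit_T(s)$. Combining these three observations, $\yinit_T(s) = \yLUpd\cup\{x\}$ for some $x\in \yLImd$, which is precisely condition ($4'$).

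Since neither step requires any delicate combinatorial argument, I do not anticipate a serious obstacle; the only point that deserves attention is to make sure the alphabets match correctly, namely that the input alphabet of $\yT$ is $\yLUpd$ (since inputs and outputs are swapped for the test purpose relative to the IUT) and that output-determinism as we defined it is stated in terms of the output alphabet of $\yT$, which is $\yLImd$. Once the alphabet bookkeeping is in place, the rest follows mechanically from Proposition~\ref{prop:lts-deterministic}, Proposition~\ref{prop:det}, Hypothesis~\ref{hypo:fin}, and Definition~\ref{def:in-compl-out-determ}.
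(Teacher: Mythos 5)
Your proposal is correct and follows essentially the same route as the paper's own proof: condition (1) via Proposition~\ref{prop:det} (with finiteness built into the models), and condition ($4'$) by combining Proposition~\ref{prop:lts-deterministic} (no $\tau$-moves), input-enabledness for $\yLUpd\ysse\yinit_T(s)$, and output-determinism for the unique $x\in\yLImd$. The paper merely packages the ($4'$) argument as a two-way set inclusion rather than your component-by-component computation of $\yinit_T(s)$; the ingredients and the alphabet bookkeeping are identical.
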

\begin{proof}
	Since $\yT$ is deterministic, using Proposition~\ref{prop:det} we conclude that $\yT$ is also deterministic in the sense defined in~\cite{tret-model-2008}.
	Thus, condition ($1$) holds.
	
	By the preceding discussion, if $\yT$ satisfies condition (4) then is is already input-enabled.
	Also from the text above, it remains to show that for all $s\in S$ we have 
	$$\{x\}\cup \yLUpd=\{y\in  \yLImd\cup \yLUpd\cup\{\tau\}\yst \ytr{s}{y}{}\}\quad\text{for some $x\in \yLImd$}.$$
	So, fix some $s\in S$.
	Since $\vert \yout_A(s)\vert=1$, we may choose $x\in\yLImd$ such that $\yout_A(s)=\{x\}$, and now we have to
	show that 
	$$\yout_A(s)\cup \yLUpd=\{y\in  \yLImd\cup \yLUpd\cup\{\tau\}\yst \ytr{s}{y}{}\}.$$ 
	So, let $y\in \yLImd\cup \yLUpd\cup\{\tau\}$ with $\ytr{s}{y}{}$.
	Since $\yT$ is deterministic, Proposition~\ref{prop:lts-deterministic} says that $\yT$ has no $\tau$-labeled transitions. 
	Hence, $y\in \yLImd\cup \yLUpd$.
	If $y\in \yLUpd$ then $y\in \yout_A(s)\cup \yLUpd$.
	Now assume that $y\in \yLImd$ with $\ytr{s}{y}{}$, so that $y\in\yout_A(s)$.
	Since $\vert \yout_A(s)\vert=1$ we have $\yout_A(s)=\{y\}$, and again $y\in \yout_A(s)\cup \yLUpd$.
	We have shown that $\{y\in  \yLImd\cup \yLUpd\cup\{\tau\}\yst \ytr{s}{y}{}\}\ysse \yout_A(s)\cup \yLUpd$.
	
	Now let $y\in \yout_A(s)\cup \yLUpd$.
	If $y\in \yout_A(s)$ then $y\in\yLImd$ and $\ytrt{s}{y}{}$. Since $\yT$ has no $\tau$-labeled transitions we get $\ytr{s}{y}{}$.
	Hence $y \in \{y\in  \yLImd\cup \yLUpd\cup\{\tau\}\yst \ytr{s}{y}{}\}$.
	Now assume $y\in\yLUpd$.
	Since $\yinp_A(s)=\yLUpd$ we get $y\in\yinp_A(s)$, so that $\ytrt{s}{y}{}$. 
	Because $\yT$ has no $\tau$-labeled transitions we have $\ytr{s}{y}{}$.
	Hence, $y\in\yLUpd$ and $\ytr{s}{y}{}$ give $y \in \{y\in  \yLImd\cup \yLUpd\cup\{\tau\}\yst \ytr{s}{y}{}\}$.
	We now have $\yout_A(s)\cup \yLUpd\ysse \{y\in  \yLImd\cup \yLUpd\cup\{\tau\}\yst \ytr{s}{y}{}\}$ and the proof is complete.
\end{proof}

In Theorem~\ref{theo:all-reqs} we showed that test purposes satisfying a number of restrictions can be constructed for any given specification. 
We can now verify that those test purpose models are also test cases, in the sense used in~\cite{tret-model-2008}.

\begin{prop}\label{prop:tps-are-tcs}
	Let $\yS$ be a specification in $\yiocALUmdLIpd{I}{U}$, and let $m\geq 1$.
	Then the set $TP$ of test purposes constructed in Theorem~\ref{theo:all-reqs} is {\bf ioco}-complete for
	$\yS$ with respect to any implementation with at most $m$ states.
	Moreover, any test purpose  in $TP$ satisfies conditions (1), ($2'$), ($3'$) and ($4'$) listed above.
\end{prop}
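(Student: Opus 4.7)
The overall plan is to separate the claim into its two parts: the $m$-{\bf ioco}-completeness of $TP$ for $\yS$, and the four structural conditions on each test purpose. The first part is immediate from Theorem~\ref{theo:all-reqs}, and requires no additional argument. The four structural conditions will be verified one by one, each time by appealing to a result established earlier in the paper or in this appendix.

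For condition (1), finiteness of every $\yT \in TP$ is built into our framework (Remark~\ref{rema:lte-finite} and Hypothesis~\ref{hypo:fin}), and determinism of $\yT$ is part of the conclusion of Theorem~\ref{theo:all-reqs}; Proposition~\ref{prop:det} then reconciles our notion of determinism with the one used in~\cite{tret-model-2008}, so condition (1) is settled. For condition ($4'$), the key observation is that Theorem~\ref{theo:all-reqs} delivers each $\yT \in TP$ as a deterministic, input-enabled and output-deterministic test purpose, and these three properties are exactly the hypotheses of Proposition~\ref{prop:tp-tc}; applying that proposition yields $\yinit_T(s) = \{x\} \cup \yLUpd$ for some $x \in \yLImd$ at each $s$, which is ($4'$).

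The remaining work is conditions ($2'$) and ($3'$), which I would discharge together by tracing through the constructions embedded in Theorem~\ref{theo:all-reqs}, namely Proposition~\ref{prop:m-complete} (which produces an acyclic multi-graph $D$ from the specification and extracts a deterministic, acyclic test purpose with a single $\yfail$ state) and Proposition~\ref{prop:in-out} (which adds a $\ypass$ state and the self-loops on $\yLUpd$ at both $\ypass$ and $\yfail$ to achieve input-enabledness and output-determinism). Condition ($3'$) is then essentially a direct inspection: the only cycles introduced are precisely the self-loops at $\ypass$ and $\yfail$. Condition ($2'$) is read off from the fact that, at each of these two states, the construction adds a transition on every symbol of $\yLUpd$.

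The main obstacle I expect is precisely the bookkeeping at step ($2'$)--($3'$), because one has to keep the alphabet conventions straight: for a test purpose $\yT \in \yiocALUpdLImd{U}{I}$ the relevant alphabet at $\ypass$ and $\yfail$ is $\yLUpd$, and one must verify that after all the successive transformations in Lemma~\ref{lemm:ioco-complete}, Proposition~\ref{prop:m-complete} and Proposition~\ref{prop:in-out}, the resulting self-loop set at $\ypass$ and $\yfail$ is exactly $\yLUpd$ (no more and no less), and that no accidental cycles outside $\{\ypass,\yfail\}$ are created along the way. Once this is checked, the proposition follows.
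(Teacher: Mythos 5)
Your proposal is correct and follows essentially the same route as the paper: completeness is read off directly from Theorem~\ref{theo:all-reqs}, conditions (1) and ($4'$) are discharged via Proposition~\ref{prop:tp-tc}, and conditions ($2'$) and ($3'$) are verified by inspecting the construction to confirm that the only cycles are the self-loops on all of $\yLUpd$ at $\ypass$ and $\yfail$. (One small remark: you correctly locate those self-loops in Proposition~\ref{prop:in-out} rather than Proposition~\ref{prop:m-complete}, and Lemma~\ref{lemm:ioco-complete} is not actually part of the construction chain behind Theorem~\ref{theo:all-reqs}, but neither point affects the argument.)
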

\begin{proof}
	By Theorem~\ref{theo:all-reqs} we know that $TP $ is $m$-{\bf ioco}-complete for $\yS$.
	
	Let $\yT=\yiolts{S}{s_0}{\yLUpd}{\yLImd}{T}\in \yiocALUpdLImd{U}{I}$ be a test purpose constructed in
	$TP$.
	By Theorem~\ref{theo:all-reqs} we know that $\yT$ is already deterministic, input-enabled and output-deterministic.
	So, by Proposition~\ref{prop:tp-tc} conditions (1) and ($4'$) are satisfied.
	
	By Theorem~\ref{theo:all-reqs}, $\yT$ has two distinct $\ypass$ and $\yfail$ states, and it is acyclic
	except for self-loops at these states.
	The proof of  Theorem~\ref{theo:all-reqs} starts with test purposes constructed at Proposition~\ref{prop:m-complete}.
	A simple examination of the proof of Proposition~\ref{prop:m-complete} shows that we explicitly
	add self-loops $(\yfail,\ell,\yfail)$ and $(\ypass,\ell,\ypass)$ to $\yT$, for all $\ell \in\yLUpd$.
	Hence, conditions ($2'$) and ($3'$) are also satisfied.
\end{proof}

\end{appendices}


\end{document}